
\documentclass[conference]{IEEEtran}
\ifCLASSINFOpdf
\else
\fi
\hyphenation{op-tical net-works semi-conduc-tor}

\usepackage{amsfonts}
\usepackage{amsmath,amsthm}
\usepackage{graphicx,epstopdf,epsfig,multirow,epic,bm}
\usepackage{color}
\usepackage{multicol}
\usepackage{makeidx,showidx}
\usepackage{ifthen}
\usepackage{mathrsfs}
\usepackage{colortbl}
\usepackage{multicol}
\usepackage{latexsym}
\usepackage{amssymb}
\usepackage{algorithm}
\usepackage{algorithmic}
\usepackage{longtable}
\usepackage{array}
\usepackage{enumerate}
\usepackage{paralist}
\usepackage{fancyhdr}

\theoremstyle{plain}
\newtheorem{thm}{Theorem}
\newtheorem{cor}[thm]{Corollary}
\newtheorem{lem}[thm]{Lemma}

\theoremstyle{definition}
\newtheorem{defn}{Definition}
%
\theoremstyle{plain}

\theoremstyle{definition}

\theoremstyle{definition}

\theoremstyle{definition}

\newcommand{\ud}{\mathrm{deg}}
\newcommand{\qqed}{\hfill $\Box$}

\begin{document}

%
\title{Text-based Passwords Generated From Topological Graphic Passwords}


\author{\IEEEauthorblockN{Bing  Yao$^{1,5}$,~Xiaohui Zhang$^{1}$,~Hui Sun$^{1}$,~Yarong Mu$^{1}$,~Yirong Sun$^{1}$,~Xiaomin Wang$^{2}$\\ Hongyu Wang$^{2,\ddagger}$, ~Fei Ma$^{2}$,~Jing Su$^{2}$,~Chao Yang$^{3}$,~Sihua Yang$^{4}$, ~Mingjun Zhang$^{4}$}
\IEEEauthorblockA{{1} College of Mathematics and Statistics,
 Northwest Normal University, Lanzhou,  730070,  China}
\IEEEauthorblockA{{2} School of Electronics Engineering and Computer Science, Peking University, Beijing 100871, China}
\IEEEauthorblockA{{3} School of Mathematics, Physics \& Statistics, Shanghai University of Engineering Science, Shanghai, 201620, CHINA}
\IEEEauthorblockA{{4} School of Information Engineering, Lanzhou University of Finance and Economics, Lanzhou, 730030, CHINA}
\IEEEauthorblockA{{5} School of Electronics and Information Engineering, Lanzhou Jiaotong University, Lanzhou, 730070, China\\
$^\ddagger$ The corresponding author's email: why1988jy@163.com}

\thanks{Manuscript received June 1, 2017; revised August 26, 2017.
Corresponding author: Bing Yao, email: yybb918@163.com.}}


%


\maketitle

\begin{abstract}
Topological graphic passwords (Topsnut-gpws) are one of graph-type passwords, but differ from the existing graphical passwords, since Topsnut-gpws are saved in computer by algebraic matrices.  We focus on the transformation between text-based passwords (TB-paws) and Topsnut-gpws in this article. Several methods for generating TB-paws from Topsnut-gpws are introduced; these methods are based on topological structures and graph coloring/labellings, such that authentications must have two steps: one is topological structure authentication, and another is text-based authentication. Four basic topological structure authentications are introduced and many text-based authentications follow Topsnut-gpws. Our methods are based on algebraic, number theory and graph theory, many of them can be transformed into polynomial algorithms. A new type of matrices for describing Topsnut-gpws is created here, and such matrices can produce TB-paws in complex forms and longer bytes. Estimating the space of TB-paws made by Topsnut-gpws is very important for application. We propose to encrypt dynamic networks and try to face: (1) thousands of nodes and links of dynamic networks; (2) large numbers of Topsnut-gpws generated by machines rather than human's hands. As a try, we apply spanning trees of dynamic networks and graphic groups (Topsnut-groups) to approximate the solutions of these two problems. We present some unknown problems in the end of the article for further research. \\[4pt]
\end{abstract}
\textbf{\emph{Keywords---Text-based passwords; graphical password; topological graphic password; computational security; encryption.}}


%
\IEEEpeerreviewmaketitle

\pagestyle{fancy}
\pagestyle{plain}

\section{Introduction}

Graphical passwords (GPWs) are familiar with people in nowadays, such as 1-dimension code, 2-dimension code, face authentication, finger-print authentication, speaking authentication, and so on, in which 2-dimension code is widely used in everywhere of the world. A 2-dimension code can be considered as a GPW, since it is a picture. Researchers have worked on GPWs for a long time (\cite{Suo-Zhu-Owen-2005, Biddle-Chiasson-van-Oorschot-2009, Gao-Jia-Ye-Ma-2013}). Wang \emph{et al.} propose another type of graphic passwords (Topsnut-gpws) in \cite{Wang-Xu-Yao-2016} and \cite{Wang-Xu-Yao-Key-models-Lock-models-2016}, which differ from the existing GPWs.

As an example, we have two \emph{Topsnut-gpws} shown in Fig.\ref{fig:1-example}(a) and (b), where $T$ is as a \emph{public key}, $H$ is as a \emph{private key}. The \emph{authentication} in network communication is given in Fig.\ref{fig:1-example}(c). By observing Fig.\ref{fig:1-example} carefully, we can see that the labels of nodes (also, vertices) and edges of two Topsnut-gpws $T$ and $H$ form a complementary relationship, and the labels of each edge and its two nodes in $T$ and $H$ satisfies some certain mathematical restraints. Another important character of Topsnut-gpws is the \emph{configuration}, also, the \emph{topological structure} (called \emph{graph} hereafter). Thereby, we say that Topsnut-gpws are natural-inspired from mathematics of view. In general, Topsnut-gpws are easy saved in computer by algebraic matrices, and Topsnut-gpws occupy small space rather than that of the existing GPWs such that Topsnut-gpws can be implemented quickly.

Topsnut-gpw can be  as a platform for password, cipher code and encryption of information security. As Topsnut-gpws were made by ``topological configurations plus number theory'', we will apply a particular class of matrices to describe Topsnut-gpws for the purpose of writing easily in computer and running quickly by computer. These matrices are called \emph{Topsnut-matrices}, and can yield randomly text-based passwords (TB-paws for short) for authentication and encryption in communication. For the theoretical base, we will introduce some operations on Topsnut-matrices in order to implement them for building up TB-paws flexibly.

\begin{figure}[h]
\centering
\includegraphics[height=6cm]{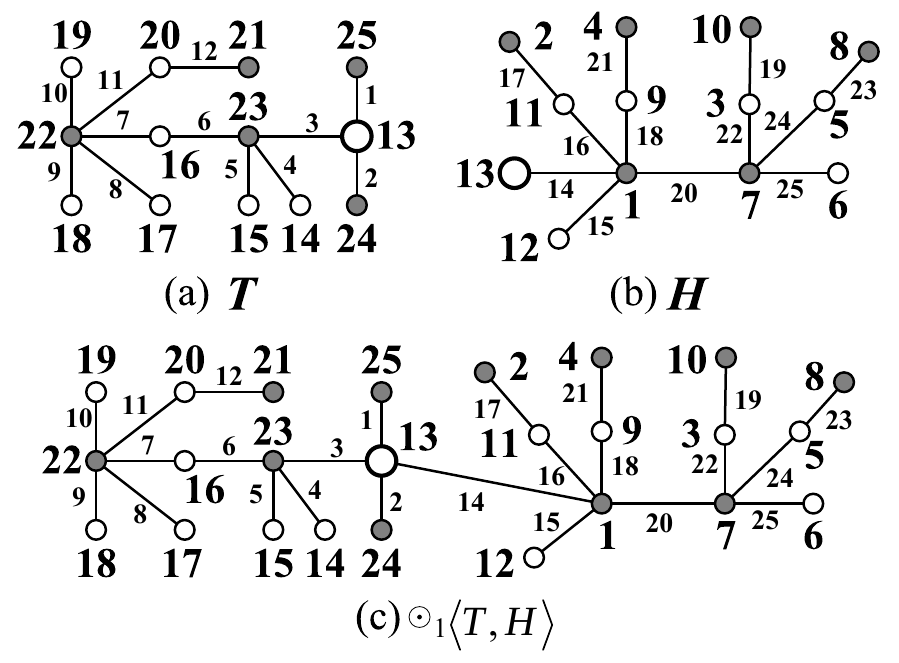}
\caption{\label{fig:1-example}{\small (a) A Topsnut-gpw as a public key; (b) a Topsnut-gpw as a private key; (c) an authentication $\odot_1\langle T,H \rangle$.}}
\end{figure}

As known, Topsnut-gpws are related with many mathematical conjectures or NP-problems, so Topsnut-gpws are \emph{computationally unbreakable} or \emph{provable security}. A Topsnut-gpw $G$ has an advantage, that is, it can generate text-based passwords with longer byte such that it is impossible to rebuild the original Topsnut-gpw $G$ from the derivative text-based passwords made by $G$. This derives us to explore the area of generating text-based passwords from Topsnut-gpws in this article. We believe this transformation from Topsnut-gpws to text-based passwords is very important for the real application of Topsnut-gpws.

\subsection{Examples and problems}

We write ``text-based passwords'' by TB-paws, and ``topological graphic passwords'' as Topsnut-gpws hereafter, for the purpose of quick statement. We will make some TB-paws from a Topsnut-gpw depicted in Fig.\ref{fig:example-1}. Along a path $P_1=y^2_{2,10}x^1_{3,1}x^1_{4,6}x^1_{1,1}x^1_{2,6}$ shown in Fig.\ref{fig:example-1}, we have a TB-paw
$$D_1=3163321891571570125125$$
obtained from the labels of vertices and edges on the path $P_1$.
\begin{figure}[h]
\centering
\includegraphics[height=5.4cm]{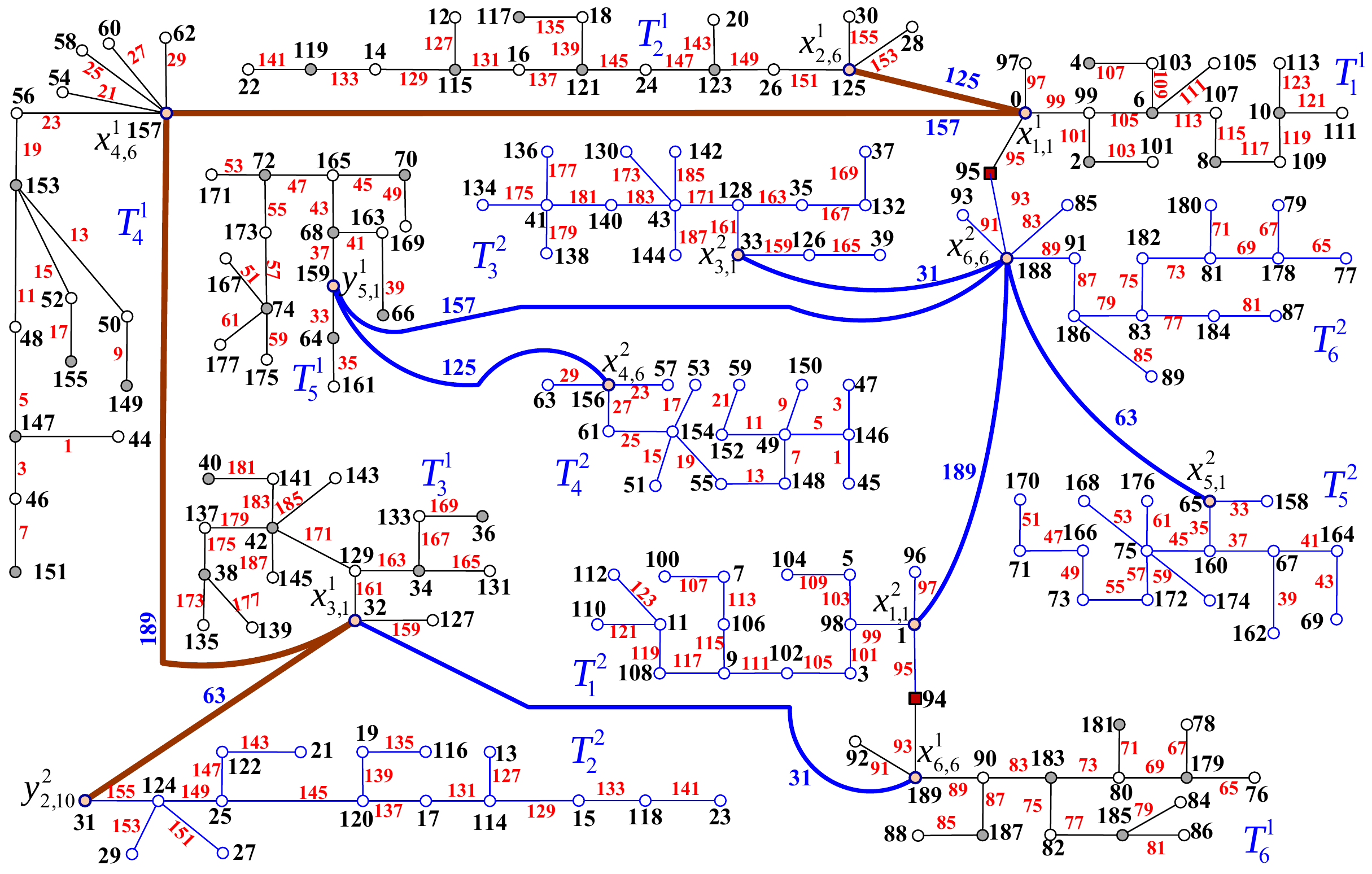}
\caption{\label{fig:example-1}{\small A Topsnut-gpw $G$ cited from \cite{Wang-Xu-Yao-2017}.}}
\end{figure}

The Topsnut-gpw $G$ depicted in Fig.\ref{fig:example-1} admits an odd-elegant labelling $f:V(G)\rightarrow [0,189]$ such that each edge $uv\in E(G)$ holds $f(uv)=f(u)+f(v)~(\bmod~190)$ to be an odd number, and $f(x)\neq f(y)$ for any pair of vertices $x,y\in V(G)$, as well as $f(uv)\neq f(st)$ for any two edges $uv$ and $st$ of $G$. By cryptography of view, the Topsnut-gpw $G$ has twelve sub-Topsnut-gpws $T^1_i$ and $T^2_i$ with $i\in [1,6]$ to form a larger \emph{authentication}, where $T^1_i$ with $i\in [1,6]$ are \emph{public keys}, and $T^2_i$ with $i\in [1,6]$ are \emph{private keys}. Moreover, a sub-Topsnut-gpw $T^1_1$ pictured in Fig.\ref{fig:example-11} distributes us a TB-paw
$${
\begin{split}
D(T^1_1)=&095950979709999101210310199\\
&105610910310746111105611310711\\
&581171091191012311310121111
\end{split}}
$$
Obviously, to reconstruct the sub-Topsnut-gpw $T^1_1$ from the TB-paw $D(T^1_1)$ is difficult, and the TB-paw $D(T^1_1)$ does not rebuild the original Topsnut-gpw $G$ at all. It means that the procedure of generating TB-paws from Topsnut-gpws is \emph{irreversible}. On the other hands, this Topsnut-gpw $G$ can distribute us $(570!)\cdot (190!)\cdot 2^{190}$ TB-paws $D(G)$ shown in the formula (\ref{eqa:wang-paper-example}), such that each TB-paw $D(G)$ has at least 380 bytes or more.

\begin{figure}[h]
\centering
\includegraphics[height=5cm]{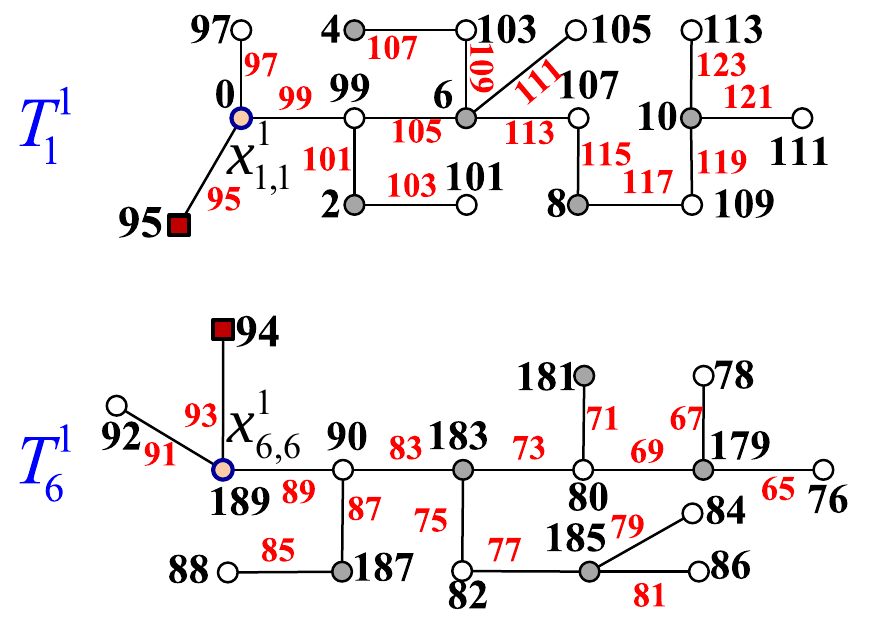}
\caption{\label{fig:example-11}{\small Two sub-Topsnut-gpws $T^1_1$ and $T^1_6$ obtained from the Topsnut-gpw $G$ shown in Fig.\ref{fig:example-1}, in which each edge $uv\in E(T^1_1)$ (or $uv\in E(T^1_6)$) holds $f(uv)=f(u)+f(v)~(\bmod~190)$.}}
\end{figure}

For the encryption of data and dynamic networks, we propose the following problems:
\begin{asparaenum}[\textbf{Problem} 1. ]
\item How to generate TB-paws from a given Topsnut-gpw?

\item How many TB-paws with the desired $k$-byte are there in a given Topsnut-gpw?
\item How to encrypt a dynamic network by Topsnut-gpws or TB-paws?
\end{asparaenum}

We will try to find some ways for answering partly the above problems in the later sections. In graph theory, Topsnut-gpws are called \emph{labelled graphs}, so both concepts of Topsnut-gpws and labelled graphs will be used indiscriminately in this article.

\subsection{Preliminary}

The following terminology, notation, labellings, particular graphs and definitions will be used in the later discussions.

\begin{asparaenum}[1) ]
\item The notation $[m,n]$ indicates a \emph{consecutive set} $\{m,m+1,\dots, n\}$ with integers $m,n$ holding $0\leq m<n$, $[a,b]^o$ denotes an \emph{odd-set} $\{a,a+2,\dots, b\}$ with odd integers $a,b$ with respect to $1\leq a< b$, and $[\alpha,\beta]^e$ is an \emph{even-set} $\{\alpha,\alpha+2,\dots, \beta\}$ with even integers $\alpha,\beta$.
\item The number of elements of a set $X$ is written as $|X|$.
\item  $N(u)$ is  the set of vertices adjacent with a vertex $u$, $\ud_G(v)=|N(v)|$ is called the \emph{degree} of the vertex $u$. If $\ud_G(u)=1$ we call $u$ a \emph{leaf}.
\item A \emph{lobster} is a tree such that the deletion of leaves of the tree results in a \emph{caterpillar}, where the deletion of leaves of a caterpillar produces just a \emph{path}.
\item A graph $G$ having $p$ vertices and $q$ edges is called a $(p,q)$-graph.
\item A \emph{spider} $S$ is a tree having paths $P_i=u_{i,1}u_{i,2}\cdots u_{i,m_i}$ with $m_i\geq 1$ and $i\in [1,n]$, its own vertex set $V(S)=\{u_0, v_k, u_{i,j}: k\in [1,m],j\in [1,m_i], i\in [1,n]\}$, such that its own edge set $E(S)=\{u_0v_k: k\in [1,m]\}\cup \{u_0u_{i,1}: i\in [1,n]\}\cup \big (\bigcup^n_{i=1}E(P_i) \big)$, and $m+n\geq 3$. Clearly, $\ud_S(u_0)\geq 3$, and $1\leq \ud_S(x)\leq 2$ for any vertex $x\in V(S)\setminus \{u_0\}$. We call $u_0$ as the \emph{body}, and each path $P_i$ is a \emph{leg} of length $m_i$ of $S$.
\item A \emph{ring-like network} $N_{ring}$ has a unique cycle $C_m$, and each vertex $u_i$ of $C_m$ is coincident with some vertex $v_i$ of a tree $T_i$ with $i\in[1,m]$.
\item The set of all subsets of a set $X$ is denoted as $X^2$, but the empty set is not allowed in $X^2$. For example, for a set $X=\{a,b,c,d\}$, then $X^2$ contains: $\{a\}$, $\{b\}$, $\{c\}$, $\{d\}$, $\{a,b\}$, $\{a,c\}$, $\{a,d\}$, $\{b,c\}$, $\{b,d\}$, $\{c,d\}$, $\{a,b,c\}$, $\{a,b,d\}$, $\{a,c,d\}$, $\{b,c,d\}$, $\{a,b,c,d\}$.
\end{asparaenum}

\begin{defn}\label{defn:define-labelling}
\cite{Yao-Sun-Zhang-Mu-Sun-Wang-Su-Zhang-Yang-Yang-2018arXiv} A \emph{labelling} $h$ of a graph $G$ is a mapping $h:S\subseteq V(G)\cup E(G)\rightarrow [a,b]$ such that $h(x)\neq h(y)$ for any pair of elements $x,y$ of $S$, and write the label set $h(S)=\{h(x): x\in S\}$. A \emph{dual labelling} $h'$ of a labelling $h$ is defined as: $h'(z)=\max h(S)+\min h(S)-h(z)$ for $z\in S$. Moreover, $h(S)$ is called the \emph{vertex label set} if $S=V(G)$, $h(S)$ the \emph{edge label set} if $S=E(G)$, and $h(S)$ a \emph{universal label set} if $S=V(G)\cup E(G)$.
\end{defn}

A combinatoric definition of set-labellings is as follows.

\begin{defn}\label{defn:set-labelling}
\cite{Yao-Sun-Zhang-Mu-Sun-Wang-Su-Zhang-Yang-Yang-2018arXiv} Let $G$ be a $(p,q)$-graph.

(i) A set mapping $F: V(G)\cup E(G)\rightarrow [0, p+q]^2$ is called a \emph{total set-labelling} of $G$ if $F(x)\neq F(y)$ for distinct elements $x,y\in V(G)\cup E(G)$.

(ii) A vertex set mapping $F: V(G) \rightarrow [0, p+q]^2$ is called a \emph{vertex set-labelling} of $G$ if $F(x)\neq F(y)$ for distinct vertices $x,y\in V(G)$.

(iii) An edge set mapping $F: E(G) \rightarrow [0, p+q]^2$ is called an \emph{edge set-labelling} of $G$ if $F(uv)\neq F(xy)$ for distinct edges $uv, xy\in E(G)$.

(iv) A vertex set mapping $F: V(G) \rightarrow [0, p+q]^2$ and a proper edge mapping $g: E(G) \rightarrow [a, b]$ are called a \emph{v-set e-proper labelling $(F,g)$} of $G$ if $F(x)\cap F(y)=\emptyset $ for distinct vertices $x,y\in V(G)$ and two edge labels $g(uv)\neq g(wz)$ for distinct edges $uv, wz\in E(G)$.

(v) An edge set mapping $F: E(G) \rightarrow [0, p+q]^2$ and a proper vertex mapping $f: V(G) \rightarrow [a,b]$ are called an \emph{e-set v-proper labelling $(F,f)$} of $G$ if $F(uv)\neq F(wz)$ for distinct edges $uv, wz\in E(G)$ and two vertex labels $f(x)\neq f(y)$ for distinct vertices $x,y\in V(G)$.\qqed
\end{defn}

\begin{defn} \label{defn:proper-bipartite-labelling-ongraphs}
(\cite{Gallian2016, Bing-Yao-Cheng-Yao-Zhao2009, Zhou-Yao-Chen-Tao2012}) Suppose that a connected $(p,q)$-graph $G$  with $1\leq p-1\leq q$ admits a mapping
$\theta:V(G)\rightarrow \{0,1,2,\dots \}$. For edges $xy\in E(G)$
the induced edge labels are defined as
$\theta(xy)=|\theta(x)-\theta(y)|$. Write
$\theta(V(G))=\{\theta(u):u\in V(G)\}$,
$\theta(E(G))=\{\theta(xy):xy\in E(G)\}$. There are the following
restrictions:
\begin{asparaenum}[(a)]
\item \label{Proper01} $|\theta(V(G))|=p$.
\item \label{Proper02} $|\theta(E(G))|=q$.
\item \label{Graceful-001} $\theta(V(G))\subseteq [0,q]$, $\min \theta(V(G))=0$.
\item \label{Odd-graceful-001} $\theta(V(G))\subset [0,2q-1]$, $\min \theta(V(G))=0$.
\item \label{Graceful-002} $\theta(E(G))=\{\theta(xy):xy\in E(G)\}=[1,q]$.
\item \label{Odd-graceful-002} $\theta(E(G))=\{\theta(xy):xy\in E(G)\}=[1,2q-1]^o$.
\item \label{Set-ordered} $G$ is a bipartite graph with the bipartition
$(X,Y)$ such that $\max\{\theta(x):x\in X\}< \min\{\theta(y):y\in
Y\}$ ($\theta(X)<\theta(Y)$ for short).
\item \label{Graceful-matching} $G$ is a tree containing a perfect matching $M$ such that
$\theta(x)+\theta(y)=q$ for each edge $xy\in M$.
\item \label{Odd-graceful-matching} $G$ is a tree having a perfect matching $M$ such that
$\theta(x)+\theta(y)=2q-1$ for each edge $xy\in M$.
\end{asparaenum}

A \emph{graceful labelling} $\theta$ holds (\ref{Proper01}),
(\ref{Graceful-001}) and (\ref{Graceful-002}) true; a \emph{set-ordered
graceful labelling} $\theta$ satisfies (\ref{Proper01}),
(\ref{Graceful-001}), (\ref{Graceful-002}) and (\ref{Set-ordered}), simultaneously;
a \emph{strongly graceful labelling} $\theta$ holds (\ref{Proper01}),
(\ref{Graceful-001}), (\ref{Graceful-002}) and
(\ref{Graceful-matching}) true; a \emph{strongly set-ordered graceful
labelling} $\theta$ complies with (\ref{Proper01}), (\ref{Graceful-001}),
(\ref{Graceful-002}), (\ref{Set-ordered}) and
(\ref{Graceful-matching}) meanwhile. An \emph{odd-graceful labelling} $\theta$ holds (\ref{Proper01}),
(\ref{Odd-graceful-001}) and (\ref{Odd-graceful-002}) true; a
\emph{set-ordered odd-graceful labelling} $\theta$ obeys
(\ref{Proper01}), (\ref{Odd-graceful-001}), (\ref{Odd-graceful-002})
and (\ref{Set-ordered}), simultaneously; a \emph{strongly odd-graceful labelling}
$\theta$ holds (\ref{Proper01}), (\ref{Odd-graceful-001}),
(\ref{Odd-graceful-002}) and (\ref{Odd-graceful-matching}) true at the same time; a
\emph{strongly set-ordered odd-graceful labelling} $\theta$ fulfils
(\ref{Proper01}), (\ref{Odd-graceful-001}),
(\ref{Odd-graceful-002}), (\ref{Set-ordered}) and
(\ref{Odd-graceful-matching}), simultaneously.\qqed
\end{defn}

Another group of definitions is about the sum of end labels of edges, we present it as follows:

\begin{defn}\label{defn:felicitous-14}
(\cite{Gallian2016, Zhou-Yao-Chen-2013}) A $(p,q)$-graph $G$ with $1\leq p-1\leq q$ admits a labelling $f:V(G) \rightarrow H$, where $H$ is an \emph{integer set}. For edges $xy\in E(G)$
the induced edge labels are defined as $f(uv)=f(u)+f(v)$ or
$f(uv)=f(u)+f(v)~(\textrm{mod}~M)$  for every edge $uv\in E(G)$. And
$f(V(G))=\{f(u):u\in V(G)\}$ is the \emph{vertex label set}, and
$f(E(G))=\{f(xy):xy\in E(G)\}$  is the \emph{edge label set}. There are the following
constraints:
\begin{asparaenum}[c-1. ]
\item \label{vertex-set-H} $f(V(G))\subseteq H$.
\item \label{vertex-set-sum-00} $f(V(G))\subseteq [0,q-1]$.
\item \label{vertex-set-sum-11} $f(V(G))\subseteq [0,q]$.
\item \label{vertex-set-sum-odd} $f(V(G))\subseteq [0,2q-1]$.
\item \label{vertex-set-sum-even} $f(V(G))\subseteq [0,2q]$.
\item \label{edge-labels} $f(uv)=f(u)+f(v)$.
\item \label{edge-labels-even-odd} $f(uv)=f(u)+f(v)$ when $f(u)+f(v)$ is even, and $f(uv)=f(u)+f(v)+1$ when $f(u)+f(v)$ is odd.
\item \label{modulo-00} $f(uv)=f(u)+f(v)~(\textrm{mod}~q)$.
\item \label{modulo-11} $f(uv)=f(u)+f(v)~(\textrm{mod}~2q)$.
\item \label{edge-set-sum}  $f(E(G))=[0,q-1]$.
\item \label{even-edge-set}  $f(E(G))=[0, 2q-2]^e$.
\item \label{even-edge-set-11}  $f(E(G))=[2, 2q]^e$.
\item \label{edge-set-odd-sum}  $f(E(G))=[1,2q-1]^o$.
\item \label{edge-set-q}  $|f(E(G))|=q$.
\item \label{sequential-edge-set}  $f(E(G))=[c,c+q-1]$.
\item \label{modulo-ordered} There exists an integer $k$ so that $\min \{f(u),f(v)\}\leq k <\max\{f(u),f(v)\}$.
\item \label{modulo-ordered-00} $G$ is bipartite with its bipartition $(X,Y)$ so that $\max f(X)<\min f(Y)$.
\end{asparaenum}

We call $f$ to be: (1) a \emph{felicitous labelling} if c-\ref{vertex-set-sum-11}, c-\ref{modulo-00} and c-\ref{edge-set-sum}  hold true; (2) an \emph{odd-elegant labelling} if c-\ref{vertex-set-sum-odd}, c-\ref{modulo-11} and c-\ref{edge-set-odd-sum}  hold true; (3) a \emph{harmonious labelling} if c-\ref{vertex-set-sum-00}, c-\ref{modulo-00} and c-\ref{edge-set-sum}  hold true, when $G$ is a tree, exactly one edge label may be used on two vertices; (4) a \emph{properly even harmonious labeling} if c-\ref{vertex-set-sum-even}, c-\ref{modulo-11} and c-\ref{even-edge-set}  hold true;
(5) a \emph{$c$-harmonious labeling} if c-\ref{vertex-set-sum-00}, c-\ref{edge-labels} and c-\ref{sequential-edge-set}  hold true; (6) an \emph{even sequential harmonious labeling} if c-\ref{vertex-set-sum-even}, c-\ref{edge-labels-even-odd} and c-\ref{even-edge-set-11}  hold true; (7) a \emph{$H$-harmonious harmonious labeling} if c-\ref{vertex-set-H}, c-\ref{edge-labels} and c-\ref{edge-set-q}  hold true; (8) a \emph{strongly harmonious labeling} if c-\ref{vertex-set-sum-11}, c-\ref{modulo-00}, \ref{modulo-ordered} and c-\ref{edge-set-sum} hold true; (9) a \emph{set-ordered harmonious labeling} if c-\ref{vertex-set-sum-11}, c-\ref{modulo-00}, c-\ref{modulo-ordered-00} and c-\ref{edge-set-sum} hold true; (10) an \emph{set-ordered odd-elegant labelling} if c-\ref{vertex-set-sum-odd}, c-\ref{modulo-11}, c-\ref{modulo-ordered-00} and c-\ref{edge-set-odd-sum}  hold true;
\end{defn}

\section{Techniques for generating TB-paws from Topsnut-gpws}

Our methods for generating TB-paws from Topsnut-gpws are mainly based on the following disciplines: Topsnut-configurations, graph-labellings, Topsnut-matrices, Topsnut-matchings and graphic groups, these are two invariable quantities of Topsnut-gpws.

\subsection{Topsnut-configurations}

By simple and clear statements, we utilize the odd-graceful/odd-graceful labellings and Topsnut-configuration to show several methods for creating TB-paws.

\subsubsection{Path-neighbor-method}

As known, each of caterpillars (see Fig.\ref{fig:caterpillar-general}) and lobsters admits an odd-graceful labelling \cite{Zhou-Yao-Chen-Tao2012}.

$H$ is a caterpillar of a $(p,q)$-graph $G$ admitting a set-ordered odd-graceful labelling $f$. So, the deletion of leaves of $H$ is just a path $P=u_1u_2\cdots u_n$ in the caterpillar $H$, such that each $u_i$ has its own leaf set $L(u_i)=\{v_{i,j}:j\in [1,m_i]\}$ with $m_i\geq 0$ and $i\in [1,n]$, and the vertex set is
$$V(H)=V(P)\cup L(u_1)\cup L(u_2)\cup \cdots \cup L(u_n)$$ See a caterpillar $T$ depicted in Fig.\ref{fig:caterpillar-general}. Thereby, we can get a vv-type TB-paw $D_{v}(P)=f(u_1)f(u_2)\cdots f(u_n)$ and a vev-type TB-paw $$D_{vev}(P)=f(u_1)f(u_1u_2)f(u_2)f(u_2u_3)\cdots f(u_{n-1}u_n)f(u_n)$$ by the \emph{path-method} for deriving two types of TB-paws from Topsnut-gpws.

\begin{figure}[h]
\centering
\includegraphics[height=2.2cm]{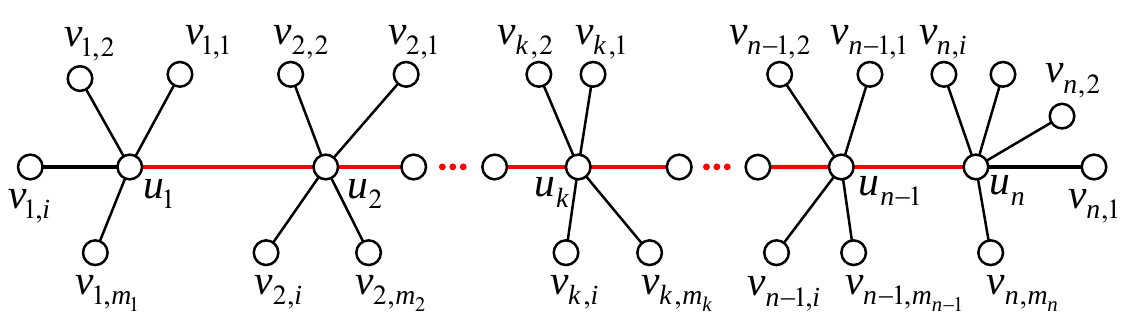}
\caption{\label{fig:caterpillar-general}{\small A general caterpillar $T$.}}
\end{figure}

From a path $Q=u_1u_2u_3u_4u_5$ revealed in Fig.\ref{fig:labelled-caterpillar}, we can get a vv-type TB-paw $D_{vv}(Q)=037102512$ and a vev-type TB-paw $D_{vev}(Q)=03737271015251312$ by the path-method.

\begin{figure}[h]
\centering
\includegraphics[height=2.6cm]{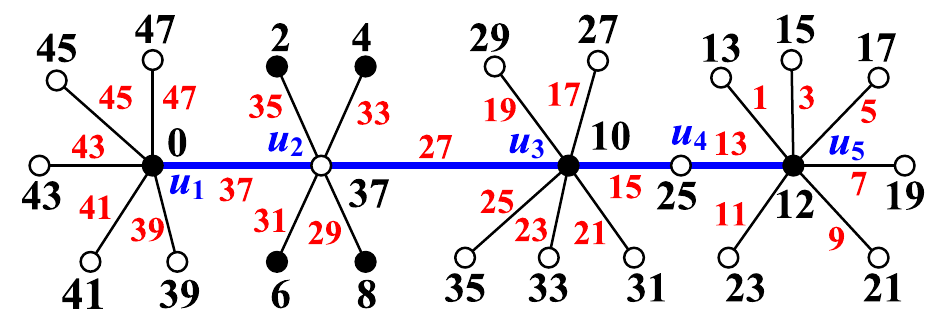}
\caption{\label{fig:labelled-caterpillar}{\small A labelled caterpillar $H$, also, a Topsnut-gpw.}}
\end{figure}

Next, we introduce the \emph{path-neighbor-method}.

Let a vertex $u$ have its neighbor set $N(u)=\{v_1,v_2,\dots ,v_{m_u}\}$ with $f(v_{i})<f(v_{i+1})$, we have a vv-type TB-paw
$$D_{vv}(u)=f(u)f(v_1)f(v_2)\cdots f(v_{m_u})f(u)$$
and
$${
\begin{split}
D_{vev}(u)=&f(u)f(uv_1)f(v_1)f(v_1v_2)f(v_2)\cdots \\
&f(v_{m_u-1}v_{m_u})f(v_{m_u})f(v_{m_u}u)f(u)
\end{split}}
$$
by the mini-principle, and moreover we get a vv-type TB-paw
$$D'_{vv}(u)=f(u)f(v_{m_u})f(v_{m_u-1})\cdots f(v_2)f(v_1)f(u)$$
and another vev-type TB-paw
$${
\begin{split}
&D'_{vev}(u)=f(u)f(uv_{m_u})f(v_{m_u})f(v_{m_u}v_{m_u-1})\\
&f(v_{m_u-1})\cdots f(v_3v_2)f(v_2)f(v_2v_1)f(v_1)f(v_1u)f(u)
\end{split}}
$$
by the maxi-principle. Let $N(v_1)=\{v_{1,1},v_{1,2},\dots ,v_{1,m_{v_1}}\}$ with $f(v_{1,j})<f(v_{1,j+1})$, where $v_1\in N(u)$. By the mini-principle, for the edge $uv_1$, we write a vv-type TB-paw
\begin{equation}\label{eqa:path-neighbor-method-1}
{
\begin{split}
D_{vv}(uv_1)=&f(u)f(v_1)f(v_2)\cdots f(v_{m_u})f(u)\\
&f(v_1)f(v_{1,1})f(v_{1,2})\cdots f(v_{1,m_{v_1}})f(v_1)
\end{split}}
\end{equation}
by the mini-principle, denoted as
\begin{equation}\label{eqa:formula-1}
D_{vv}(uv_1)=D_{vv}(u)\uplus D_{vv}(v_1),
\end{equation}
and moreover we can write a vev-type TB-paw
\begin{equation}\label{eqa:path-neighbor-method-2}
{
\begin{split}
&D_{vev}(uv_1)=f(u)f(uv_1)f(v_1)f(v_1v_2)f(v_2)\cdots \\
&f(v_{m_u-1})f(v_{m_u-1}v_{m_u})f(v_{m_u})f(v_{m_u}u)f(u)\\
&f(v_1)f(v_1v_{1,1})f(v_{1,1})f(v_{1,1}v_{1,2})f(v_{1,2})\cdots \\
&f(v_{1,m_{v_1}-1}v_{1,m_{v_1}})f(v_{1,m_{v_1}})f(v_{1,m_{v_1}}v_1)f(v_1)
\end{split}}
\end{equation}
by the mini-principle, denoted as
\begin{equation}\label{eqa:formula-2}
D_{vev}(uv_1)=D_{vev}(u)\uplus D_{vev}(v_1).
\end{equation}
Similarly with (\ref{eqa:formula-1}) and (\ref{eqa:formula-2}), we can write $D'_{vv}(uv_1)=D'_{vv}(u)\uplus D'_{vv}(v_1)$ and $D'_{vev}(uv_1)=D'_{vev}(u)\uplus D'_{vev}(v_1)$ by the maxi-principle.

For example, by means of a caterpillar $H$ exhibited in Fig.\ref{fig:labelled-caterpillar} and two formulae (\ref{eqa:path-neighbor-method-1}) and (\ref{eqa:path-neighbor-method-2}), we have two vv-type TB-paws
$${
\begin{split}
D_{vv}(H)=&0373941434547037246810371025272\\
&9313335102510122512131517192123
\end{split}}
$$
$${
\begin{split}
D'_{vv}(H)=&047454341393703710864237103531\\
&292725102512102512232119171513
\end{split}}
$$
according to the mini-principle and the maxi-principle. Similarly,
$${
\begin{split}
D_{vev}(H)=&037373939414143434545474703\\
&73523343162982710371015251727\\
&19292131233325351025151013122\\
&5121133155177199211123
\end{split}}$$
is a vev-type TB-paw by the mini-principle, and moreover,
$${
\begin{split}
D'_{vev}(H)=&04747454543434141393937370372\\
&7102983163343523710273725352333\\
&21311929172715251025131215102\\
&5121123921719517315113
\end{split}}
$$
is obtained by the maxi-principle.

It is easy to see that there are many ways to generate vv-type/vev-type TB-paws from a Topsnut-gpw made by a labelled caterpillar, except the mini-principle and the maxi-principle. In a vv-type/vev-type TB-paw $f(u_i)D(N(u_i))f(u_i)$, we say $head=f(u_i)$, $f(u_i)=tail$, and $D(N(u_i))=body$ in the vv-type/vev-type TB-paw $f(u_i)D(N(u_i))f(u_i)$. So, we have $(m_i)!$ permutations for writing $D(N(u_i))$, and a caterpillar with the path $u_1u_2\cdots u_n$ distributes us $\prod^n_{i=1}(m_i)!$ vv-type/vev-type TB-paws at least.

\vskip 0.2cm

\subsubsection{Cycle-neighbor-method} By a caterpillar $T$ depicted in Fig.\ref{fig:caterpillar-general}, we add an edge $u_1u_n$ to $T$ for joining the vertex $u_1$ with $u_n$, the resulting graph is denoted as $T'=T+u_1u_n$, in which there is a cycle $C=u_1u_2\cdots u_nu_1$. So, we have a vv-type TB-paw
\begin{equation}\label{eqa:Cycle-neighbor-method-1}
{
\begin{split}
D_{vv}(T')&=D_{vv}(u_1)\uplus D_{vv}(u_2)\uplus \cdots \\
&\quad \uplus D_{vv}(u_n)\uplus D_{vv}(u_1)\\
&=\left [\uplus ^n_{k=1}D_{vv}(u_k)\right ]\uplus D_{vv}(u_1)
\end{split}}
\end{equation}
along the cycle $C$, and a vev-type TB-paw
\begin{equation}\label{eqa:Cycle-neighbor-method-2}
{
\begin{split}
D_{vev}(T')&=D_{vev}(u_1)\uplus D_{vev}(u_2)\uplus \cdots \\
&\quad \uplus D_{vev}(u_n)\uplus D_{vev}(u_1)\\
&=\left [\uplus ^n_{k=1}D_{vev}(u_k)\right ]\uplus D_{vev}(u_1).
\end{split}}
\end{equation}

Since we have $n$ initial vertices of the cycle $C=u_1u_2\cdots u_nu_1$, so the number of vv-type/vev-type TB-paws distributed from $C$ is equal to \begin{equation}\label{eqa:TB-paws-by-cycle}
N_{tbp}(C)=n\cdot (m_1+1)!\cdot (m_n+1)!\cdot \prod^{n-1}_{i=2}(m_i)!.
\end{equation}

\vskip 0.2cm

\subsubsection{Lobster-neighbor-method} In \cite{Zhou-Yao-Chen-Tao2012} and \cite{Zhou-Yao-Chen-2013}, the authors have proven: \emph{Each lobster admits one of odd-graceful labelling and odd-elegant labelling}. Thereby, we can apply lobsters to make Topsnut-gpws, or we select sub-Topsnut-gpws being lobsters of Topsnut-gpws to derive vv-type/vev-type TB-paws. Another advantage about lobsters is helpful for us to produce random Topsnut-gpws that generate random vv-type/vev-type TB-paws.

Recall, a lobster is defined as a tree $T$ such that the deletion of leaves of $T$ results in a caterpillar, that is, the remainder $T-L(T)$ is just a caterpillar, where $L(T)$ is the set of all leaves of $T$. In other words, each lobster can be constructed by adding leaves to some caterpillar. The results in \cite{Zhou-Yao-Chen-Tao2012} and \cite{Zhou-Yao-Chen-2013} enable us to build up lobsters admitting odd-graceful/odd-elegant labellings by caterpillars admitting set-ordered odd-graceful/odd-elegant labellings through adding leaves.

We show an example for illustrating ``adding leaves to a caterpillar admitting a set-ordered odd-graceful labelling produces a lobster admitting an odd-graceful labelling''. Based on a caterpillar $H$, as revealed in Fig.\ref{fig:labelled-caterpillar}, we can see that Fig.\ref{fig:add-leaves-1} gives the procedure of ``adding randomly leaves to $H$'', and the labelling new edges is shown in Fig.\ref{fig:add-leaves-2}, and moreover the procedure of ``labelling new vertices and relabelling old vertices'' presents the desired odd-graceful lobster (see Fig.\ref{fig:add-leaves-3}).

\begin{figure}[h]
\centering
\includegraphics[height=3cm]{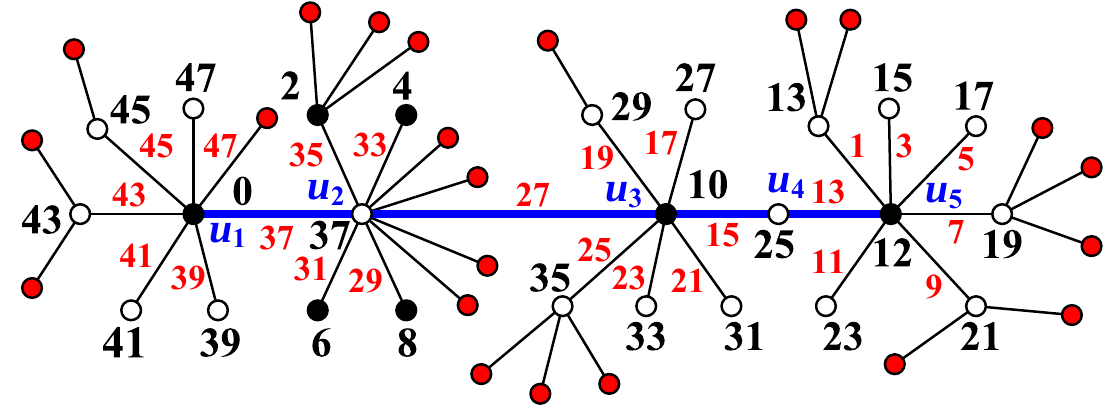}
\caption{\label{fig:add-leaves-1}{\small Adding leaves (with red vertices) randomly to a caterpillar $H$ exhibited in Fig.\ref{fig:labelled-caterpillar} for producing a lobster.}}
\end{figure}

\begin{figure}[h]
\centering
\includegraphics[height=2.8cm]{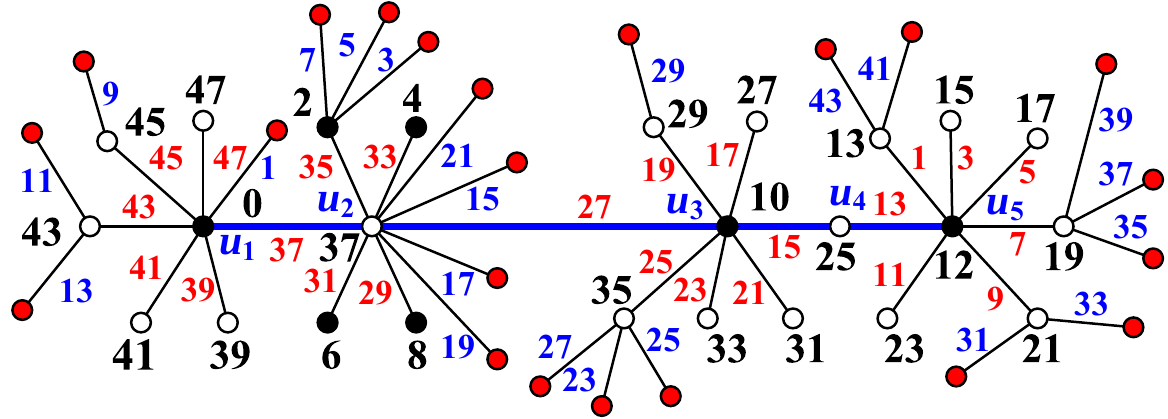}
\caption{\label{fig:add-leaves-2}{\small Labelling new edges.}}
\end{figure}

\begin{figure}[h]
\centering
\includegraphics[height=3cm]{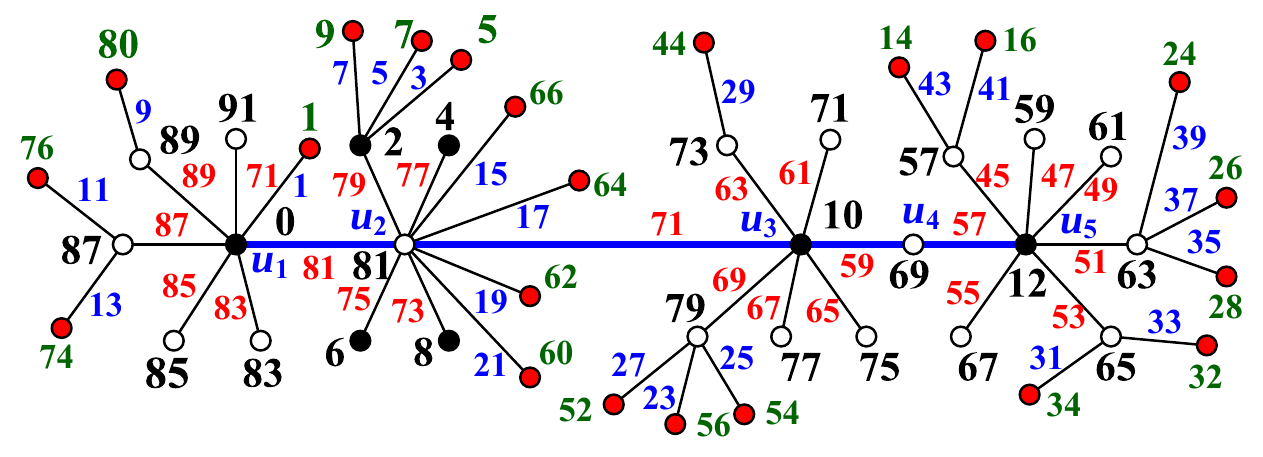}
\caption{\label{fig:add-leaves-3}{\small An odd-graceful lobster obtained by labelling new vertices and relabelling old vertices and old edges.}}
\end{figure}

We, now, come to introduce the \emph{lobster-neighbor-method} for getting vv-type/vev-type TB-paws from a Topsnut-gpw made by an odd-graceful lobster in the following algorithm.

\begin{thm}\label{thm:lobster-neighbor-method}
There exists an efficient and polynomial algorithm (LOBSTER-algorithm) for generating vv-type/vev-type TB-paws from Topsnut-gpws made by odd-graceful lobsters.
\end{thm}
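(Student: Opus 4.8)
The plan is to make the existence claim constructive: I would exhibit the LOBSTER-algorithm explicitly and then bound its running time. The first step is a structural decomposition. Given an odd-graceful lobster $T$, I would compute the caterpillar $T'=T-L(T)$ and then its spine path $P=u_1u_2\cdots u_n$ by two rounds of leaf-deletion, which costs $O(p)$ on a $(p,q)$-tree. Each spine vertex $u_i$ then carries a set of first-level neighbors (the caterpillar leaves adjacent to it), and each such first-level neighbor may in turn carry its own set of second-level leaves; this three-tier hierarchy is precisely the data an odd-graceful lobster presents. Invoking \cite{Zhou-Yao-Chen-Tao2012, Zhou-Yao-Chen-2013}, I would obtain an odd-graceful (or odd-elegant) labelling $f$ by the leaf-adding-and-relabelling construction displayed in Fig.\ref{fig:add-leaves-1}--Fig.\ref{fig:add-leaves-3}; the crucial point to record is that this construction is itself polynomial, since it relabels each vertex only a bounded number of times.

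The second step generalizes the path-neighbor-method to the three tiers. For each spine vertex $u_i$ I would sort its neighbor list by label and form $D_{vv}(u_i)$ and $D_{vev}(u_i)$ exactly as in (\ref{eqa:path-neighbor-method-1}) and (\ref{eqa:path-neighbor-method-2}), fixing the mini-principle (or the maxi-principle) uniformly throughout. Then, for every first-level neighbor $w$ of $u_i$ that carries second-level leaves, I would append the block built over $N(w)$, chaining everything with the concatenation operator $\uplus$ in the manner of (\ref{eqa:formula-1}) and (\ref{eqa:formula-2}). Traversing $P$ from $u_1$ to $u_n$ and, at each $u_i$, descending into the first- and second-level leaves in the fixed label order, yields single well-defined strings
$$D_{vv}(T)=\left[\uplus^n_{i=1}D_{vv}(u_i)\right]\uplus(\text{leaf blocks}),\qquad D_{vev}(T)=\left[\uplus^n_{i=1}D_{vev}(u_i)\right]\uplus(\text{leaf blocks}).$$

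The third step is correctness and complexity. I would verify that every vertex of $T$ contributes its label, and in the vev-type case every edge contributes its induced label $f(uv)=|f(u)-f(v)|$, so that the output is genuinely a vv-type/vev-type TB-paw in the sense already used for caterpillars. For the running time, sorting the neighbor list at $u_i$ costs $O(\ud_T(u_i)\log \ud_T(u_i))$, each $\uplus$-concatenation is linear in the lengths involved, and since $\sum_i \ud_T(u_i)\le 2q=2(p-1)$ while the whole string has $O(p)$ symbols, the procedure runs in $O(p\log p)$ time, polynomial in the size of the Topsnut-gpw, as claimed.

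The hard part will be bookkeeping rather than deep mathematics. I must pin down the descent order at the branching second tier so that the string is unambiguous and no vertex or edge is emitted twice, and I must confirm that the labelling-construction step can indeed be made polynomial rather than merely existential. The key distinction to make explicit is that the factorial blow-up in the number of admissible orderings (as in (\ref{eqa:TB-paws-by-cycle})) governs how many TB-paws exist, not the time to emit any one of them; once a fixed traversal rule is chosen, producing a single string visits each element a bounded number of times. Isolating this distinction is exactly what keeps the algorithm polynomial, and it is the point I expect to require the most care.
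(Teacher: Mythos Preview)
Your proposal is correct and follows essentially the same approach as the paper: decompose the lobster into its underlying caterpillar plus the extra leaf tier, invoke the set-ordered odd-graceful construction from \cite{Zhou-Yao-Chen-Tao2012, Zhou-Yao-Chen-2013} to obtain the labelling, and then extend the path-neighbor concatenation in (\ref{eqa:formula-1})--(\ref{eqa:formula-2}) one level deeper to cover the second-tier leaves. The only notable difference is emphasis: the paper spells out the labelling of the added leaves explicitly (its Step~2 with Substeps~2.1--2.3) and then simply asserts polynomiality from constructiveness, whereas you cite the figures and references for the labelling and instead give a sharper $O(p\log p)$ running-time bound together with the useful remark that the factorial count of admissible orderings is irrelevant to the cost of emitting any single TB-paw.
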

\begin{proof} We, directly, use an algorithmic proof here for generating vv-type/vev-type TB-paws from Topsnut-gpws.

\textbf{Step 1.} Suppose that a lobster $T$ corresponds to a caterpillar $H$ obtained by deleting some leaves from $T$. Write $L'(T)$ as the set of deleted leaves, so $H=T-L'(T)$. Conversely, $T$ is obtained by adding the leaves of $L'(T)$ to $H$. Let $P=u_1u_2\cdots u_n$ be the path as the remainder after the deletion of leaves of the caterpillar $H$, and let $g$ be a set-ordered odd-graceful labelling of $H$. Thereby, we have
$$D_{vv}(u_i)=g(u_i)g(v_{i,1})g(v_{i,2})\cdots g(v_{i,m_i})g(u_i)$$
with $v_{i,j}\in N(u_i)=\{v_{i,j}:j\in [1,m_i]\}$
and
$${
\begin{split}
D_{vev}(u_i)=&g(u_i)g(u_iv_{i,1})g(v_{i,1})g(v_{i,1}v_{i,2})g(v_{i,2})\\
&\cdots g(v_{i,m_i-1}v_{i,m_i})g(v_{i,m_i})g(v_{i,m_i}u_i)g(u_i)
\end{split}}
$$
with $v_{i,j}\in N(u_i)=\{v_{i,j}:j\in [1,m_i]\}$. Thereby, we have a vv-type TB-paw
\begin{equation}\label{eqa:lobster-neighbor-method-1}
{
\begin{split}
D_{vv}(H)&=D_{vv}(u_1)\uplus D_{vv}(u_2)\uplus \cdots \uplus D_{vv}(u_n)\\
&=\uplus ^n_{k=1}D_{vv}(u_k)
\end{split}}
\end{equation}
and a vev-type TB-paw
\begin{equation}\label{eqa:lobster-neighbor-method-2}
{
\begin{split}
D_{vev}(H)&=D_{vev}(u_1)\uplus D_{vev}(u_2)\uplus \cdots \uplus D_{vev}(u_n)\\
&=\uplus ^n_{k=1}D_{vev}(u_k).
\end{split}}
\end{equation}

\textbf{Step 2.} Adding randomly leaves to $H$ for forming a lobster $T$. Since $g$ is a set-ordered odd-graceful labelling of the caterpillar $H$, so $V(H)=X\cup Y$ with $X\cap Y=\emptyset $, and any edge $xy$ of $H$ holds $x\in X$ and $y\in Y$ such that $\max g(X)<\min g(Y)$. By the hypothesis above, we can write $X=\{x_1,x_2,\dots x_s\}$ with $g(x_i)<g(x_{i+1})$ for $i\in [1,s-1]$, and $Y=\{y_1,y_2,\dots y_t\}$ with $g(y_j)<g(y_{j+1})$ for $j\in [1,t-1]$. Suppose that each vertex $x_i$ is added leaves from the set $L(x_i)=\{u_{i,j}:j\in [1,a_i]\}$ with $i\in [1,s]$, and each vertex $y_j$ is added leaves from the set $L(y_j)=\{w_{j,k}:k\in [1,b_j]\}$ with $k\in [1,t]$. Here, it is allowed some $a_i=0$ or $b_j=0$. The resulting tree is just $T$. Therefore, $$|E(T)|=|E(H)|+\sum^s_{i=1} a_i+\sum^t_{j=1} b_j,$$ and write $M=|E(T)|-|E(H)|$.

We define a labelling $f$ for $T$ in the following steps.

\textbf{Substep 2.1.} We label the edges $x_iu_{i,j}$ of $T$ in the increasing order: $f(x_1u_{1,j})=2j-1$ for $j\in [1,a_1]$, $f(x_2u_{2,j})=2j+f(x_1u_{1,a_1})$ for $j\in [1,a_2]$, and
$$f(x_ku_{k,j})=2j+\sum ^{k-1}_{i=1}f(x_iu_{i,a_i}),~ j\in [1,a_k]$$
with $k\in [2,s]$. Thus, $f(x_su_{2,a_s})=(2\sum ^{s}_{i=1}a_i)-1$.

\textbf{Substep 2.2.} For the edges $y_jw_{j,k}$, we set in the decreasing order: $f(y_tw_{t,k})=2k+f(x_su_{2,a_s})$ with $k\in [1,b_t]$, $f(y_{t-1}w_{t-1,k})=2k+f(y_tw_{t,k})$ with $k\in [1,b_{t-1}]$, and
$$f(y_{t-j}w_{t-j,k})=2k+\sum ^{j}_{i=1}f(y_{t-i+1}w_{t-i+1,b_{t-i+1}}),$$
with $k\in [1,b_{t-j}]$ and $j\in [1,t-1]$.

\textbf{Substep 2.3.} We come to label the vertices of $T$ in the following way: $f(x)=g(x)$ for $x\in X$; $f(u_{i,j})=f(x_iu_{i,j})-f(u_{i})$ for $u_{i,j}\in L(x_i)$ with $i\in [1,s]$; $f(y)=g(y)+2M$ for $y\in Y$; and $f(w_{k,j})=f(y_{k})-f(y_kw_{k,j})$ for $w_{k,j}\in L(y_k)$ with $k\in [1,t]$

\textbf{Step 3.} Producing a vv-type TB-paw and a vev-type TB-paw from the lobster $T$. We use the notation $L^*$ to denote the set of new leaves added to $H$ hereafter. We set $D_{vv}(L^*(u_i))=f(\alpha_{i,1})f(\alpha_{i,2})\cdots f(\alpha_{i,c_i})f(u_i)$ for $\alpha_{i,j}\in L^*(u_i)$ with $i\in [1,n]$, and get a vv-type sub-TB-paw
$${
\begin{split}
&D'_{vv}(u_i)=f(u_i)\uplus D_{vv}(L^*(u_i))\uplus f(v_{i,1})\uplus D_{vv}(L^*(v_{i,1}))\\
&\quad \uplus f(v_{i,2})\uplus D_{vv}(L^*(v_{i,2}))\uplus \cdots \uplus D_{vv}(L^*(v_{i,m_i-1}))\\
&\quad \uplus f(v_{i,m_i})\uplus D_{vv}(L^*(v_{i,m_i}))\uplus f(u_i)
\end{split}}
$$
with $u_i\in V(P)$, where $L^*(v_{i,j})$ is the set of new leaves added to $v_{i,j}$ and $D_{vv}(L^*(v_{i,j}))=f(\beta_{i,1})f(\beta_{i,2})\cdots f(\beta_{i,c_i})f(v_{i,j})$ for $\beta_{i,j}\in L^*(v_{i,j})$. Hence, we get the desired vv-type TB-paw
\begin{equation}\label{eqa:lobster-neighbor-method-11}
{
\begin{split}
D_{vv}(T)&=D'_{vv}(u_1)\uplus D'_{vv}(u_2)\uplus \cdots \uplus D'_{vv}(u_n)\\
&=\uplus ^n_{k=1}D'_{vv}(u_k)
\end{split}}
\end{equation}

Next, for getting a vev-type TB-paw from the lobster $T$, we take
$${
\begin{split}
D_{vev}(L^*(u_i))=&f(u_i\alpha_{i,1})f(\alpha_{i,1})f(\alpha_{i,1}\alpha_{i,2})f(\alpha_{i,2})\\
&\cdots f(\alpha_{i,c_i})f(\alpha_{i,c_i}u_i)f(u_i)
\end{split}}
$$ for $\alpha_{i,j}\in L^*(u_i)$, and moreover
$${
\begin{split}
&\quad D_{vev}(L^*(v_{i,j}))=f(v_{i,j}\beta_{i,1})f(\beta_{i,1})f(\beta_{i,1}\beta_{i,2})f(\beta_{i,2})\\
&\cdots f(\beta_{i,c_i-1}\beta_{i,c_i})f(\beta_{i,c_i})f(\beta_{i,c_i}v_{i,j})f(v_{i,j})
\end{split}}
$$ for $\beta_{i,j}\in L^*(v_{i,j})$. So,
$${
\begin{split}
&\quad D'_{vev}(u_i)=f(u_i)\uplus D_{vev}(L^*(u_i))\uplus f(u_iv_{i,1})\\
&f(v_{i,1})\uplus D_{vev}(L^*(v_{i,1}))\uplus f(v_{i,1}v_{i,2})f(v_{i,2})\\
&\uplus D_{vev}(L^*(v_{i,2}))\uplus f(v_{i,2}v_{i,3})f(v_{i,3})\uplus \cdots \\
&\uplus f(v_{i,m_i-1}v_{i,m_i})f(v_{i,m_i})\uplus D_{vev}(L^*(v_{i,m_i}))\\
&\uplus f(v_{i,m_i}u_i)f(u_i)
\end{split}}
$$ with $u_i\in V(P)$. Thereby, the lobster $T$ distributes a vev-type TB-paw as follows
\begin{equation}\label{eqa:lobster-neighbor-method-22}
{
\begin{split}
D_{vev}(T)&=D'_{vev}(u_1)\uplus D'_{vev}(u_2)\uplus \cdots \uplus D'_{vev}(u_n)\\
&=\uplus ^n_{k=1}D'_{vev}(u_k).
\end{split}}
\end{equation}

Since the above algorithm is constructive, so we claim that our algorithm are polynomial and efficient. The proof of the theorem is complete.\end{proof}

\vskip 0.2cm

We estimate the space of Topsnut-gpws made by labelled lobsters. Since adding $m$ leaves to a $(p,q)$-caterpillar $T$ admitting a labelling $f$ producing lobsters, we assume these $m$ leaves are added to $k$ vertices of $T$ with $1\leq k\leq m$.

We select $k$ vertices from $T$ for adding $m$ leaves to them, then we have $A^k_p=p(p-1)\cdots (p-k+1)$ selections, rather than ${p \choose k}=\frac{p!}{k!(p-k)!}$. Next, we decompose $m$ into a group of $k$ parts $m_1,m_2,\cdots ,m_k$ holding $m=m_1+m_2+\cdots +m_k$ with $m_i\neq 0$. Suppose there is $P(m,k)$ groups of such $k$ parts. For a group of $k$ parts $m_1,m_2,\cdots ,m_k$, let $m_{i_1},m_{i_2},\cdots ,m_{i_k}$ be a permutation of the group $\{m_1,m_2,\cdots ,m_k\}$, so we have the number of such permutations is a factorial $k!$. Since the $(p,q)$-caterpillar $T$ is labelled well by the odd-graceful labelling $f$, then we have
\begin{equation}\label{eqa:c3xxxxx}
A_{leaf}(T,m)=\sum^m_{k=1}A^k_p\cdot P(m,k)\cdot k!=\sum^m_{k=1} P(m,k)\cdot p!
\end{equation} to be the number of lobsters made by adding $m$ leaves to $T$, where $P(m,k)=\sum ^k_{r=1}P(m-k,r)$.  Here, computing $P(m,k)$ can be transformed into finding the number $A(m,k)$ of solutions of equation $m=\sum ^k_{i=1}ix_i$. There is a recursive formula
\begin{equation}\label{eqa:c3xxxxx}
A(m,k)=A(m,k-1)+A(m-k,k)
\end{equation}
with $0 \leq k\leq m$. It is not easy to compute the exact value of $A(m,k)$, for example,
$${
\begin{split}
&\quad A(m,6)=\biggr\lfloor \frac{1}{1036800}(12m^5 +270m^4+\\
&+1520m^3-1350m^2-19190m-9081)+\\
&\frac{(-1)^m(m^2+9m+7)}{768}+\frac{1}{81}\left[(m+5)\cos \frac{2m\pi}{3}\right ]\biggr\rfloor .
\end{split}}$$

Finally, let $N(\textrm{caterpillar},p)$ be the number of caterpillars of $p$ vertices, and let a caterpillar $T$ of $p$ vertices has $N(\textrm{odd-graceful})$ set-ordered odd-graceful labellings. Then all of caterpillars of $p$ vertices give us  at least $N(\textrm{caterpillar},p)\cdot N(\textrm{odd-graceful})\cdot A_{leaf}(T,m)$ lobsters having odd-graceful labellings.

\vskip 0.2cm

\subsubsection{Spider-neighbor-method}
Spiders are interesting graphic configurations since they are useful in networks (see Fig.\ref{fig:spider-1}). A spider $S$ has its body $u_0$ joining leaves $v_1, v_2, \dots ,v_k$ and joining legs $P_i=u_{i,1}u_{i,2}\cdots u_{i,m_i}$ with $m_i\geq 1$ and $i\in [1,n]$. Suppose that $S$ admits an odd-graceful labelling $h$ such that $h(v_i)<h(v_{i+1})$, and $h(u_{j,1})<h(u_{j+1,1})$.

We take its body $u_0$ as the beginning of vv-type TB-paws or vev-type TB-paws, and then have
\begin{equation}\label{eqa:Spider-neighbor-method-1}
{
\begin{split}
D_{vv}(S)=&h(u_0)h(v_1)\cdots h(v_k)h(u_0)\uplus D_{vv}(P_1)\\
&\uplus h(u_0)\uplus D_{vv}(P_2)\uplus h(u_0)\uplus D_{vv}(P_3)\\
&\uplus \cdots \uplus D_{vv}(P_n)
\end{split}}
\end{equation} by the mini-principle, where $$D_{vv}(P_j)=h(u_{j,1})h(u_{j,2})\cdots h(u_{j,m_j})$$ with $i\in [1,n]$. Next, we write
$${
\begin{split}
D_{vev}(P_j)=&h(u_{j,1})h(u_{j,1}u_{j,2})h(u_{j,2})h(u_{j,2}u_{j,3})\\
&h(u_{j,3})\cdots h(u_{j,m_j-1}u_{j,m_j})h(u_{j,m_j})
\end{split}}
$$ by the mini-principle, and furthermore
$${
\begin{split}
D_{vev}(u_0)=&h(u_0)h(u_0v_1)h(v_1)h(u_0v_2)\\
&h(v_2)\cdots h(u_0v_k)h(v_k).
\end{split}}
$$ Then we have a vev-type TB-paw
\begin{equation}\label{eqa:Spider-neighbor-method-2}
{
\begin{split}
D_{vev}(S)=&D_{vev}(u_0)\uplus h(u_0)h(u_0u_{1,1})\uplus D_{vev}(P_1)\\
&\uplus h(u_0)h(u_0u_{2,1})\uplus D_{vev}(P_2) \cdots\\
&\uplus h(u_0)h(u_0u_{n,1})\uplus D_{vev}(P_n)
\end{split}}
\end{equation} by the mini-principle. In fact, we can arrange $h(v_1),h(v_2),\dots h(v_k)$ and $h(u_0)h(u_0u_{i,1})D(P_i)$ with $i\in [1,n]$ into $(k+n)$ permutations to form many $D_{vev}(S)$ in the form (\ref{eqa:Spider-neighbor-method-2}). In other words, the number of vv-type/vev-type TB-paws generated by a spider is at least $(k+n)!$.

If a spider $S$ admits a set-ordered odd-graceful labelling, then we can add new leaves to $S$ such that the resulting tree $S^*$ admits an odd-graceful labelling. This tree $S^*$ is called a \emph{haired-spider} (or \emph{super spider}, see Fig.\ref{fig:spider-2}). Since adding leaves randomly, haired-spiders produce random vv-type TB-paws or random vev-type TB-paws.

If a spider $S$ is a subgraph of a $(p,q)$-graph $G$ admitting an $\varepsilon$-labelling $f$, then $S$ admits a labelling $f^*$ induced by $f$. We can use $S$ to generate vv-type/vev-type TB-paws such that this procedure is \emph{irreversible}.

If a spider $S$ admits an \emph{edge-magic proper total coloring} (\cite{Yao-Sun-Zhang-Mu-Sun-Wang-Su-Zhang-Yang-Yang-2018arXiv}) (see for an example depicted in Fig.\ref{fig:spider-1}), then we can add randomly leaves to $S$ (as a \emph{public key}) for generating \emph{super spiders} (as \emph{private keys}) admitting edge-magic proper total colorings (see Fig.\ref{fig:spider-2}). A \emph{super spider} $G$ generates $A_{super}(G)$ vv-type/vev-type TB-paws, where $A_{super}(G)$ can be computed by the numbers $(n+k)!$ and $\prod^n_{i=1}(m_i)!$, see the formula (\ref{eqa:TB-paws-by-cycle}).

\begin{figure}[h]
\centering
\includegraphics[height=3.4cm]{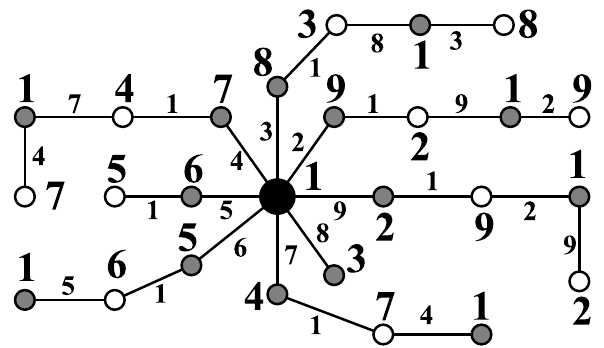}
\caption{\label{fig:spider-1}{\small A spider admits an edge-magic proper total coloring $f$ such that $f(u)+f(uv)+f(v)=12$ for each edge $uv$.}}
\end{figure}

\begin{figure}[h]
\centering
\includegraphics[height=5cm]{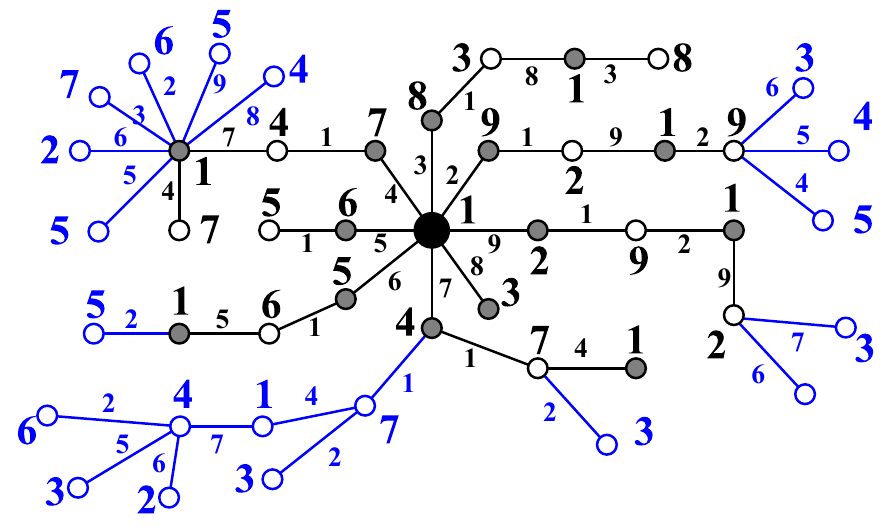}
\caption{\label{fig:spider-2}{\small A super spider $S^*$ admits an edge-magic proper total coloring, where $S^*$ is based on Fig.\ref{fig:spider-1}.}}
\end{figure}

\vskip 0.2cm

\subsubsection{Euler-Hamilton-method} Euler's graphs and Hamilton cycles are popular in graph theory. Sun \emph{et al.} \cite{Sun-Zhang-Yao-IAEAC-2017} show a connection between Euler's graphs and Hamilton cycle by an operation, called \emph{non-adjacent identifying operation and the 2-edge-connected 2-degree-vertex splitting operation}. An example is shown in Fig.\ref{fig:Euler-Hamilton}.

\begin{figure}[h]
\centering
\includegraphics[height=3.4cm]{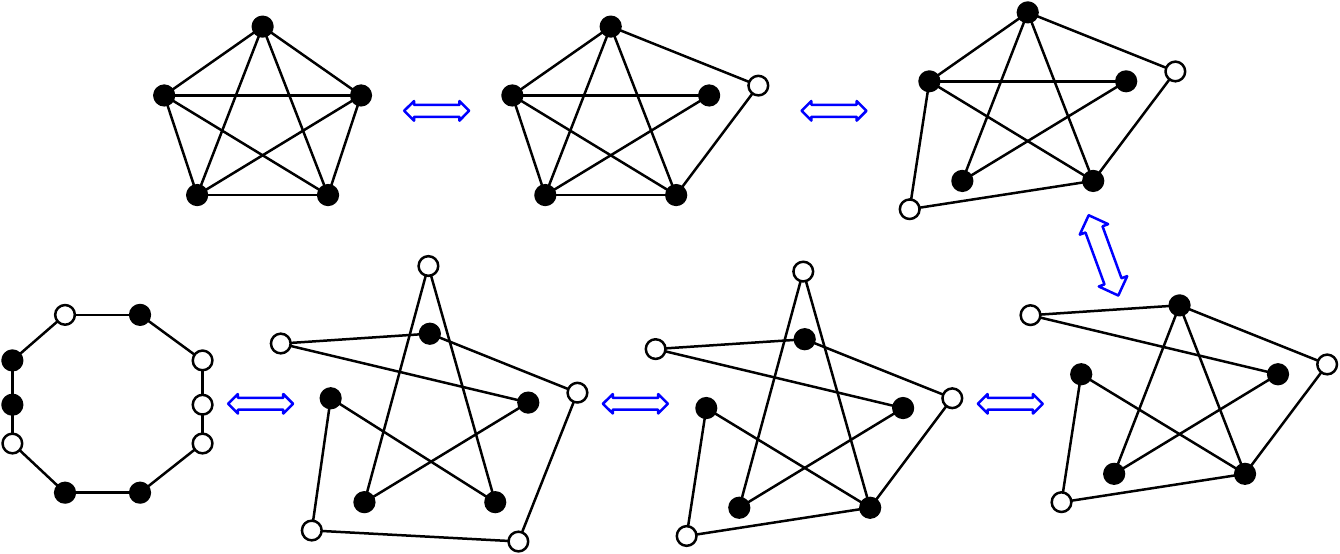}
\caption{\label{fig:Euler-Hamilton}{\small A procedure of connecting an Euler's graph and a Hamilton cycle of length 10 by the non-adjacent identifying operation and the 2-edge-connected 2-degree-vertex splitting operation introduced in \cite{Sun-Zhang-Yao-IAEAC-2017}.}}
\end{figure}

Sun \emph{et al.} \cite{Hui-Sun-Bing-Yao2018} investigate some v-set e-proper $\varepsilon$-labelling on Euler's graphs, where $\varepsilon \in \{$graceful, odd-graceful, harmonious, $k$-graceful, odd sequential, elegant, odd-elegant, felicitous, odd-harmonious, edge-magic total$\}$. (see examples displayed in Fig.\ref{fig:v-set-e-proper})

\begin{figure}[h]
\centering
\includegraphics[height=6cm]{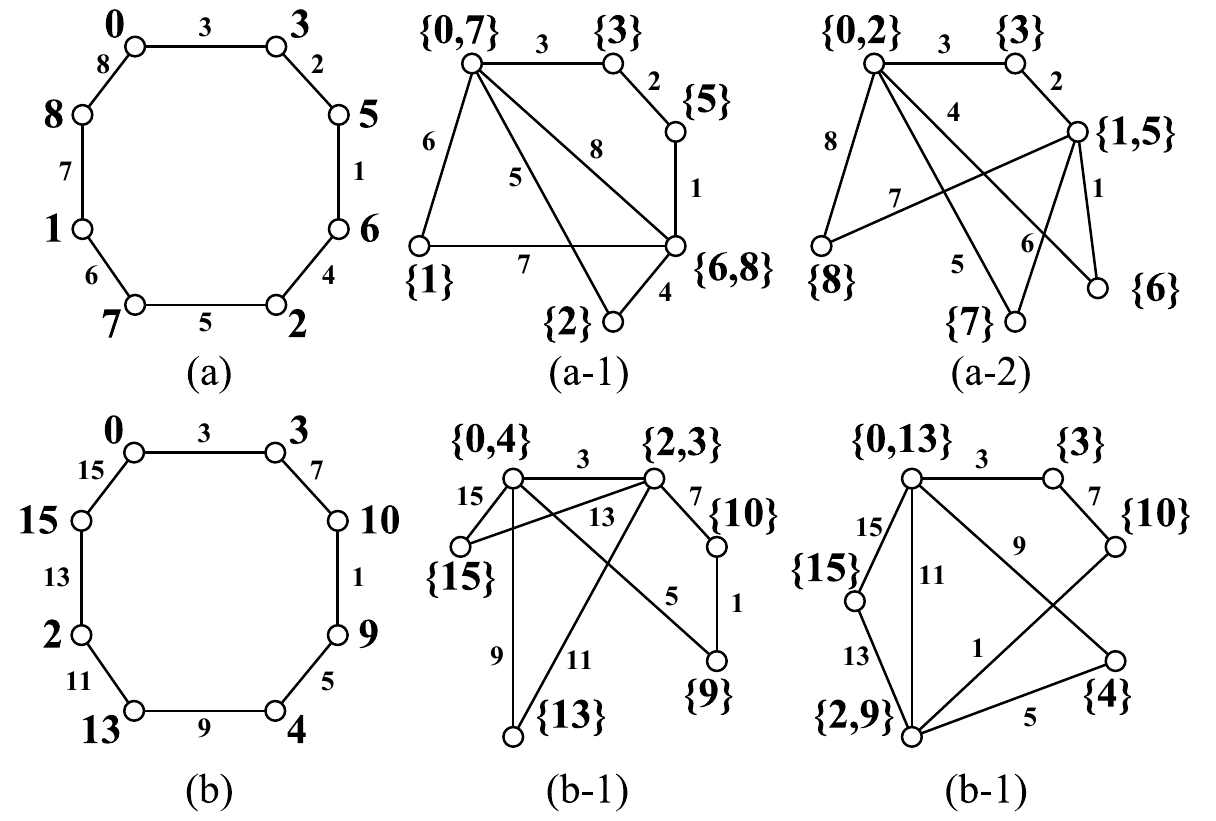}
\caption{\label{fig:v-set-e-proper}{\small (a) A cycle admitting a graceful labelling; (a-1) and (a-2) an Euler's graph admitting two v-set e-proper graceful labellings; (b) a cycle admitting an odd-graceful labelling; (b-1) and (b-2) an Euler's graph admitting two v-set e-proper odd-graceful labellings.}}
\end{figure}

It is easy to generate vv-type/vev-type TB-paws from Topsnut-gpws made by Hamilton cycles. We use an example to introduce the Euler-Hamilton-method in the following: We make a vev-type TB-paw $D_{vev}(C_8)=03325164257617880$ form Fig.\ref{fig:v-set-e-proper}(a), thus, $D_{vev}(C_8)$ can be obtained from Fig.\ref{fig:v-set-e-proper}(a-1) too, and vice versa. Obviously, it is not relaxed to pick up $D_{vev}(C_8)$ from Fig.\ref{fig:v-set-e-proper}(a-1) if Topsnut-gpws have large number vertices and edges. Thereby, a vev-type TB-paw (as \emph{a public key}) made by a labelled Hamilton cycle induces directly a vev-type TB-paw (as \emph{a private key}) generated from a labelled Euler's graph. But, such vv-type/vev-type TB-paws can be attacked since labelled Hamilton cycles are easy to be found by compute attack.

We can let the Euler-Hamilton-method to produce complex vv-type/vev-type TB-paws in the following way: As known, a graph $G$ is an Euler's graph if and only if there are $m$ edge-disjoint cycles $C_1,C_2,\dots, C_m$ such that $E(G)=\bigcup ^m_{i=1}E(C_i)$ (\cite{Bondy-2008}). So, we can get $m$ vev-type TB-paws $D_{vev}(C_i)$ with $i\in [1,m]$. Let $i_1,i_2,\dots ,i_m$ be a permutation of $1,2,\dots ,m$. Thereby, we have many vev-type TB-paws like
$$D_{vev}(G)=D_{vev}(C_{i_1})\uplus D_{vev}(C_{i_2})\uplus \cdots \uplus D_{vev}(C_{i_m}),$$
or
$$D'_{vev}(G)=D(C_{i_1})\uplus D(C_{i_2})\uplus \cdots \uplus D(C_{i_j}),~j<m.$$

Clearly, it is an \emph{irreversible} procedure of generating $D'_{vev}(G)$ from Euler's graphs. We can give a character of a non-Euler's graph: Each non-Euler's graph $G$ corresponds $m$ disjoint paths $P_1,P_2,\dots ,P_m$ such that $|E(G)|=\sum ^m_{i=1}|E(P_i)|$. In fact, any non-Euler's graph $G$ can be add a set $E^*$ of $m$ new edges such that the resulting graph $G+E^*$ is just an Euler's graph, and $G+E^*$ corresponds a Hamilton cycle $C_q$ with $q=|E(G+E^*)|$ \cite{Sun-Zhang-Yao-IAEAC-2017}. Now, we delete all edges of $E^*$ from $C_q$, so $C_q-E^*$ is just a graph consisted of $m$ disjoint paths $P_1,P_2,\dots ,P_m$. The above deduction tell us a way for producing vv-type/vev-type TB-paws from labelled disjoint paths $P_1,P_2,\dots ,P_m$ matching with the non-Euler's graph $G$ (see Fig.\ref{fig:non-Euler-graph}).

\begin{figure}[h]
\centering
\includegraphics[height=6.6cm]{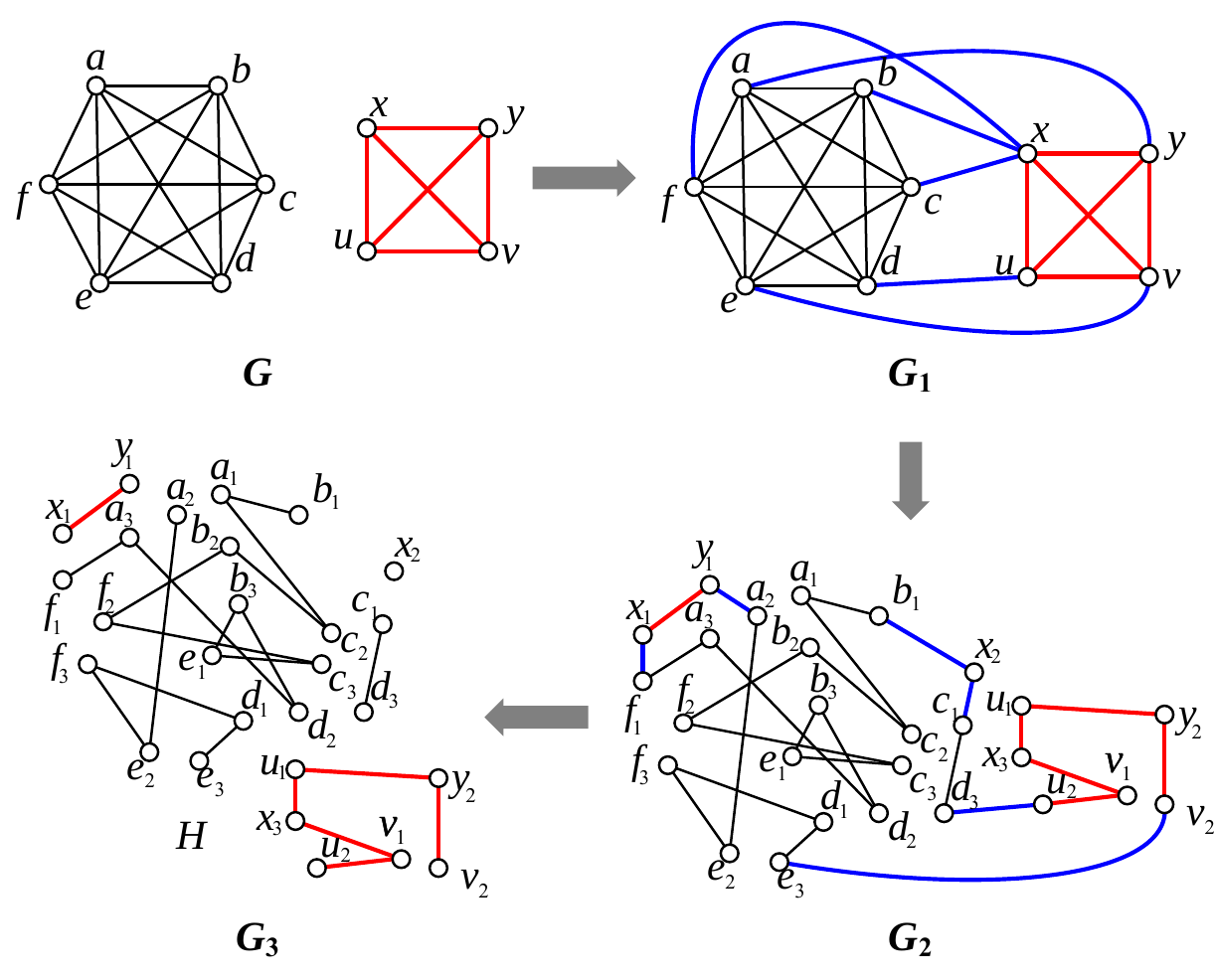}
\caption{\label{fig:non-Euler-graph}{\small $G$ is a non-Euler's graph. $G_1$ is an Euler's graph obtained by adding six new edges to $G$. $G_2$ is a Hamilton cycle obtained by implementing the non-adjacent identifying operation and the 2-edge-connected 2-degree-vertex splitting operation to $G_1$ ( \cite{Sun-Zhang-Yao-IAEAC-2017}). $G_3$ is the desired union of paths after deleting six edges in blue.}}
\end{figure}

\subsection{One labelling with many meanings}

In general situation, a $(p,q)$-graph $G$ admits a vertex labelling $f:V(G)\rightarrow [0,q-1]$ (or $[1,q]$), such that each edge $uv$ is labelled by $f'(uv)=F(f(u),f(v))$, where $F$ is a restrict condition. So, $G$ is just a Topsnut-gpw, $f'(uv)$ may be a number or a set. Here, we define£º
\begin{defn}\label{defn:edge-odd-graceful-labelling}
$^*$ A $(p,q)$-graph $G$ admits an \emph{edge-odd-graceful total labelling} $h:V(G)\rightarrow [0,q-1]$ and $h:E(G)\rightarrow [1,2q-1]^o$ such that  $\{h(u)+h(uv)+h(v):uv\in E(G)\}=[a,b]$ with $b-a+1=q$.\qqed
\end{defn}

\begin{figure}[h]
\centering
\includegraphics[height=2.2cm]{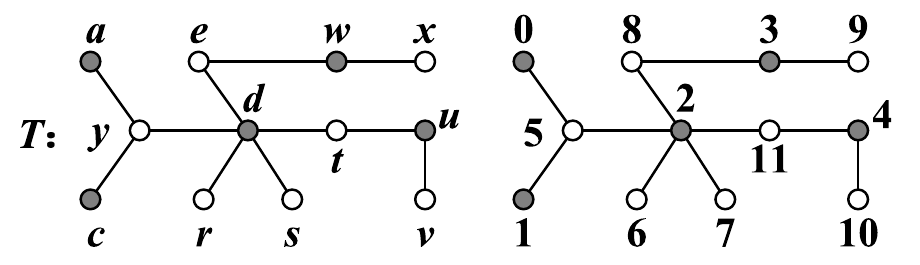}
\caption{\label{fig:a-labelling-many-meanings}{\small A lobster $T$ admitting a vertex labelling.}}
\end{figure}

We say that the lobster $T$ demonstrated in Fig.\ref{fig:a-labelling-many-meanings} admits: (1) a pan-edge-magic total labelling $f_1$ such that $f_1(u)+f_1(uv)+f_1(v)=16$; (2) a pan-edge-magic total labelling $f_2$ holding $f_2(u)+f_2(uv)+f_2(v)=27$; (3) a felicitous labelling $f_3$ satisfying $f_3(uv)=f_3(u)+f_3(v)~(\bmod~11)$; (4) an edge-magic graceful labelling $f_4$ such that $|f_4(u)+f_4(v)-f_4(uv)|=4$; (5) an edge-odd-graceful labelling $f_5$ keeping $f_5(uv)=$an odd number, $\{f_5(u)+f_5(uv)+f_5(v):uv\in E(T)\}=[16,26]$.

So, $T$ admits an e-set v-proper labelling defined as follows: Let $f:V(T)\rightarrow [0,11]$ with $f(a)=0$, $f(c)=1$, $f(d)=2$, $f(w)=3$, $f(u)=4$, $f(y)=5$, $f(r)=6$, $f(s)=7$, $f(e)=8$, $f(x)=9$, $f(v)=10$, $f(t)=11$. And each edge of $T$ has its own label set as follows:
$$
\begin{array}{ll}
f'(ay)=\{1,5,11,21,22\},&f'(cy)=\{2,6,10,19,21\},\\
f'(dy)=\{3,7,9,17,20\},&f'(de)=\{6,10,11,17\},\\
f'(dr)=\{4,8,15,19\},&f'(ds)=\{5,7,9,13,18\},\\
f'(dt)=\{2,3,5,9,14\}, &f'(ew)=\{0,5,7,9,16\},\\
f'(xw)=\{1,4,7,8,15\}, &f'(ut)=\{1,4,11,12\},\\
f'(uv)=\{2,3,10,13\}.
\end{array}
$$ Thereby, we can get five vev-type TBpaws as follows:

\vskip 0.4cm

\noindent $D^1_{vev}(T)=295110101286772685349231114210$,

\noindent $D^2_{vev}(T)=220522021121961872178163159214111241310$,

\noindent $D^3_{vev}(T)=27550612869721080319221144310$,

\noindent $D^4_{vev}(T)=235102124657268738929111141010$,

\noindent and

\noindent $D^5_{vev}(T)=2175210191215613721189379251114310$\\
staring with the vertex $d$ of $T$.

\vskip 0.4cm

In real operation, we can select any vertex of $T$ as the initial vertex for irregular reason. Furthermore, we have TB-paws as follows $$D=D^{i_1}_{vev}(T)\uplus D^{i_2}_{vev}(T)\uplus \cdots \uplus D^{i_m}_{vev}(T)$$ is a vev-type TB-paw, where $D^{i_j}_{vev}(T)\in \{D^i_{vev}(T):i\in [1,5]\}$ with $j\in [1,m]$. Summarizing the above facts, we have a new labelling with many meanings as follows:

\begin{defn}\label{defn:multiple-meanings-vertex-labelling}
$^*$ A $(p,q)$-graph $G$ admits a \emph{multiple edge-meaning vertex labelling} $f:V(G)\rightarrow [0,p-1]$ such that (1) $f(E(G))=[1,q]$ and $f(u)+f(uv)+f(v)=$a constant $k$; (2) $f(E(G))=[p,p+q-1]$ and $f(u)+f(uv)+f(v)=$a constant $k'$; (3) $f(E(G))=[0,q-1]$ and $f(uv)=f(u)+f(v)~(\bmod~q)$; (4) $f(E(G))=[1,q]$ and $|f(u)+f(v)-f(uv)|=$a constant $k''$; (5) $f(uv)=$an odd number for each edge $uv\in E(G)$ holding $f(E(G))=[1,2q-1]^o$, and $\{f(u)+f(uv)+f(v):uv\in E(T)\}=[a,b]$ with $b-a+1=q$.\qqed
\end{defn}

\subsection{A new total set-labelling}

A new total set-labelling is defined by the intersection operation on sets for making TB-paws with longer bytes.

\begin{defn}\label{defn:graceful-odd-graceful-total-set-labelling}
$^*$ A $(p,q)$-graph $G$ admits a vertex set-labelling $f:V(G)\rightarrow [1,q]^2$~(or $[1,2q-1]^2)$, and induces an edge set-labelling $f'(uv)=f(u)\cap f(v)$. If we can select a \emph{representative} $a_{uv}\in f'(uv)$ for each edge label set $f'(uv)$ with $uv \in E(G)$ such that $\{a_{uv}:~uv\in E(G)\}=[1,q]$ (or $[1,2q-1]^o$), then we call $f$ a \emph{graceful-intersection (an odd-graceful-intersection) total set-labelling} of $G$.\qqed
\end{defn}

\begin{thm}\label{thm:total-set-labellings}
Each tree $T$ admits a graceful-intersection (an odd-graceful-intersection) total set-labelling.
\end{thm}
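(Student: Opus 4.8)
The plan is to reduce the problem to an ordinary \emph{bijective edge labelling} and then lift it to the vertices through the incidence structure, exploiting the freedom that set-labels give us. Since a tree $T$ on $p$ vertices has exactly $q=p-1$ edges, I would first fix a bijection $g:E(T)\to [1,q]$ for the graceful case (resp. a bijection $g:E(T)\to [1,2q-1]^o$ for the odd-graceful case); such a bijection exists trivially because $|E(T)|=q=|[1,q]|=|[1,2q-1]^o|$. I then define the vertex set-labelling $f$ by letting $f(v)$ be the set of labels of the edges incident with $v$, that is, $f(v)=\{g(e):e\in E(T),\ v\in e\}$, so that $f:V(T)\to [1,q]^2$ (resp. $[1,2q-1]^2$), as required by Definition~\ref{defn:graceful-odd-graceful-total-set-labelling}.

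The key step is the computation of the induced edge sets $f'(uv)=f(u)\cap f(v)$. Because $T$ is a simple graph, two adjacent vertices $u$ and $v$ are joined by exactly one edge, namely $uv$; hence the only edge incident with both $u$ and $v$ is $uv$ itself, and injectivity of $g$ gives $f(u)\cap f(v)=\{g(uv)\}$, a singleton. Consequently the choice of representative is forced, $a_{uv}=g(uv)$, and the induced map is a legitimate (non-empty, injective) edge set-labelling in the sense of Definition~\ref{defn:set-labelling}(iii). Since $g$ is a bijection onto its target, the representatives satisfy $\{a_{uv}:uv\in E(T)\}=[1,q]$ (resp. $[1,2q-1]^o$), which is precisely the defining condition of a graceful-intersection (resp. odd-graceful-intersection) total set-labelling.

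It then remains to confirm that $f$ is genuinely a vertex set-labelling, i.e.\ that $f(x)\neq f(y)$ for distinct $x,y\in V(T)$. If $f(x)=f(y)$, then by injectivity of $g$ the vertices $x$ and $y$ have the same incident-edge set; every shared edge must have both $x$ and $y$ as endpoints, forcing that edge to be $xy$ and hence $\ud_T(x)=\ud_T(y)=1$ with $T=K_2$. Thus for every tree with $p\geq 3$ the map $f$ is injective on vertices, and the single degenerate case $K_2$ (whose codomain $[1,1]^2$ contains only one element) can be recorded separately under the harmless convention $p\geq 3$.

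I do not expect a serious obstacle: the construction is direct, and the one conceptual point requiring care is exactly the collapse of the set-valued intersection $f(u)\cap f(v)$ to a single forced representative. This is what simplicity of the graph (so that adjacent vertices share a unique edge) buys us, since it reduces the selection of a system of distinct representatives to a trivial one and avoids any appeal to Hall's theorem that a construction with larger intersections would demand. I would present the graceful and odd-graceful cases in parallel, changing only the target of the edge bijection from $[1,q]$ to $[1,2q-1]^o$.
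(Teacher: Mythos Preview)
Your argument is correct and, once unwound, produces exactly the same labelling as the paper's: in both cases a vertex ends up carrying the set of labels of its incident edges, and each edge intersection collapses to a singleton. The paper, however, gets there by induction on the number of edges: it removes a leaf $x$ with neighbour $y$, assumes a graceful-intersection total set-labelling on $T-x$, and then re-attaches $x$ by setting $h(x)=\{q\}$ and $h(y)=f(y)\cup\{q\}$, checking that $h(x)\cap h(y)=\{q\}$ supplies the new representative $q$. Your direct construction is cleaner for this statement because it makes the ``singleton intersection'' property an immediate consequence of simplicity rather than something to verify at each inductive step, and it dispatches vertex-injectivity in one stroke via the incident-edge argument. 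The paper's inductive template, on the other hand, is what it reuses verbatim for the companion result on regular rainbow intersection set-labellings (Theorem~\ref{thm:rainbow-total-set-labellings}), where the leaf-attachment viewpoint is the natural one. The $K_2$ degeneracy you flag is equally present (and equally unaddressed) in the paper's proof, so your caveat $p\geq 3$ is appropriate.
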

\begin{proof} Assume that a tree $T-x$ admits a graceful intersection total set-labelling $f:V(T-x)\rightarrow [1,q]^2$, where $x$ is a leaf of a tree $T$ of $q$ edges. Let $y$ be adjacent with $x$ in $T$. We add the leaf $x$ to $T-x$, and define $h:V(T)\rightarrow [1,q]^2$ by $h(u)=f(u)$ for $u\in V(T-\{x,y\})$, $h(y)=f(y)\cup \{q\}$ and $h(x)=\{q\}$. Notice that $$h(xy)=h(x)\cap h(y)=h(x)=\{q\}$$ we select $q$ as the representative of $h(xy)$. By the hypothesis of induction, $T$ admits a graceful intersection total set-labelling $f:V(T)\rightarrow [1,q]^2$.

The proof of ``$T$ admits an odd-graceful intersection total set-labelling'' is similar with the above one with $[1,2q-1]^2$.
\end{proof}

\vskip 0.2cm

A Topsnut-gpw $H$ shown in Fig.\ref{fig:graceful-intersection} distributes us a TB-paw
$${
\begin{split}
D(H)=&1231133412334312344563445\\
&643455777898899661011124\\
&566101112121211111010.
\end{split}}$$

\begin{figure}[h]
\centering
\includegraphics[height=2.8cm]{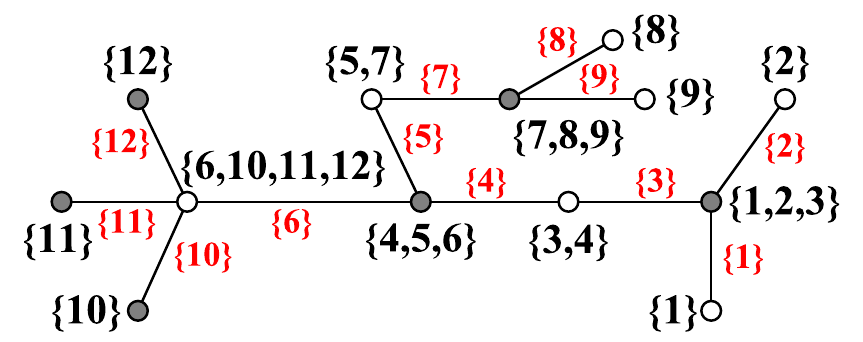}
\caption{\label{fig:graceful-intersection}{\small A tree $H$ admitting a graceful-intersection total set-labelling for illustrating Theorem \ref{thm:total-set-labellings}.}}
\end{figure}

We define a \emph{regular rainbow set-sequence} $\{R_k\}^{q}_1$ as: $R_k=[1,k]$ with $k\in [1,q]$, where $[1,1]=\{1\}$.

\begin{thm}\label{thm:rainbow-total-set-labellings}
Each tree $T$ of $q$ edges admits a regular rainbow intersection total set-labelling based on a regular rainbow set-sequence $\{R_k\}^{q}_1$.
\end{thm}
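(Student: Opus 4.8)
The plan is to give a direct one-shot construction rather than to re-run the leaf-induction of Theorem~\ref{thm:total-set-labellings}, exploiting the fact that the sets $R_k=[1,k]$ are \emph{nested}: for any two of them $R_i\cap R_j=R_{\min\{i,j\}}$. First I would fix an \emph{arbitrary} bijection $r:E(T)\to[1,q]$, thinking of $r(e)$ as the representative I intend to attach to the edge $e$. Then, for every vertex $v$ I set
\[
k(v)=\max\{\,r(e):\ e\in E(T),\ v\in e\,\},\qquad f(v)=R_{k(v)}=[1,k(v)].
\]
Since $T$ is a tree with $q\ge 1$ edges, every vertex is incident with at least one edge, so $1\le k(v)\le q$ and each $f(v)$ is a genuine member of the regular rainbow set-sequence $\{R_k\}_1^q$.

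Next I would verify the requirements of Definition~\ref{defn:graceful-odd-graceful-total-set-labelling}. By nestedness the induced edge label is $f'(uv)=R_{k(u)}\cap R_{k(v)}=R_{\min\{k(u),k(v)\}}=[1,\min\{k(u),k(v)\}]$. For the edge $e=uv$ the value $r(e)$ lies in this interval, because $r(e)\le k(u)$ and $r(e)\le k(v)$ by the very definition of $k$ as a maximum over incident edges; hence $a_{uv}:=r(uv)$ is a legitimate representative in $f'(uv)$. As $r$ is a bijection, $\{a_{uv}:uv\in E(T)\}=\{r(e):e\in E(T)\}=[1,q]$, which is exactly the rainbow condition, so the theorem follows.

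There is no serious analytic obstacle; the content lies entirely in the choice of vertex labels, and the one point worth flagging is that \emph{distinctness of vertex labels is deliberately dropped} in this rainbow setting. This is forced: a tree on $q$ edges has $q+1$ vertices but only $q$ sets $R_1,\dots,R_q$ are available, so no injective assignment from the family $\{R_k\}_1^q$ exists, and indeed my construction repeats labels (both endpoints of the edge with $r(e)=q$ receive $R_q$). The word \emph{rainbow} should therefore be read as a condition on the \emph{representatives}, not on the vertex sets.

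Finally, if one prefers the labelling to realise every member $R_1,\dots,R_q$ of the sequence (a reading suggested by ``rainbow''), I would instead choose $r$ to increase along a depth-first traversal rooted at a leaf, or equivalently run the leaf-addition induction of Theorem~\ref{thm:total-set-labellings}: deleting a leaf $x$ with neighbour $y$ gives a $(q-1)$-edge tree labelled by induction with representatives $[1,q-1]$ using $\{R_k\}_1^{q-1}$, after which I promote $f(y)$ to $R_q$, set $f(x)=R_q$, and assign the new edge $xy$ the representative $q$. The only thing to check on this route is that enlarging $f(y)$ to $R_q$ cannot invalidate a previously fixed representative $a_{yw}$, and this follows from the same monotonicity: the intersection grows to $R_{k(w)}$, so $a_{yw}\le\min\{k(y),k(w)\}\le k(w)$ keeps $a_{yw}\in R_{k(w)}$. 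Either route closes the argument.
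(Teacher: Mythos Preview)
Your direct construction is correct and takes a genuinely different route from the paper. The paper argues by leaf-induction: it deletes a leaf $x$ with neighbour $y$, invokes the hypothesis on $T-x$, and then sets $g(x)=R_q$ and $g(y)=R_{q+1}$, keeping all other vertex labels unchanged; it then asserts the resulting labelling is injective on vertices. Your approach instead fixes a bijection $r:E(T)\to[1,q]$ and reads off the vertex labels as $f(v)=R_{\max_{e\ni v}r(e)}$, which is slicker and handles all edges at once via the nesting identity $R_i\cap R_j=R_{\min\{i,j\}}$.

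The substantive difference is the injectivity point you flag. You stay inside $\{R_k\}_1^q$ as the theorem literally states and accept one repeated vertex label (unavoidable by pigeonhole, since $|V(T)|=q+1$). The paper's proof instead steps outside the stated sequence by using $R_{q+1}$ so as to keep all vertex labels distinct---so the paper's statement and its own proof are not quite aligned, and your observation pinpoints exactly where. Your inductive alternative at the end is essentially the paper's argument with $R_{q+1}$ replaced by $R_q$ on the neighbour $y$; the paper's choice buys vertex-injectivity at the cost of the extra set, while yours buys fidelity to the stated family $\{R_k\}_1^q$ at the cost of the forced repeat.
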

\begin{proof} Suppose $x$ is a leaf of a tree $T$ of $q$ edges, so $T-x$ is a tree of $(q-1)$ edges. Assume that $T-x$ admits a regular rainbow set-sequence $\{R_k\}^{q-1}_1$ total set-labelling $f$. Let $y$ be adjacent with $x$ in $T$. We define a labelling $g$ of $T$ in this way: $g(w)=f(w)$ for $w\in V(T)\setminus \{y,x\}$, $g(y)=R_{q+1}=[1,q+1]$ and $g(x)=R_q=[1,q]$. Therefore, we have $g(u_iv_j)=g(u_i)\cap g(v_j)=[1,i]\cap [1,j]$ for $u_iv_j\in E(T)\setminus \{xy\}$, and $g(xy)=g(x)\cap g(y)=[1,q]$, and $g(s)\neq g(t)$ for any pair of vertices $s$ and $t$. We claim that $g$ is a regular rainbow intersection total set-labelling of $T$ by the hypothesis of induction.
\end{proof}

An example is pictured in Fig.\ref{fig:rainbow-intersection} for understanding Theorem \ref{thm:rainbow-total-set-labellings}, and we can write a TB-paw from this example as follows
$${
\begin{split}
D(T^*)=&\underline{1234}1112121231231234\underline{123456789}\\
&\underline{1234}\underline{123456789}123456789\underline{12345678910}\\
&12345678910\underline{123456789101112}123456789\\
&101112\underline{1234567891011}1234567891011\\
&123456789101112\underline{12345678910111213}\\
&12345678\underline{12345678}\underline{1234567}1234567\\
&123456\underline{123456}12345\underline{12345}.
\end{split}}$$

\begin{figure}[h]
\centering
\includegraphics[height=3cm]{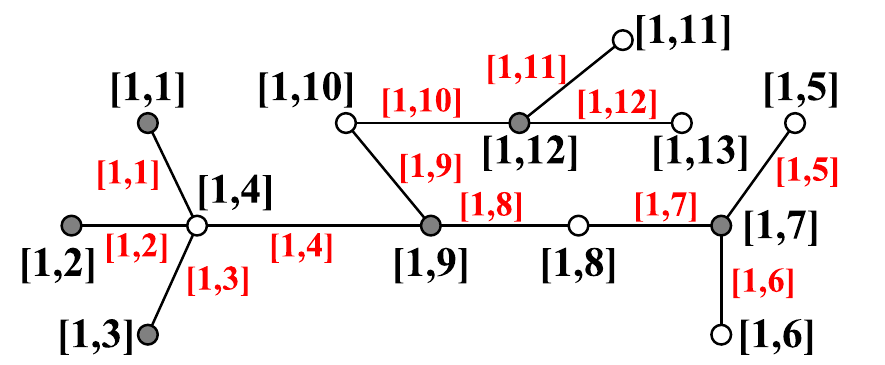}
\caption{\label{fig:rainbow-intersection}{\small A tree $T^*$ admitting a regular rainbow intersection total set-labelling for understanding Theorem \ref{thm:rainbow-total-set-labellings}.}}
\end{figure}

The proof of Theorem \ref{thm:rainbow-total-set-labellings} can be used to estimate the number of regular rainbow intersection total set-labellings of a tree $T$ of $q$ edges based on a regular rainbow set-sequence $\{R_k\}^q_1$. As known, a tree $T$ has its number $n_1(T)$ of leaves as follows
\begin{equation}\label{eqa:Yao-Zhang-Wang-2010}
n_1(T)=2+\sum_{d\geq 3}(d-2)n_d(T)
\end{equation}
where $n_d(T)$ is the number of vertices of degree $d$ in the tree $T$ \cite{Yao-Zhang-Wang-2010, Yao-Chen-Yang-Wang-Zhang-Zhang2012}. The formula (\ref{eqa:Yao-Zhang-Wang-2010}) tells us there are at least $n_1(T)$ different regular rainbow intersection total set-labellings for each tree $T$. Ie seems to be difficult to find all such total set-labellings for a given tree.

Each tree admits a regular odd-rainbow intersection total set-labelling based on a \emph{regular odd-rainbow set-sequence} $\{R_k\}^{q}_1$ defined as: $R_k=[1,2k-1]$ with $k\in [1,q]$, where $[1,1]=\{1\}$. Moreover, we can define a \emph{regular Fibonacci-rainbow set-sequence} $\{R_k\}^{q}_1$ by $R_1=[1,1]$, $R_2=[1,1]$, and $R_{k+1}=R_{k-1}\cup R_{k}$ with $k\in [2,q]$; or a $\tau$-term Fibonacci-rainbow set-sequence $\{\tau,R_i\}^{q}_1$ holds: $R_i=[1,a_i]$ with $a_i>1$ and $i\in [1,q]$, and
$$R_k=\sum ^{k-1}_{i=k-\tau}R_i$$
with $k>\tau$ \cite{Ma-Wang-Wang-Yao-Theoretical-Computer-Science-2018}. It is interesting on various rainbow set-sequences for non-tree graphs.

\section{Topsnut-matrices}

Topsnut-matrices differ from the popular matrices in algebra. No operations of addition and subtraction on numbers are suitable for Topsnut-matrices. We will show some operations on Topsnut-matrices from the insight of construction and decomposition on graphs.

\subsection{Definition of Topsnut-matrices}

\begin{defn}\label{defn:Topsnut-matrix}
(\cite{Sun-Zhang-Zhao-Yao-2017, Yao-Sun-Zhao-Li-Yan-2017}) A \emph{Topsnut-matrix} $A_{vev}(G)$ of a $(p,q)$-graph $G$ is defined as
\begin{equation}\label{eqa:a-formula}
\centering
A_{vev}(G)= \left(
\begin{array}{ccccc}
x_{1} & x_{2} & \cdots & x_{q}\\
w_{1} & w_{2} & \cdots & w_{q}\\
y_{1} & y_{2} & \cdots & y_{q}
\end{array}
\right)=(X~W~Y)^{-1}
\end{equation}\\
where
\begin{equation}\label{eqa:three-vectors}
{
\begin{split}
&X=(x_1 ~ x_2 ~ \cdots ~x_q), W=(e_1 ~ e_2 ~ \cdots ~e_q)\\
&Y=(y_1 ~ y_2 ~\cdots ~ y_q),
\end{split}}
\end{equation}
where each edge $e_i$ has its own two ends $x_i$ and $y_i$ with $i\in [1,q]$; and $G$ has another \emph{Topsnut-matrix} $A_{vv}(G)$ defined as $A_{vv}(G)=(X,Y)^{-1}$, where $X,Y$ are called \emph{vertex-vectors}, $W$  an \emph{edge-vector}.\qqed
\end{defn}

Clearly, the number of different edge-vectors $W=(e_1 ~ e_2 ~ \cdots ~e_q)$ of a $(p,q)$-graph $G$ is just $q!$, and each end of two ends of an edge can be arranged in $X$ or in $Y$, so the number of the Topsnut-matrices $A_{vev}(G)$ (resp. $A_{vv}(G)$) of $G$ is equal to $q!\cdot 2^q$.

\begin{figure}[h]
\centering
\includegraphics[height=4cm]{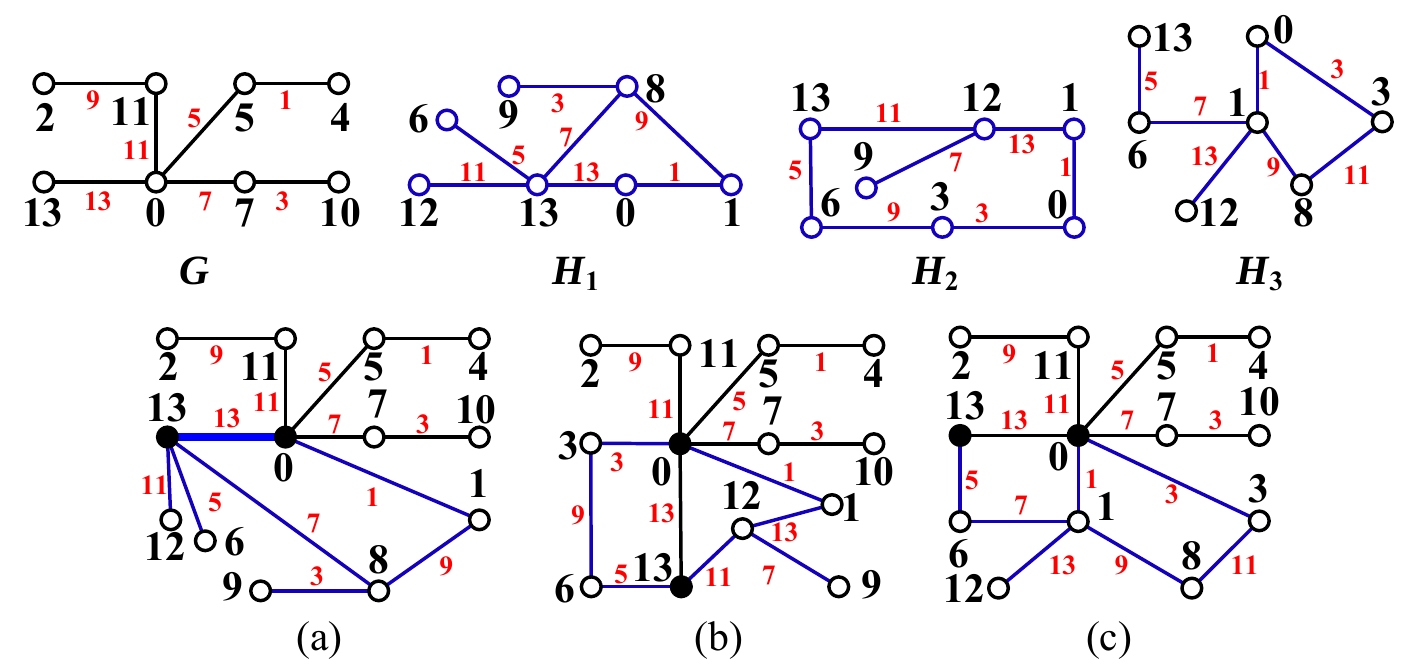}
\caption{\label{fig:twin-odd-graceful-0}{\small An odd-graceful tree $G$ matches with three odd-elegant graphs $H_1,H_2,H_3$ respectively, and there are three odd-graceful vs odd-elegant graphs (a) $\ominus (G,H_1)$, (b) $\odot \langle G,H_2\rangle $, and (c) $\odot \langle G,H_3\rangle $.}}
\end{figure}

\begin{figure}[h]
\centering
\includegraphics[height=2.8cm]{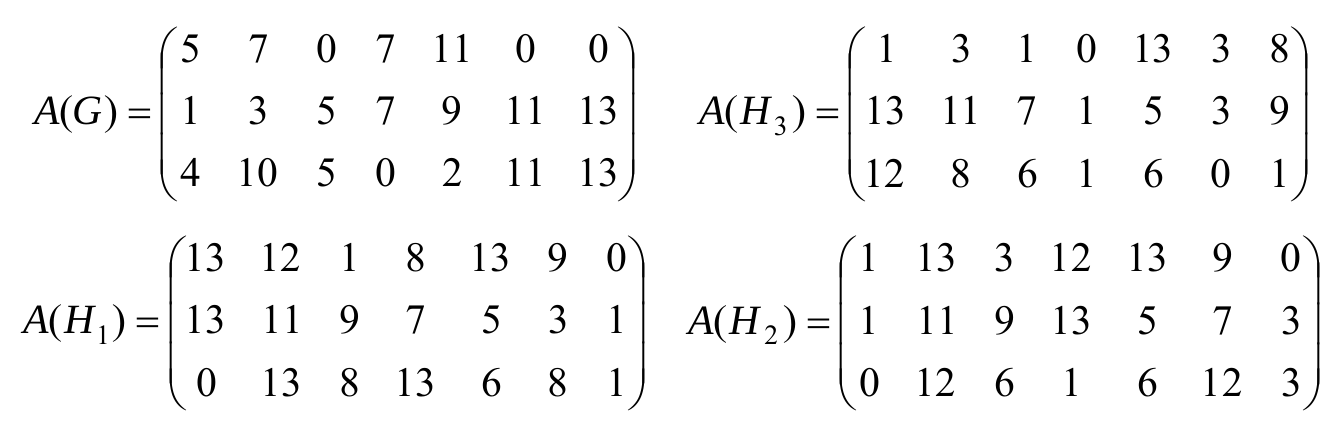}
\caption{\label{fig:4-Topsnut-matrices}{\small An odd-graceful tree $G$ and three odd-elegant graphs $H_1,H_2,H_3$ presented in Fig.\ref{fig:twin-odd-graceful-0} have their own Topsnut-matrices $A_{vev}(G)$ and $A_{vev}(H_1),A_{vev}(H_2)$ and $A_{vev}(H_3)$.}}
\end{figure}

\begin{figure}[h]
\centering
\includegraphics[height=1.4cm]{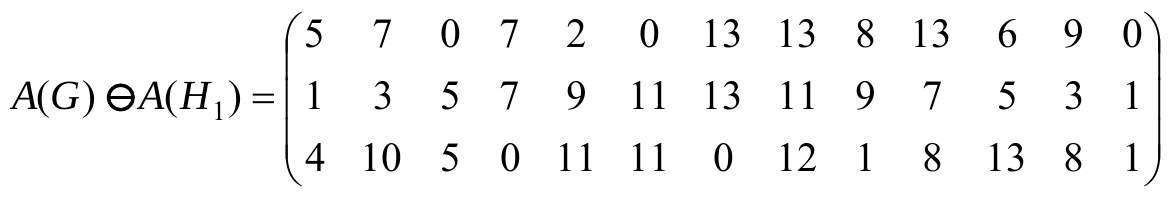}
\caption{\label{fig:matrix-G-H1}{\small The Topsnut-matrix $A_{vev}(\ominus \langle G,H_1\rangle )$ of an odd-graceful vs odd-elegant graph $\ominus \langle G,H_1\rangle $, write as $A_{vev}(\ominus \langle G,H_1\rangle )=A_{vev}(G)\ominus A_{vev}(H_1)$.}}
\end{figure}

\begin{figure}[h]
\centering
\includegraphics[height=1.3cm]{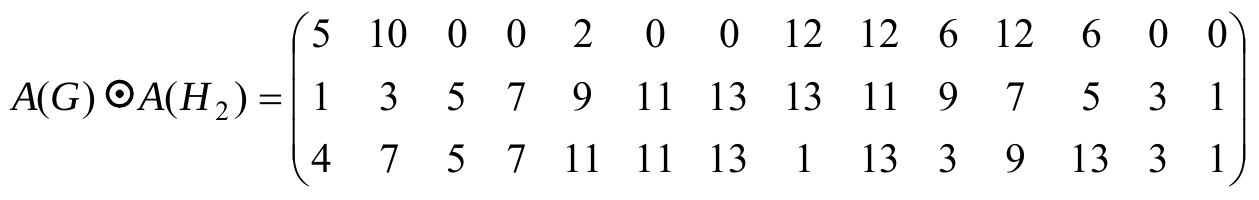}
\caption{\label{fig:matrix-G-H2}{\small The Topsnut-matrix $A_{vev}(\odot \langle G,H_2\rangle )$ of an odd-graceful vs odd-elegant graph $\odot \langle G,H_2\rangle $, denoted as $A_{vev}(\odot \langle G,H_2\rangle )=A_{vev}(G)\odot A_{vev}(H_2)$.}}
\end{figure}

We define the following operations:

\vskip 0.2cm

\textbf{Opr-1.} $A_{vev}(G)=A_{vev}(\ominus \langle G,H_1\rangle )\ominus^{-} A_{vev}(H_1)$

$A_{vev}(H_1)=A_{vev}(\ominus \langle G,H_1\rangle )\ominus^{-} A_{vev}(G)$

$A_{vev}(G)=A_{vev}(\odot \langle G,H_2\rangle )\odot^{-} A_{vev}(H_2)$

$A_{vev}(H_2)=A_{vev}(\odot \langle G,H_2\rangle )\odot^{-} A_{vev}(G)$.

\textbf{Opr-2.} Set three reciprocals

$X_{-1}=(x_q ~ x_{q-1} ~ \cdots ~x_1)$, $W_{-1}=(e_q ~ e_{q-1} ~ \cdots ~e_1)$

$Y_{-1}=(y_q ~ y_{q-1} ~\cdots ~ y_1)$, then we get the \emph{reciprocal} of $A_{vev}(G)$, denoted as $A^{-1}_{vev}(G)=(X_{-1},W_{-1},Y_{-1})^{-1}$.

\vskip 0.2cm

\begin{figure}[h]
\centering
\includegraphics[height=5cm]{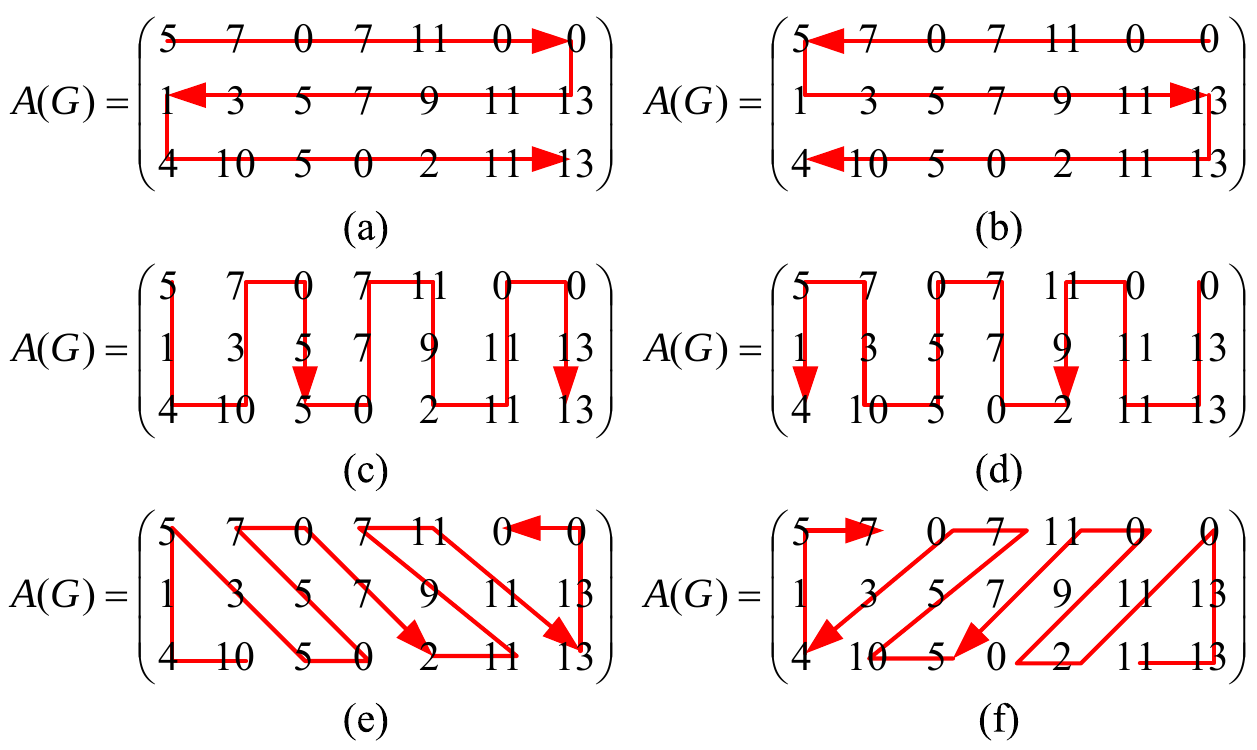}
\caption{\label{fig:6-methods}{\small Six methods for generating TB-paws from the Topsnut-matrix $A(G)$ indicated in Fig.\ref{fig:4-Topsnut-matrices}.}}
\end{figure}

A \emph{text string} $D=c_1c_2\cdots c_m$ has its own \emph{reciprocal text string} defined by $D_{-1}=c_mc_{m-1}\cdots c_2c_1$, also, we say $D$ and $D_{-1}$ match with each other. we consider that $D$ is a\emph{ public key}, and $D_{-1}$ is a \emph{private key}. For a fixed Topsnut-matrix $A_{vev}(G)$ and its reciprocal $A^{-1}_{vev}(G)$, we have the following basic methods for generating vv-type/vev-type TB-paws:

\vskip 0.2cm

\textbf{Met-1.} \textbf{I-route.} $D_1(G)=x_1x_2\cdots x_qe_1 e_2\cdots e_qy_1 y_2\cdots y_q$ with its reciprocal
$D_1(G)_{-1}$ (see Fig.\ref{fig:6-methods}(a)).

\textbf{Met-2.} $D_2(G)=x_q x_{q-1} \cdots x_1e_q e_{q-1} \cdots e_1y_q y_{q-1} \cdots y_1$ with its reciprocal $D_2(G)_{-1}$ (see Fig.\ref{fig:6-methods}(b)).

\textbf{Met-3.} \textbf{II-route.} $D_3(G)=x_1e_1y_1y_2e_2x_2x_3e_3y_3\cdots x_qe_qy_q$ with its reciprocal $D_3(G)_{-1}$ (see Fig.\ref{fig:6-methods}(c)).

\textbf{Met-4.} $D_4(G)=x_q e_q y_q y_{q-1}e_{q-1}x_{q-1} \cdots y_2e_2x_2x_1e_1y_1$ with its reciprocal $D_4(G)_{-1}$ (see Fig.\ref{fig:6-methods}(d)).

\textbf{Met-5.} \textbf{III-route.} We set $$D_5(G)=y_2y_1e_1x_1e_2y_3y_4e_3x_2\cdots x_{q-2}e_{q-1}y_qe_{q}x_{q}x_{q-1}$$ with its reciprocal $D_5(G)_{-1}$ (see Fig.\ref{fig:6-methods}(e)).

\textbf{Met-6.} We take
$${
\begin{split}
D_6(G)= &y_{q-1}y_qe_{q}x_{q}e_{q-1}y_{q-2}y_{q-3}e_{q-2}x_{q-1}\cdots \\
&x_{2}e_{2}y_1e_{1}x_{1}x_{2}
\end{split}}
$$ with its reciprocal $D_6(G)_{-1}$ (see Fig.\ref{fig:6-methods}(f)).

For example, we can get the following vv-type/vev-type TB-paws by a Topsnut-matrix $A(a-1)$ depicted in Fig.\ref{fig:set-matrix-1}:
$$D_I(a-1)=55072216807876543216833680707168,$$
$$D_{II}(a-1)=51683250733684225070761687168807,$$
and
$$D_{III}(a-1)=36815236835074070752261687168078$$
according to I-route, II-route and III-route.

\begin{figure}[h]
\centering
\includegraphics[height=1.5cm]{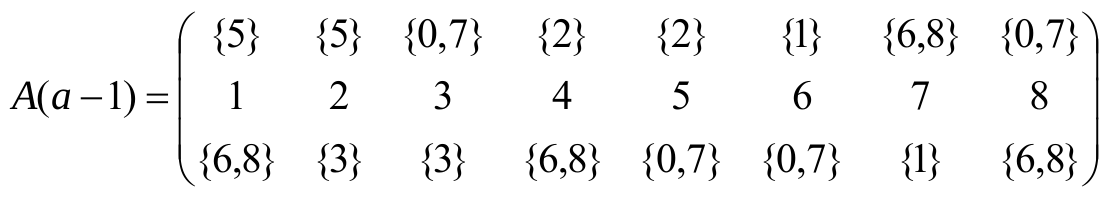}
\caption{\label{fig:set-matrix-1}{\small A Topsnut-matrix $A(a-1)$ of a graph exhibited in Fig.\ref{fig:v-set-e-proper}(a-1).}}
\end{figure}

Moreover, the Topsnut-matrix $A(a-2)$ pictured in Fig.\ref{fig:v-set-e-proper}(a-2) induces
$$D_I(a-2)=81577631561234567802815020202315,$$
$$D_{II}(a-2)=88028715761502576402023315231516,$$
and
$$D_{III}(a-2)=80188715026157502024763315231561$$
by I-route, II-route and III-route.

\begin{figure}[h]
\centering
\includegraphics[height=1.5cm]{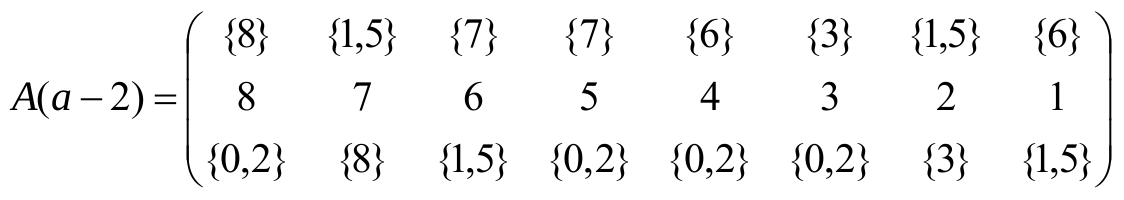}
\caption{\label{fig:set-matrix-2}{\small A Topsnut-matrix $A(a-2)$ of a graph revealed in Fig.\ref{fig:v-set-e-proper}(a-2).}}
\end{figure}

\textbf{Met-7.} In general, let $g: X\cup W\cup Y\rightarrow \{a_i:i\in [1,3q]\}$ be a bijection on the Topsnut-matrix $A_{vev}(G)$ of $G$, so it induces a vv-type/vev-type TB-paw
\begin{equation}\label{eqa:Topsnut-matrix-vs-TB-paws}
g(G)=g^{-1}(a_{i_1})g^{-1}(a_{i_2})\cdots g^{-1}(a_{i_{3q}})
\end{equation} with its reciprocal $g(G)_{-1}$, where $a_{i_1},a_{i_2},\dots,a_{i_{3q}}$ is a permutation of  $a_1,a_2,\dots g^{-1},a_{3q}$. So, there are $(3q)!$ vv-type/vev-type TB-paws by (\ref{eqa:Topsnut-matrix-vs-TB-paws}), in general. Clearly, there are many random routes for inducing vv-type/vev-type TB-paws from Topsnut-matrices. It may be interesting to look continuous routes in $A_{vev}(G)$ (see red lines presented in Fig.\ref{fig:6-methods} and Fig.\ref{fig:6-other-methods}).

Motivated from Fig.\ref{fig:6-methods} and Fig.\ref{fig:6-other-methods}, a Topsnut-matrix $A_{vev}(G)$ of a $(p,q)$-graph $G$ may has $N_{fL}(m)$ groups of continuous fold-lines $L_{j,1},L_{j,2},\dots, L_{j,m}$~($=\{L_{j,i}\}^m_1$) for $j\in [1,N_{fL}(m)]$ and $m\in [1,M]$, where each continuous fold-line $L_{j,i}$ has own initial point $(a_{j,i},b_{j,i})$ and terminal point $(c_{j,i},d_{j,i})$ in $xoy$-plan, and $L_{j,i}$ is internally disjoint, such that each element of $A_{vev}(G)$ is on one and only one of the continuous fold-lines $=\{L_{j,i}\}^m_1$ after we put the elements of $A_{vev}(G)$ into  $xoy$-plan. Notice that each  fold-line $L_{j,i}$ has its initial and terminus points, so $2m\leq 3q$, so $M=\lfloor 3q/2\rfloor $. Each group of continuous fold-lines $\{L_{j,i}\}^m_1$ can distributes us $m!$ vv-type/vev-type TB-paws, so we have at least $N_{fL}(m)\cdot m!$ vv-type/vev-type TB-paws with $m\in [1,M]$. Thereby,  $G$ gives us the number $N^*(G)$ of vv-type/vev-type TB-paws in total as follows
\begin{equation}\label{eqa:random-routes}
{
\begin{split}
N^*(G)=q!\cdot 2^q\sum^{M}_{m=1}N_{fL}(m)\cdot m!.
\end{split}}
\end{equation}

\begin{figure}[h]
\centering
\includegraphics[height=5cm]{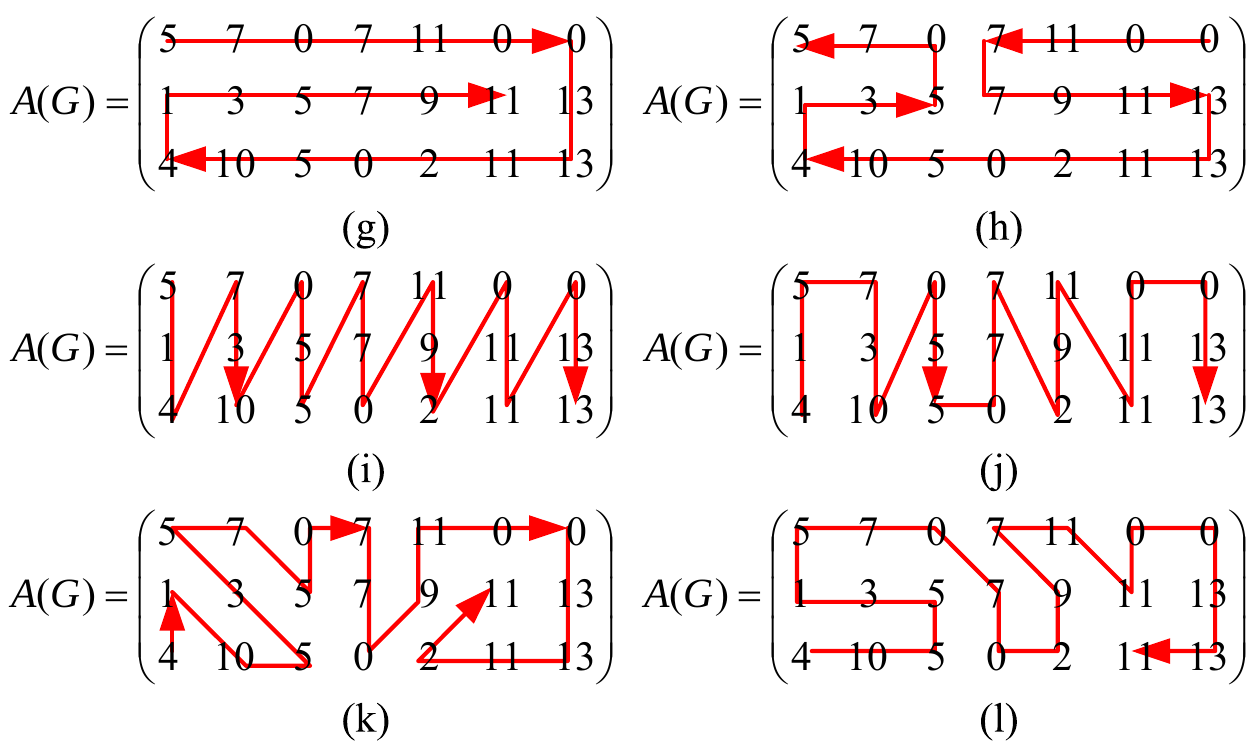}
\caption{\label{fig:6-other-methods}{\small Six random routes on the Topsnut-matrix $A(G)$ demonstrated in Fig.\ref{fig:4-Topsnut-matrices}.}}
\end{figure}

The number $N_{tbp}(G)$ of all vv-type/vev-type TB-paws generated from a $(p,q)$-graph $G$ can be computed in the formula (\ref{eqa:TB-paws-number}):

\begin{thm}\label{thm:TB-paws-numbers}
A Topsnut-gpw $(p,q)$-graph $G$ distributes us
\begin{equation}\label{eqa:TB-paws-number}
N_{tbp}(G)=(3q)!\cdot q!\cdot 2^{q-1}
\end{equation} vv-type/vev-type TB-paws in total.
\end{thm}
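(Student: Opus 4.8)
The plan is to combine three independent counts multiplicatively and then correct for the public--private (reciprocal) identification, which is what turns the exponent $q$ into $q-1$. First I would recall the count already established just after Definition~\ref{defn:Topsnut-matrix}: a fixed $(p,q)$-graph $G$ admits exactly $q!\cdot 2^{q}$ distinct Topsnut-matrices $A_{vev}(G)$, where the factor $q!$ records the orderings of the edge-vector $W=(e_1~e_2~\cdots~e_q)$ (equivalently, the permutations of the columns), and the factor $2^{q}$ records, for each edge $e_i$ independently, the choice of which of its two ends is placed in the vertex-vector $X$ and which in $Y$.

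Second, I would invoke the general construction Met-7: for each fixed Topsnut-matrix $A_{vev}(G)$, any bijection $g:X\cup W\cup Y\rightarrow\{a_i:i\in[1,3q]\}$ lays the $3q$ entries of the matrix out into a single vv-type/vev-type TB-paw as in~(\ref{eqa:Topsnut-matrix-vs-TB-paws}), and there are exactly $(3q)!$ such bijections, hence $(3q)!$ arrangements per matrix. Multiplying the two counts yields $q!\cdot 2^{q}\cdot (3q)!$ generation procedures, each being a choice of matrix together with a choice of reading route.

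Third, I would account for the reciprocal pairing. Every generated text string $D$ comes paired with its reciprocal $D_{-1}$ obtained by reversing $D$; since reversing a permutation of the $3q$ entries is again a permutation of the same entries, $D_{-1}$ is produced by the \emph{same} matrix under the reversed bijection. Thus the $(3q)!$ arrangements of each matrix fall into matched pairs $\{D,D_{-1}\}$, a public key together with its private key. Counting each matched pair once replaces the per-matrix factor $(3q)!$ by $(3q)!/2$, i.e.\ replaces $2^{q}$ by $2^{q-1}$ in the product, so that
\[
N_{tbp}(G)=\frac{q!\cdot 2^{q}\cdot (3q)!}{2}=(3q)!\cdot q!\cdot 2^{q-1},
\]
as claimed.

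The step I expect to be the genuine obstacle is this division by $2$. To make it rigorous one must check that the reciprocal operation $D\mapsto D_{-1}$ is a fixed-point-free involution on the set of arrangements, i.e.\ that no admissible TB-paw is a palindrome; otherwise the self-paired strings would need to be counted separately and the clean factor $2^{q-1}$ would break. A secondary subtlety, which the multiplicative count silently ignores, is that distinct (matrix, route) pairs can emit the \emph{same} symbol string, since all $q!\cdot 2^{q}$ matrices share one fixed multiset of $3q$ labels; for that reason I would state explicitly at the outset that $N_{tbp}(G)$ counts \emph{generation procedures} (a matrix together with a reading route, up to reciprocal) rather than distinct output strings, so that the multiplicativity of the three factors is literally valid.
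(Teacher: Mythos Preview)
Your approach is exactly the paper's: count $(3q)!$ arrangements of the $3q$ entries of a fixed Topsnut-matrix (the paper phrases this as ``the first number $a_{i_1}$ has $3q$ positions \dots''), multiply by the $q!\cdot 2^{q}$ matrices already counted after Definition~\ref{defn:Topsnut-matrix}, and report the product. Where you go further than the paper is in the third step: the paper's proof literally multiplies $(3q)!$ by $q!\cdot 2^{q}$ and then writes ``thus, we get~(\ref{eqa:TB-paws-number})'' with exponent $q-1$, never saying where the missing factor of $2$ went. Your reciprocal-pairing argument is an attempt to supply that justification, which the paper simply omits.

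The caveats you flag---that the involution $D\mapsto D_{-1}$ must be fixed-point-free, and that the count is of generation procedures rather than distinct output strings---are real, and the paper does not resolve them either. So your proof is not merely aligned with the paper's; it is strictly more careful on the one point where the paper's argument has a visible gap.
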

\begin{proof}Let $D=a_{i_1}a_{i_2}\cdots a_{i_{3q}}$ be a vv-type/vev-type TB-paw made by $A_{vev}(G)_{3\times q}$, the first number $a_{i_1}$ in $D$ has $3q$ positions to be selected for standing, then the second number $a_{i_2}$ of $D$ has $(3q-1)$ positions to be selected for standing, go on in this way, the number of vv-type/vev-type TB-paws produced from $A_{vev}(G)_{3\times q}$ is just $(3q)!$, as desired. Because the number of the Topsnut-matrices $A_{vev}(G)$ of a $(p,q)$-graph $G$ is $q!\cdot 2^q$, thus, we get (\ref{eqa:TB-paws-number}).
\end{proof}

\vskip 0.2cm

The Topsnut-gpw $G$ appeared in Fig.\ref{fig:example-1} has 190 edges, so it gives us
\begin{equation}\label{eqa:wang-paper-example}
N_{tbp}(G)=(570!)\cdot (190!)\cdot 2^{190}
\end{equation}
vv-type/vev-type TB-paws in total, in which each vev-type TB-paw $D(G)$ has at least 380 bytes or more.

\subsection{Operations of vertex-split (edge-split) and vertex-coincidence (edge-coincidence)}

The authors \cite{Yao-Sun-Zhang-Mu-Sun-Wang-Su-Zhang-Yang-Yang-2018arXiv} have defined the following operations for Topsnut-matrices: In Fig.\ref{fig:two-operations}, a \emph{vertex-split operation} from (a) to (b); a \emph{vertex-coincident operation} from (b) to (a); an \emph{edge-split operation} from (c) to (d); and an \emph{edge-coincident operation} from (d) to (c). Let $N(x)$ be the neighbor set of all neighbors of a vertex $x$. In Fig.\ref{fig:two-operations}, after split operations, then the following neighbor sets hold $N(y')\cap N(y'')=\emptyset $, $N(u')\cap N(u'')=\emptyset $ and $N(v')\cap N(v'')=\emptyset $ in the resulting graphs.

\begin{figure}[h]
\centering
\includegraphics[height=6.4cm]{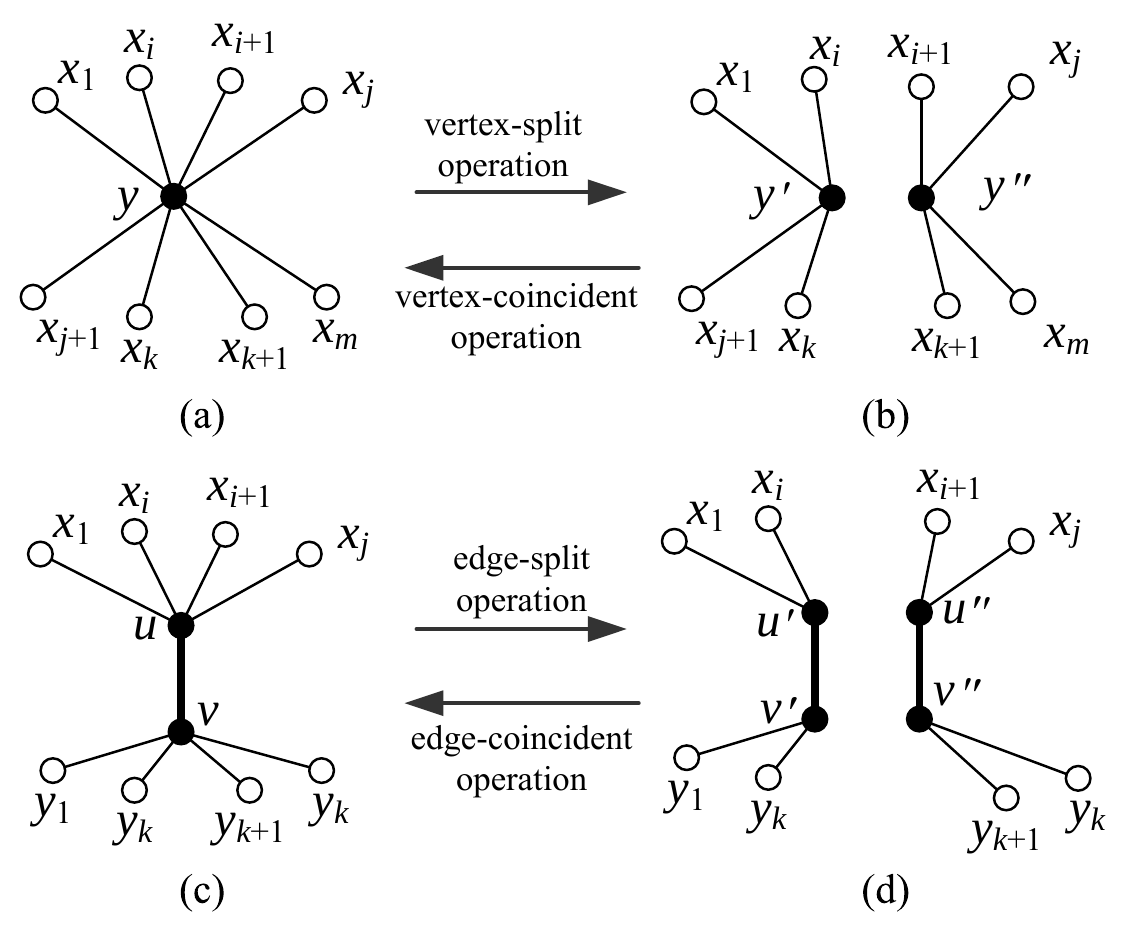}\\
\caption{\label{fig:two-operations} {\small A scheme for illustrating four graph operations: vertex-split operation; vertex-identifying operation; edge-split operation; edge-identifying operation cited from \cite{Yao-Sun-Zhang-Mu-Sun-Wang-Su-Zhang-Yang-Yang-2018arXiv}.}}
\end{figure}

We do a vertex-split operation to a vertex $y$ of a graph $H$, the resulting graph is denoted as $H\wedge y$. So, $|V(H\wedge y)|=1+|V(H)|$ and $|E(H\wedge y)|=|E(H)|$ (see Fig.\ref{fig:two-operations}(b)). The resulting graph obtained by doing an edge-split operation to an edge $uv$ of $H$ is written as $H\wedge uv$, thus, $|V(H\wedge uv)|=2+|V(H)|$ and $|E(H\wedge uv)|=1+|E(H)|$ (see Fig.\ref{fig:two-operations}(d)). Conversely, we can coincide two vertices $x,y$ of a graph $H$ into one if $N(x)\cap N(y)=\emptyset$ for obtaining a graph $H(x\circ y)$, this procedure is called a \emph{vertex-coincident operation}. If two edges $xy$ and $uv$ of satisfy $N(x)\cap N(u)=\emptyset $, $N(x)\cap N(v)=\emptyset $, $N(y)\cap N(u)=\emptyset $ and $N(y)\cap N(v)=\emptyset $, then we coincide $xy$ with $uv$ into one edge $(x,u)(y,v)$, the resulting graph is denoted by $H(xy\circ uv)$, and call this procedure as an \emph{edge-coincident operation}.

\begin{defn}\label{defn:v-split-connectivity}
\cite{Wang-Zhang-Mei-Yao2018-Split} A \emph{v-split $k$-connected graph} $H$ holds: $H\wedge \{x_i\}^{k}_1$ is disconnected, where $V^*=\{x_1,x_2,\dots,x_k\}$ is a subset of $V(H)$, each component $H_j$ of $H\wedge \{x_i\}^{k}_1$ has at least a vertex $w_j\not \in V^*$, $|V(H\wedge \{x_i\}^{k}_1)|=k+|V(H)|$ and $|E(H\wedge \{x_i\}^{k}_1)|=|E(H)|$. The smallest number of $k$ for which $H\wedge \{x_i\}^{k}_1$ is disconnected is called the \emph{v-split connectivity} of $H$, denoted as $\gamma_{vs}(H)$ (see Fig.\ref{fig:split-coincident-connectivity}).\qqed
\end{defn}

\begin{defn}\label{defn:e-split-connectivity}
\cite{Wang-Zhang-Mei-Yao2018-Split} An \emph{e-split $k$-connected graph} $H$ holds: $H\wedge \{e_i\}^{k}_1$ is disconnected, where $E^*=\{e_1,e_2,\dots,e_k\}$ is a subset of $E(H)$, each component $H_j$ of $H\wedge \{e_i\}^{k}_1$ has at least a vertex $w_j$ being not any end of any edge of $E^*$, $|V(H\wedge \{e_i\}^{k}_1)|=2k+|V(H)|$ and $|E(H\wedge \{x_i\}^{k}_1)|=k+|E(H)|$. The smallest number of $k$ for which $H\wedge \{e_i\}^{k}_1$ is disconnected is called the \emph{e-split connectivity} of $H$, denoted as $\gamma_{es}(H)$ (see Fig.\ref{fig:split-coincident-connectivity}).\qqed
\end{defn}

\begin{figure}[h]
\centering
\includegraphics[height=5.4cm]{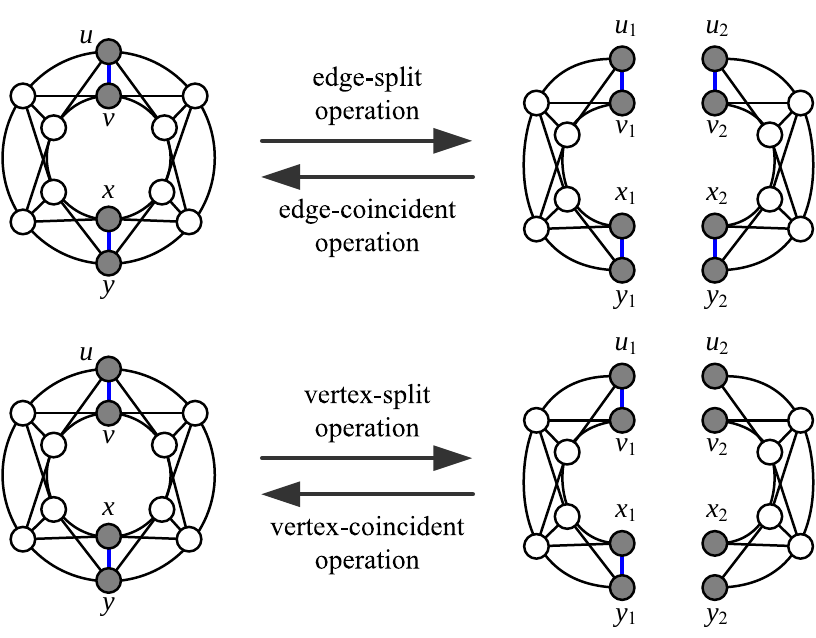}
\caption{\label{fig:split-coincident-connectivity}{\small A graph $H$ has $\gamma_{es}(H)=2$, $\gamma_{vs}(H)=4$, $\kappa(H)=4$, $\kappa'(H)=5$ and $\delta(H)=5$.}}
\end{figure}

Recall that the minimum degree $\delta(H)$, the vertex connectivity $\kappa(H)$ and the edge connectivity $\kappa'(H)$ of a simple graph $G$ hold
\begin{equation}\label{eqa:popular-connectivity}
\kappa(H)\leq \kappa'(H)\leq \delta(H).
\end{equation}
true in graph theory. However, we do not have the inequalities (\ref{eqa:popular-connectivity}) about the minimum degree $\delta(H)$, the v-split connectivity $\gamma_{vs}(H)$ and the e-split connectivity $\gamma_{es}(H)$. But, we have

\begin{thm}\label{thm:new-old-connectivity}
\cite{Wang-Zhang-Mei-Yao2018-Split} Any simple and connected graph $H$ holds $\gamma_{vs}(H)=\kappa(H)$, where $\kappa(H)$ is the popular vertex connectivity of $H$, and $\gamma_{vs}(H)$ is the v-split connectivity of $H$. Moreover, the e-split connectivity $\gamma_{es}(H)$ of $H$ satisfies $\gamma_{vs}(H)\leq 2\gamma_{es}(H)$.
\end{thm}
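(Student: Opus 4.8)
The plan is to establish $\gamma_{vs}(H)=\kappa(H)$ by proving the two inequalities separately, and then to obtain $\gamma_{vs}(H)\le 2\gamma_{es}(H)$ as a consequence via a vertex-cut argument. The one principle I will isolate and reuse is that a vertex-split (resp. an edge-split) alters only the split vertices (resp. the split edges together with their endpoints): every vertex and every edge that avoids the split set survives verbatim in the resulting graph.

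First I would show $\kappa(H)\le\gamma_{vs}(H)$. Let $S=\{x_1,\dots,x_k\}$ realise $\gamma_{vs}(H)=k$, so that $H\wedge\{x_i\}^k_1$ is disconnected and each of its components contains a vertex that is not a split copy. Choose such non-split vertices $w_P,w_Q$ lying in two distinct components. If $H-S$ contained a path from $w_P$ to $w_Q$, then all its vertices and edges avoid $S$ and hence survive unchanged in $H\wedge\{x_i\}^k_1$, reconnecting $w_P$ and $w_Q$ there, a contradiction. Therefore $H-S$ is disconnected and $\kappa(H)\le k=\gamma_{vs}(H)$.

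Next I would prove $\gamma_{vs}(H)\le\kappa(H)$, which I expect to be the main obstacle. Take a minimum vertex cut $S=\{x_1,\dots,x_k\}$ with $k=\kappa(H)$ and partition the components of $H-S$ into two nonempty parts $A$ and $B$. I will first record the standard consequence of minimality that every $x_i$ has a neighbour in $A$ and a neighbour in $B$: if some $x_i$ had no neighbour in $A$, then $S\setminus\{x_i\}$ would already separate $A$ from $B$, contradicting $|S|=\kappa(H)$. I then split each $x_i$ into $x_i'$, which inherits all edges from $x_i$ to $A$ together with all edges inside $S$, and $x_i''$, which inherits all edges from $x_i$ to $B$. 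This is a legitimate vertex-split since $N(x_i')$ and $N(x_i'')$ are disjoint, and both copies carry at least one edge because $x_i$ meets both $A$ and $B$. Writing $S'=\{x_1',\dots,x_k'\}$ and $S''=\{x_1'',\dots,x_k''\}$, no edge joins $A\cup S'$ to $B\cup S''$: there are no edges between $A$ and $B$, each $x_i''$ reaches only $B$, and each $x_i'$ reaches only $A$ and other members of $S'$. Hence the split graph is disconnected, and since every split copy is adjacent to a non-split vertex of $A$ or $B$, each component retains such a vertex. This yields $\gamma_{vs}(H)\le|S|=\kappa(H)$, completing the equality.

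Finally, for $\gamma_{vs}(H)\le 2\gamma_{es}(H)$ I would reapply the surviving-path principle. Let $E^*=\{e_1,\dots,e_k\}$ realise $\gamma_{es}(H)=k$, and let $V^*$ be the set of all endpoints of the edges in $E^*$, so $|V^*|\le 2k$. Picking non-endpoint vertices $w_P,w_Q$ in two components of $H\wedge\{e_i\}^k_1$, any path from $w_P$ to $w_Q$ in $H-V^*$ would use only edges whose endpoints both lie outside $V^*$, none of which belongs to $E^*$; such a path would survive in $H\wedge\{e_i\}^k_1$ and reconnect $w_P$ with $w_Q$, which is impossible. Hence $V^*$ is a vertex cut and $\kappa(H)\le|V^*|\le 2k$; combined with $\gamma_{vs}(H)=\kappa(H)$ this gives $\gamma_{vs}(H)\le 2\gamma_{es}(H)$. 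The genuine difficulty lies entirely in the explicit split of the third paragraph --- the routing of the intra-$S$ edges and the guarantee that every component keeps a non-split vertex --- whereas the two separation steps become routine once the principle that paths avoiding the split set survive is made explicit.
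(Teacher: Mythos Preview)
The paper does not actually contain a proof of this theorem: it is stated with a citation to \cite{Wang-Zhang-Mei-Yao2018-Split} and left without argument. So there is no in-paper proof to compare your proposal against.

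Assessed on its own merits, your argument is sound. The two inequalities for $\gamma_{vs}(H)=\kappa(H)$ are handled cleanly: the direction $\kappa(H)\le\gamma_{vs}(H)$ follows immediately from your ``surviving-path'' observation, and for $\gamma_{vs}(H)\le\kappa(H)$ the explicit split you describe (routing edges to $A$ and intra-$S$ edges onto the primed copies, edges to $B$ onto the double-primed copies) is exactly what is needed. Your use of the minimality of $S$ to guarantee that every $x_i$ has neighbours on both sides is the key step that makes each split copy nontrivial and forces every component of the split graph to contain a non-split vertex, as the definition requires. The final inequality $\gamma_{vs}(H)\le 2\gamma_{es}(H)$ via the endpoint set $V^*$ is likewise correct; the definition of $\gamma_{es}$ explicitly supplies the non-endpoint witnesses $w_P,w_Q$ you need.

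One minor remark: in the second inequality you could note that your construction actually shows the split graph decomposes into exactly the two connected pieces $A\cup S'$ and $B\cup S''$ (since every $x_i'$ reaches $A$ and every $x_i''$ reaches $B$), which makes the ``each component contains a non-split vertex'' requirement immediate rather than something to be checked component by component. But this is cosmetic; your proof is complete as written.
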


The vertex-split/edge-split operations and vertex-coincidence/edge-coincidence operations enable us to define some operations on Topsnut-matrices of graphs and their subgraphs. Suppose that an e-split $k$-connected graph $G$, $k=\gamma_{es}(G)$, induces an \emph{edge-split graph} $G\wedge \{e_i\}^{k}_1$, such that $G\wedge \{e_i\}^{k}_1$ has its own components $G_1,G_2,\dots ,G_m$ with $m\leq k$. Then the Topsnut-matrix of $G$ can be computed as
\begin{equation}\label{eqa:e-split-coincidence}
A_{vev}(G)=\ominus^m_{i=1}A_{vev}(G_i).
\end{equation}
For a \emph{vertex-split graph} $G\wedge \{x_i\}^{k}_1$ having its components $H_1,H_2,\dots ,H_n$ with $k=\gamma_{vs}(G)$, we have
\begin{equation}\label{eqa:v-split-coincidence}
A_{vev}(G)=\odot^n_{i=1}A_{vev}(H_i).
\end{equation}
For a \emph{mixed-split graph} $G\wedge (\{x_i\}^{a}_1\cup \{e_i\}^{b}_1)$ having its components $T_1,T_2,\dots ,T_c$, we have
\begin{equation}\label{eqa:v-split-coincidence}
A_{vev}(G)=[\ominus \odot]^c_{i=1}A_{vev}(T_i).
\end{equation}
Correspondingly, we have $G=\ominus^m_{i=1}G_i$, $G=\odot^n_{i=1}H_i$ and $G=[\ominus \odot]^c_{i=1}T_i$ in topological configuration.

\subsection{Other operations on Topsnut-matrices}

We exchange the positions of two columns $(x_i~e_i~y_i)^{-1}$ and $(x_j~e_j~y_j)^{-1}$ in $A_{vev}(G)$, so we get another Topsnut-matrix $A'_{vev}(G)$. In mathematical symbol, the \emph{column-exchanging operation} $c_{(i,j)}(A_{vev}(G))=A'_{vev}(G)$ is defined by
$$
{
\begin{split}
&\quad c_{(i,j)}(x_1 ~ x_2 ~ \cdots ~x_i~ \cdots x_j~ \cdots ~x_q)\\
&=(x_1 ~ x_2 ~ \cdots ~x_j~ \cdots x_i~ \cdots ~x_q),
\end{split}}
$$
$$
{
\begin{split}
&\quad c_{(i,j)}(e_1 ~ e_2 ~ \cdots ~e_i~ \cdots e_j~ \cdots ~e_q)\\
&=(e_1 ~ e_2 ~ \cdots ~e_j~ \cdots e_i~ \cdots ~e_q),
\end{split}}
$$
and
$$
{
\begin{split}
&\quad c_{(i,j)}(y_1 ~ y_2 ~ \cdots ~y_i~ \cdots y_j~ \cdots ~y_q)\\
&=(y_1 ~ y_2 ~ \cdots ~y_j~ \cdots y_i~ \cdots ~y_q).
\end{split}}
$$
And we exchange the positions of $x_i$ and $y_i$ of the $i$th column of $A_{vev}(G)$ by an \emph{xy-exchanging operation} $l_{(i)}$ defined as:
$$l_{(i)}(x_1 ~ x_2 ~ \cdots ~x_i~ \cdots ~x_q)=(x_1 ~ x_2 ~ \cdots ~y_i~ \cdots ~x_q)$$
and
$$l_{(i)}(y_1 ~ y_2 ~ \cdots ~y_i~ \cdots ~y_q)=(y_1 ~ y_2 ~ \cdots ~x_i~ \cdots ~y_q),$$
the resulting matrix is denoted as $l_{(i)}(A_{vev}(G))$.

Now, we do a series of column-exchanging operations $c_{(i_k,j_k)}$ with $k\in [1,m]$, and a series of xy-exchanging operation $l_{(i_s)}$ with $s\in [1,n]$ to $A_{vev}(G)$, the resulting matrix is written by $T_{(c,l)}(A_{vev}(G))$.
\begin{lem}\label{thm:2-trees-matrices-isomorphic}
Suppose $T$ and $H$ are trees of $q$ edges. If $T_{(c,l)}(A_{vev}(T))=A_{vev}(H)$, then these two trees are isomorphic to each other, that is, $T\cong H$.
\end{lem}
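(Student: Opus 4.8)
The plan is to show that the two matrix operations involved here — the column-exchanging operation $c_{(i,j)}$ and the xy-exchanging operation $l_{(i)}$ — are exactly transformations that leave invariant the labelled graph encoded by a Topsnut-matrix, and then to argue that a Topsnut-matrix determines its labelled tree uniquely. First I would make explicit the dictionary between a Topsnut-matrix and a labelled graph. Each column $(x_i~e_i~y_i)^{-1}$ of $A_{vev}(G)$ records the edge $e_i$ together with the labels $x_i,y_i$ of its two ends, so the matrix is nothing but an ordered list of the $q$ labelled edges of $G$. Because a labelling (Definition \ref{defn:define-labelling}) assigns distinct values to distinct vertices, the vertex set is recovered as the set of distinct entries appearing in the rows $X$ and $Y$, and the edge set as the collection of unordered pairs $\{x_i,y_i\}$ (tagged by $e_i$). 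Hence the whole labelled graph $G$, and in particular its tree structure, is reconstructed from $A_{vev}(G)$ without ambiguity.

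Next I would verify that neither operation alters this reconstruction. The operation $c_{(i,j)}$ swaps the $i$th and $j$th columns in their entirety, so it merely reorders the edge list and leaves the multiset of columns — hence the labelled graph — unchanged. The operation $l_{(i)}$ swaps the top and bottom entries $x_i,y_i$ of a single column; since the edge $e_i$ is the \emph{unordered} pair $\{x_i,y_i\}$, exchanging its two ends yields the same labelled edge. Consequently, for any composition $T_{(c,l)}$ of such operations, the matrices $A_{vev}(T)$ and $T_{(c,l)}(A_{vev}(T))$ encode one and the same labelled tree $T$.

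Finally I would conclude. By hypothesis $T_{(c,l)}(A_{vev}(T))=A_{vev}(H)$; the left-hand matrix encodes $T$ and the right-hand matrix encodes $H$, and since equal matrices must encode the same labelled graph by the uniqueness of the reconstruction, $T$ and $H$ coincide as labelled trees, whence in particular $T\cong H$.

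The main obstacle will be the reconstruction/uniqueness step: one must argue carefully that the unordered labelled edge list read off from the matrix pins down the tree up to isomorphism. This rests on the injectivity of the labelling on vertices (so that vertices are in bijection with the labels that occur, and no two distinct vertices are conflated) together with the fact that a connected acyclic graph is determined by its edge set once its vertices are identified. Everything else — the invariance of the encoded labelled graph under $c_{(i,j)}$ and $l_{(i)}$ — is routine once the matrix-to-graph dictionary is fixed; note that I need only this easy direction, not a full characterization of all matrix symmetries.
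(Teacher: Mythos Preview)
Your argument is correct and takes a genuinely different route from the paper. The paper proves the lemma by induction on the number of edges: it removes a leaf $x$ from $T$ and a matching leaf $x'$ from $H$, invokes the inductive hypothesis on $T-x$ and $H-x'$, and then re-attaches the leaves using the decomposition $A_{vev}(T)=A_{vev}(T-x)\odot A_{vev}(e)$. You instead give a direct invariance argument: the Topsnut-matrix is just an ordered, oriented list of labelled edges; $c_{(i,j)}$ permutes the list and $l_{(i)}$ reverses the orientation of one edge, neither of which changes the underlying labelled graph; hence equality of the matrices forces equality of the labelled trees. Your approach is shorter and more transparent, since it isolates exactly why the result is true (the operations are symmetries of the encoding), and it avoids the somewhat delicate step in the paper's induction of choosing \emph{which} leaf $x'$ of $H$ corresponds to a given leaf $x$ of $T$. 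The paper's inductive framework, on the other hand, is what it later reuses for the non-tree case in Theorem~\ref{thm:2-graphs-matrices-isomorphic}, so there is some economy in setting it up here. Your reliance on vertex-label injectivity is the one point to flag carefully; as you note, this is guaranteed by Definition~\ref{defn:define-labelling}, and the paper itself remarks after Theorem~\ref{thm:2-graphs-matrices-isomorphic} that the graphs in question are Topsnut-gpws, i.e.\ labelled graphs.
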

\begin{proof} We use induction on the number of vertices of two trees. As $q=1$, it is trivial. Assume that $T_{(c,l)}(A_{vev}(T-x))=A_{vev}(H-x')$, where $x$ is a leaf of $T$, and $x'$ is a leaf of $H$ such that two trees $T-x$ and $H-x'$ are isomorphic to each other. The condition $T_{(c,l)}(A_{vev}(T))=A_{vev}(H)$ enables us to add leaves $x,x'$ to $T-x$ and $H-x'$ respectively, such that $T-x+x\cong H-x'+x'$, since $T_{(c,l)}(A_{vev}(T))=T_{(c,l)}(A_{vev}(T-x))\odot A_{vev}(e)$ with $e=yx\in E(T)$, and $A_{vev}(H)=A_{vev}(H-x')\odot A_{vev}(e')$ with $e'=y'x'\in E(H)$.

The lemma holds true.
\end{proof}

\begin{thm}\label{thm:2-graphs-matrices-isomorphic}
Let $G$ and $Q$ be two connected graphs of $q$ edges. If there are two edge subsets $E_G\subset E(G)$ and $E_Q\subset E(Q)$, such that a spanning tree $T=G-E_G$ of $G$ and a spanning tree $H=Q-E_Q$ of $Q$ hold $T_{(c,l)}(A_{vev}(T))=A_{vev}(H)$, as well as $T_{(c,l)}(A_{vev}(G))=A_{vev}(Q)$, then these two connected graphs $G$ and $Q$ are isomorphic to each other, that is, $G\cong Q$.
\end{thm}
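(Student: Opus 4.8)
The plan is to reduce the general statement to the tree case already settled in Lemma \ref{thm:2-trees-matrices-isomorphic}, and then to extend the resulting tree isomorphism across the non-tree edges. The first thing I would record is the elementary but crucial observation that neither the \emph{column-exchanging operation} $c_{(i,j)}$ nor the \emph{xy-exchanging operation} $l_{(i)}$ alters the multiset of labelled edges $\{(\{x_i,y_i\},e_i)\}$ carried by a Topsnut-matrix: $c_{(i,j)}$ merely permutes whole edge-columns, while $l_{(i)}$ merely transposes the two endpoint labels inside a single column. Hence the composite $T_{(c,l)}$ is label-preserving and column-count preserving, and the hypothesis $T_{(c,l)}(A_{vev}(G))=A_{vev}(Q)$ forces $A_{vev}(G)$ and $A_{vev}(Q)$ to encode exactly the same multiset of labelled edges.

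Next I would apply Lemma \ref{thm:2-trees-matrices-isomorphic} to the spanning trees. Since $T=G-E_G$ and $H=Q-E_Q$ are trees, and the equality $T_{(c,l)}(A_{vev}(T))=A_{vev}(H)$ forces them to have equally many edge-columns (so $|V(G)|=|V(Q)|$ and the two trees have the same number of edges), the lemma applies and yields $T\cong H$. More precisely, the realized sequence of operations produces a bijection $\phi$ carrying the labelled vertices and edges of $T$ onto those of $H$ while preserving incidence. Because each vertex label and each edge label is used on exactly one vertex and one edge (Definition \ref{defn:define-labelling}), the rule ``match equal labels'' is well defined and makes $\phi$ a genuine graph isomorphism $T\to H$.

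Finally I would extend $\phi$ to all of $G$. The columns of $A_{vev}(G)$ split into those indexing the tree edges $E(T)$ and those indexing the non-tree edges $E_G$, and similarly $A_{vev}(Q)$ splits over $E(H)$ and $E_Q$. Applying the one sequence $T_{(c,l)}$ simultaneously to $A_{vev}(T)\to A_{vev}(H)$ and to $A_{vev}(G)\to A_{vev}(Q)$ shows that $T_{(c,l)}$ sends tree-columns to tree-columns and non-tree columns to non-tree columns. Consequently, for every non-tree edge $e\in E_G$ with endpoint labels $u,v$, the column it is matched to in $A_{vev}(Q)$ is an actual edge $e'\in E_Q$ whose endpoint labels are again $u,v$; under $\phi$ this is exactly the edge joining $\phi(u)$ and $\phi(v)$. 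Thus $\phi$ preserves adjacency on the whole of $E(G)=E(T)\cup E_G$, so $\phi$ is an isomorphism $G\to Q$ and $G\cong Q$.

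The hard part will be the bookkeeping in the last step: verifying that the single sequence $T_{(c,l)}$ genuinely respects the decomposition $E(G)=E(T)\cup E_G$, and that the vertex-label correspondence it induces on the tree part (coming from the lemma) agrees with the one forced by the global matrix equality on the non-tree part. Once this compatibility between the lemma's isomorphism and the full-matrix equality is pinned down, the extension is routine, and it can alternatively be organized as an induction that adds the edges of $E_G$ to $T$ one at a time, mirroring the inductive step of Lemma \ref{thm:2-trees-matrices-isomorphic}.
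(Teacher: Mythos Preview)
Your proposal is correct and follows essentially the same route as the paper: invoke Lemma~\ref{thm:2-trees-matrices-isomorphic} on the spanning trees to obtain $T\cong H$, then extend the isomorphism across the remaining edges $E_G$ and $E_Q$. The paper organizes the extension as the one-edge-at-a-time induction you mention at the end, adding $e_i\in E_G$ to $T_{i-1}$ and matching it with some $e'_i\in E_Q$ so that $T_i\cong H_i$; your direct argument via the label-preserving nature of $c_{(i,j)}$ and $l_{(i)}$ is a slightly more explicit version of the same idea. One small caveat: the two occurrences of $T_{(c,l)}$ in the hypothesis need not be the \emph{same} sequence of operations, so your phrase ``applying the one sequence $T_{(c,l)}$ simultaneously'' should be replaced by the weaker (and sufficient) observation that any such sequence preserves the multiset of labelled edge-columns, whence the non-tree columns of $G$ and $Q$ agree as multisets regardless of which sequences realize the two equalities.
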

\begin{proof} We use induction on the number of edges of graphs. If $G$ and $Q$ are trees, we are done by Lemma \ref{thm:2-trees-matrices-isomorphic}. According $T_{(c,l)}(A_{vev}(T))=A_{vev}(H)$ which means $T\cong H$, we take an edge $e_1\in E_G$ and then add it to $T$ for a new graph $T_1=T+e_1$. Next, we take an edge $e'_1\in E_Q$ and then add it to $H$ such that $H_1=H+e'_1$ and $H_1\cong T_1$, since $T\cong H$. Go on in this way, we have $T_i\cong H_i$ with $T_i=T_{i-1}+e_i$ for $e_i\in E_G\setminus \{e_1,\dots ,e_{i-1}\}$ and $H_i=H_{i-1}+e'_i$ for $e'_i\in  E_Q\setminus \{'_1,\dots ,e'_{i-1}\}$. When $E_G\setminus \{e_1,\dots ,e_{m-1}\}=\emptyset $ and $E_Q\setminus \{e'_1,\dots ,e'_{m-1}\}=\emptyset $ for some $m$, we get $G\cong Q$, as desired.
\end{proof}

We point out that Theorem \ref{thm:2-graphs-matrices-isomorphic} is not a solution of the isomorphic problem of graphs, since $G$ and $Q$ are Topsnut-gpws, also, are labelled graphs.

Suppose that $A_{vev}(G)=(X~W~Y)^{-1}$ is a Topsnut-matrix of a Topsnut-gpw $(p,q)$-graph $G$ (see Definition \ref{defn:Topsnut-matrix}(\ref{eqa:a-formula}) and (\ref{eqa:three-vectors})). We define a Topsnut-matrix of an edge $e_i=x_iy_i$ of $G$ as $A(e_i)=(x_i~e_i~y_i)^{-1}$, and set an operation $\odot$ between $A(e_i)$ with $i\in [1,q]$. Hence, we get
\begin{equation}\label{eqa:edge-topsnut-matrix}
{
\begin{split}
A(e_i)\odot A(e_j)&= \left(
\begin{array}{ccccc}
x_{i}\\
w_{i}\\
y_{i}
\end{array}
\right)\odot
\left(
\begin{array}{ccccc}
x_{j}\\
w_{j}\\
y_{j}
\end{array}
\right)
\\
&= \left(
\begin{array}{ccccc}
x_{i}&x_{j}\\
w_{i}&w_{j}\\
y_{i}&y_{j}
\end{array}
\right)\\
&=(x_i~e_i~y_i)^{-1}\odot (x_j~e_j~y_j)^{-1},
\end{split}}
\end{equation}
so we can rewrite the Topsnut-matrix of $G$ in another way
\begin{equation}\label{eqa:dge-topsnut-matrix-0}
{
\begin{split}
A_{vev}(G)=\odot ^q_{i=1}A(e_i).
\end{split}}
\end{equation}
Thereby, we have a vev-type TB-paw
\begin{equation}\label{eqa:TB-paw-by-dges}
{
\begin{split}
D_{vev}(G)&=\uplus^q_{i=1}D_{vev}(e_i)=\uplus^q_{j=1}x_{i_j1}e_{i_j}y_{i_j}\\
&=x_{i_1}e_{i_1}y_{i_1}x_{i_2}e_{i_2}y_{i_2}\cdots x_{i_q}e_{i_q}y_{i_q}.
\end{split}}
\end{equation}
where $x_{i_1}e_{i_1}y_{i_1},x_{i_2}e_{i_2}y_{i_2},\cdots ,x_{i_q}e_{i_q}y_{i_q}$ is a permutation of $x_{1}e_{1}y_{1},x_{2}e_{2}y_{2},\cdots ,x_{q}e_{q}y_{q}$.

We can observe the following facts:

(1) If there is no $(x_i~e_i~y_i)^{-1}=(x_j~e_j~y_j)^{-1}$ for any pair of edges $e_i$ and $e_j$ of $G$, then $G$ is simple.

(2) If any edge $e_s$ corresponds another edge $e_t$ such that both $A(e_s)=(x_s~e_s~y_s)^{-1}$ and $A(e_t)=(x_t~e_t~y_t)^{-1}$ hold one of $x_s=x_t$, $y_s=y_t$, $x_s=y_t$ and $x_t=y_s$, then $G$ is connected.

(3) If $e_i=|x_i-y_i|$ for any edge $e_i$ of $G$ and $\{e_i\}^q_1=[1,q]$ (or $\{e_i\}^q_1=[1,2q-1]^o$), then $G$ is (odd-)graceful; and moreover $G$ is (odd-)elegant if $e_i=x_i+y_i~(\bmod~q)$ for any edge $e_i$ of $G$  and $\{e_i\}^q_1=[0,q-1]$ (or $\{e_i\}^q_1=[1,2q-1]^o$).

To characterize more properties of a $(p,q)$-graph $G$ by its own Topsnut-matrix $A_{vev}(G)=(X~W~Y)^{-1}$ may be interesting.

\section{Encrypting dynamic networks by every-zero graphic groups}

We propose to encrypt a dynamic network in this section although we do not have any known knowledge about such topic. However, we can foresee large scale size and varied constantly, big data base of Topsnut-gpws and changing encryption at any time in the following:

(1) A dynamic network $N(t)$ is variable as time goes on, and $N(t)$, very often, have thousands of vertices and edges at time step $t$.

(2) There is a big data base $D_{ata}(N(t))$ of Topsnut-gpws for encrypting $N(t)$, such that an edge $uv$ of $N(t)$ is labelled by a Topsnut-gpw $G_{uv}$, which can join two labels $G_{u}$ and $G_{v}$ of two ends $u$ and $v$ of the edge together to form an authentication. So the number of elements of $D_{ata}(N(t))$ should be very larger.

(3) It must be quick to encrypt $N(t)$ in short time in real practice, and substitute constantly by new Topsnut-gpws the old Topsnut-gpws to the vertices and edges of $N(t)$ at any time.

Clearly, these three difficult problems will obstruct us to realize our encryption of dynamic networks. We consider it is interesting to explore this topic by our best efforts.

As the first exploration, we will apply spanning trees of dynamic networks as the models of encryption, and use every-zero graphic groups to be as desired data base of Topsnut-gpws, and then change the Topsnut-gpws of $N(t)$ by various every-zero graphic groups under the equivalent coloring/labellings or under the different configurations of graphs (\cite{Yao-Sun-Zhao-Li-Yan-2017, Wang-Xu-Yao-Ars-2018, SH-ZXH-YB-2017-2, SH-ZXH-YB-2017-3, SH-ZXH-YB-2017-4, SH-ZXH-YB-2018-5, Yao-Liu-Yao-2017}).

Since, a Topsnut-gpw $G$ has its own Topsnut-matrices $A(G)$, and each Topsnut-matrix $A(G)$ induces vev-type TB-paws $D{vev}(A(G))$, so we can get our corresponding \emph{every-zero Topsnut-matrix groups} and \emph{every-zero TB-paw groups}, respectively, thus, these two classes of groups can help us to encrypt dynamic networks quickly and efficiently.

\subsection{A pan-odd-graceful every-zero graphic group}

\begin{figure}[h]
\centering
\includegraphics[height=10cm]{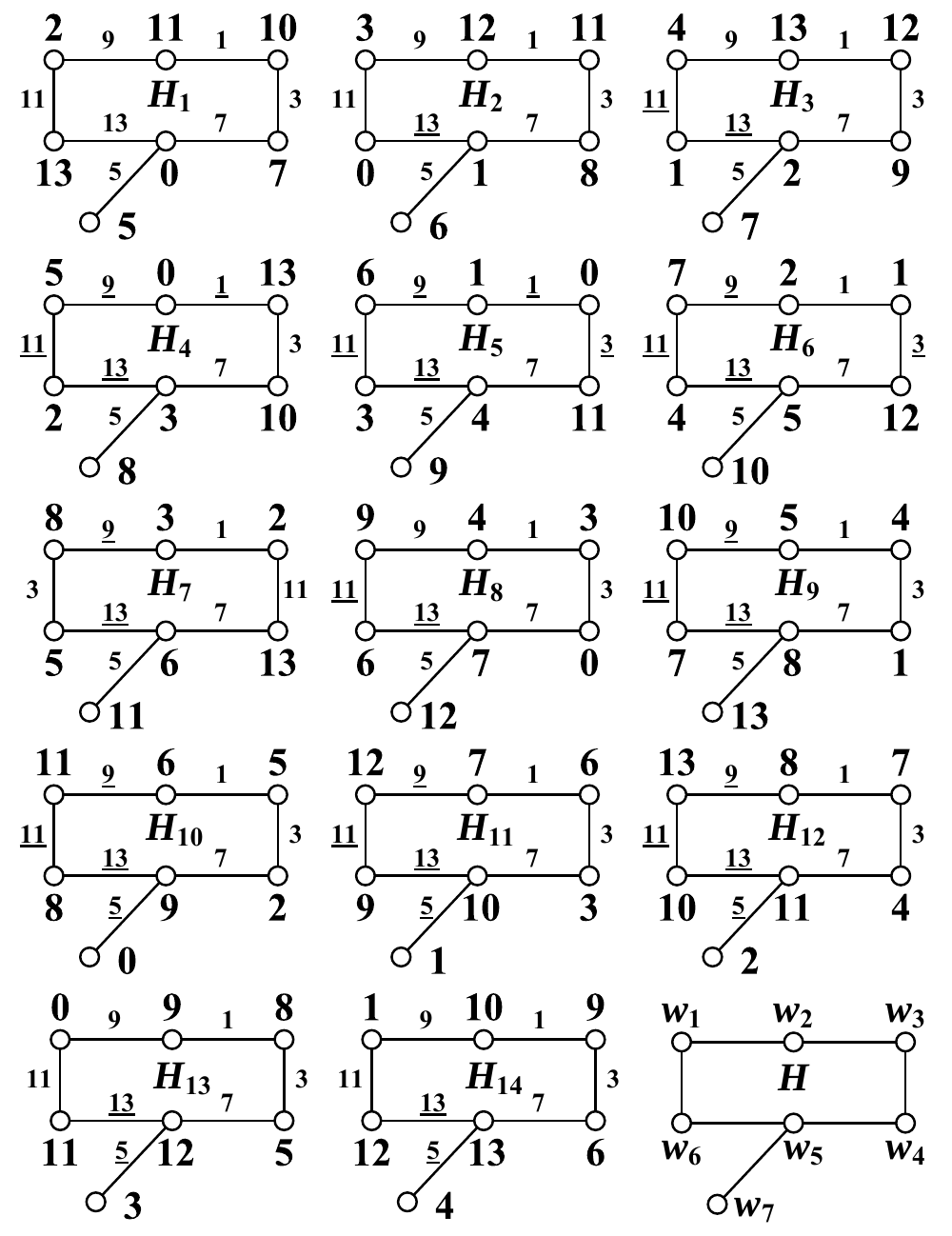}
\caption{\label{fig:odd-graceful-group}{\small A pan-odd-graceful graphic group $F_{14}(H,f)$, also, an \emph{every-zero graphic group} cited from \cite{Yao-Sun-Zhao-Li-Yan-2017}.}}
\end{figure}

Fig.\ref{fig:odd-graceful-group} shows a pan-odd-graceful graphic group $F_{14}(H,f)$ based on a $(7,7)$-graph $H$ and a pan-odd-graceful labelling $f$ of $H$. The pan-odd-graceful graphic group $F_{14}(H,f)$ contains 14 labelled graphs and satisfies: Each element $H_i\in F_{14}(H,f)$ admits a pan-odd-graceful labelling $f_i$, two elements $H_i,H_j\in F_{14}(H,f)$ hold an additive operation $H_i\oplus H_j$ defined as
\begin{equation}\label{eqa:odd-graceful-group}
f_i(w_i)+f_j(w_i)-f_k(w_i)=f_{i+j-k~(\bmod~14)}(w_i)
\end{equation}
for each vertex $w_i\in V(H)$, where $H$ is displayed in Fig.\ref{fig:odd-graceful-group}, and $H_k$ admits a pan-odd-graceful labelling $f_k$ is as the \emph{zero} of $F_{14}(H,f)$. It is easy to verify $H_i\oplus H_k=H_i$, $H_i\oplus H_j=H_j\oplus H_i$, $(H_i\oplus H_j)\oplus H_s=H_i\oplus (H_j\oplus H_s)$. So, $F_{14}(H,f)$ is an Abelian additive group (a graphic group). Notice that each element $H_i\in F_{14}(H,f)$ can be as the \emph{zero} of the graphic group $F_{14}(H,f)$, so we call $F_{14}(H,f)$ a \emph{pan-odd-graceful every-zero graphic group}.

\begin{figure}[h]
\centering
\includegraphics[height=5.4cm]{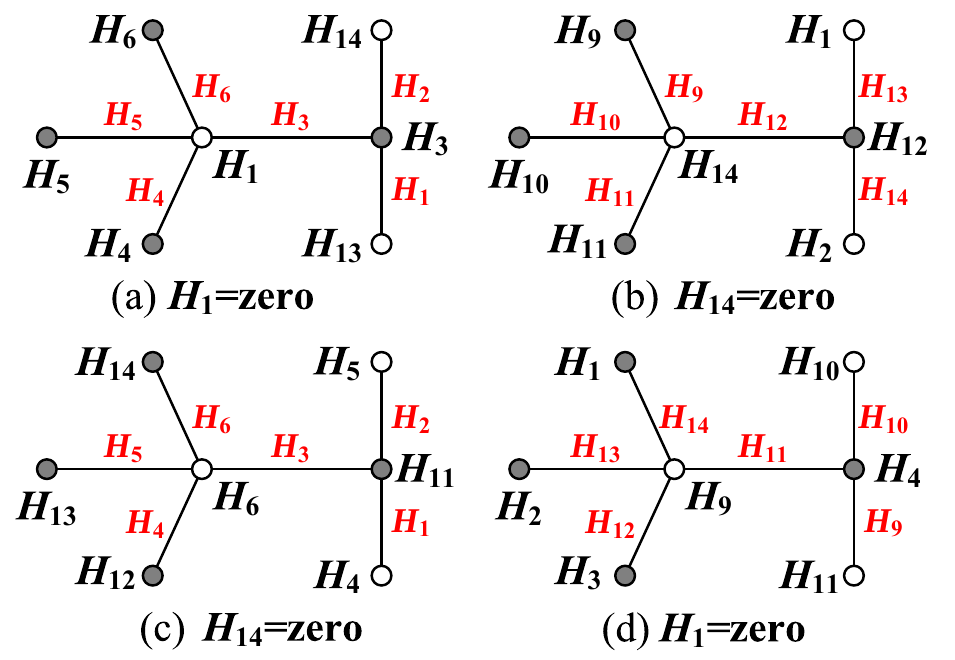}
\caption{\label{fig:group-label-tree}{\small A tree $T$, based on an every-zero graphic group $F_{14}(H,f)$ shown in Fig.\ref{fig:odd-graceful-group}, admits: (a) a graceful group-labelled $F_1$; (b) a dual group-labelled $F'_1$ of $F_1$; (c) another graceful group-labelled $F_2$; (d) a pan-edge-magic group-labelled $F_3$.}}
\end{figure}

Fig.\ref{fig:group-label-tree} shows us the following phenomenon:

(a) $F_1(uv)=F_1(u)\oplus F_1(v)$ for each edge $uv\in E(T)$, and $F_1(E(T))=\{H_1,H_2,\dots ,H_6\}$, so we say $T$ admits a graceful group-labelling.

(b) $F'_1(w)=H_j$ and $F_1(w)=H_i$ with $w\in V(T)\cup E(T)$ such that $H_j=H_{15-i}$, and $F'_1(uv)=F'_1(u)\oplus F'_1(v)$ for each edge $uv\in E(T)$.

(c) $F_2(uv)=F_2(u)\oplus F_2(v)$ for each edge $uv\in E(T)$, and $F_2(E(T))=\{H_1,H_2,\dots ,H_6\}$, so $F_2$ is a graceful group-labelling; moreover each edge $uv$ corresponds another edge $xy$ such that $i'+s+j'=21$ for $F_2(uv)=H_s$, $F_2(x)=H_{i'}$ and $F_2(y)=H_{j'}$.

(d) each edge $uv\in E(T)$ holds $i+t+j=24$ for $F_3(uv)=H_t$, $F_3(u)=H_i$ and $F_3(v)=H_j$, and each edge $uv$ corresponds another edge $xy$ such that $F_3(uv)=F_3(x)\oplus F_3(y)$.

\begin{figure}[h]
\centering
\includegraphics[height=5.4cm]{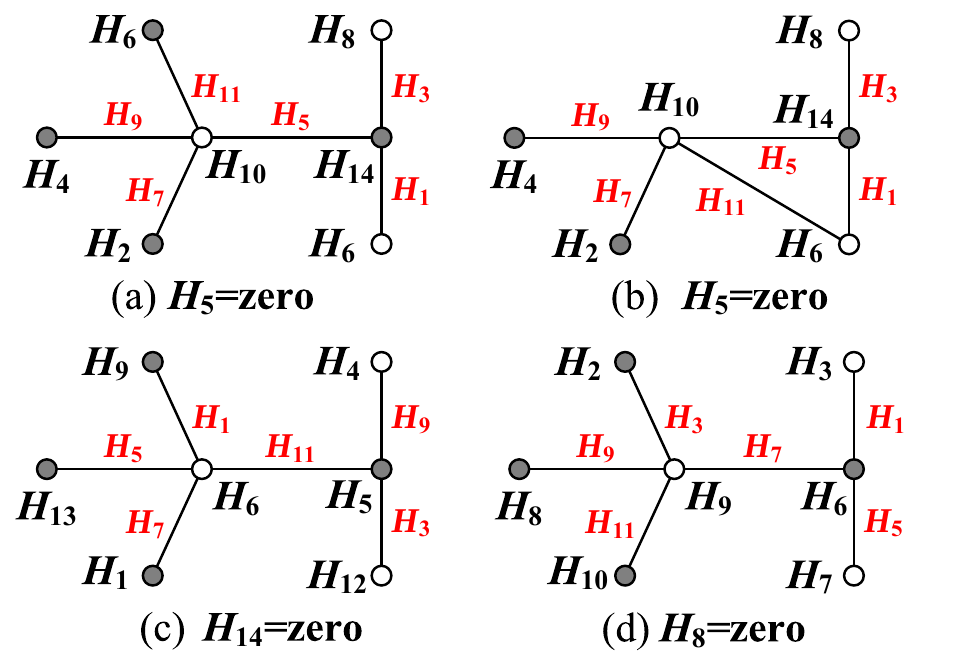}
\caption{\label{fig:group-label-tree-1}{\small (a) The group-labelled tree $T$ has two vertices labelled with the same $H_6$; (b) the graph obtained by identifying two vertices of $T$ into one admits an odd-graceful group-labelling; (c) $T$ has an odd-graceful group-labelling; (d) $T$ has another odd-graceful group-labelling.}}
\end{figure}

\subsection{Graphs labelled by every-zero graphic groups}

Before labelling graphs with graphic groups, we define a particular class of graphic groups as follows:

\begin{defn}\label{defn:graphic-group-definition}
$^*$ An \emph{every-zero graphic group} $F_{n}(H,h)$ made by a Topsnut-gpw $H$ admitting an $\varepsilon$-labelling $h$ contains its own elements $H_i$ holding $H\cong H_i$ and admitting an $\varepsilon$-labelling $h_i$ induced by $h$ with $i\in [1,n]$ and hold an additive operation $H_i\oplus H_j$ defined as
\begin{equation}\label{eqa:graphic-group-definition}
h_i(x)+h_j(x)-h_k(x)=h_{i+j-k\,(\textrm{mod}\,n)}(x)
\end{equation}
for each element $x\in V(H)$ under a \emph{zero} $H_k$.\qqed
\end{defn}

\begin{figure}[h]
\centering
\includegraphics[height=6.8cm]{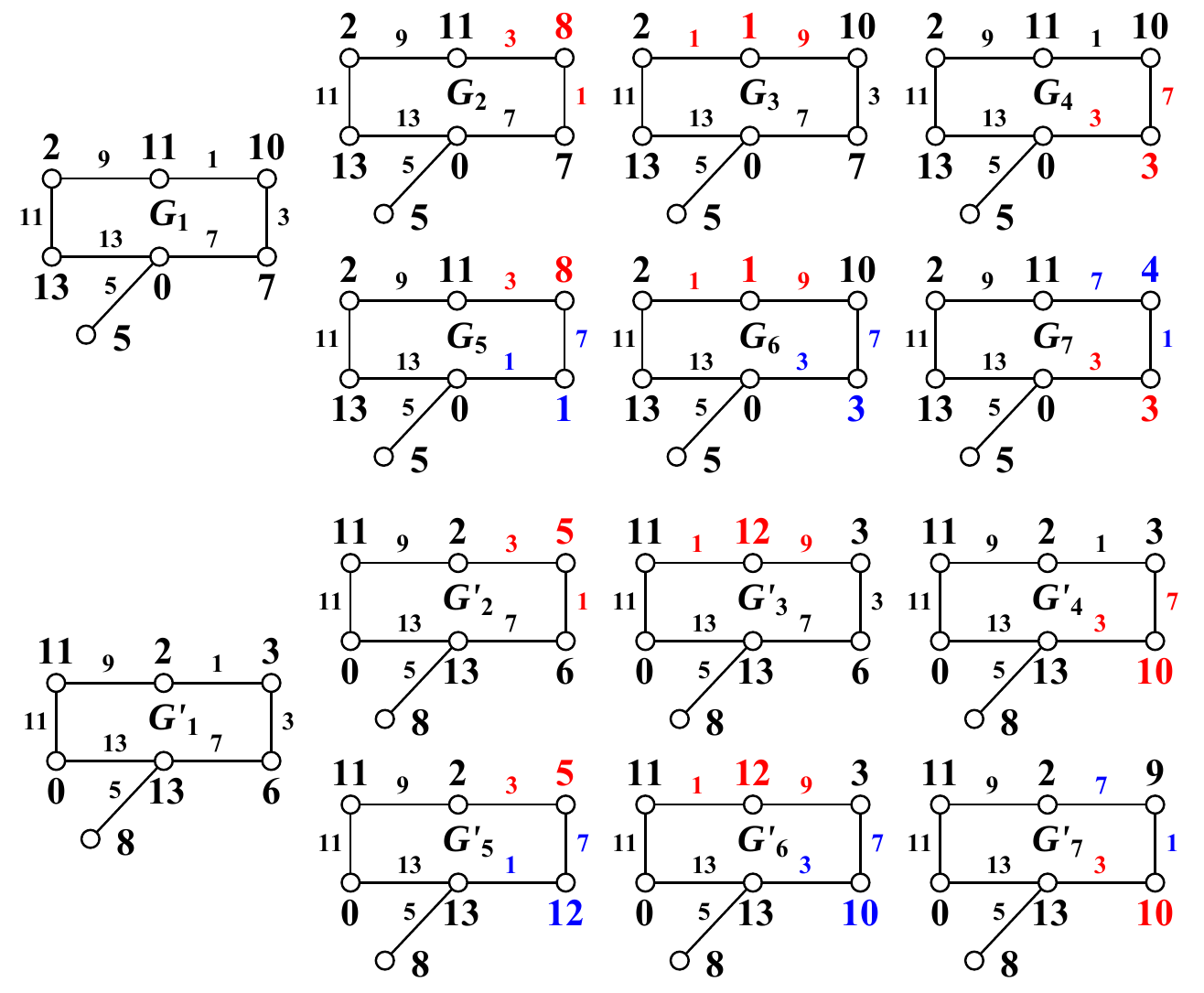}
\caption{\label{fig:dual-translation}{\small Seven pan-odd-graceful graphs $G_i$ and its dual graphs $G'_i$ with $i\in [1,7]$.}}
\end{figure}

We write $G_i(a_i,b_j)G_j$ to stand for a \emph{transformation} from $G_i$ to $G_j$, and vice versa; similarly, $G'_i(a'_i,b'_j)G'_j$ is a transformation from $G'_i$ to $G'_j$ in Fig.\ref{fig:dual-translation}. For example, $G_1(10,8)G_2$, $G_1(11,1)G_2$ and $G_1(7,3)G_2$; $G_2(7,1)G_5$, $G_3(7,3)G_6$ and $G_4(10,4)G_7$. Thereby, there are at least two every-zero graphic group $F_{n}(H,h)$ and $F_{n}(H,h')$, where $h'$ is the dual labelling of $h$.

We may have some \emph{every-zero graphic group chain} $\{F_{n}(H,h_i)\}^m_1$ holding $H_i(a_i,b_{i+1})H_{i+1}$ with $i\in [1,m-1]$. Such every-zero graphic group chains can be used to encrypt network chains, or the subnetworks of a large network.

Furthermore, we can label the vertices and edges of a graph with the elements of a given graphic group.

\begin{defn}\label{defn:graph-graceful-group-labelling}
$^*$ Let $F_{n}(H,h)$ be an every-zero graphic group. A $(p,q)$-graph $G$ admits a \emph{graceful group-labelling} (an \emph{odd-graceful group-labelling}) $F:V(G)\rightarrow F_{n}(H,h)$ such that each edge $uv$ is labelled by $F(uv)=F(u)\oplus F(v)$ under a \emph{zero} $H_k$, and $F(E(G))=\{F(uv):uv \in E(G)\}=\{H_1,H_2,\dots ,H_q\}$ (or $F(E(G))=\{F(uv):uv \in E(G)\}=\{H_1,H_3,\dots ,H_{2q-1}\}$).\qqed
\end{defn}

For understanding Definition \ref{defn:graph-graceful-group-labelling}, we present Fig.\ref{fig:group-label-tree}(a) and (c), as well as Fig.\ref{fig:group-label-tree-1}(b), (c) and (d). Since the group-labelled tree $T$ has two vertices labelled with the same $H_6$ in Fig.\ref{fig:group-label-tree-1}(a), we say that $T$ admits an \emph{odd-graceful group-coloring}. Similarly, we can define the \emph{graceful group-coloring}.

If $n=q$ (or $n=2q-1$) in Definition \ref{defn:graph-graceful-group-labelling}, we say $G$ admits a \emph{pure graceful group-labelling} (or a \emph{pure odd-graceful group-labelling}).

In general, finding the minimum number of the modular $n$ of $F_n(H,h)$ for which a $(p,q)$-graph $G$ admits a \emph{graceful group-labelling} (an \emph{odd-graceful group-labelling}) may be important and interesting. We present an example indicated in Fig.\ref{fig:encrypt-network} for encrypting a network $T$ shown in Fig.\ref{fig:group-label-tree-1}. Notice that there are many ways to join $H_i$ with $H_j$ by an edge $u_iv_j$ for $u_i\in V(H_i)$ and $v_j\in V(H_j)$, so, there are many encrypted networks like $N_{et}(T,F_{14}(H,f))$ shown in Fig.\ref{fig:encrypt-network}. Furthermore, we can get many vv-type/vev-type TB-paws from a graph $H$ having a labelling $f$ and a network $T$.

\begin{figure}[h]
\centering
\includegraphics[height=9.2cm]{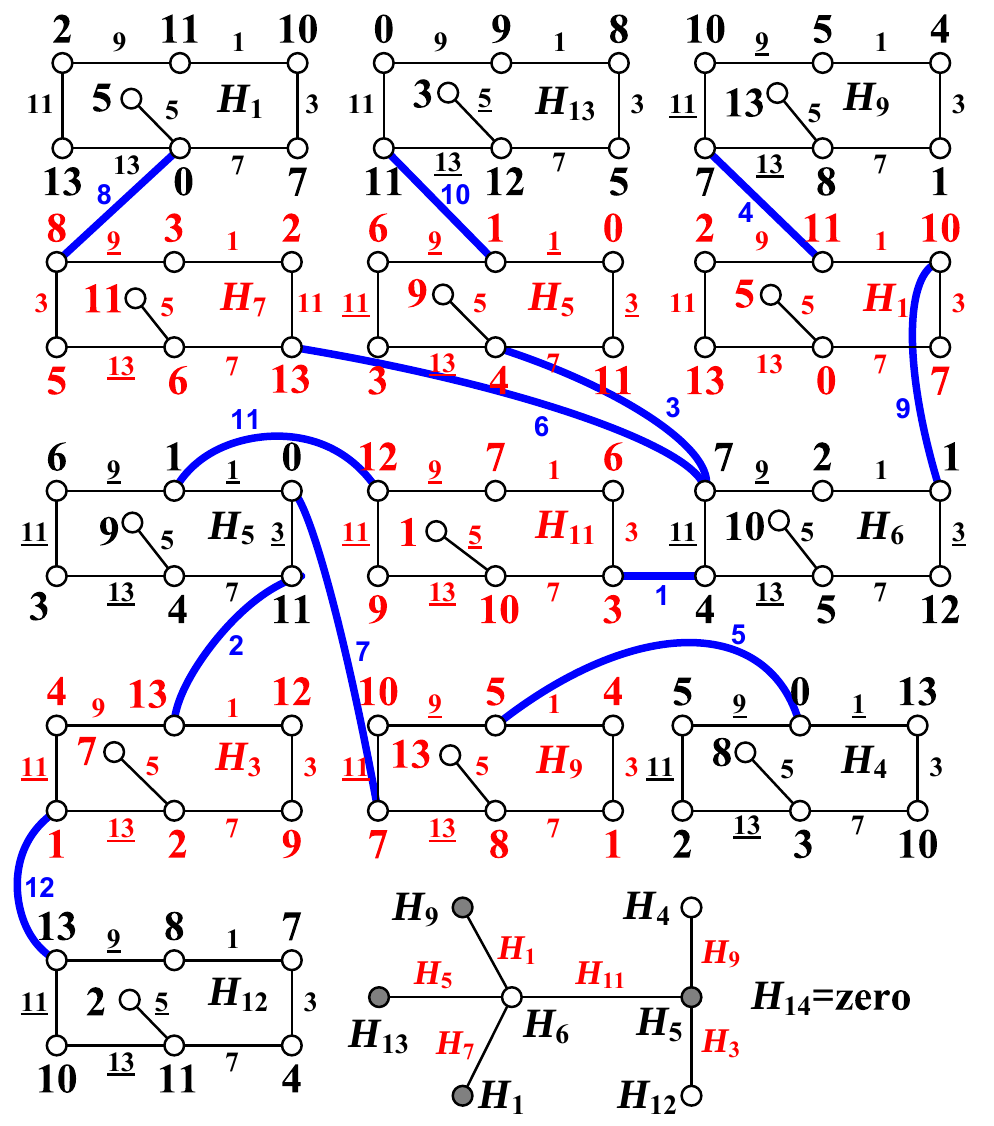}
\caption{\label{fig:encrypt-network}{\small An encrypted network $N_{et}(T,F_{14}(H,f))$ made by an every-zero graphic group $F_{14}(H,f)$ exhibited in Fig.\ref{fig:odd-graceful-group}, the labels of the joined edges in blue color form a consecutive set $[1,12]$.}}
\end{figure}

The encrypted network $N_{et}(T,F_{14}(H,f))$ can provide more vev-type TB-paws with longer bytes by the previous methods introduced, such as Path-neighbor-method, Cycle-neighbor-method, Lobster-neighbor-method and Spider-neighbor-method, as well as Euler-Hamilton-method. For example, $N^*=N_{et}(T,F_{14}(H,f))$ shown in Fig.\ref{fig:encrypt-network} distributes us the following vev-type TB-paws (we write $D_{vev}$ as $D$ for short):

{\small \noindent $D(H_1)=1311291111037705513$ (19), $D(a_1a_7)=088$ (4),

\noindent $D(H_7)=53893121113765116135$ (20), $D(a_7a_6)=1367$ (4),

\noindent $D(H_6)=41179211312755105134$ (20), $D(a_6a_5)=734$ (3),

\noindent $D(H_5)=3116911031174594133$ (19), $D(a_5a_{13})=11011$ (5),

\noindent $D(H_{13})=1111099183571253121311$ (22), $D(a_6a_{1})=1910$ (4),

\noindent $D(H_{1})=1311291111037705501313$ (22), $D(a_1a_{9})=1147$ (4),

\noindent $D(H_{9})=71110951431785138137$ (20), $D(a_6a_{11})=413$ (3),

\noindent $D(H_{11})=9111297163371051139$ (19), $D(a_{11}a_5)=413$ (3),

\noindent $D(H_{5})$ (19), $D(a_{5}a_3)=31013$ (5),

\noindent $D(H_{3})=11149131123972572131$ (20), $D(a_3a_{12})=178$ (3),

\noindent $D(H_{12})=10111398173471152111310$ (23),
$D(a_5a_{9})=325$ (3),

\noindent $D(H_{9})=71110951431785138137$ (20), $D(a_9a_{4})=325$ (3),

\noindent $D(H_{4})=21159011331073583132$ (20). }

Thereby, we have a vev-type TB-paw as follows
\begin{equation}\label{eqa:c3xxxxx}
{
\begin{split}
D(N^*)=&D(H_1)\uplus D(a_1a_7)\uplus D(H_7)\uplus D(a_7a_6)\\
&\uplus D(H_6)\uplus D(a_6a_5)\uplus D(H_5)\uplus D(a_5a_{13})\\
&\uplus D(H_{13})\uplus D(a_6a_{1})\uplus D(H_{1})\uplus D(a_1a_{9})\\
&\uplus D(H_{9})\uplus D(a_6a_{11})\uplus D(H_{11})\\
&\uplus D(a_{11}a_5)\uplus D(H_{5})\uplus D(a_{5}a_3)\uplus D(H_{3})\\
&\uplus D(a_3a_{12})\uplus D(H_{12})\uplus D(a_5a_{9})\uplus D(H_{9})\\
&\uplus D(a_9a_{4})\uplus D(H_{4}),
\end{split}}
\end{equation}
by $N_{et}(T,F_{14}(H,f))$, such that $D(N^*)$ has at least $263+44=307$ bytes in total. Clearly, there are many ways to write $D(N^*)$, since there are many ways to write $D(H_i)$ and $D(a_ia_j)$ for $1\leq i,j\leq 13$, and there are many ways to combinatoric $D(H_i)$ and $D(a_ia_j)$ for producing  $D(N^*)$.

\subsection{Encryption of Tree-networks}

The topic of encrypting dynamic networks has been proposed in \cite{Yao-Sun-Zhang-Mu-Sun-Wang-Su-Zhang-Yang-Yang-2018arXiv}. We show a general definition on every-zero graphic groups as follows:

\begin{defn}\label{defn:general-group-labelling}
$^*$ Let $F_{n}(H,h)$ be an every-zero graphic group, and $\{H_{i_j}\}^q_1$ be a subset of $F_{n}(H,h)$. Suppose that a $(p,q)$-graph $G$ admits a mapping $F:V(G)\rightarrow F_{n}(H,h)$ such that each edge $uv$ is labelled by $F(uv)=F(u)\oplus F(v)$ under a \emph{zero} $H_k$. If $F(x)\neq F(y)$ for any pair of vertices $x,y$, and $F(E(G))=\{H_{i_j}\}^q_1$, we call $F$ an \emph{$\{H_{i_j}\}^q_1$-sequence group-labelling}; if $F(w)=F(z)$ for some two distinct vertices $w,z$, and $F(E(G))=\{H_{i_j}\}^q_1$, we call $F$ an \emph{$\{H_{i_j}\}^q_1$-sequence group-coloring}. The labelled graph made by joining $F(u)$ with $F(uv)$ and joining $F(uv)$ with $F(v)$ for each edge $uv\in E(G)$ is denoted as $N_{et}(G,F_{n}(H,h))$ (see an example depicted in Fig.\ref{fig:encrypt-network}).\qqed
\end{defn}

\begin{thm} \label{them:trees-sequence-group-coloring}
For any sequence $\{H_{i_j}\}^q_1$ of an every-zero graphic group $F_{n}(H,h)$, any tree having $q$ edges admits an $\{H_{i_j}\}^q_1$-sequence group-coloring or an $\{H_{i_j}\}^q_1$-sequence group-labelling.
\end{thm}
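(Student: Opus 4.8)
The plan is to translate the group operation into ordinary modular arithmetic and then peel leaves off the tree, in the style of the proofs of Theorems \ref{thm:total-set-labellings} and \ref{thm:rainbow-total-set-labellings}. The key algebraic observation is that the map $\phi: H_i \mapsto i-k\,(\textrm{mod}\,n)$ is an isomorphism from $\big(F_n(H,h),\oplus\big)$, with zero $H_k$, onto the cyclic group $\mathbb{Z}_n$; indeed the defining relation (\ref{eqa:graphic-group-definition}) gives $\phi(H_i\oplus H_j)=(i+j-k)-k=\phi(H_i)+\phi(H_j)$. Consequently, for any prescribed element $H_{i_j}$ and any element $F(u)=H_a$ already assigned to a vertex $u$, the edge equation $F(u)\oplus F(v)=H_{i_j}$ has the unique solution $F(v)=H_{i_j+k-a\,(\textrm{mod}\,n)}$. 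Thus each edge constraint can always be solved for one endpoint once the other endpoint is fixed.

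With this in hand I would induct on the number $q$ of edges. When $q=1$ the tree is a single edge $uv$: assign $F(u)$ arbitrarily, say $F(u)=H_k$, and put $F(v)=H_{i_1}$, so the edge receives $H_{i_1}$ and $F(E(T))=\{H_{i_1}\}$. For the inductive step let $x$ be a leaf of $T$ with unique neighbour $y$, so that $T-x$ is a tree with $q-1$ edges. By the induction hypothesis, applied to the truncated sequence $\{H_{i_j}\}_1^{q-1}$, there is a vertex assignment $F$ of $T-x$ with $F(E(T-x))=\{H_{i_j}\}_1^{q-1}$. Writing $F(y)=H_a$, I extend $F$ to $x$ by $F(x)=H_{i_q+k-a\,(\textrm{mod}\,n)}$; then the new edge $xy$ carries exactly $H_{i_q}$, and hence $F(E(T))=\{H_{i_j}\}_1^{q-1}\cup\{H_{i_q}\}=\{H_{i_j}\}_1^{q}$, as required.

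It remains to account for the disjunction in the statement. The construction always outputs a map $F:V(T)\to F_n(H,h)$ whose induced edge-label set is exactly $\{H_{i_j}\}_1^{q}$, but it imposes no control on whether the $q+1$ vertex labels are pairwise distinct. If they are distinct, $F$ is an $\{H_{i_j}\}_1^q$-sequence group-labelling; if some two vertices share a label (which is forced, for instance, whenever $n\le q$), then $F$ is an $\{H_{i_j}\}_1^q$-sequence group-coloring. Either way $T$ admits one of the two, which is the claim. I do not expect a serious obstacle here; the only point needing care is the bookkeeping that every prescribed label is attained exactly as an edge label while $F$ stays well-defined, and this is automatic on a tree because each vertex removed during the peeling is incident to precisely one edge, so no edge equation is ever imposed twice. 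Equivalently, one may run the same assignment as a single breadth-first sweep from an arbitrary root, handing the prescribed labels to the edges in their order of discovery.
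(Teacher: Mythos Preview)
Your proposal is correct and follows essentially the same leaf-peeling induction as the paper's own proof: remove a leaf, apply the inductive hypothesis to the remaining tree with the truncated sequence, and then solve the single edge equation to label the restored leaf. Your presentation is in fact cleaner than the paper's, since you make the isomorphism $H_i\mapsto i-k\pmod{n}$ explicit and you address the labelling-versus-coloring dichotomy, which the paper leaves implicit.
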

\begin{proof} Let $T_q$ be a tree having $q$ edges, and each $H_i\in F_{n}(H,h)$ admit an $\varepsilon$-labelling $h_i$. For $q=1$, $T_1$ has two vertices $u,v$ and a unique edge $uv$. We define a labelling $F_1$ of $T_1$ under the zero $H_1$ such that $F_1(u)=H_1$, $F_1(uv)=H_s$ with $s\neq 1$, then $F_1(u)\oplus F_1(v)=F_1(uv)$, assume $F_1(v)=H_j$, moreover
\begin{equation}\label{eqa:sequence-group-coloring-1}
h_1(x)+h_j(x)-h_1(x)=h_{s~(\bmod~q)}(x)
\end{equation}
so $1+j-1=s~(\bmod~q)$, we get $F_1(v)=H_s$. Thereby, $F_1$ is an $H_s$-sequence group-labelling, since $H_1\neq H_s$.

Suppose that any tree $T_{q-1}$ of $(q-1)$ edges admits one of an $\{H_{i_j}\}^{q-1}_1$-sequence group-coloring or an $\{H_{i_j}\}^{q-1}_1$-sequence group-labelling, here $T_{q-1}=T_q-x$ for a leaf $x$ of $T$. Let $y$ be the unique adjacent vertex of $x$. Notice that $T_{q-1}$ admits an $\{H_{i_j}\}^{q-1}_1$-sequence group-coloring $F_{q-1}$ with the sequence $\{H_{i_j}\}^{q-1}_1$ and the zero $H_k$. Let $F_{q-1}(y)=H_y$. We define a new group-coloring $F_{q}$ by setting $F_{q}(w)=F_{q-1}(w)$ for each element of $V(T_{q-1})\cup E(T_{q-1})$. Let $F_{q-1}(y)=H_a$, we set $F_{q}(xy)=H_{i_q}$. Assume $F_{q}(xy)=H_b$, we will find the exact value of $b$. By
\begin{equation}\label{eqa:sequence-group-coloring-2}
h_a(u)+h_b(u)-h_k(u)=h_{i_q~(\bmod~q)}(u)
\end{equation}
that is, $a+b-k=i_q~(\bmod~q)$, we get the solution $b=k-a+i_q~(\bmod~q)$. Hence, $F_{q}(xy)=F_{q}(y)\oplus F_{q}(x)$. According to the hypothesis of induction, the theorem is really correct.
\end{proof}

By the induction proof on Theorem \ref{them:trees-sequence-group-coloring}, we can set randomly the elements of the set $\{H_1,H_2,\dots ,H_q\}=\{H_{i}\}^q_1$ on the edges of any tree $T$ having $q$ edges, where $H_i\neq H_j$ for $i\neq j$, and then label the vertices of $T$ with the elements of an every-zero graphic group $F_{n}(H,h)$. We provide a sequence group-coloring $F$ of $T$ through the following algorithm.

\vskip 0.2cm

\textbf{TREE-GROUP-COLORING algorithm.}

\textbf{Input:} A tree $T$ of $q$ edges, and $\{H_{i}\}^q_1\subseteq  F_{n}(H,h)$.

\textbf{Output:} An $\{H_{i}\}^q_1$-sequence group-coloring (or group-labelling) of $T$.

\begin{verse}
\emph{Step 1.} Select an initial vertex $u_1\in V(T)$, its neighbor set $N(u_1)=\{v_{1,1},v_{1,2},\dots ,v_{1,d_1}\}$, where $d_1$ is the degree of the vertex $u_1$; next, select $H_1$ as the zero, and label $u_1$ with $F(u_1)=H_1$ and $F(u_1v_{1,j})=H_{1,j}$ with $j\in [1,d_1]$. From
$$h_1(w)+h_z(w)-h_1(w)=h_{i_j~(\bmod ~n)}(w),$$
where $h_z$ is the labelling of $H_{1,j}$, immediately, we get solutions $z={i_j}$, that is $F(v_{1,j})=H_{1,j}$ with $j\in [1,d_1]$. Let $V_1\leftarrow V(T)\setminus \{u_1,v_{1,1},v_{1,2},\dots ,v_{1,d_1}\}$, $L_1\leftarrow \{F(u_1v_{1,j})\}^{d_1}_1$.

\emph{Step 2.} If $V_{k-1}=\emptyset$ (resp. $L_{k-1}=\{H_{i}\}^q_1$), go to Step 4.

\emph{Step 3.}  If $V_{k-1}\neq \emptyset$ (resp. $L_{k-1}\neq \{H_{i}\}^q_1$), select $u_k\in V_{k-1}$ such that $N(u_k)=\{v_{k,1},v_{k,2},\dots ,v_{k,d_k}\}$ contains the unique vertex $v_{k,1}$ being labelled with $F(v_{k,1})=H_{\alpha}$. Label $F(u_kv_{k,j})=H_{k,j}\in \{H_{i}\}^q_1\setminus L_{k-1}$ with $j\in [1,d_k]$. Assume $F(u_k)=H_{k}$, solve
$$h_k(w)+h_{\alpha}(w)-h_1(w)=h_{k,1~(\bmod ~n)}(w),$$
then $k+\alpha-1=(k,1)~(\bmod ~n)$, thus, $k=1-\alpha+(k,1)~(\bmod ~n)$. Next,  solve
$$h_k(w)+h_z(w)-h_1(w)=h_{k£¬j~(\bmod ~n)}(w),$$
where $h_z$ is the labelling of $F(v_{k,j})$ with $j\geq 2$. Then $k+z-1=(k,j)~(\bmod ~n)$, so $z=1-k+(k,j)~(\bmod ~n)$, also, $F(v_{k,j})=H_{1-k+(k,j)~(\bmod ~n)}$ with $j\in [2,d_k]$. Let $V_{k}\leftarrow V(T)\setminus \big (V_{k-1}\cup \{u_k,v_{k,2},\dots ,v_{k,d_k}\}\big )$, and $L_k\leftarrow L_{k-1}\cup \{F(u_kv_{k,j}):j\in [1,d_k]\}$, go to Step 2.

\emph{Step 4.} Return an $\{H_{i}\}^q_1$-sequence group-coloring $F$ of $T$.
\end{verse}

\vskip 0.2cm

As a consequence, the TREE-GROUP-COLORING algorithm is polynomial and efficient, and it can quickly set Topsnut-gpws to a tree-like network. In Fig.\ref{fig:encrypt-network}, we can see ``$H_1-H_7-H_6$'', called a block joined by two edges having labels 6 and 8. In real operation of encrypting a network, we can use two or more edges to join $H_1$, $H_7$ and $H_6$ together as desired as possible.

Our encrypting a network is in the way: We select a spanning tree $T$ from a network $N(t)$ at time step $t$, and encrypt $T$ by an every-zero graphic group $F_n(H,f)$ to obtain an encrypted tree-like network $N_{et}(T,F_n(H,f))$. For example, we select $T$ to be a caterpillar, or a spider, or a lobster, and so on. And furthermore we label $T$ by a determined labelling $f:V(T)\rightarrow F_n(H,f)$, such that $f(i)=H_i$, $f(j)=H_j$ and $f(ij)=H_{ij}$ obtained from $f(i)$ and $f(j)$, correspondingly, we get a vv-type/vev-type TB-paw
$$D(H_i)\uplus D(a_ib_{ij})\uplus D(H_{ij})\uplus D(a_{ij}b_j)\uplus D(H_j),$$
where $a_i$ is a vertex of $H_i$, $b_{ij}$ is a vertex of $H_{ij}$, $a_{ij}$ is a vertex of $H_{ij}$, and $b_j$ is a vertex of $H_j$ (see Fig.\ref{fig:graph-labelling-join-method}).

\begin{figure}[h]
\centering
\includegraphics[height=1.2cm]{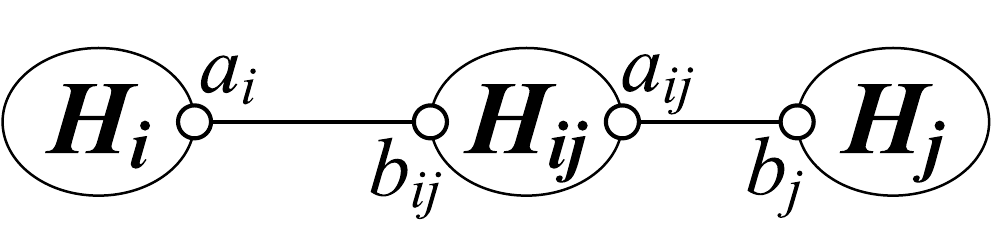}
\caption{\label{fig:graph-labelling-join-method}{\small  A scheme of joining $H_i,H_{ij}$ and $H_j$.}}
\end{figure}

Spanning trees of dynamic networks have been investigated for a long time (\cite{Ma-Bing-Yao-Physica-A-2017, Ma-Bing-Yao-Computer-2018, Ma-Bing-Yao-Physica-A-2018, Ma-Wang-Wang-Yao-Theoretical-Computer-Science-2018, Ma-Su-Hao-Yao-Physica-A-2018}), those spanning trees admitting power-law and having scale-free feature are useful for encrypting networks. The nodes having larger degrees in a scale-free network control nodes over $80$ per center (\cite{Barabasi-Bonabeau2003}), so they can be considered to form a center of public keys in encrypting dynamic networks, see Fig.\ref{fig:scale-free-tree}(b) and Fig.\ref{fig:scale-free-tree-1}(b)-(d). However, it is a big challenge to enumerate the number of spanning trees of a dynamic network $N(t)$ at time step $t$, and very difficult to figure out these non-isomorphic spanning trees, even for particular spanning trees, such as spanning trees to be: caterpillars, lobsters, spiders, trees having maximum leaves, trees having the shortest diameters, and so on.

We provide three algorithms for finding particular spanning trees in Appendices A, B and C.

In the article \cite{Douglas-Robert-J1992}, the authors have shown that a minimal connected dominating set $S$ and a spanning tree $T^{\max}$ having maximal leaves in a connected graph $G$ hold $|G|=|S|+|\mathcal {L}(T^{\max})|$. However, finding a spanning tree $T^{\max}$ having maximal leaves is a NP-problem (\cite{Garey-Johnson1979}). The authors in
\cite{Fernau-Kneis-Kratsch-Langer-Liedloff-Raible2011} applied a technique, called \emph{measure-and-conquer technique}, to distribute an exact algorithm of complex $O(1.8966^n)$ for finding a spanning tree $T^{\max}$ having maximal leaves in a network with $n$ vertices.

\begin{figure}[h]
\centering
\includegraphics[height=2.6cm]{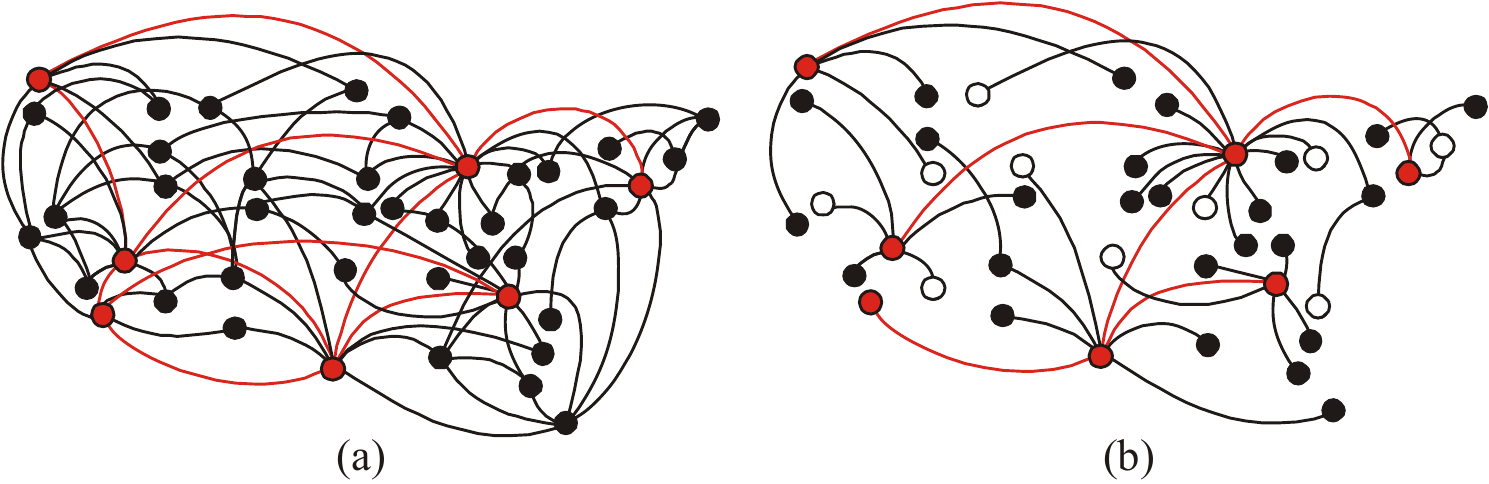}
\caption{\label{fig:scale-free-tree}{\small (a) A scale-free network $N$ \cite{A-L-Barabasi-R-Albert1999}; (b) a spanning (scale-free) tree of $N$.}}
\end{figure}

Three spanning trees pictured in Fig.\ref{fig:scale-free-tree-1}(b)-(d) are lobsters, so they admit odd-graceful labellings and odd-elegant labellings (\cite{Zhou-Yao-Chen-Tao2012, Zhou-Yao-Chen-2013}). Thereby, we have three Topsnut-gpws $T_1,T_2,T_3$ made by three spanning trees shown in Fig.\ref{fig:scale-free-tree-1}(b)-(d), and these three Topsnut-gpws $T_1,T_2,T_3$ can distribute us complex vv-type/vev-type TB-paws by the previous methods. Next, we label $T_i$ with $i\in [1,3]$ by an every-zero graphic group $F_n(H,f)$ with large scale $n$, and we get $N_{et}(T_i,F_n(H,f))$ with $i\in [1,3]$. It is not difficult to see that each $N_{et}(T_i,F_n(H,f))$ with $i\in [1,3]$ can degenerate vv-type/vev-type TB-paws in more complex.

\begin{figure}[h]
\centering
\includegraphics[height=8.4cm]{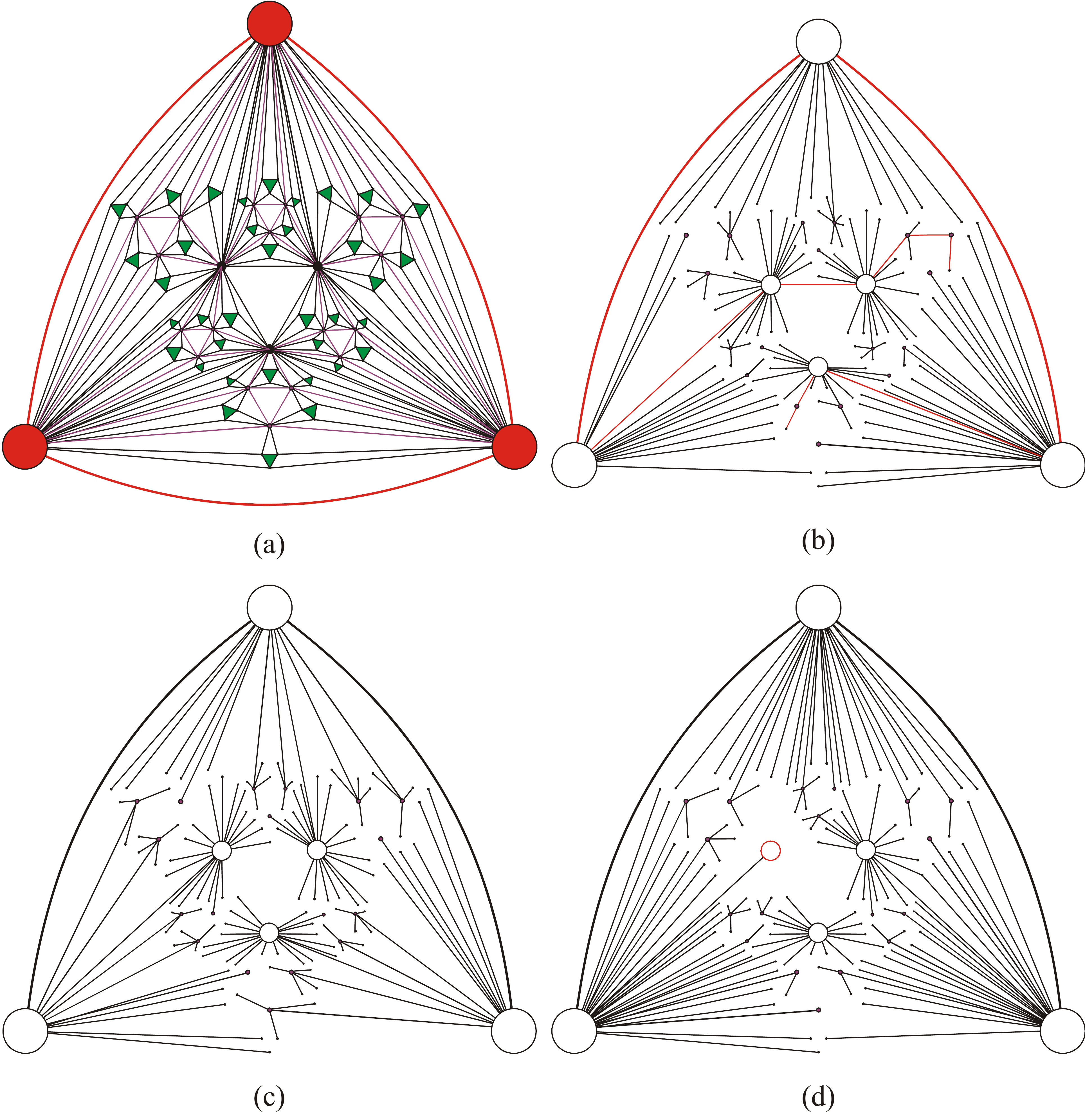}
\caption{\label{fig:scale-free-tree-1}{\small (a) A scale-free network $S$ of 132 vertices, also a Sierpinski model ( \cite{Zhang-Zhou-Fang-Guan-Zhang-2007}); (b)-(d) three spanning (scale-free) trees of $S$ having maximal leaves, they have different diameters.}}
\end{figure}

For particular sequence $\{H_i\}^{q}_1$ and particular graphs, we can determine such particular graphs admitting  $\{H_i\}^{q}_1$-sequence group-labellings.

\begin{thm}\label{thm:K(m,n)-group-labelling}
For any sequence $S=\{H_{a+(i-1)b}\}^{q}_1$, where $H_{a+(i-1)b}$ belongs to an every-zero graphic group $F_n(H,f)$, each complete bipartite graph $K_{m,n}$ with $mn=q$ admits an $S$-sequence group-labelling.
\end{thm}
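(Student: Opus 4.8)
The plan is to exploit that an every-zero graphic group is, at the level of indices, nothing but the cyclic group $\mathbb{Z}_n$. Writing the chosen zero as $H_k$, the defining operation (\ref{eqa:graphic-group-definition}) reads $H_s\oplus H_t=H_{s+t-k\,(\bmod\,n)}$, so the index map $H_i\mapsto i-k\,(\bmod\,n)$ is a group isomorphism from $(F_n(H,f),\oplus)$ onto $(\mathbb{Z}_n,+)$. Under this identification, producing an $S$-sequence group-labelling of $K_{m,n}$ amounts to assigning residues to the vertices so that the induced edge-sums, taken in $\mathbb{Z}_n$, reproduce the arithmetic progression of indices $\{a+(i-1)b:i\in[1,q]\}$ that defines $S$. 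I would carry the whole argument out in $\mathbb{Z}_n$ and translate back at the end, recalling that the two already-treated cases $S=\{H_1,H_2,\dots,H_q\}$ (take $a=b=1$) and $S=\{H_1,H_3,\dots,H_{2q-1}\}$ (take $a=1,b=2$) are special instances.

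First I would fix the bipartition $(X,Y)$ of $K_{m,n}$ with $X=\{x_1,\dots,x_m\}$ and $Y=\{y_1,\dots,y_n\}$, so that $|E(K_{m,n})|=mn=q$. I then define the labelling by two arithmetic progressions of group elements: set $F(x_i)=H_{k+(i-1)b}$ for $i\in[1,m]$ and $F(y_j)=H_{a+(j-1)mb}$ for $j\in[1,n]$, all indices read modulo $n$. By construction $F(uv)=F(u)\oplus F(v)$ holds for every edge, and the edge $x_iy_j$ receives the label $H_{t}$ with $t\equiv a+\big[(i-1)+(j-1)m\big]b\,(\bmod\,n)$, the offset $k$ cancelling against the $-k$ in the group operation.

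The crux is a single bijection step. The map $(i,j)\mapsto (i-1)+(j-1)m$ is a bijection from $[1,m]\times[1,n]$ onto $\{0,1,\dots,mn-1\}=[0,q-1]$: this is just the mixed-radix (base-$m$) expansion, with $i-1$ the units digit and $j-1$ the higher digit. Consequently the collection of edge labels is exactly $\{H_{a+rb}:r\in[0,q-1]\}=S$, with each element of $S$ occurring once, so that $F(E(K_{m,n}))=S$ and $|F(E(K_{m,n}))|=q$. This already establishes that $F$ realizes the prescribed edge sequence, which is the substantive content of the theorem.

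The step I expect to be the main obstacle is the vertex condition that distinguishes a group-labelling from a group-coloring: for $F$ to be an $S$-sequence group-labelling one needs $F(x)\neq F(y)$ for all pairs of vertices, i.e. the $m$ indices $k+(i-1)b$ and the $n$ indices $a+(j-1)mb$ must be pairwise distinct modulo $n$. Within each part this reduces to requiring that the steps $b$ and $mb$ have large enough additive order in $\mathbb{Z}_n$ (automatic in the pure cases specialising this theorem), while across the two parts it is a disjointness condition on the two progressions, which I would arrange by using the freedom to choose the zero $H_k$ of the every-zero group together with the offset carried by $a$. When these coprimality and disjointness conditions fail, some vertex labels coincide and $F$ is instead an $S$-sequence group-coloring, exactly the dichotomy already recorded in Theorem \ref{them:trees-sequence-group-coloring}; in either case the construction produces one of the two, and the edge-sequence conclusion $F(E(K_{m,n}))=S$ is unaffected.
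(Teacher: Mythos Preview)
Your approach is essentially the paper's: both label the two vertex classes by arithmetic progressions in the index group (the paper takes zero $H_a$ and steps $nb,\,b$; you take zero $H_k$ and steps $b,\,mb$) and then invoke the mixed-radix bijection $(k,j)\mapsto kn+(j-1)$, respectively $(i,j)\mapsto (i-1)+(j-1)m$, onto $[0,q-1]$ to conclude $F(E(K_{m,n}))=S$. Your attention to the vertex-distinctness condition is in fact more careful than the paper's, which simply asserts the outcome is a group-labelling without checking --- and indeed its explicit assignment has $f(u_1)=f(v_1)=H_a$, so as written it only produces a group-coloring.
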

\begin{proof} We write the vertex set $V(K_{m,n})=\{u_i,v_j:~i\in [1,m],j\in[1,n]\}$ and edge set $E(K_{m,n})=\{u_iv_j:~i\in [1,m],j\in[1,n]\}$ of $K_{m,n}$. Without loss of generality, $m\leq n$, we select $H_a$ as the \emph{zero}, and define a labelling $h$ of $K_{m,n}$ as:

(1) $f(u_1)=H_a$, $f(v_j)=H_{a+(j-1)b}$, and $f(u_{1}v_j)=H_{a+(j-1)b}=f(u_1)\oplus f(v_j)$ with $j\in[1,n]$.

(2) $f(u_{k+1})=H_{a+knb}$ with $k\in[1,m-1]$, $H_{a+knb}\oplus H_{a+(j-1)b}=H_{a+(kn+j-1)b}$ with $j\in[1,n]$: $H_{a+knb}\oplus H_{a}=H_{a+knb}$, $H_{a+knb}\oplus H_{a+b}=H_{a+(kn+1)b}$, $H_{a+knb}\oplus H_{a+2b}=H_{a+(kn+2)b}$, $\dots$, $H_{a+knb}\oplus H_{a+(n-1)b}=H_{a+[(k+1)n-1]b}$.

(3) $f(u_{k+1}v_j)=H_{a+(kn+j-1)b}=H_{a+knb}\oplus H_{a+(j-1)b}=f(u_{k+1})\oplus f(v_j)$ with $k\in[1,m-1]$ and $j\in[1,n]$.

It is not difficult to verify that $f$ is just an $S$-sequence group-labelling, as desired.
\end{proof}

\begin{thm}\label{thm:ring-like-group-labelling}
Suppose $S^*=\{H_{i_j}\}^{m}_1$ is a subsequence of a sequence $S=\{H_{i}\}^{q}_1$ from an every-zero graphic group $F_n(H,f)$, if the unique cycle $C_m$ of a ring-like network $N_{ring}$ of $q$ edges admits an $S^*$-sequence group-coloring (or  group-labelling), then $N_{ring}$ admits an $S$-sequence group-labelling.
\end{thm}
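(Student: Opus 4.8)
The plan is to build the desired $S$-sequence group-labelling of $N_{ring}$ by keeping the given labelling on the cycle $C_m$ and extending it, tree by tree, over the pendant trees $T_1,\dots,T_m$.

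First I would record the structure. By the definition of a ring-like network, $N_{ring}$ consists of the cycle $C_m=u_1u_2\cdots u_mu_1$ together with trees $T_1,\dots,T_m$, where each cycle vertex $u_i$ is coincident with a vertex $v_i$ of $T_i$. Writing $q_i=|E(T_i)|$, the cycle and the trees share no edges, so $q=m+\sum_{i=1}^m q_i$; hence the $q-m=\sum_{i=1}^m q_i$ elements of $S\setminus S^*$ can be partitioned into blocks $B_1,\dots,B_m$ with $|B_i|=q_i$. Let $F_0$ be the given $S^*$-sequence group-coloring (or group-labelling) of $C_m$ under a zero $H_k$; it fixes $F(u_i)$ for every cycle vertex and places the $m$ distinct labels of $S^*$ on the cycle edges. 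I keep this zero $H_k$ throughout.

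For each $i$ I would view $T_i$ as rooted at $u_i=v_i$, whose group-label $F(u_i)$ is already fixed by $F_0$, and extend $F$ over $T_i$ by exactly the inductive mechanism behind Theorem \ref{them:trees-sequence-group-coloring}: process the edges of $T_i$ in a breadth-first order away from the root, assign the $q_i$ edges the $q_i$ distinct labels of $B_i$ in any bijection, and at each step let the new child vertex label be forced by the parent label and the prescribed edge label. Concretely, if the parent carries $H_a$ and the edge is to carry $H_c$, then the child must carry $H_b$ with $a+b-k\equiv c \pmod n$, i.e. $b\equiv c-a+k \pmod n$, which is uniquely solvable in the every-zero graphic group. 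Because the only vertex of $T_i$ shared with $C_m$ is the already-labelled root, these extensions are mutually compatible and agree with $F_0$ on the overlap, so together they define a single map $F\colon V(N_{ring})\to F_n(H,f)$. I would then verify the two conditions of Definition \ref{defn:general-group-labelling}: every edge $uv$ satisfies $F(uv)=F(u)\oplus F(v)$ by construction (on the cycle through $F_0$, on each tree through the forcing step), and $F(E(N_{ring}))=S^*\cup\big(\bigcup_{i=1}^m B_i\big)=S=\{H_i\}_1^q$, with the labels pairwise distinct since $S^*$ and the $B_i$ were chosen disjoint.

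The hard part will be upgrading this from a group-coloring to a genuine group-labelling, i.e. ensuring that no two vertices receive the same element of $F_n(H,f)$. The edge condition is automatic, but the forced child labels are fully determined once the edge labels are fixed, so vertex collisions are not ruled out a priori; indeed, if $C_m$ itself only admits a coloring with repeated vertex labels, the overlap already repeats. The cleanest route is to exploit the remaining freedom: which label of $B_i$ sits on which edge of $T_i$ is free, and I would argue that whenever $n$ is large enough relative to $|V(N_{ring})|$, these bijections can be chosen greedily in the breadth-first order so that each newly forced child label differs from all previously used labels, giving distinct vertex labels and hence an $S$-sequence group-labelling. When no such choice exists, the same construction still delivers an $S$-sequence group-coloring, which is the weaker alternative allowed by Theorem \ref{them:trees-sequence-group-coloring}.
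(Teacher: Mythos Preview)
Your approach is essentially the same as the paper's: partition $S\setminus S^*$ into blocks $B_i$ of size $|E(T_i)|$, keep the given labelling on $C_m$, and then extend over each pendant tree $T_i$ via the TREE-GROUP-COLORING mechanism (the paper simply cites that algorithm rather than spelling out the forcing step $b\equiv c-a+k\pmod n$). Your treatment is in fact more careful than the paper's, which concludes only that one obtains ``a desired $S$-sequence group-coloring (or group-labelling)'' without addressing the vertex-distinctness issue you raise; so your final paragraph is a genuine observation, not a gap relative to the paper.
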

\begin{proof} Let $h:V(C_m)\rightarrow S^*=\{H_{i_j}\}^{m}_1$ be a an $S^*$-sequence group-labelling of $C_m$. We divide the remainder elements of the sequence $S\setminus S^*$ into $m$ groups $S_i$ with $|S_i|=|E(T_i)|$ and $i\in [1,m]$. Next, we use the TREE-GROUP-COLORING algorithm to label each tree $T_i$ after distributing the elements of $S_i$ to the edges of $T_i$ by one-vs-one based on the every-zero graphic group $F_n(H,f)$, finally, we get a desired $S$-sequence group-coloring (or  group-labelling) of the ring-like network $N_{ring}$.
\end{proof}

By the method in the proof of Theorem \ref{thm:ring-like-group-labelling}, we can prove: A \emph{generalized ring-like network} $N^*_{gring}$ has a connected graph $G$ such that the deletion of all vertices of $G$ from $N^*_{gring}$ results in a forest (a forest $H$ is disconnected graph, and each component of $H$ is just a tree). If $G$ admits an $S^*$-sequence group-labelling, then $N_{gring}$ admits an $S$-sequence group-labelling, where $S^*=\{H_{i_j}\}^{m}_1\subset S=\{H_{i}\}^{q}_1 \subseteq F_n(H,f)$.

\begin{figure}[h]
\centering
\includegraphics[height=3.4cm]{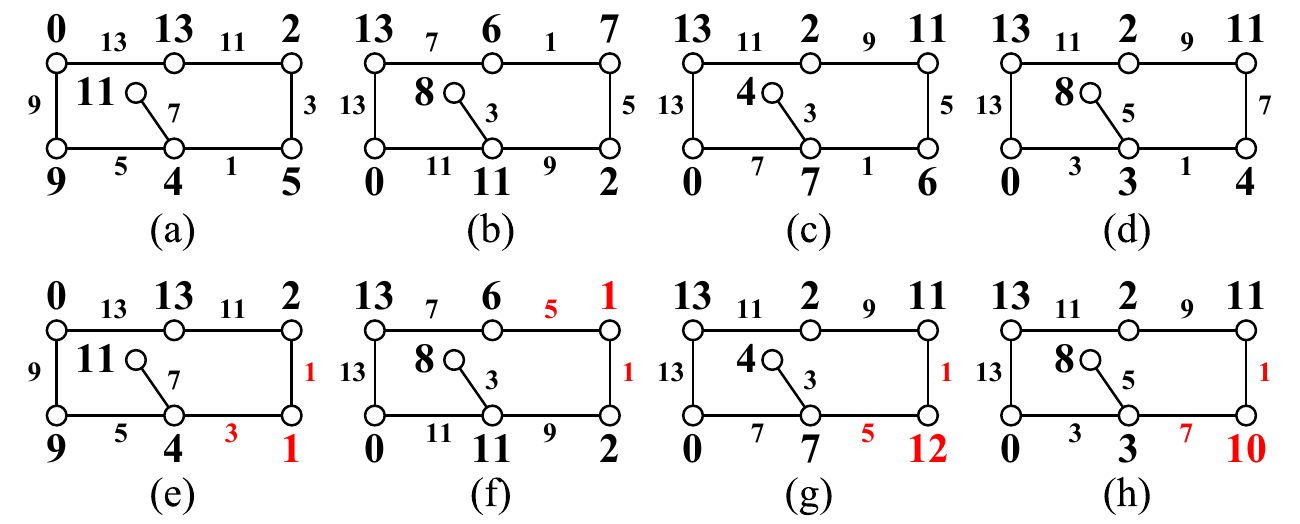}
\caption{\label{fig:H-other-odd-graceful}{\small Other eight odd-graceful labellings of $H$ displayed in Fig. \ref{fig:group-label-tree}.}}
\end{figure}

In Fig.\ref{fig:H-other-odd-graceful}(g), $H$ admits an odd-graceful labelling $f_g:V(H)\rightarrow [0,13]$ such that $f_g(E(H))=[1,13]^o$ and $\{|a-b|:~a,b\in f_g(V(H))\}=[1,13]$, we call $f_g$ a \emph{perfect odd-graceful labelling}. Similarly, we can define a \emph{perfect odd-elegant labelling}, and so on. Thereby, we propose the following new labellings:

\begin{defn}\label{defn:perfect-odd-graceful-labelling}
$^*$ Let $f$ be an odd-graceful labelling of a $(p,q)$-graph $G$, such that $f(V(G))\subset [0,2q-1]^o$ and $f(E(G))=[1,2q-1]^o$. If $\{|a-b|:~a,b\in f(V(G))\}=[1,p]$, then $f$ is called a \emph{perfect odd-graceful labelling} of $G$.\qqed
\end{defn}

\begin{defn}\label{defn:perfect-varepsilon-labelling}
$^*$ Suppose that a $(p,q)$-graph $G$ admits an $\varepsilon$-labelling $h: V(G)\rightarrow S\subseteq [0,p+q]$. If $\{|a-b|:~a,b\in f(V(G))\}=[1,p]$, we call $f$ a \emph{perfect $\varepsilon$-labelling} of $G$.\qqed
\end{defn}

\subsection{Complexity of encrypted networks by every-zero graphic groups}

The complexity of Theorem \ref{them:trees-sequence-group-coloring} is $n^2\cdot q!$, since there are: each edge labelling $f_i:E(G)\rightarrow \{H_{i}\}^q_1$, each \emph{zero} $H_j$, each initial vertex $H_k$.

Theorem \ref{them:trees-sequence-group-coloring} and the TREE-GROUP-COLORING algorithm show that any tree-like network can be encrypted by every-zero graphic groups. We point that encrypted networks $N_{et}(G,F_{n}(H,h))$ have the following advantages for withstanding decryption:
\begin{asparaenum}[\textbf{Com}-1. ]
\item An encrypted network $N_{et}(G,F_{n}(H,h))$ have a large number of vertices.
\item Each element $H_k$ of $F_{n}(H,h)$ can be considered as the ``zero'', so there exist $M$ encrypted network $N_{et}(G,F_{n}(H,h))$ for a fixed every-zero graphic group $F_{n}(H,h)$, where $M=|F_{n}(H,h)|$.
\item There are many sequences $\{H_{i}\}^q_1$ of $F_{n}(H,h)$, and there are many permutations $\{H_{i_j}\}^q_1$ of $\{H_{i}\}^q_1$ to label the edges of $G$ by the TREE-GROUP-COLORING algorithm.
\item There are many labellings $h_1,h_2,\dots ,h_m$ of $H$ to form $F_{n}(H,h_i)$ with $i\in [1,m]$ such that $h, h_1,h_2,\dots ,h_m$ belong to the same class $C_{lass}$, see Fig.\ref{fig:dual-translation} and Fig.\ref{fig:H-other-odd-graceful}.
\item $H$ may admits many labellings that do not belong to $C_{lass}$, such as graceful labelling, odd-elegant labelling, edge-magic total labelling, and so on.
\item There are many graphs $H^*$ like $H$ that can form $F_{n}(H^*,h_j)$.
\item In an encrypted network $N_{et}(G,F_{n}(H,h))$, there many ways to join $F(u)$ with $F(uv)$ by edges and to join $F(uv)$ with $F(v)$ by edges, so we have many encrypted network $N_{et}(G,F_{n}(H,h))$. Such joining method can interrupt an attack that has decrypted the Topsnut-gpws $H_i$ on some vertices.
\item There are many ways to generate vv-type/vev-type TB-paws from an encrypted network $N_{et}(G,F_{n}(H,h))$.
\item If $G=T(t)$ is a spanning tree of a dynamic network $N(t)$ at time step $t$, then $N_{et}(T(t),F_{n}(H,h))$ can be considered as a network password of $N(t)$ at time step $t$. Since there are $a(t)$ spanning trees of $N(t)$ at a fixed time step $t$, so we have $a(t)$ encrypted networks of the form $N_{et}(T(t),F_{n}(H,h))$.
\end{asparaenum}

The facts listed above indicate that it is not easy to attack encrypted networks $N_{et}(G,F_{n}(H,h))$, in other words, encrypted networks $N_{et}(G,F_{n}(H,h))$ are \emph{provable security}.

\subsection{Encrypting networks by pan-matrices}

Motivated from Topsnut-matrices $A_{vev}(G)$, we can define so-called \emph{pan-matrices} for encrypting networks.

\begin{defn}\label{defn:graphic-group-matrix}
$^*$ For an every-zero graphic group $F_n(H,h)=\{H_i,H'_i,H''_i\}^n_1$, a \emph{graphic group-matrix} $P_{vev}(G)$ of a $(p,q)$-graph $G$ is defined as $P_{vev}(G)=(X_P,W_P,Y_P)^{-1}$ with
\begin{equation}\label{eqa:two-vectors}
{
\begin{split}
&X_P=(H_1 ~ H_2 ~ \cdots ~H_q), W_P=(H'_1 ~ H'_2 ~ \cdots ~H'_q)\\
&Y_P=(H''_1 ~ H''_2 ~\cdots ~ H''_q),
\end{split}}
\end{equation}
where each edge $u_iv_i$ of $G$ with $i\in [1,q]$ is labelled by $H'_i$, and its own two ends $u_i$ and $v_i$ are labelled by $H_i$ and $H''_i$ respectively; and $G$ has another \emph{graphic group-matrix} $P_{vv}(G)$ defined as $P_{vv}(G)=(X_P,Y_P)^{-1}$, where $X_P,Y_P$ are called \emph{pan-v-vectors}, $W_P$ is called \emph{pan-e-vector}.\qqed
\end{defn}

See Fig.\ref{fig:group-label-tree} and Fig.\ref{fig:group-label-tree-1} for understanding Definition \ref{defn:graphic-group-matrix}. Notice that the  every-zero graphic group $F_{14}(H,f)$ indicated in Fig.\ref{fig:odd-graceful-group} corresponds an \emph{every-zero matrix group} $A_n(H,f)$, in which every element $A_{vev}(H_i)$ is the Topsnut-matrix of Topsnut-gpw $H_i\in F_{14}(H,f)$.

For an every-zero matrix group $$A_n(G,f)=\{A_{vev}(G_i),A_{vev}(G'_i),A_{vev}(G''_i)\}^n_1$$ we define a \emph{matrix group-matrix} $M_{vev}(G)$ of a $(p,q)$-graph $G$ as: $M_{vev}(G)=(X_M,W_M,Y_M)^{-1}$ with

$X_M=(A_{vev}(G_1) ~ A_{vev}(G_2) ~ \cdots ~A_{vev}(G_q))$,

$W_M=(A_{vev}(G'_1) ~ A_{vev}(G'_2) ~ \cdots ~A_{vev}(G'_q))$,

$Y_M=(A_{vev}(G''_1) ~ A_{vev}(G''_2) ~ \cdots ~A_{vev}(G''_q))$, \\
where each edge label $A_{vev}(G'_i)$ has its own two end labels $A_{vev}(G''_i)$ and $A_{vev}(G_i)$ with $i\in [1,q]$; and $G$ has another \emph{matrix group-matrix} $P_{vv}(G)$ defined as $M_{vv}(G)=(X_M,Y_M)^{-1}$, where $X_M,Y_M$ are called \emph{pan-v-vectors}, $W_M$ is called \emph{pan-e-vector}.

\subsection{New graphic groups made by encrypting networks}

Motivated from Fig.\ref{fig:encrypt-network}, we show the following result:

\begin{thm}\label{thm:group-labelling-by-group-labelling}
Suppose that $N=N_{et}(G,F_{n}(H,f))$ is an encrypted network labelled by a group-labelling $g_{roup}$ based on an every-zero graphic group $F_{n}(H,f)$ under a zero $H_{k}\in F_{n}(H,f)$. Then we have an every-zero graphic group $F_{n}(N,g_{roup})$.
\end{thm}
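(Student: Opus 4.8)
The plan is to transport the cyclic shift structure of $F_n(H,f)$ up to the encrypted network $N$. First I would fix notation: write $g_{roup}=F:V(G)\to F_n(H,f)$ with $F(u)=H_{\sigma(u)}$, so that each edge $uv\in E(G)$ carries $F(uv)=F(u)\oplus F(v)=H_{\sigma(u)+\sigma(v)-k}$ under the zero $H_k$, and recall (Definition \ref{defn:general-group-labelling}) that $N=N_{et}(G,F_{n}(H,f))$ is assembled by seating the copy $F(u)$ at each vertex $u$, the copy $F(uv)$ at each edge $uv$, and adding the joining edges exactly as in Fig.\ref{fig:encrypt-network}. Since every $H_i\cong H$ through a fixed identification, $N$ carries an induced total $\varepsilon$-labelling obtained by overlaying the labellings $f_{\sigma(\cdot)}$ on the seated copies and labelling each joining edge by the rule of the underlying $\varepsilon$-labelling applied to its two endpoints.

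Second, for each $t\in[1,n]$ I would define a shifted encrypted network $N_t$ as the one carrying the group-labelling $F_t$ with $F_t(u)=H_{\sigma(u)+t-k\,(\mathrm{mod}\,n)}$ and zero $H_t$; a one-line check gives $F_t(uv)=F_t(u)\oplus F_t(v)=H_{\sigma(u)+\sigma(v)+t-2k}$, so that \emph{every} super-label of $N_t$ (both at the vertices and the edges of $G$) is the corresponding super-label of $N$ advanced by the same amount $t-k$, and in particular $N_k=N$. Because shifting only alters the labellings placed on the fixed copies of $H$ and never touches the joining pattern, each $N_t$ is isomorphic to $N$ as a graph, and the total labelling $g_t$ it inherits is again an $\varepsilon$-labelling, all the $H_i$ admitting $\varepsilon$-labellings precisely because $F_n(H,f)$ is an every-zero graphic group.

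Third, I would verify the defining relation \eqref{eqa:graphic-group-definition}. For a vertex $z$ lying in the copy of $H$ seated at a super-vertex $w$, one has $g_t(z)=f_{\sigma(w)+t-k}(z^{*})$ for the corresponding vertex $z^{*}$ of the abstract $H$; applying the (every-zero) group law of $F_n(H,f)$ to the indices $\sigma(w)+s-k$, $\sigma(w)+t-k$ and the zero $\sigma(w)+r-k$ yields
\[
g_s(z)+g_t(z)-g_r(z)=f_{\sigma(w)+s+t-r-k}(z^{*})=g_{s+t-r\,(\mathrm{mod}\,n)}(z),
\]
and the identical equality holds on each joining edge, whose label is a function of the (consistently shifting) endpoint labels. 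Hence setting $N_s\oplus N_t=N_{s+t-r\,(\mathrm{mod}\,n)}$ under any chosen zero $N_r$ is forced, and the family $\{N_t\}^{n}_{1}$ inherits closure, commutativity, associativity, inverses, and the property that any $N_r$ may serve as the zero, directly from $\mathbb{Z}_n$. Thus $F_n(N,g_{roup})=\{N_t\}^{n}_{1}$ is an every-zero graphic group in the sense of Definition \ref{defn:graphic-group-definition}.

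The main obstacle I anticipate is not the arithmetic of indices, which is routine, but making the phrase ``the $\varepsilon$-labelling of $N$ induced by $g_{roup}$'' fully precise, in particular pinning down the labels of the joining edges and confirming that they shift in lockstep with the seated copies of $H$, so that $g_t$ is genuinely an $\varepsilon$-labelling of the \emph{whole} of $N_t$ rather than merely of its $H$-blocks. Once the convention for the joining edges is fixed (for example, induced from the endpoint labels exactly as displayed in Fig.\ref{fig:encrypt-network}), every required identity collapses to the single modular relation above, and the group axioms follow.
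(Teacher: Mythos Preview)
Your proposal is correct and follows essentially the same route as the paper: build the family $\{N_t\}_{t=1}^{n}$ by uniformly shifting every index $H_i\mapsto H_{i+\text{const}\,(\mathrm{mod}\,n)}$, then observe that the operation $N_s\oplus N_t=N_{s+t-r}$ under any zero $N_r$ is inherited verbatim from the cyclic arithmetic of $F_n(H,f)$. The paper indexes so that $N_1=N$ and shifts by $k\in[1,n-1]$, whereas you index so that $N_k=N$ and shift by $t-k$; this is only bookkeeping. The one genuine difference is granularity: the paper verifies the additive relation at the level of the super-labels $g^{(i)}_{roup}(w)\in F_n(H,f)$ for $w\in V(G)\cup E(G)$ and never descends to individual vertices of $N$ or to the joining edges, while you push the verification down to $g_t(z)$ for $z\in V(N)$ and flag the joining-edge convention as the only nontrivial point. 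Your finer check is not required for the argument to go through (the paper's coarser version already suffices), but it does make the statement ``$F_n(N,g_{roup})$ is an every-zero graphic group in the sense of Definition~\ref{defn:graphic-group-definition}'' more literally justified.
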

\begin{proof}By the hypothesis of the theorem, we have a group-labelling $g_{roup}:V(G)\rightarrow F_{n}(H,f)$, $g_{roup}:E(G)\rightarrow \{H_i\}^q_1$ with $\{H_i\}^q_1\subset F_{n}(H,f)$ for an every-zero graphic group $F_{n}(H,f)$, such that $g_{roup}(u)\neq g_{roup}(v)$ for each edge $uv\in E(G)$, $g_{roup}(uv)=g_{roup}(u)\oplus g_{roup}(v)\in g_{roup}(E(G))=\{H_i\}^q_1$ under the zero $H_{k}\in F_{n}(H,f)$. The resulting encrypted network is denoted as $N$. Let $N=N_1$ and $g_{roup}=g^{(1)}_{roup}$.

We construct the desired group $F_{n}(N,g_{roup})$ by adding $k$ to the lower index $i$ of the label $H_i=g_{roup}(u)$ of each vertex $u$ of $G$, and adding $k$ to the lower index $(i,j)$ of the label $H_{(i,j)}=g_{roup}(uv)$ of each edge $uv$ of $G$ with $H_j=g_{roup}(v)$ under modular $n$. So, we get new encrypted networks $N_{k+1}=\{H_{i+k}\,(\textrm{mod}\,n): H_{i}=g_{roup}(u),u\in V(G)\}\cup \{H_{(i,j)+k}\,(\textrm{mod}\,n): H_{(i,j)}=g_{roup}(uv), uv\in E(G)\}$, and write the labelling of $N_{k+1}$ by $g^{(k+1)}_{roup}$, $k\in [1,n-1]$. Thereby, we get a set $F_{n}(N,g_{roup})=\{N_{k+1}:~k\in [0,n-1]\}$

Next, we select arbitrarily an element $N_{k}\in F_{n}(N,g_{roup})$ as \emph{zero}, and define an operation $\oplus$ for $F_{n}(N,g_{roup})$ as: $N_{i}\oplus N_{j}=N_{i+j-k\,(\textrm{mod}\,n)}$ means $g^{(i)}_{roup}(w)\oplus g^{(j)}_{roup}(w)=g^{(i+j-k)}_{roup}(w)\,(\textrm{mod}\,n)$ for $w\in V(G)\cup E(G)$. For $w \in V(G)$ as $g^{(1)}_{roup}(w)=H_{s}$, we have $g^{(i)}_{roup}(w)=H_{s+i}$, $g^{(j)}_{roup}(w)=H_{s+j}$, $g^{(k)}_{roup}(w)=H_{s+k}$, and $g^{(i+j-k)}_{roup}(w)=H_{s+i+j-k}$. Thereby, we have
$$H_{s+i}\oplus H_{s+j}=H_{s+i+j-(s+k)\,(\textrm{mod}\,n)}=H_{i+j-k\,(\textrm{mod}\,n)},$$
and have proven
$$g^{(i)}_{roup}(w)\oplus g^{(j)}_{roup}(w)=g^{(i+j-k)}_{roup}(w)\,(\textrm{mod}\,n), $$
for $w\in V(G)\cup E(G)$, and it is not hard to show the \emph{Zero}, the \emph{Inverse}, the \emph{Uniqueness and Closure}, the \emph{Commutative law} and the\emph{ Associative law} on $F_{n}(H,f)$, since $F_{n}(H,f)$ is an every-zero graphic group (\cite{Wang-Xu-Yao-Ars-2018, Yao-Mu-Sun-Zhang-Su-Ma-Wang2018, Yao-Sun-Zhao-Li-Yan-2017}).

The claim of the theorem is proven. \end{proof}

An example for understanding Theorem \ref{thm:group-labelling-by-group-labelling} is shown in Fig.\ref{fig:group-of-group}.

\begin{figure}[h]
\centering
\includegraphics[height=10cm]{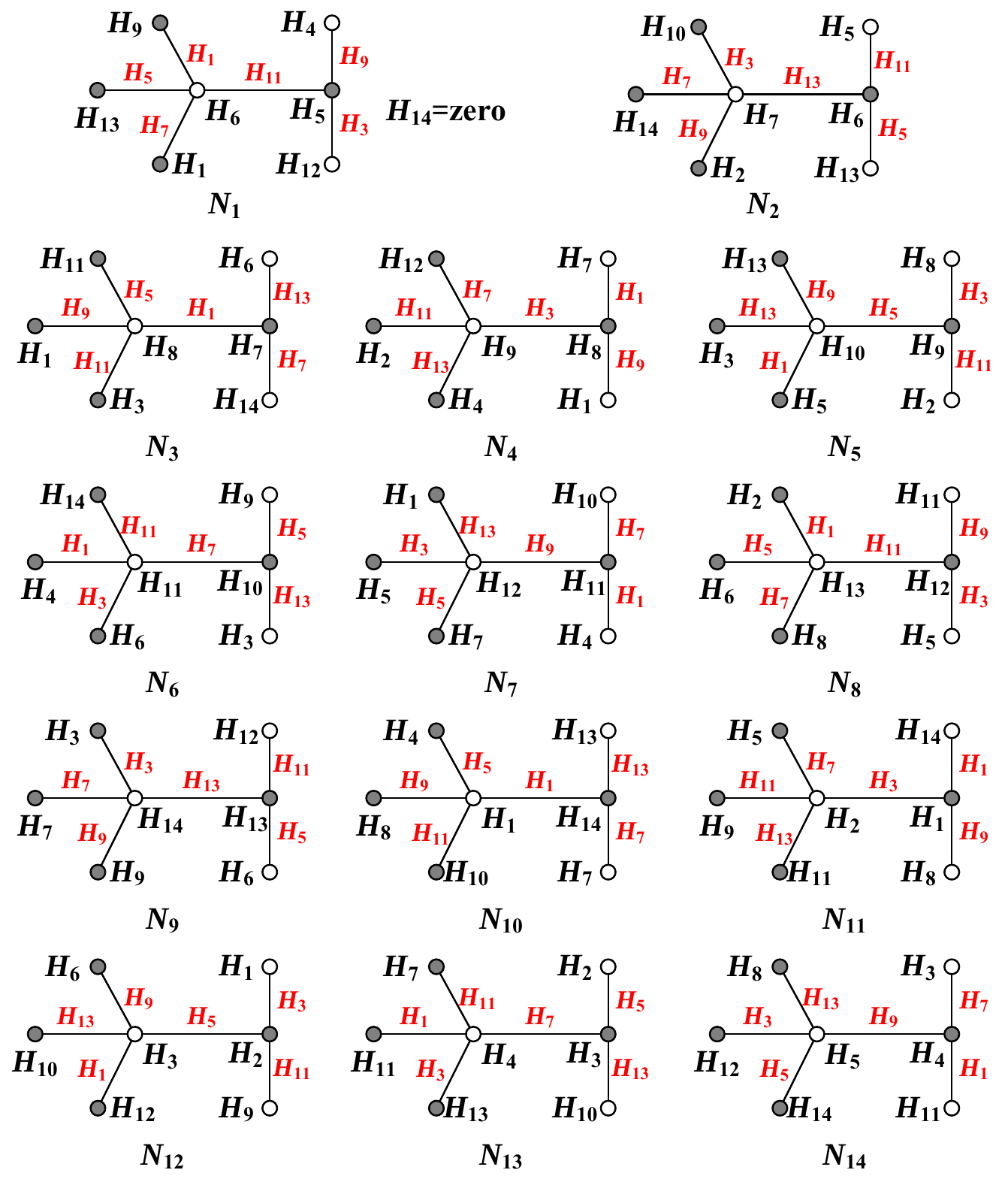}
\caption{\label{fig:group-of-group}{\small An every-zero graphic group $F_{14}(N,g_{roup})$ made by an encrypted network $N_1=N_{et}(T,F_{14}(H,f))$ pictured in Fig.\ref{fig:encrypt-network}.}}
\end{figure}

The authors in \cite{Zhang-Sun-Yao-ICMITE2017} and \cite{ZHANG-MU-SUN-YAO-IAEAC2018} propose other methods for producing every-zero graphic groups.

\section{Topsnut-matchings}

Here, for making vv-type/vev-type TB-paws, we will apply the \emph{Path-neighbor-method, Cycle-neighbor-method, Lobster-neighbor-method,Spider-neighbor-method and Euler-Hamilton-method} introduced in the previous sections.

\subsection{Examples for Topsnut-matchings}

We use a Topsnut-gpw $T$ (as a public key) presented in Fig.\ref{fig:1-example} to induce a vev-type TB-paw
$${
\begin{split}
D_{vev}(T)=&13323224125132331341451561623167\\
&223231622817918101911202220122120,
\end{split}}$$
by hands, and another Topsnut-gpw $H$ (as a private key) distributes us a vev-type TB-paw
$${
\begin{split}
D_{vev}(H)=&13141131207189161115121111721\\
&921417201223245256131910152381.
\end{split}}$$
Thus, we get a\emph{ digital authentication} $\odot \langle T,H\rangle $, which generates the authentication vev-type TB-paw $D_{vev}(\odot \langle T,H\rangle )$ as follows:
\begin{equation}\label{eqa:c3xxxxx}
D_{vev}(\odot \langle T,H\rangle )=D_{vev}(T)\uplus D_{vev}(H).
\end{equation}
or
\begin{equation}\label{eqa:c3xxxxx}
D'_{vev}(\odot \langle T,H\rangle )=D_{vev}(H)\uplus D_{vev}(T).
\end{equation}
Clearly, $D_{vev}(\odot \langle T,H\rangle )$ differs from $D'_{vev}(\odot \langle T,H\rangle )$.

Notice that a Topsnut-gpw authentication contains two parts: \emph{digital authentication}, \emph{topological structure authentication}. In this example, the topological structure authentication is the unlabelled graph $\odot \langle T,H\rangle $. We emphasize the topological structure authentication, since a Topsnut-gpw (as a public key) may match with two or more Topsnut-gpws (as private keys). See Fig.\ref{fig:2-more-matching}, a Topsnut-gpw $T$ matches with three Topsnut-gpws $G_1,G_2$ and $H$, but three matchings differ from each other in topological structures. The vev-type TB-paw $D_{vev}(T)$ matches with the following TB-paw
$${
\begin{split}
D_{vev}(G_1)=&132512422331332010211122122313\\
&31020319410415616717818946144,
\end{split}}$$
since the topological structure of $T$ is isomorphic to that of $G_1$. So, $D_{vev}(\odot \langle T,G_1\rangle )$ differs from $D_{vev}(\odot \langle T,H\rangle )$.

Unfortunately, for a given Topsnut-gpw $G$ admitting a labelling being the same as that admitted by $T$ shown in Fig.\ref{fig:2-more-matching}(a), we do ont have efficient algorithm for finding all matchings of $G$, and go on theoretical jobs on them.

\subsection{Why are Topsnut-matrices good for generating TB-paws}

\begin{defn}\label{defn:6C-labelling}
\cite{Yao-Sun-Zhang-Mu-Sun-Wang-Su-Zhang-Yang-Yang-2018arXiv} A total labelling $f:V(G)\cup E(G)\rightarrow [1,p+q]$ for a bipartite $(p,q)$-graph $G$ is a bijection holding:

(i) (e-magic) $f(uv)+|f(u)-f(v)|=k$;

(ii) (ee-difference) each edge $uv$ matches with another edge $xy$ holding $f(uv)=|f(x)-f(y)|$ (or $f(uv)=(p+q+1)-|f(x)-f(y)|$);

(iii) (ee-balanced) let $s(uv)=|f(u)-f(v)|-f(uv)$ for $uv\in E(G)$, then there exists a constant $k'$ such that each edge $uv$ matches with another edge $u'v'$ holding $s(uv)+s(u'v')=k'$ (or $(p+q+1)+s(uv)+s(u'v')=k'$) true;

(iv) (EV-ordered) $f_{\min}(V(G))>f_{\max}(E(G))$ (or $f_{\max}(V(G))<f_{\min}(E(G))$, or $f(V(G))\subseteq f(E(G))$, or $f(E(G))\subseteq f(V(G))$, or $f(V(G))$ is an odd-set and $f(E(G))$ is an even-set;

(v) (ve-matching) there exists a constant $k''$ such that each edge $uv$ matches with one vertex $w$ such that $f(uv)+f(w)=k''$, and each vertex $z$ matches with one edge $xy$ such that $f(z)+f(xy)=k''$, except the \emph{singularity} $f(x_0)=\lfloor \frac{p+q+1}{2}\rfloor $;

(vi) (set-ordered) $f_{\max}(X)<f_{\min}(Y)$ (or $f_{\min}(X)>f_{\max}(Y)$) for the bipartition $(X,Y)$ of $V(G)$.

We call $f$ a \emph{6C-labelling} of $G$.\qqed
\end{defn}

In Definition \ref{defn:6C-labelling}, it is natural, each edge $uv$ corresponds another edge $xy$ such that
\begin{equation}\label{eqa:c3xxxxx}
f(uv)+f(xy)=\min f(E(G))+\max f(E(G))
\end{equation} and each vertex $w$ corresponds another vertex $z$ such that
\begin{equation}\label{eqa:c3xxxxx}
f(w)+f(z)=\min f(V(G))+\max f(V(G)).
\end{equation}

In Fig.\ref{fig:6C-matching-matrix-1}, the Topsnut-matrix $A(T)$ matches with its dual Topsnut-matrix $A^{-1}(T)$, since the sum of each element of $A(T)$ and its corresponding element of $A^{-1}(T)$ is just 26. According to Definition \ref{defn:6C-labelling}, the Topsnut-matrix $A(T)=(X~W~Y)^{-1}$ holds the \emph{6C-restriction}: (i) $e_i+|x_i-y_i|=13$; (ii) $e_i=|x_j-y_j|$; (iii) $(|x_i-y_i|-e_i)+(|x_j-y_j|-e_j)=0$; (iv) $\min (X\cup Y)>\max W$; (v) $e_i+x_s=26$ or $e_i+y_t=26$; (vi) $\min X>\max Y$. However, the dual Topsnut-matrix $A^{-1}(T)=(X'~W'~Y')^{-1}$ holds $e'_i-|x'_i-y'_i|=13$ and $\min (X'\cup Y')< \max W'$ only.

\begin{figure}[h]
\centering
\includegraphics[height=3cm]{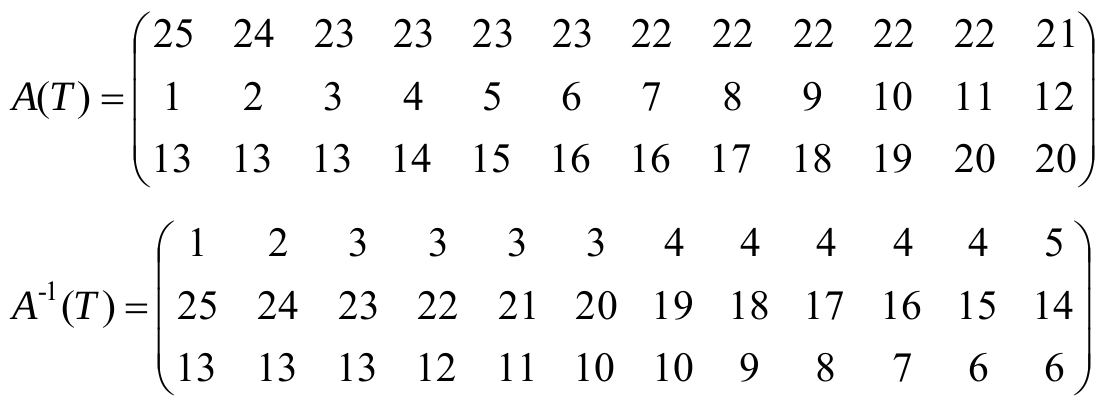}
\caption{\label{fig:6C-matching-matrix-1}{\small A Topsnut-matrix $A(T)$ of a Topsnut-gpw $T$ shown in Fig.\ref{fig:2-more-matching}(a), and the dual Topsnut-matrix $A^{-1}(T)$ of $A(T)$.}}
\end{figure}

\begin{figure}[h]
\centering
\includegraphics[height=1.5cm]{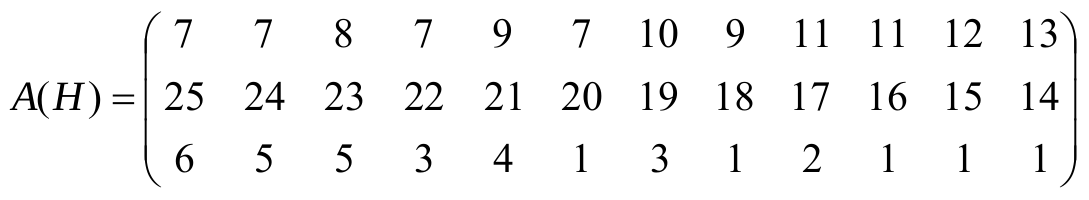}
\caption{\label{fig:6C-matching-matrix-2}{\small A Topsnut-matrix $A(H)$ of a Topsnut-gpw $H$ shown in Fig.\ref{fig:2-more-matching}(d), which matches with the Topsnut-matrix $A(T)$ of a Topsnut-gpw $T$ shown in Fig.\ref{fig:2-more-matching}(a).}}
\end{figure}

Notice that $T\not \cong H$, see Fig.\ref{fig:2-more-matching}(a) and (d). And $\odot_1\langle T,H \rangle $ obtained by coinciding the vertex $x_0$ of $G$ having $f(x_0)=13$ with the vertex $w_0$ of $H$ having $g(w_0)=13$ is a 6C-complementary matching conforming to Definition \ref{defn:6C-complementary-matching}. Moreover, the Topsnut-matrix $A(H)=(X''~W''~Y'')^{-1}$ holds the \emph{6C-restriction}: (i) $e''_i+|x''_i-y''_i|=13$; (ii) $e''_i=26-|x''_j-y''_j|$; (iii) $26-[(|x''_i-y''_i|-e''_i)+(|x''_j-y''_j|-e''_j)]=0$; (iv) $\min (X\cup Y)<\max W$; (v) $e''_i+x''_s=26$ or $e''_i+y''_t=26$; (vi) $\min X>\max Y$.

We like to use Topsnut-matrices to generate TB-paws since there are the following advantages of Topsnut-matrices:
\begin{asparaenum}[Prop-1. ]
\item Topsnut-matrices are easily saved in computer.
\item A Topsnut-matrix $A_{vev}(G)$ of a $(p,q)$-graph $G$ generates at least $\sum^{M}_{m=1}(m!)$ vv-type/vev-type TB-paws, where $M=\lfloor 3q/2\rfloor$.
\item In general, the vv-type/vev-type TB-paws generated by a Topsnut-matrix $A(\odot_1\langle T,H \rangle )$ differ from those vv-type/vev-type TB-paws of form $D(T)\uplus D(H)$ obtained from two Topsnut-matrices $A(T)$ and $A(H)$.
\item The procedure of rebuilding a Topsnut-matrix $A(T)$ by a vv-type/vev-type TB-paw $D(T)$, verifying $A(T)$ holding the \emph{6C-restriction}, and then redrawing the Topsnut-gpw $T$ by $A(T)$, is not easy to be realized, even impossible if a Topsnut-gpw possesses thousands of vertices and edges. Thereby, it is hard to reproduce a 6C-complementary matching $\odot_1\langle T,H \rangle $ when $T$ is as a \emph{public key}, $H$ is a \emph{private key} and $\odot_1\langle T,H \rangle $ is an \emph{authentication}.
\end{asparaenum}

\subsection{Looking for matchings}

We use an example to illustrate a procedure of transforming a graceful labelling to an odd-graceful labelling. The tree $T$ of $17$ vertices depicted in Fig.\ref{fig:gracefui-to-odd-tu} admits a \emph{set-ordered graceful labelling} $f$ shown in Fig.\ref{fig:gracefui-to-odd-tu}(a): $\max f(X)<\min f(Y)$ with $X=\{\textrm{black vertices}\}$ and $Y=\{\textrm{white vertices}\}$, each edge $uv$ of $T$ is balled by $f(uv)=|f(u)-f(v)|$, such that $f(E(T))=[1,16]$. First, we define a labelling $f_1$ of $T$ by setting $f_1(x)=2f(x)$ with $x\in X$, $f_1(y)=f(y)$ with $y\in Y$, and $f_1(uv)=f(uv)$ for $uv\in E(T)$ (see Fig.\ref{fig:gracefui-to-odd-tu}(b)). Second, we define another labelling $f_2$ of $T$ by setting $f_2(x)=f_1(x)$ with $x\in X$, $f_2(y)=2f_1(y)-1$ with $y\in Y$, and $f_2(uv)=f_1(uv)$ for $uv\in E(T)$ (see Fig.\ref{fig:gracefui-to-odd-tu}(c)). Finally, we have the desired odd-graceful labelling $f_3$ obtained by setting $f_3(w)=f_2(w)$ with $w\in X\cup Y$, and $f_3(uv)=|f_2(u)-f_2(v)|$ for $uv\in E(T)$ (see Fig.\ref{fig:gracefui-to-odd-tu}(d)).

\begin{figure}[h]
\centering
\includegraphics[height=6cm]{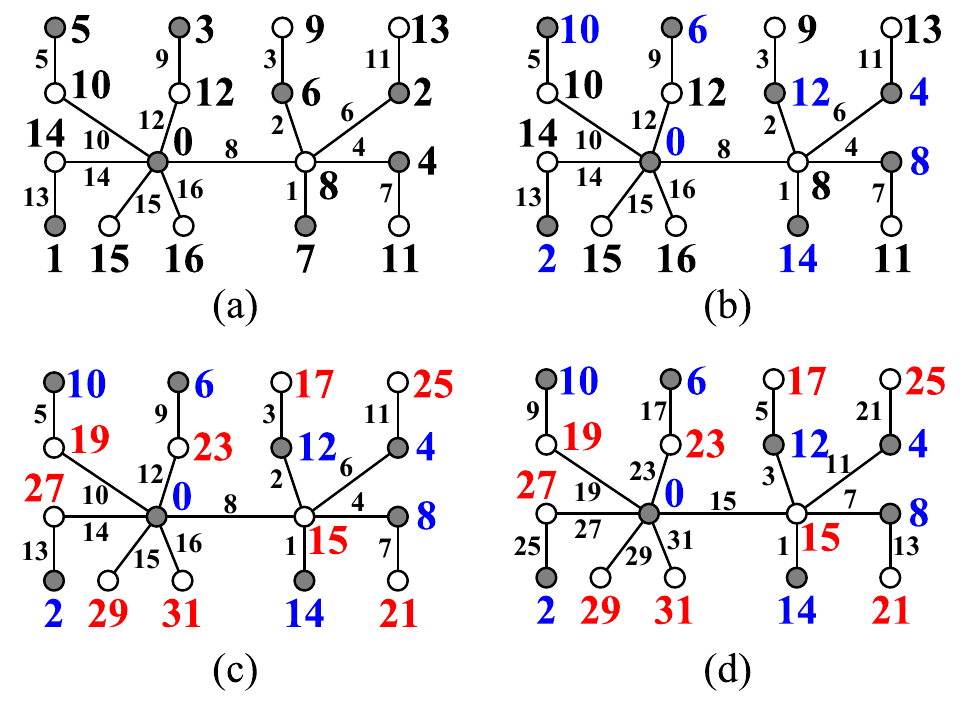}
\caption{\label{fig:gracefui-to-odd-tu}{\small A procedure of transforming a graceful labelling to an odd-graceful labelling.}}
\end{figure}

The set-ordered graceful labelling $f$ of $T$ presented in Fig.\ref{fig:gracefui-to-odd-tu}(a) induces a Topsnut-matrix $A_a$ depicted in Fig.\ref{fig:gracefui-to-odd}. And other three labellings $f_1,f_2$ and $f_3$ give us three Topsnut-matrices $A_b$, $A_c$ and $A_d$ shown in Fig.\ref{fig:gracefui-to-odd}, respectively. Thereby, we have a \emph{Topsnut-matrix chain} $A_a \rightarrow A_b\rightarrow A_c\rightarrow A_d$ and a \emph{TB-paw chain} $D_a\rightarrow D_b\rightarrow D_c\rightarrow D_d$ obtained from $A_a, A_b,A_c,A_d$, respectively. In general, Topsnut-gpws contain three basic characters:

(1) Topsnut-gpws = Topological structures (configuration, graph) plus labelling/colorings;

(2) Topsnut-matrices join Topsnut-gpws by TB-paws;

(3) TB-paws are easy for encryption.

\begin{figure}[h]
\centering
\includegraphics[height=5cm]{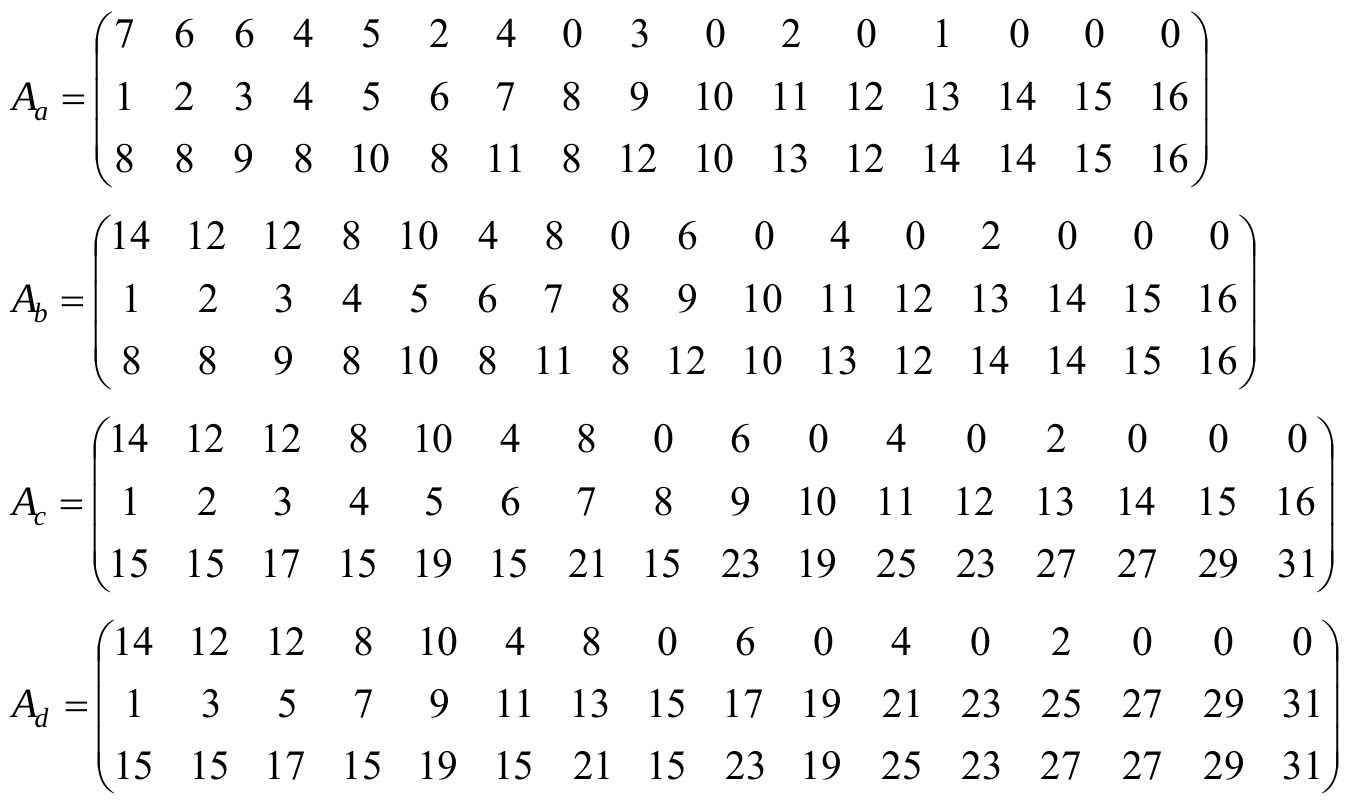}
\caption{\label{fig:gracefui-to-odd}{\small Four Topsnut-matrices corresponding four Topsnut-gpws.}}
\end{figure}

\begin{lem}\label{thm:lamma-set-ordered-vs-6C-labelling}
\cite{Yao-Sun-Zhang-Mu-Sun-Wang-Su-Zhang-Yang-Yang-2018arXiv} If a tree admits a set-ordered graceful labelling if and only if it admits a 6C-labelling.
\end{lem}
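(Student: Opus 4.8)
The plan is to prove the biconditional by two explicit, mutually inverse constructions, exploiting that a tree on $p$ vertices has $q=p-1$ edges, so the palette $[1,p+q]=[1,2q+1]$ splits cleanly into $q$ edge-labels and $p=q+1$ vertex-labels. For the direction ``set-ordered graceful $\Rightarrow$ 6C-labelling'', let $\theta$ be a set-ordered graceful labelling of $T$ with bipartition $(X,Y)$. Since $\theta(V(T))\subseteq[0,q]$, $\min\theta=0$, and $|\theta(V(T))|=p=q+1$, we get $\theta(V(T))=[0,q]$ exactly, and $\theta(E(T))=[1,q]$. I would then define a total labelling by $f(v)=\theta(v)+q+1$ on vertices and $f(uv)=(q+1)-\theta(uv)$ on edges. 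Then $f(V(T))=[q+1,2q+1]$ and $f(E(T))=[1,q]$, so $f$ is a bijection onto $[1,p+q]$ with $f_{\min}(V)=q+1>q=f_{\max}(E)$, giving EV-ordered (iv), while $f_{\max}(X)<f_{\min}(Y)$ is inherited from $\theta$, giving (vi). Because a common shift of the vertices preserves differences, $|f(u)-f(v)|=\theta(uv)$, so $f(uv)+|f(u)-f(v)|=q+1$, which is e-magic (i) with $k=q+1$.

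Next I would discharge the three ``matching'' axioms using the single complementation $j\mapsto(q+1)-j$ on edge-labels. Pairing each edge $uv$ with the unique edge $xy$ for which $\theta(xy)=(q+1)-\theta(uv)$ is a genuine involution on $E(T)$ because $\theta(E(T))=[1,q]$; for it one checks $|f(x)-f(y)|=\theta(xy)=(q+1)-\theta(uv)=f(uv)$, which is ee-difference (ii). Writing $s(uv)=|f(u)-f(v)|-f(uv)=2\theta(uv)-(q+1)$, the same pairing gives $s(uv)+s(xy)=0$, establishing ee-balanced (iii) with $k'=0$. For ve-matching (v) I would instead use the palette-wide complementation $a\mapsto(p+q+1)-a=(2q+2)-a$: each edge-label $j\in[1,q]$ pairs with the vertex-label $2q+2-j\in[q+2,2q+1]$, which leaves exactly the vertex labelled $q+1=\lfloor(p+q+1)/2\rfloor$ unmatched, precisely the admissible singularity $x_0$. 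Hence $f$ is a 6C-labelling.

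For the converse, given a 6C-labelling $f$ I would run the construction backwards. Since $f$ is a bijection, $f(V(T))$ and $f(E(T))$ are disjoint, so the alternatives ``$f(V)\subseteq f(E)$'' and ``$f(E)\subseteq f(V)$'' in (iv) cannot occur. In the interval case $f_{\min}(V)>f_{\max}(E)$ the $q+1$ vertex-labels must fill $[q+1,2q+1]$ and the edge-labels fill $[1,q]$; setting $\theta(v)=f(v)-(q+1)$ gives $\theta(V(T))=[0,q]$ with $\min\theta=0$, and e-magic forces the constant to be $q+1$ because $\theta(uv)=|f(u)-f(v)|=k-f(uv)$ must sweep out $q$ consecutive integers inside $[1,q]$ as $f(uv)$ ranges over $[1,q]$, so $\theta(E(T))=[1,q]$. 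Injectivity of $f$ makes $\theta$ injective on vertices, and (vi) yields $\max\theta(X)<\min\theta(Y)$, so $\theta$ is set-ordered graceful. The symmetric alternative $f_{\max}(V)<f_{\min}(E)$ and the odd/even alternative are handled identically after the appropriate affine renormalization of the vertex values (respectively $\theta(v)=f(v)-1$ and $\theta(v)=(f(v)-1)/2$), each again producing $\theta(V(T))=[0,q]$ and $\theta(E(T))=[1,q]$.

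I expect the main obstacle to be the bookkeeping in the forward direction: confirming that the two complementations simultaneously satisfy (ii), (iii) and (v) with mutually consistent constants and correctly locating the singularity vertex of (v). A secondary technical point is discharging the disjunctive alternatives in the EV-ordered axiom (iv); I would dispose of the genuinely impossible ones by bijectivity and reduce the remaining ones to the basic case by the affine renormalization above, rather than re-running a separate construction for each.
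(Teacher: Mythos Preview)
The paper does not prove this lemma; it is quoted from the cited reference and used as a black box in the proof of Theorem~\ref{thm:set-ordered-vs-6C-matching}. Your proposal supplies a correct independent proof.

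Your forward construction $f(v)=\theta(v)+q+1$, $f(uv)=(q+1)-\theta(uv)$ is the natural one, and the verification of (i)--(vi) is clean: the edge involution $j\leftrightarrow q+1-j$ discharges (ii) and (iii) simultaneously with $k'=0$, while the global complementation $a\mapsto 2q+2-a$ handles (v) with the singularity correctly located at $q+1=\lfloor(p+q+1)/2\rfloor$. One cosmetic point: when $q$ is odd the edge with $\theta$-label $(q+1)/2$ is fixed by your involution, so it ``matches with itself'' in (ii) and (iii); the paper's own worked examples make clear that this self-matching is permitted, so this is not a gap.

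For the converse your case split on the disjuncts of (iv) is sound. Bijectivity of $f$ on $V\cup E$ indeed rules out the inclusion alternatives, and in each of the three remaining cases the cardinalities $|V|=q+1$, $|E|=q$ force $f(V)$ and $f(E)$ to be exactly the sets you name; e-magic then pins down the constant $k$ uniquely and gives $\{|\theta(u)-\theta(v)|:uv\in E(T)\}=[1,q]$, while (vi) transfers to the set-ordered condition after your affine shift. This is a complete argument.
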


\begin{thm}\label{thm:set-ordered-vs-6C-matching}
If two trees of $p$ vertices admit set-ordered graceful labellings, then they are a 6C-complementary matching.
\end{thm}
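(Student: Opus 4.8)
The plan is to lean on the equivalence in Lemma~\ref{thm:lamma-set-ordered-vs-6C-labelling} and then assemble the matching by coinciding the two trees at their singularity vertices. Denote the trees $T_1,T_2$, each on $p$ vertices; since a tree on $p$ vertices has $q=p-1$ edges, each total labelling in play is a bijection onto $[1,p+q]=[1,2p-1]$, and the singularity value in Definition~\ref{defn:6C-labelling}(v) is $\lfloor\frac{p+q+1}{2}\rfloor=p$.

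First I would invoke Lemma~\ref{thm:lamma-set-ordered-vs-6C-labelling} twice: because $T_1$ admits a set-ordered graceful labelling it admits a 6C-labelling $f$, and likewise $T_2$ admits a 6C-labelling $g$. Clause (v) of Definition~\ref{defn:6C-labelling} then furnishes a singularity vertex $x_0\in V(T_1)$ with $f(x_0)=p$ and a singularity vertex $w_0\in V(T_2)$ with $g(w_0)=p$; these are the two vertices to be glued.

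Second, to realise the complementary half of the matching I would pass, wherever Definition~\ref{defn:6C-complementary-matching} demands the opposite branch, to the dual labelling of Definition~\ref{defn:define-labelling}, namely $g'(z)=(p+q+1)-g(z)=2p-g(z)$. A direct check shows that the dual fixes the midpoint, $g'(w_0)=p$, and flips precisely the branches visible in the restrictions on $A(T)$ and $A^{-1}(T)$ of Fig.~\ref{fig:6C-matching-matrix-1}: from $g(uv)+|g(u)-g(v)|=k$ one obtains $g'(uv)-|g'(u)-g'(v)|=(p+q+1)-k$, the two alternatives of the ee-difference clause (ii) and of the ee-balanced clause (iii) are interchanged, and the common ve-matching constant $k''=p+q+1$ is preserved. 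I would then form $\odot_1\langle T_1,T_2\rangle$ by coinciding $x_0$ with $w_0$ and check the clauses of Definition~\ref{defn:6C-complementary-matching}: both labellings live on $[1,2p-1]$ with the same $p+q+1=2p$, the singularity is identified exactly once, and the direct ee-difference pairing inside $T_1$ meets the complementary one inside $T_2$.

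The hard part will be the cross-tree bookkeeping in this last step. One must confirm that after the identification every non-singular vertex still matches an edge under the shared constant $k''$, that the two label sets overlap only in the value $p$, and above all that the ``direct'' and ``dual'' branches of clauses (ii)--(iv) line up \emph{between} $T_1$ and $T_2$ rather than merely inside each tree. This global consistency is where the set-ordered hypothesis is genuinely spent: the bipartitions $\theta(X)<\theta(Y)$ of the two trees force their edge-difference spectra to be mutually complementary, so the verification closes; everything else is a routine substitution.
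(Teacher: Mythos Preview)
Your route departs from the paper's and carries a real obstruction. Definition~\ref{defn:6C-complementary-matching} asks for a 6C-labelling $g$ on the second tree, but the dual $g'(z)=2p-g(z)$ you propose is \emph{not} a 6C-labelling: you yourself compute that clause~(i) becomes $g'(uv)-|g'(u)-g'(v)|=\mathrm{const}$, and unlike clauses~(ii)--(iv), the e-magic clause~(i) has no ``opposite branch'' alternative in Definition~\ref{defn:6C-labelling}. So you cannot feed the pair $(f,g')$ into Definition~\ref{defn:6C-complementary-matching}. Invoking Lemma~\ref{thm:lamma-set-ordered-vs-6C-labelling} twice gives you two 6C-labellings that a~priori sit on the \emph{same} EV-ordered branch of clause~(iv), hence with identical vertex and edge label sets, and the required set complementarity $f(V(T_1))\setminus\{p\}=g(E(T_2))$ simply fails. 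The ``cross-tree bookkeeping'' you flag is not bookkeeping but this structural mismatch.

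The paper avoids the problem by not passing through the Lemma first. It works directly from the two set-ordered graceful labellings $f_1,f_2$ and applies \emph{different} affine shifts to each: on $T_1$ it pushes vertex labels up by $p$ and reflects edge labels, landing $f_1^*(V(T_1))=[p,2p-1]$ and $f_1^*(E(T_1))=[1,p-1]$; on $T_2$ it pushes edge labels up by $p$ and shifts vertex labels by $1$, landing $f_2^*(V(T_2))=[1,p]$ and $f_2^*(E(T_2))=[p+1,2p-1]$. The set identities $f_1^*(V(T_1))\setminus\{p\}=f_2^*(E(T_2))$ and $f_1^*(E(T_1))=f_2^*(V(T_2))\setminus\{p\}$ then drop out by inspection, and only at the very end is the Lemma invoked. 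If you want to salvage your outline, you need to argue (not via naive dualization) that $T_2$ admits a genuine 6C-labelling landing on the \emph{opposite} EV-ordered branch; that is essentially what the paper's explicit $f_2^*$ is meant to exhibit.
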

\begin{proof} Assume that each tree $T_i$ of $p$ vertices admits a set-ordered graceful labelling $f_i$ and let $(X_i,Y_i)$ be the bipartition of $T_i$ with $i=1,2$. So, by the definition of a set-ordered graceful labelling, we have $\max f_i(X_i)<\min f_i(Y_i)$ where $X_i=\{x_{i,j}:j\in [1,s_i]\}$ and $Y_i=\{y_{i,j}:j\in [1,t_i]\}$ holding $s_i+t_i=p$ with $i=1,2$. Without loss of generality, we can set $f_i(x_{i,j})=j-1$ for $j\in [1,s_i]$, $f_i(y_{i,j})=s_i+j-1$ for $j\in [1,t_i]$ and $f_i(x_{i,s}y_{i,t})=f_i(y_{i,t})-f_i(x_{i,s})=s_i+t-s$ for each edge $x_{i,s}y_{i,t}\in E(T_i)$, and $f_i(E(T_i))=[1,p]$ for $i=1,2$.

We define another labelling $f^*_1$ of $T_1$ as: $f^*_1(w)=p+f_1(w)$ for $w\in V(T_1)$ and $f^*_1(x_{1,s}y_{1,t})=p+1-f(x_{1,s}y_{1,t})$ for each edge $x_{1,s}y_{1,t}\in E(T_1)$. So, we can compute $f^*_1(V(T_1))=[p,2p-1]$, $f^*_1(E(T_1))=[1,p-1]$.

Next, we define another labelling $f^*_2$ of $T_2$ as: $f^*_2(w)=f_2(w)+1$ for $w\in V(T_2)$ and $f^*_2(x_{2,i}y_{2,j})=p+f_2(x_{2,i}y_{2,j})$ for each edge $x_{2,i}y_{2,j}\in E(T_2)$. Thereby, we get $f^*_2(V(T_2))=[1,p]$, $f^*_2(E(T_2))=[p+1, 2p-1]$.

Notice that $f^*_1(V(T_1))\setminus \{p\}=f^*_2(E(T_2))$, $f^*_1(E(T_1))=f^*_2(V(T_2))\setminus \{p\}$, and by Lemma \ref{thm:lamma-set-ordered-vs-6C-labelling}, we have proven the theorem.
\end{proof}

\begin{defn}\label{defn:image-labelling}
$^*$ Let $f_i:V(G)\rightarrow [a,b]$ be a labelling of a $(p,q)$-graph $G$ and define each edge $uv\in E(G)$ has its own label as $f_i(uv)=|f_i(u)-f_i(v)|$ with $i=1,2$. If each edge $uv\in E(G)$ holds $f_1(uv)+f_2(uv)=k$ true, where $k$ is a positive constant, we call $f_1$ and $f_2$ are a pair of \emph{image-labellings}, and $f_i$ a \emph{mirror-image} of $f_{3-i}$ with $i=1,2$.\qqed
\end{defn}

A tree $T$ appeared in Fig.\ref{fig:graceful-mirror-labelling} admits a pair of \emph{set-ordered graceful image-labellings} (a) and (b). We can consider a pair of image-labellings as a \emph{matching labelling} too.

\begin{figure}[h]
\centering
\includegraphics[height=6cm]{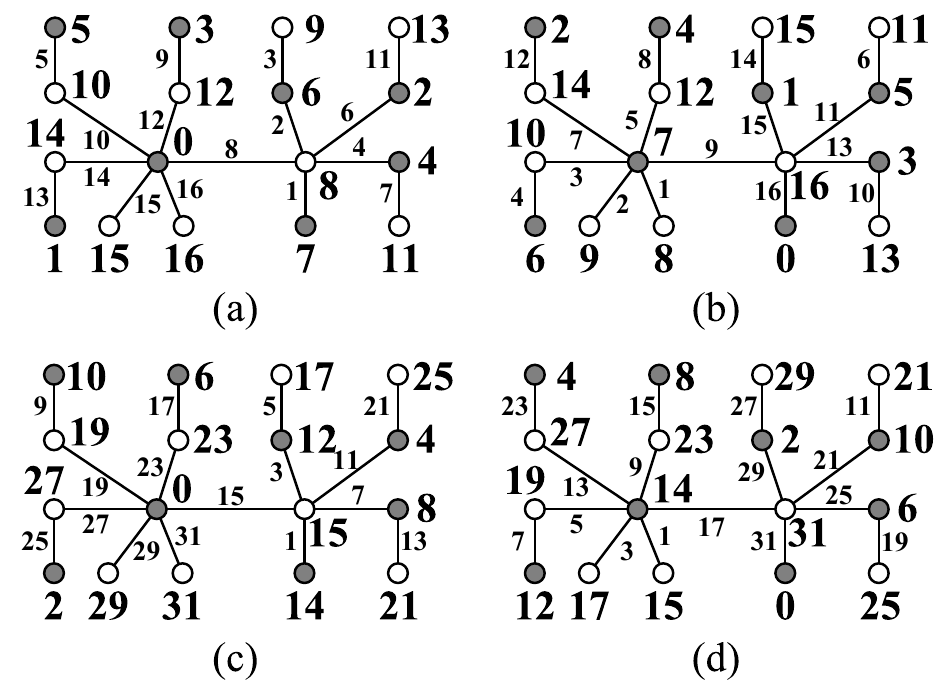}
\caption{\label{fig:graceful-mirror-labelling}{\small (a) and (b) are a pair of set-ordered graceful image-labellings with $f_a(uv)+h_b(uv)=17$; (c) and (d) are a pair of set-ordered odd-graceful image-labellings with $f_c(uv)+h_d(uv)=32$.}}
\end{figure}

\begin{lem}\label{thm:graceful-image-labelling}
If a tree $T$ admits a set-ordered graceful labelling $f$, then $T$ admits another set-ordered graceful labelling $g$ such that $f$ and $g$ are a pair of image-labellings.
\end{lem}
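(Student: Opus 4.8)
The plan is to produce $g$ from $f$ by reflecting the vertex labels separately on each side of the bipartition; this will send every edge label $\ell$ to its complement $(q+1)-\ell$ while preserving the set-ordered graceful structure, which is exactly what the image-labelling condition of Definition \ref{defn:image-labelling} demands.

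First I would record the structural consequences of $f$ being a set-ordered graceful labelling of a tree. Since $T$ is a tree on $p$ vertices it has $q=p-1$ edges, so the $p$ distinct vertex labels all lie in $[0,q]$, a set of exactly $p$ integers; hence $f(V(T))=[0,p-1]$. Writing $(X,Y)$ for the bipartition, putting $s=|X|$, and using the set-ordered condition $\max f(X)<\min f(Y)$, this forces $f(X)=[0,s-1]$ and $f(Y)=[s,p-1]$. This identification of each colour class with a full consecutive block of labels is the fact on which everything rests.

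Next I would define $g$ by $g(x)=(s-1)-f(x)$ for $x\in X$ and $g(y)=(s+p-1)-f(y)$ for $y\in Y$. Each formula reflects a consecutive block onto itself, so $g(X)=[0,s-1]$ and $g(Y)=[s,p-1]$, whence $g$ is a bijection $V(T)\rightarrow [0,p-1]$ satisfying $\max g(X)<\min g(Y)$; thus $g$ inherits vertex-distinctness and set-orderedness for free. Finally I would compute the induced edge labels. Every edge of the bipartite tree joins some $x\in X$ to some $y\in Y$ with $f(y)>f(x)$, and likewise $g(y)>g(x)$, so a direct substitution gives
\[ g(xy)=g(y)-g(x)=p-\bigl(f(y)-f(x)\bigr)=p-f(xy)=(q+1)-f(xy). \]
Hence $g(E(T))=\{(q+1)-\ell:\ell\in[1,q]\}=[1,q]$, so $g$ is indeed a (set-ordered) graceful labelling, while $f(xy)+g(xy)=q+1$ for every edge $xy$, giving the image-labelling relation with constant $k=q+1$.

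The only genuinely delicate step is the first one: one must argue that set-orderedness together with the gracefulness count forces each part to carry a full consecutive block of labels, rather than merely that $\max f(X)<\min f(Y)$. Once that is in hand, the reflection construction and the one-line edge computation are routine, no induction is required, and the resulting $k=q+1$ matches the pattern $f_a(uv)+h_b(uv)=17=16+1$ seen for the $p=17$ tree in Fig.\ref{fig:graceful-mirror-labelling}.
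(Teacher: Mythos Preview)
Your argument is correct and essentially identical to the paper's own proof: you reflect each colour class within its consecutive block via $g(x)=(s-1)-f(x)$ and $g(y)=(s+p-1)-f(y)$ (the paper writes the latter as $t+2s-1-f(y)$, which is the same number), and you obtain the same constant $f(xy)+g(xy)=p=q+1=|E(T)|+1$. The only difference is cosmetic: you explicitly justify why $f(X)=[0,s-1]$ and $f(Y)=[s,p-1]$, whereas the paper simply asserts ``we can set'' these values.
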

\begin{proof} Suppose that $(X,Y)$ is the bipartition of a tree $T$ with $p$ vertices, where $X=\{x_{i}:i\in [1,s]\}$ and $Y=\{y_{j}:j\in [1,t]\}$ holding $s+t=|V(T)|=p$. By the hypothesis of the theorem, $T$ admits a set-ordered graceful labelling $f$ such that $f(x_i)=i-1$ for $i\in [1,s]$, $f(y_j)=s+j-1$ for $j\in [1,t]$ and $f(x_iy_j)=f(y_j)-f(x_i)=s+j-i$ for each edge $x_iy_j\in E(T)$. We define another labelling $g$ of $T$ as: $g(x_i)=s-1-f(x_i)$ for $i\in [1,s]$, $g(y_j)=t+2s-1-f(y_{j})$ for $j\in [1,t]$, then
\begin{equation}\label{eqa:c3xxxxx}
{
\begin{split}
g(x_iy_j)&=g(y_j)-g(x_i)\\
&=t+2s-1-f(y_{j})-[s-1-f(x_i)]\\
&=t+s-[f(y_{j})-f(x_i)]\\
&=t+s-f(x_iy_j)
\end{split}}
\end{equation}
for each edge $x_iy_j\in E(T)$. So, $$f(x_iy_j)+g(x_iy_j)=t+s=|E(T)|+1,$$ a constant, as desired.
\end{proof}

In \cite{Yao-Liu-Yao-2017}, the authors have proven the following mutually equivalent labellings:

\begin{thm} \label{thm:connections-several-labellings}
\cite{Yao-Liu-Yao-2017} Let $T$ be a tree on $p$ vertices, and let $(X,Y)$ be its
bipartition. For all values of integers $k\geq 1$ and $d\geq 1$, the following assertions are mutually equivalent:

$(1)$ $T$ admits a set-ordered graceful labelling $f$ with $f(X)<f(Y)$.

$(2)$ $T$ admits a super felicitous labelling $\alpha$ with
$\alpha(X)<\alpha(Y)$.

$(3)$ $T$ admits a $(k,d)$-graceful labelling $\beta$ with
$\beta(x)<\beta(y)-k+d$ for all $x\in X$ and $y\in Y$.

$(4)$ $T$ admits a super edge-magic total labelling $\gamma$ with
$\gamma(X)<\gamma(Y)$ and a magic constant $|X|+2p+1$.

$(5)$ $T$ admits a super $(|X|+p+3,2)$-edge antimagic total
labelling $\theta$ with $\theta(X)<\theta(Y)$.

$(6)$ $T$ has an odd-elegant labelling $\eta$ with
$\eta(x)+\eta(y)\leq 2p-3$ for every edge $xy\in E(T)$.

$(7)$ $T$ has a $(k,d)$-arithmetic labelling $\psi$ with
$\psi(x)<\psi(y)-k+d\cdot |X|$ for all $x\in X$ and $y\in Y$.

$(8)$ $T$ has a harmonious labelling $\varphi$ with
$\varphi(X)<\varphi(Y\setminus \{y_0\})$ and $\varphi(y_0)=0$.
\end{thm}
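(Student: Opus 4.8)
The plan is to take assertion (1), the set-ordered graceful labelling, as the hub and to prove that it is equivalent to each of (2)--(8) separately; since logical equivalence is transitive, this alone yields the mutual equivalence of all eight statements and cuts the work from $\binom{8}{2}$ comparisons down to seven. The engine of every step will be the rigid normal form that a set-ordered graceful labelling imposes on a tree, together with block-wise affine reparametrisations of the two colour classes, and the single recurring idea that reflecting one colour class converts graceful \emph{differences} into the \emph{sums} demanded by the magic, arithmetic, harmonious, and felicitous variants.

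First I would record the normal form. If $f$ is graceful on the tree $T$ with $q=p-1$ edges, then $f(V(T))\subseteq[0,p-1]$ and $f$ is injective on the $p$ vertices, so $f(V(T))=\{0,1,\dots,p-1\}$ exactly. The set-ordered hypothesis $\max f(X)<\min f(Y)$ then forces $f(X)=\{0,1,\dots,s-1\}$ and $f(Y)=\{s,s+1,\dots,p-1\}$, where $s=|X|$; ordering the classes so that $f(x_i)=i-1$ and $f(y_j)=s+j-1$ gives the edge labels $f(x_iy_j)=s+j-i$, which sweep out $[1,p-1]$ bijectively. This is precisely the canonical description already used in the proofs of Theorem \ref{thm:set-ordered-vs-6C-matching} and Lemma \ref{thm:graceful-image-labelling}, so the normalisation costs nothing and becomes the common starting point for all constructions.

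Second, for the directions $(1)\Rightarrow(k)$ I would exhibit in each case a labelling obtained from the canonical $f$ by scaling and translating the two colour blocks independently. The $(k,d)$-graceful labelling (3) is read off directly from the differences $s+j-i$ by substituting the arithmetic progression $\{k,k+d,\dots\}$ for the block $[1,p-1]$. The remaining non-modular targets --- super edge-magic total (4), super edge-antimagic total (5), and $(k,d)$-arithmetic (7) --- are sum-based, so I would first reflect one colour class (replace $f(y_j)$ by a constant minus $f(y_j)$), which turns each graceful difference $f(y_j)-f(x_i)$ into a sum while the set-ordered block structure keeps those sums consecutive; the prescribed magic or antimagic constant $|X|+2p+1$ and the block $[p+1,2p-1]$ of edge values then fall out of the reflection constant. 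The modular targets --- felicitous (2), harmonious (8), odd-elegant (6) --- use the same reflection but verify the edge condition after reduction modulo $q$, $q$, and $2q$ respectively, the odd-elegant case being conveniently routed through the ``graceful $\to$ odd-graceful'' doubling of one colour class illustrated in Fig.\ref{fig:gracefui-to-odd-tu}.

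Finally, for the reverse directions $(k)\Rightarrow(1)$ the key observation is that each statement carries an ordering hypothesis ($\alpha(X)<\alpha(Y)$, $\gamma(X)<\gamma(Y)$, $\theta(X)<\theta(Y)$, $\varphi(X)<\varphi(Y\setminus\{y_0\})$, and so on) whose role is exactly to force the two colour classes back into the consecutive-block pattern; inverting the affine map that produced the labelling then returns a set-ordered graceful labelling. The main obstacle I expect is not the bookkeeping of the affine shifts but the modular cases: one must show that after reduction modulo $q$ (or $2q$) the induced edge labels remain pairwise distinct and exhaust the prescribed target set, i.e.\ that the block structure prevents any wrap-around collision. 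Establishing this injectivity-and-surjectivity of the reduced edge labels for the felicitous, harmonious, and odd-elegant labellings --- and checking that inequalities such as $\eta(x)+\eta(y)\le 2p-3$ are precisely what guarantees no collision occurs (with $y_0$ absorbing the one repeated label forced by harmoniousness on a tree) --- is where the real work lies.
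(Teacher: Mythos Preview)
The paper does not actually prove this theorem: it is quoted verbatim from \cite{Yao-Liu-Yao-2017} with the preface ``the authors have proven the following mutually equivalent labellings,'' and no argument is supplied here. So there is no in-paper proof to compare against.

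Your hub-and-spoke plan through assertion (1) is the natural one and matches how the surrounding results in this paper (e.g.\ Theorem~\ref{thm:set-ordered-vs-6C-matching}, Lemma~\ref{thm:graceful-image-labelling}, and the graceful-to-odd-graceful transformation of Fig.~\ref{fig:gracefui-to-odd-tu}) are handled: normalise to $f(x_i)=i-1$, $f(y_j)=s+j-1$, then apply block-wise affine maps, with a reflection on one colour class to pass from difference-type to sum-type edge rules. That is almost certainly the scheme of the cited source as well. The one place to be careful is exactly where you flagged it: in the modular cases (2), (6), (8) you must verify that the block structure keeps the reduced edge sums in a single contiguous window so that no wrap-around collision occurs, and in (8) you must pin down which vertex plays the role of $y_0$ with $\varphi(y_0)=0$ and check that the remaining $Y$-labels stay above $\varphi(X)$. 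For the reverse implications your observation that each hypothesis carries an ordering clause strong enough to reconstruct the two consecutive blocks is the right one; just make sure in (6) that the inequality $\eta(x)+\eta(y)\le 2p-3$ really forces $\eta(X)$ into the even block and $\eta(Y)$ into the odd block before you invert.
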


By Lemma \ref{thm:graceful-image-labelling} and Theorem \ref{thm:connections-several-labellings}, if a tree $T$ admits set-ordered graceful labelling, then we have the following results and present Fig.\ref{fig:image-labelling-1} and Fig.\ref{fig:image-labelling-2} for illustrating these results:

\begin{thm}\label{thm:10-image-labellings}
If a tree $T$ admits set-ordered graceful labelling, then $T$ admits a pair of $SKD$  image-labellings, where $SKD\in \{$ set-ordered graceful, set-ordered odd-graceful, edge-magic graceful, set-ordered felicitous, set-ordered odd-elegant, super set-ordered edge-magic total, super set-ordered edge-antimagic total, set-ordered $(k,d)$-graceful, $(k,d)$-edge antimagic total, $(k,d)$-arithmetic total, harmonious,  $(k,d)$-harmonious $\}$.
\end{thm}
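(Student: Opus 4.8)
The plan is to combine the image-labelling construction of Lemma \ref{thm:graceful-image-labelling} with the battery of equivalences recorded in Theorem \ref{thm:connections-several-labellings}. First I would invoke Lemma \ref{thm:graceful-image-labelling}: starting from the given set-ordered graceful labelling $f$ of $T$, it produces a second set-ordered graceful labelling $g$ of the same tree with $f(x_iy_j)+g(x_iy_j)=|E(T)|+1$ for every edge $x_iy_j$, that is, $(f,g)$ is a pair of set-ordered graceful image-labellings. This already settles the case $SKD=$ set-ordered graceful and furnishes the seed pair from which all the remaining cases will be built.

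Next, for each target type $SKD$ in the list I would use the explicit transformation that realizes the corresponding equivalence in Theorem \ref{thm:connections-several-labellings}. Writing $(X,Y)$ for the bipartition with $f(X)<f(Y)$, each such equivalence is witnessed by an affine, bipartition-respecting map $\Phi$, of the shape $\Phi(v)=a_Xf(v)+b_X$ for $v\in X$ and $\Phi(v)=a_Yf(v)+b_Y$ for $v\in Y$, together with, where needed, a complementary rule on the edges and a final reduction modulo $q$. The key structural fact I would isolate and prove as a small lemma is that for every one of these maps the induced edge label $\Phi(x_iy_j)$ is an \emph{affine function of the original graceful edge label} $f(x_iy_j)$, say $\Phi(x_iy_j)=\lambda\, f(x_iy_j)+\mu$ (read modulo $q$ in the harmonious, felicitous and arithmetic cases). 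Applying the same $\Phi$ to both $f$ and $g$ then yields two labellings $\Phi_f,\Phi_g$ of type $SKD$, and summing the edge labels gives
\begin{equation}\label{eqa:image-preserve}
\Phi_f(x_iy_j)+\Phi_g(x_iy_j)=\lambda\big(f(x_iy_j)+g(x_iy_j)\big)+2\mu=\lambda(|E(T)|+1)+2\mu,
\end{equation}
a constant independent of the edge. Hence $(\Phi_f,\Phi_g)$ is again an image pair, now of type $SKD$, which is exactly what the theorem asks for.

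I would run this verification type-by-type. For the difference-type members (set-ordered odd-graceful, edge-magic graceful, set-ordered $(k,d)$-graceful) the slopes on $X$ and $Y$ coincide, so the edge label really is $|\Phi(y_j)-\Phi(x_i)|=\lambda f(x_iy_j)+\mu$ and (\ref{eqa:image-preserve}) applies verbatim; for the modular sum-type members (set-ordered felicitous, set-ordered odd-elegant, harmonious, $(k,d)$-harmonious) the same computation runs with a negative $\lambda$ after the reduction mod $q$. For the ``total'' members (super set-ordered edge-magic total, super set-ordered edge-antimagic total, $(k,d)$-edge antimagic total, $(k,d)$-arithmetic total) I would additionally check that the explicitly assigned edge values of $\Phi_f$ and $\Phi_g$ are themselves complementary, so that the appropriate magic/antimagic constant and the image constant coexist; there the analogue of (\ref{eqa:image-preserve}) is the constant-sum identity for the directly assigned edge labels. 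Figures \ref{fig:image-labelling-1} and \ref{fig:image-labelling-2} can serve as the worked instances.

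The main obstacle I anticipate is precisely this last point: pinning down, for each of the twelve types, what the correct notion of image pair should be---edge labels summing to a constant for the difference-type labellings, and the corresponding dual condition on the directly assigned edge labels for the modular sum-type and total labellings---and then checking that the single affine map $\Phi$ transports the seed property $f(uv)+g(uv)=|E(T)|+1$ into it. Once the affine-in-edge-label lemma is established uniformly, each individual case collapses to the one-line computation (\ref{eqa:image-preserve}); the remaining work lies in organizing the twelve transformations $\Phi$ and in handling the modular reductions so that no edge label wraps around inconsistently between $f$ and $g$.
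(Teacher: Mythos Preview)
Your approach is exactly what the paper itself points to: immediately before the theorem it writes ``By Lemma \ref{thm:graceful-image-labelling} and Theorem \ref{thm:connections-several-labellings}, if a tree $T$ admits set-ordered graceful labelling, then we have the following results,'' and immediately after it states ``The above results on image-labellings are illustrated in Fig.\ref{fig:image-labelling-1}, Fig.\ref{fig:image-labelling-2} and Fig.\ref{fig:group-group-matching}, however, we omit the proofs of them here.'' In other words, the paper gives no proof at all beyond pointing to the same two ingredients you invoke and to the figures; your proposal is therefore not merely consistent with the paper's route but is in fact a fleshed-out version of what the paper leaves as an exercise.

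One comparative remark: your affine-in-edge-label lemma and the uniform computation (\ref{eqa:image-preserve}) make explicit the mechanism that the paper only implies, and your caveat about fixing, type by type, what ``image pair'' means for sum-type and total labellings is well placed---the paper addresses this only by example (e.g.\ Definition \ref{defn:twin-k-d-harmonious-labellings} for the $(k,d)$-harmonious case) rather than by a general statement. So your write-up is already more complete than the paper's own treatment of this theorem.
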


The above results on image-labellings are illustrated in Fig.\ref{fig:image-labelling-1}, Fig.\ref{fig:image-labelling-2} and Fig.\ref{fig:group-group-matching}, however, we omit the proofs of them here.

\begin{figure}[h]
\centering
\includegraphics[height=11cm]{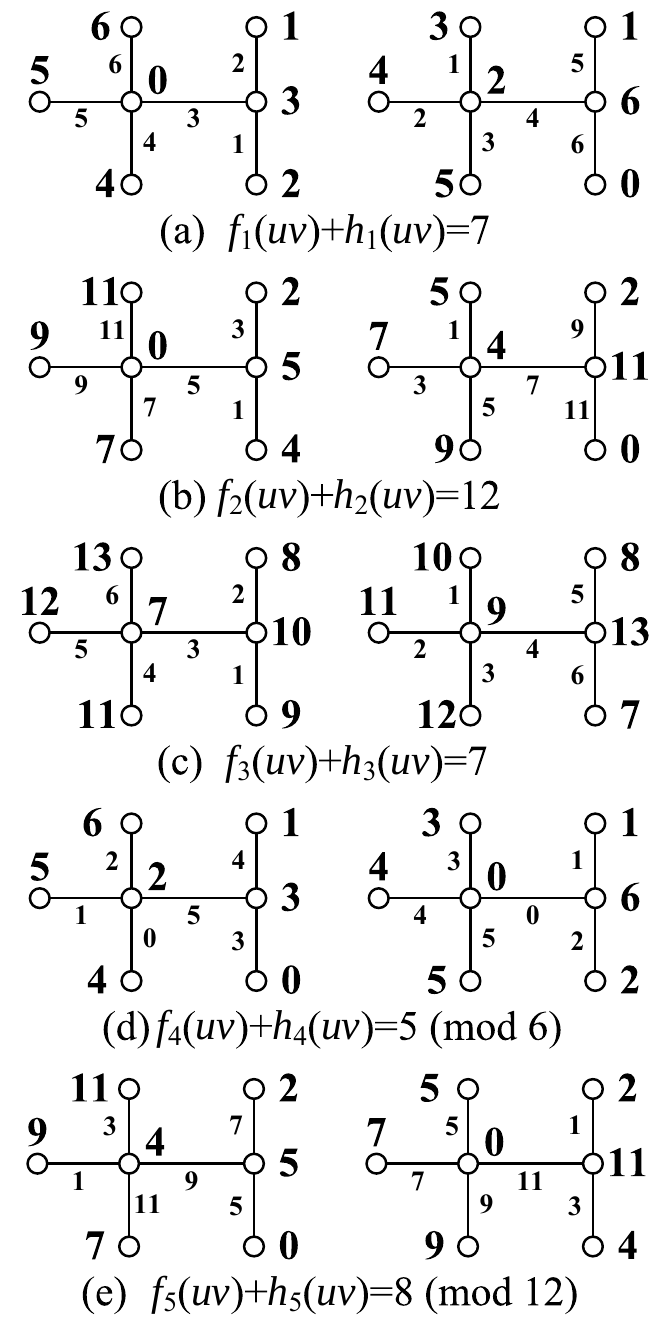}
\caption{\label{fig:image-labelling-1}{\small A tree $T$ admits (\cite{Gallian2016,Zhou-Yao-Chen-Tao2012,Zhou-Yao-Chen-2013}): (a) a pair of set-ordered graceful image-labellings $f_1$ and $h_1$; (b) a pair of set-ordered odd-graceful image-labellings $f_2$ and $h_2$; (c) a pair of edge-magic graceful image-labellings $f_3$ and $h_3$; (d) a pair of set-ordered felicitous image-labellings $f_4$ and $h_4$; (e) a pair of set-ordered odd-elegant image-labellings $f_5$ and $h_5$.}}
\end{figure}

\begin{figure}[h]
\centering
\includegraphics[height=12cm]{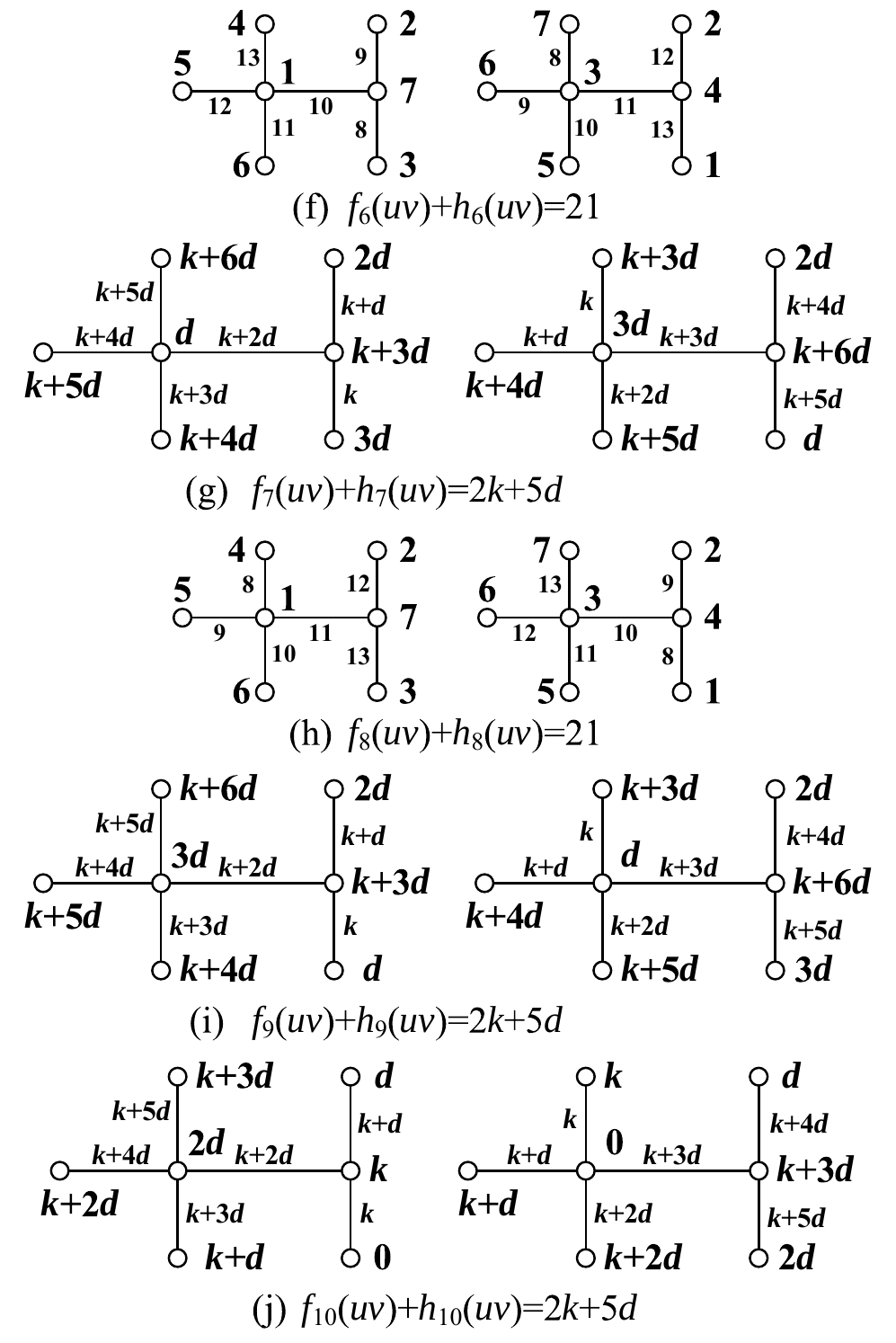}
\caption{\label{fig:image-labelling-2}{\small A tree $T$ admits (\cite{Gallian2016,Zhou-Yao-Chen-Tao2012,Zhou-Yao-Chen-2013}): (f) a pair of super set-ordered edge-magic total image-labellings $f_6$ and $h_6$; (g) a pair of set-ordered $(k,d)$-graceful image-labellings $f_7$ and $h_7$; (h) a pair of super set-ordered edge-antimagic total image-labellings $f_8$ and $h_8$; (i) a pair of $(k,d)$-edge antimagic total image-labellings $f_9$ and $h_9$; (j) a pair of $(k,d)$-arithmetic image-labellings $f_{10}$ and $h_{10}$.}}
\end{figure}

Motivated from the definitions of harmonious labelling and $(k,d)$-harmonious labelling in \cite{Yao-Zhang-Sun-Mu-Wang-Zhang2018}, we present two new labellings as follows:

\begin{defn} \label{defn:twin-k-d-harmonious-labellings}
$^*$ A $(p,q)$-graph $G$ admits two $(k,d)$-harmonious labellings $f_i:V(G)\rightarrow X_0\cup X_{k,d}$ with $i=1,2$, where $X_0=\{0,d,2d, \dots ,(q-1)d\}$ and $X_{k,d}=\{k,k+d,k+2d, \dots ,k+(q-1)d\}$, such that each edge $uv\in E(G)$ is labelled as $f_i(uv)-k=[f_i(uv)+f_i(uv)-k~(\textrm{mod}~qd)]$ with $i=1,2$. If $f_1(uv)+f_2(uv)=2k+(q-1)d$, we call $f_1$ and $f_2$ a pair of \emph{$(k,d)$-harmonious image-labellings} of $G$ (see Fig.\ref{fig:image-labelling-2}).\qqed
\end{defn}

\begin{defn} \label{defn:twin-k-d-harmonious-labellings}
$^*$ A $(p,q)$-graph $G$ admits a $(k,d)$-labellings $f$, and another $(p',q')$-graph $H$ admits another $(k,d)$-labellings $g$. If $(X_0\cup X_{k,d})\setminus f(V(G)\cup E(G))=g(V(H)\cup E(H))$, then $g$ is called a complementary $(k,d)$-labelling of $f$, and both $f$ and $g$ are a \emph{twin $(k,d)$-labellings} of $G$ (see Fig.\ref{fig:image-labelling-2}).\qqed
\end{defn}

\begin{figure}[h]
\centering
\includegraphics[height=7.2cm]{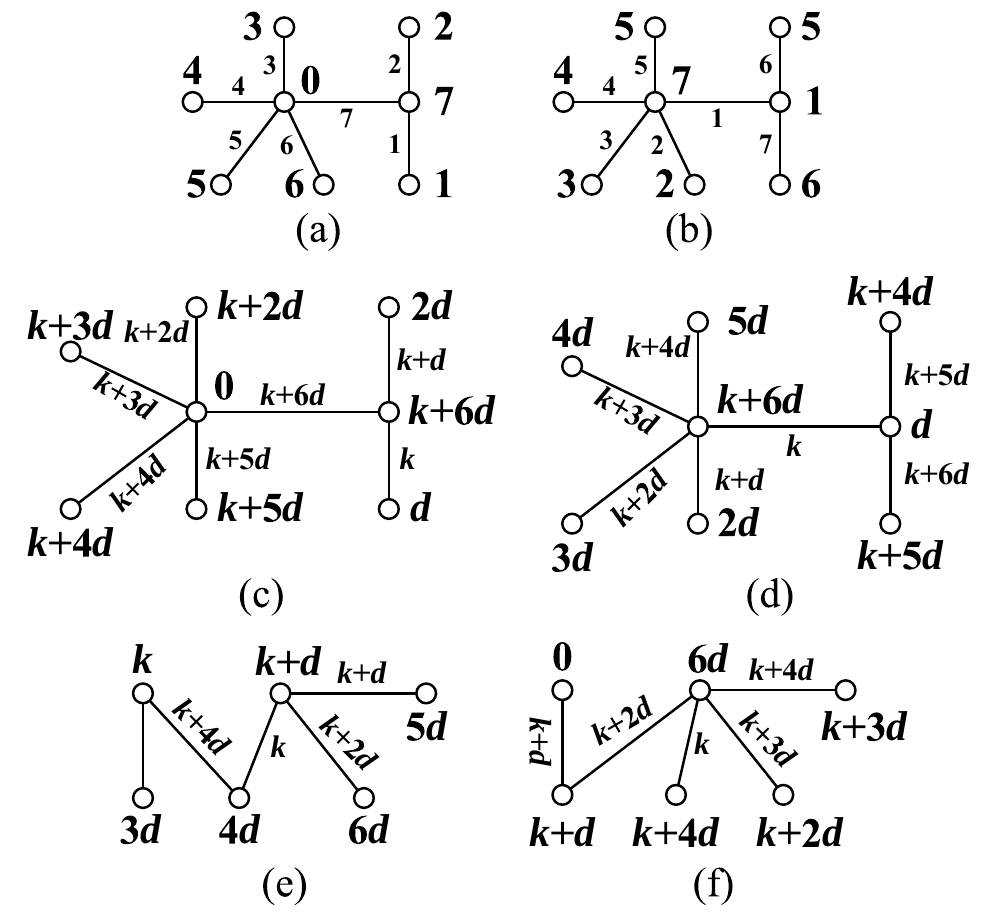}
\caption{\label{fig:group-group-matching}{\small A tree $T$ holds: (a) and (b) are a pair of harmonious image-labellings; (c) and (d) are a pair of $(k,d)$-harmonious image-labellings; (c) and (e) are a twin $(k,d)$-harmonious labellings; (d) and (f) are a twin $(k,d)$-harmonious labellings.}}
\end{figure}

\section{Other techniques for producing TB-paws}

\subsection{A new 6C-labelling, reciprocal-inverse labellings}

We introduce a new 6C-labelling, called \emph{odd-6C-labelling}, and two examples exhibited in Fig.\ref{fig:2-6C} are for understanding odd-6C-labellings.

\begin{defn}\label{defn:odd-6C-labelling}
$^*$ A $(p,q)$-graph $G$ admits a total labelling $f:V(G)\cup E(G)\rightarrow [1,4q-1]$. If this labelling $f$ holds:

(i) (e-magic) $f(uv)+|f(u)-f(v)|=k$, and $f(uv)$ is odd;

(ii) (ee-difference) each edge $uv$ matches with another edge $xy$ holding $f(uv)=2q+|f(x)-f(y)|$;

(iii) (ee-balanced) let $s(uv)=|f(u)-f(v)|-f(uv)$ for $uv\in E(G)$, then there exists a constant $k'$ such that each edge $uv$ matches with another edge $u'v'$ holding $s(uv)+s(u'v')=k'$ (or $(p+q+1)+s(uv)+s(u'v')=k'$) true;

(iv) (EV-ordered) $f_{\max}(V(G))<f_{\min}(E(G))$, and $\{|a-b|:a,b\in f(V(G))\}=[1,2q-1]$;

(v) (ve-matching) there exists two constant $k_1,k_2$ such that each edge $uv$ matches with one vertex $w$ such that $f(uv)+f(w)=k_1~(\textrm{or }k_2)$;

(vi) (set-ordered) $f_{\max}(X)<f_{\min}(Y)$ for the bipartition $(X,Y)$ of $V(G)$.

We call $f$ an \emph{odd-6C-labelling} of $G$.\qqed
\end{defn}

\begin{figure}[h]
\centering
\includegraphics[height=8.6cm]{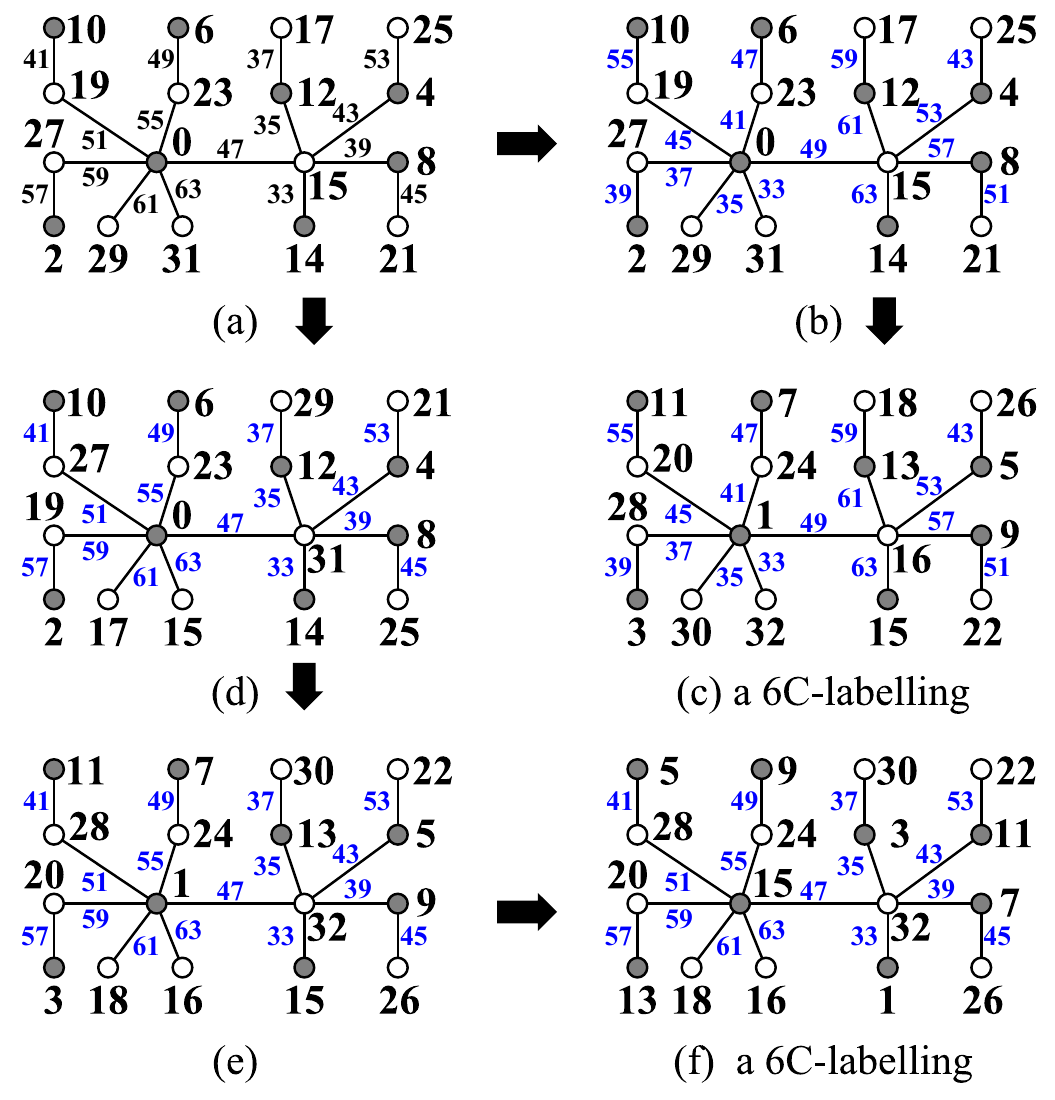}
\caption{\label{fig:2-6C}{\small The Topsnut-gpw (a) made by adding $32$ to each edge label of a Topsnut-gpw (c) shown in Fig.\ref{fig:graceful-mirror-labelling}. (a)$\rightarrow$(b)$\rightarrow$(c) is a procedure of obtaining a 6C-labellings from a pan-odd-graceful total labelling; (a)$\rightarrow$(d)$\rightarrow$(e)$\rightarrow$(f) is a procedure of obtaining another 6C-labellings from a pan-odd-graceful total labelling.}}
\end{figure}

By the way, we have discover Fig.\ref{fig:2-6C}(c) and Fig.\ref{fig:2-6C}(f) have their reciprocal-inverse matchings depicted in Fig.\ref{fig:odd-inverse-matching}.

\begin{figure}[h]
\centering
\includegraphics[height=6cm]{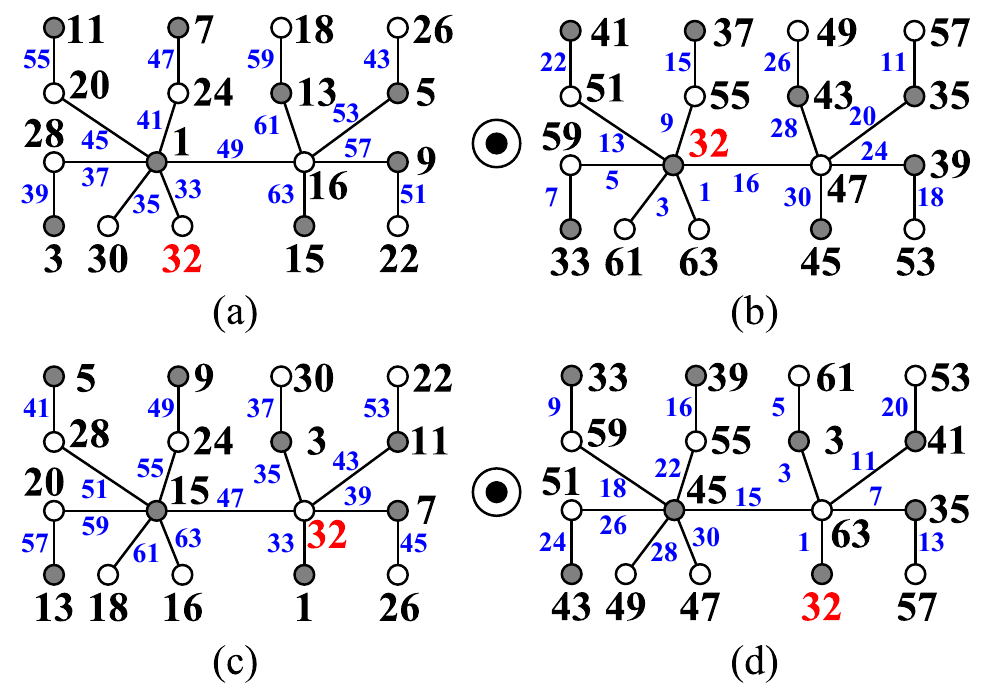}
\caption{\label{fig:odd-inverse-matching}{\small (a) and (b) are a reciprocal-inverse matching, where (a) is Fig.\ref{fig:2-6C}(c); (c) and (d) form a reciprocal-inverse matching, where (c) is Fig.\ref{fig:2-6C}(f).}}
\end{figure}

\begin{defn}\label{defn:6C-complementary-matching}
\cite{Yao-Sun-Zhang-Mu-Sun-Wang-Su-Zhang-Yang-Yang-2018arXiv}  For a given $(p,q)$-tree $G$ admitting a 6C-labelling $f$, and another $(p,q)$-tree $H$ admits a 6C-labelling $g$, if they hold $f(V(G))\setminus X^*=g(E(H))$, $f(E(G))=g(V(H))\setminus X^*$ and $f(V(G))\cap g(V(H))=X^*=\{z_0\}$ with $z_0=\lfloor \frac{p+q+1}{2}\rfloor $, then $f$ and $g$ are pairwise \emph{reciprocal-inverse}. The graph $\odot_1\langle G,H \rangle $ obtained by coinciding the vertex $x_0$ of $G$ having $f(x_0)=z_0$ with the vertex $w_0$ of $H$ having $g(w_0)=z_0 $ is called a \emph{6C-complementary matching}.\qqed
\end{defn}

\begin{defn}\label{defn:reciprocal-inverse}
\cite{Yao-Sun-Zhang-Mu-Sun-Wang-Su-Zhang-Yang-Yang-2018arXiv}  Suppose that a $(p,q)$-graph $G$ admits a total labelling $f:V(G)\cup E(G)\rightarrow [1,p+q]$, and a $(q,p)$-graph $H$ admits another total labelling $g:V(H)\cup E(H)\rightarrow [1,p+q]$. If $f(E(G))=g(V(H))\setminus X^*$ and $f(V(G))\setminus X^*=g(E(H))$ for $X^*=f(V(G))\cap g(V(H))$, then $f$ and $g$ are \emph{reciprocal-inverse} (or \emph{reciprocal complementary}) to each other, and $H$ (or $G$) is an \emph{inverse matching} of $G$ (or $H$).\qqed
\end{defn}

\begin{figure}[h]
\centering
\includegraphics[height=5.4cm]{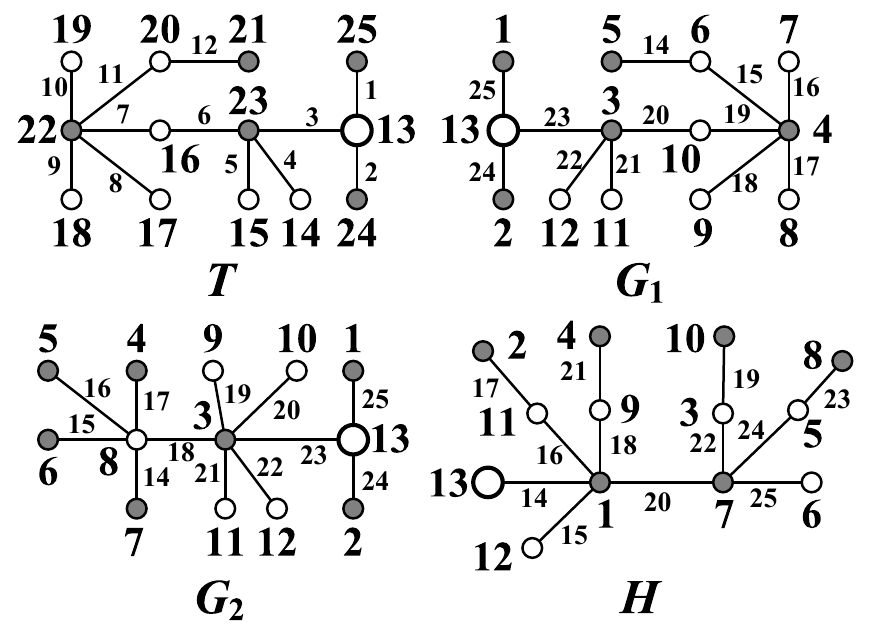}
\caption{\label{fig:2-more-matching}{\small A Topsnut-gpw $T$ depicted in Fig.\ref{fig:1-example}(a) has three inverse matchings $G_1,G_2$ and $H$, and there are three 6C-complementary matchings $\odot_1\langle G,H \rangle $, $\odot_1\langle G,G_1\rangle $ and $\odot_1\langle G,G_2\rangle $.}}
\end{figure}

\begin{thm}\label{thm:self-matching-6C-labelling}
For two reciprocal-inverse labellings $f$ and $g$ defined in Definition \ref{defn:6C-complementary-matching}, if $\{|a-b|:a,b\in f(V(G))\}=[1,q]$, then $\{|c-d|:c,d\in g(V(H))\}=[1,q]$.
\end{thm}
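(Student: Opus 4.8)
The plan is to reduce the statement to an elementary fact about sets of consecutive integers, after first pinning down the value of $z_0$ and the shape of the label sets. Since $G$ and $H$ are $(p,q)$-trees we have $p=q+1$, so the total labellings $f$ and $g$ are bijections onto $[1,p+q]=[1,2q+1]$, and $z_0=\lfloor\frac{p+q+1}{2}\rfloor=\lfloor\frac{2q+2}{2}\rfloor=q+1$, the central value of $[1,2q+1]$. Moreover $f(V(G))$ and $f(E(G))$ partition $[1,2q+1]$ with $|f(V(G))|=q+1$ and $|f(E(G))|=q$, and by Definition \ref{defn:6C-complementary-matching} we have $g(V(H))=f(E(G))\cup\{z_0\}$; this is a disjoint union, since $z_0\in f(V(G))$ is disjoint from $f(E(G))$.

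The key observation is an elementary lemma: a set $S$ of $q+1$ distinct integers satisfies $\{|a-b|:a,b\in S\}=[1,q]$ if and only if $S$ is an interval $[m,m+q]$ of $q+1$ consecutive integers. The ``if'' direction is immediate. For ``only if'', note that $q+1$ distinct integers whose pairwise differences never exceed $q$ must have $\max S-\min S\le q$; but $q+1$ distinct integers need a range of at least $q$, so $\max S-\min S=q$, and every integer of $[\min S,\max S]$ must occur. First I would prove this lemma and apply it to the hypothesis, concluding that $f(V(G))$ is an interval $[\alpha,\alpha+q]$ of consecutive integers that contains $z_0=q+1$.

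Next I would locate this interval using the EV-ordered clause (iv) of the 6C-labelling (Definition \ref{defn:6C-labelling}) satisfied by $f$. Because $f(V(G))$ and $f(E(G))$ are disjoint, the options $f(V(G))\subseteq f(E(G))$ and $f(E(G))\subseteq f(V(G))$ are impossible; and because an interval of $q+1\ge 2$ consecutive integers contains integers of both parities, the ``$f(V(G))$ odd-set, $f(E(G))$ even-set'' option is excluded by the hypothesis. Hence $f$ must satisfy one of the two ordering options, $f_{\max}(V(G))<f_{\min}(E(G))$ or $f_{\min}(V(G))>f_{\max}(E(G))$. Since $f(V(G))$ and $f(E(G))$ together exhaust $[1,2q+1]$, this forces $f(V(G))$ to be one of the two extreme intervals $[1,q+1]$ or $[q+1,2q+1]$ (both containing $z_0=q+1$, as required), with complementary edge set $f(E(G))=[q+2,2q+1]$ or $f(E(G))=[1,q]$ respectively.

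Finally, in the first case $g(V(H))=f(E(G))\cup\{z_0\}=[q+2,2q+1]\cup\{q+1\}=[q+1,2q+1]$, and in the second case $g(V(H))=[1,q]\cup\{q+1\}=[1,q+1]$; in both cases $g(V(H))$ is an interval of $q+1$ consecutive integers. Applying the ``if'' direction of the lemma to $g(V(H))$ then yields $\{|c-d|:c,d\in g(V(H))\}=[1,q]$, which is the desired conclusion. The only genuinely delicate step is the case analysis on clause (iv): one must verify that the difference-set hypothesis is incompatible with every EV-ordered option except the two ordering ones, so that the interval $f(V(G))$ is pushed to an extreme of $[1,2q+1]$ rather than sitting across its middle — for if $1<\alpha<q+1$, then $g(V(H))=f(E(G))\cup\{q+1\}$ would not be an interval and the conclusion would fail.
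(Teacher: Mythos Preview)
Your proof is correct. The paper states this theorem without proof, so there is nothing to compare against. Your argument is sound: the elementary lemma characterizing $(q+1)$-element integer sets with difference set exactly $[1,q]$ as intervals of consecutive integers, combined with the EV-ordered clause (iv) of the 6C-labelling, correctly forces $f(V(G))$ to be one of the two extreme intervals $[1,q+1]$ or $[q+1,2q+1]$ of $[1,2q+1]$. The case analysis on clause (iv) is handled properly---bijectivity of $f$ rules out the two containment options, and your interval lemma (consecutive integers of both parities) rules out the odd/even-set option---so $f(E(G))$ is the complementary extreme interval and $g(V(H))=f(E(G))\cup\{q+1\}$ is again an interval of $q+1$ consecutive integers, yielding the conclusion by the ``if'' direction of your lemma. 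The one small point you leave implicit, that $z_0=q+1\in f(V(G))$ (hence $z_0\notin f(E(G))$ and the union is disjoint), follows directly from $f(V(G))\cap g(V(H))=\{z_0\}$ in Definition~\ref{defn:6C-complementary-matching}.
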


\begin{thm}\label{thm:inverse-reciprocal}
If two trees of $p$ vertices admit set-ordered graceful labellings, then they are inverse matching to each other under the edge-magic graceful labellings.
\end{thm}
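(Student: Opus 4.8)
The plan is to realize the asserted inverse matching by an explicit pair of edge-magic graceful labellings: one that places its vertex labels ``high'' and its edge labels ``low'', and one that does the reverse, so that the two label sets interlock in the reciprocal-inverse pattern of Definition \ref{defn:6C-complementary-matching}, sharing the single value $z_0=\lfloor\frac{p+q+1}{2}\rfloor=p$ (here $q=p-1$).

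The key preliminary observation is that an edge-magic graceful labelling requires $|f(u)+f(v)-f(uv)|$ to be constant, which is a condition on the edge \emph{sums} $f(u)+f(v)$; for the induced edge labels to be distinct one needs these sums to be pairwise distinct. A set-ordered graceful labelling controls the edge \emph{differences}, not the sums, so I would not work with it directly. Instead I would invoke Theorem \ref{thm:connections-several-labellings}: since each tree $T_i$ ($i=1,2$) on $p$ vertices admits a set-ordered graceful labelling, it also admits a super edge-magic total labelling $\gamma_i:V(T_i)\to[1,p]$ with $\gamma_i(E(T_i))=[p+1,2p-1]$ and magic constant $k_i=|X_i|+2p+1$. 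The identity $\gamma_i(u)+\gamma_i(v)+\gamma_i(uv)=k_i$ together with $\gamma_i(E(T_i))=[p+1,2p-1]$ forces the edge sums $\gamma_i(u)+\gamma_i(v)=k_i-\gamma_i(uv)$ to run over the $q$ distinct consecutive integers $[k_i-2p+1,\,k_i-p-1]$. This distinctness is exactly the ingredient edge-magic gracefulness needs.

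With that in hand I would build the two labellings by shifting vertex labels and then reading edge labels off the sums. For $T_2$ (vertices low) set $F_2=\gamma_2$ on vertices, so $F_2(V(T_2))=[1,p]$, and put $F_2(uv)=F_2(u)+F_2(v)+c_2$ with $c_2=3p-k_2$; the range computation gives $F_2(E(T_2))=[p+1,2p-1]$ and $|F_2(u)+F_2(v)-F_2(uv)|=c_2$, so $F_2$ is edge-magic graceful. For $T_1$ (vertices high) reflect the vertices, $F_1(w)=2p-\gamma_1(w)$, so $F_1(V(T_1))=[p,2p-1]$ while the sums $F_1(u)+F_1(v)=4p-(\gamma_1(u)+\gamma_1(v))$ remain distinct; then put $F_1(uv)=F_1(u)+F_1(v)-c_1$ with $c_1=5p-k_1$, which yields $F_1(E(T_1))=[1,p-1]$ and a constant $|F_1(u)+F_1(v)-F_1(uv)|=c_1$. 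One checks $c_1,c_2\ge 0$ from $k_i\le 3p$.

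Finally I would verify the reciprocal-inverse conditions of Definition \ref{defn:6C-complementary-matching} directly from these ranges: $X^*=F_1(V(T_1))\cap F_2(V(T_2))=[p,2p-1]\cap[1,p]=\{p\}$, then $F_1(V(T_1))\setminus X^*=[p+1,2p-1]=F_2(E(T_2))$ and $F_1(E(T_1))=[1,p-1]=F_2(V(T_2))\setminus X^*$, which is precisely the statement that $T_1$ and $T_2$ are an inverse matching under edge-magic graceful labellings. The main obstacle is conceptual rather than computational: recognizing that set-ordered gracefulness is the ``wrong'' handle, since it governs differences, and that one must pass through the super edge-magic total labelling to secure distinct edge sums; once the sum structure is available, choosing the two opposite vertex shifts and the constants $c_1=5p-k_1$, $c_2=3p-k_2$ is routine range bookkeeping, and the symmetry of the hypothesis lets me assign the high/low roles to $T_1,T_2$ without loss of generality.
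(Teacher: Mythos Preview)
Your argument is correct, but it takes a different route from the paper. You detour through Theorem~\ref{thm:connections-several-labellings} to obtain a super edge-magic total labelling and then shift ranges; the paper instead works directly from the set-ordered graceful labelling $f_i$. The trick you thought was unavailable is this: with bipartition $(X_i,Y_i)$ and $f_i(x_{i,j})=j-1$, $f_i(y_{i,j})=s_i+j-1$, the paper sets $g_i(x_{i,j})=f_i(x_{i,j})+1$ but \emph{reflects} the $Y$-part, $g_i(y_{i,j})=f_i(y_{i,t_i-j+1})+1$, and shifts edges by $p$. That single reflection converts the difference structure of gracefulness into a constant-sum structure, giving $g_i(u)+g_i(uv)+g_i(v)=2p+s_i+1$ with $g_i(V(T_i))=[1,p]$, $g_i(E(T_i))=[p+1,2p-1]$; then one flip $h_2(w)=2p-g_2(w)$ produces the complementary ranges. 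So your remark that set-ordered gracefulness is the ``wrong handle'' and that one \emph{must} pass through the edge-magic total labelling overstates the obstacle: the direct construction is short and self-contained. Your approach buys modularity (you reuse an existing equivalence as a black box), while the paper's buys brevity and avoids the dependency on Theorem~\ref{thm:connections-several-labellings}. One small pointer: the set-interlocking you verify is exactly what the paper checks, but the relevant definition it invokes at the end is Definition~\ref{defn:reciprocal-inverse}, not Definition~\ref{defn:6C-complementary-matching} (the latter additionally demands 6C-labellings, which neither you nor the paper verifies here).
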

\begin{proof} By the hypothesis of the theorem, we have known that each tree $T_i$ of $p$ vertices admits a set-ordered graceful labelling $f_i$ and let $(X_i,Y_i)$ be the bipartition of $T_i$ with $i=1,2$. The definition of a set-ordered graceful labelling means $\max f_i(X_i)<\min f_i(Y_i)$ where $X_i=\{x_{i,j}:j\in [1,s_i]\}$ and $Y_i=\{y_{i,j}:j\in [1,t_i]\}$ as well as $s_i+t_i=p$ with $i=1,2$. Since each $f_i$ is a graceful labelling, so we set $f_i(x_{i,j})=j-1$ for $j\in [1,s_i]$, $f_i(y_{i,j})=s_i+j-1$ for $j\in [1,t_i]$ and $f_i(x_{i,s}y_{i,t})=f_i(y_{i,t})-f_i(x_{i,s})=s_i+t-s$ for each edge $x_{i,s}y_{i,t}\in E(T_i)$, and $f_i(E(T_i))=[1,p]$ with $i=1,2$.

We define $g_i(x_{i,j})=f_i(x_{i,j})+1$ for $x_{i,j}\in X_i$, $g_i(y_{i,j})=f_i(y_{i,t_i-j+1})+1$ for $y_{i,j}\in Y_i$, and $g_i(x_{i,s}y_{i,t})=f_i(x_{i,s}y_{i,t})+p$ for $x_{i,s}y_{i,t}\in E(T_i)$ with $i=1,2$. Notice that $f_i(y_{i,j})+f_i(y_{i,t_i-j+1})=2s_i+t_i-1=s_i+p-1$ with $i=1,2$. Thus,
$${
\begin{split}
&\quad g_i(x_{i,s})+g_i(x_{i,s}y_{i,t})+g_i(y_{i,t})\\
&=f_i(x_{i,s})+1+f_i(x_{i,s}y_{i,t})+p+f_i(y_{i,t_i-t+1})+1\\
&=f_i(x_{i,s})+f_i(x_{i,s}y_{i,t})+p+s_i+p-f_i(y_{i,t})+1\\
&=2p+s_i+1
\end{split}}
$$ with $i=1,2$, since $f_i(x_{i,s}y_{i,t})=f_i(y_{i,t})-f_i(x_{i,s})$. Thereby, each $g_i$ is an edge-magic graceful labelling, such that $g_i(V(T_i))=[1,p]$ and $g_i(E(T_i))=[p+1,2p-1]$ with $i=1,2$. Again, we define another labelling $h_2$ of $T_2$ as: $h_2(w)=2p-g_2(w)$ for $w\in V(T_2)\cup E(T_2)$. Hence, we have $h_2(E(T_2))=[1,p-1]$ and $h_2(V(T_2))=[p,2p-1]$, so $g_1(V(T_1))\setminus \{p\}=h_2(E(T_2))$, $g_1(E(T_1))=h_2(V(T_2))\setminus \{p\}$. By Definition \ref{defn:reciprocal-inverse}, $g_1$ and $h_2$ are reciprocal-inverse to each other.
\end{proof}

\subsection{Random Topsnut-sequences for encrypting large scale of files}

Let $T_{i+1}=T_i+L_i$ be a \emph{recursive tree}, where $L_i$ is a leaf set, $T_i+L_i$ is a result of adding randomly leaves of $L_i$ to the tree $T_i$. In other words, $T_i\subset T_{i+1}$, that is, $T_i$ is a subgraph of $T_{i+1}$. If each tree $T_i$ admits a set-ordered graceful labelling $f_i$ with $i\in [1,n]$, we say $\{T_i\}^n_1$ a \emph{set-ordered graceful recursive sequence}. If each tree $H_i$ is obtained by adding randomly leaves of a leaf set $L'_i$ to $T_i\in \{T_i\}^n_1$ with $i\in [1,n]$, we call $\{H_i\}^n_1$ a \emph{leaf-adding associated sequence}, $H_i$ a \emph{leaf-adding associated matching}. By \cite{Zhou-Yao-Chen-Tao2012} and \cite{Zhou-Yao-Chen-2013}, each $H_i\in \{H_i\}^n_1$ admits an odd-graceful labelling and an odd-elegant labelling.

As known, each recursive tree $T_i$ of $\{T_i\}^n_1$ induces a Topsnut-matrix $A_{vev}(T_i)$, and $A_{vev}(T_i)$ distributes a TB-paw $D(T_i)$, so $T_i$ is a public key; a leaf-adding associated matching $H_i$ of $T_i$ corresponds a Topsnut-matrix $A_{vev}(H_i)$, and $A_{vev}(H_i)$ induces a TB-paw $D(H_i)$, so $H_i$ can be considered as a private key. Thereby, we get a pair of matching TB-paws $D(T_i)$ and $D(H_i)$ with $i\in [1,n]$, moreover, two random TB-paw sequences $\{D(T_i)\}^n_1$ and $\{D(H_i)\}^n_1$ can be used to encrypt large scale of files, since $\{T_i\}^n_1$ and $\{H_i\}^n_1$ have random property.

\begin{thm}\label{thm:set-ordered-graceful-recursive-sequence}
If each tree $T_i$ is a caterpillar and $T_i\subset T_{i+1}$ with $i\in [1,n-1]$, then we have a set-ordered graceful recursive sequence $\{T_i\}^n_1$ and its leaf-adding associated sequence $\{H_i\}^n_1$ such that each each $H_i\in \{H_i\}^n_1$ admits an odd-graceful labelling and an odd-elegant labelling.
\end{thm}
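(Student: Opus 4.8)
The plan is to split the statement into two logically independent claims and treat them in order. \textbf{Claim 1:} every caterpillar carries a set-ordered graceful labelling, so that the nested family $\{T_i\}^n_1$ is a set-ordered graceful recursive sequence. \textbf{Claim 2:} each leaf-adding associated tree $H_i=T_i+L'_i$ is a lobster, and therefore admits both an odd-graceful and an odd-elegant labelling. The recursion $T_i\subset T_{i+1}$ enters only to guarantee that each member of the chain is itself a caterpillar, so both claims may be verified index by index.

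For Claim 1 I would use the classical gracefulness of caterpillars, but record it so that the set-ordered requirement (condition (\ref{Set-ordered}) of Definition \ref{defn:proper-bipartite-labelling-ongraphs}) is visibly met. Fix a caterpillar $T_i$ with $p$ vertices and $q=p-1$ edges, let $P=u_1u_2\cdots u_m$ be the spine obtained by deleting the leaves, and let $(X,Y)$ be the bipartition with $|X|=s$ and $|Y|=t$, $s+t=p$. Assign the labels $0,1,\dots,s-1$ to the vertices of $X$ and $s,s+1,\dots,p-1$ to the vertices of $Y$, ordering them consecutively along the spine and its attached leaves; the induced edge labels $|f(u)-f(v)|$ then exhaust $[1,q]$, so $f$ is graceful, while $\max f(X)=s-1<s=\min f(Y)$ gives $f(X)<f(Y)$. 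Hence each $T_i$ admits a set-ordered graceful labelling, which is exactly the defining property of a set-ordered graceful recursive sequence for the chain $T_1\subset T_2\subset\cdots\subset T_n$.

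For Claim 2 I would first convert the set-ordered graceful labelling of $T_i$ into a set-ordered odd-graceful labelling by the transformation displayed in Fig.\ref{fig:gracefui-to-odd-tu} (doubling labels on one colour class and shifting the other, which sends the edge spectrum to the odd set); this step is legitimate by the equivalences collected in Theorem \ref{thm:connections-several-labellings}. Since $H_i$ is formed by attaching leaves to the caterpillar $T_i$, deleting all leaves of $H_i$ returns a caterpillar, so $H_i$ is a lobster. I would then invoke Step 2 of the LOBSTER-algorithm of Theorem \ref{thm:lobster-neighbor-method} — equivalently, the results of \cite{Zhou-Yao-Chen-Tao2012} and \cite{Zhou-Yao-Chen-2013} — which relabel the added leaves and the old vertices so that every edge receives a distinct odd label and the edge set exhausts $[1,2q_i-1]^o$, producing the desired odd-graceful labelling of $H_i$. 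The odd-elegant case runs in parallel using the companion construction of \cite{Zhou-Yao-Chen-2013}.

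The hardest part is Claim 2, specifically verifying that the simultaneous relabelling of newly added leaves and previously labelled vertices keeps all vertex labels distinct while forcing the edge labels onto exactly the odd set $[1,2q_i-1]^o$ (respectively the odd-elegant spectrum) after arbitrarily many leaves have been attached to arbitrary vertices. This is the technical core of the lobster labelling theorems, and the set-ordered hypothesis on $T_i$ is precisely what makes the bookkeeping of Substeps 2.1--2.3 go through; I would therefore lean on those cited results rather than reprove the counting, and conclude that every $H_i$ in the leaf-adding associated sequence admits both an odd-graceful and an odd-elegant labelling.
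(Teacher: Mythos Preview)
Your proposal is correct and follows essentially the same approach as the paper: the paper does not give a separate proof of this theorem but relies on the discussion immediately preceding it, which (i) observes that caterpillars admit set-ordered graceful labellings, and (ii) invokes \cite{Zhou-Yao-Chen-Tao2012} and \cite{Zhou-Yao-Chen-2013} to conclude that each leaf-added tree $H_i$, being a lobster, admits odd-graceful and odd-elegant labellings. Your decomposition into Claims 1 and 2, your explicit set-ordered labelling of the caterpillar, and your appeal to the LOBSTER-algorithm of Theorem~\ref{thm:lobster-neighbor-method} simply make this reasoning explicit.
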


We define a parameter sequence
$$\{(k_i,d_i)\}^m_1=\{(k_1,d_1), (k_2,d_2),\dots ,(k_m,d_m)\}$$
and introduce a \emph{Topsnut-gpw sequence} $\{G_{(k_i,d_i)}\}^m_1$ made by an integer sequence $\{(k_i,d_i)\}^m_1$ and a $(p, q)$-graph $G$, where each Topsnut-gpw $G_{(k_i,d_i)}\cong G$. Let $$S(k_i,d_i)^q_1=\{k_i, k_i + d_i, \dots , k_i +(q-1)d_i\}$$ be a recursive set for integers $k_i\geq 1$, $d_i\geq 1$. Each Topsnut-gpw $G_{(k_i,d_i)}\in \{G_{(k_i,d_i)}\}^m_1$ admits one labelling of four parameter labellings defined in Definition \ref{defn:3-parameter-labellings}.

\begin{defn} \label{defn:3-parameter-labellings}
\cite{Gallian2016} (1) A \emph{$(k_i,d_i)$-graceful labelling} $f$ of $G_i$
hold $f(V(G_i))\subseteq [0, k_i + (q-1)d_i]$, $f(x)\neq f(y)$ for
distinct $x,y\in V(G_i)$ and $\pi(E(G_i))=\{|\pi(u)-\pi(v)|;\ uv\in
E(G_i)\}=S(k_i,d_i)^q_1$.

(2) A labeling $f$ of $G_i$ is said to be $(k_i,d_i)$-\emph{arithmetic} if $f(V(G_i))\subseteq [0, k_i+(q-1)d_i]$, $f(x)\neq f(y)$ for distinct $x,y\in
V(G_i)$ and $\{f(u)+f(v): uv\in E(G_i)\}=S(k_i,d_i)^q_1$.

(3) A $(k_i,d_i)$-\emph{edge antimagic total
labelling} $f$ of $G_i$ hold $f(V(G_i)\cup E(G_i))=[1,p+q]$ and
$\{f(u)+f(v)+f(uv): uv\in E(G_i)\}=S(k_i,d_i)^q_1$, and furthermore $f$ is
\emph{super} if $f(V(G_i))=[1,p]$.

(4) A \emph{$(k_i,d_i)$-harmonious labelling} of a $(p,q)$-graph $G_i$ is defined by a mapping $h:V(G)\rightarrow [0,k+(q-1)d_i]$ with $k_i,d_i\geq 1$, such that $f(x)\neq f(y)$ for any pair of vertices $x,y$ of $G$, $h(u)+h(v)(\bmod^*~qd_i)$ means that $h(uv)-k=[h(u)+h(v)-k](\bmod~qd_i)$ for each edge $uv\in E(G)$, and the edge label set $h(E(G))=\{k_i,k_i+d_i,\dots, k_i+(q-1)d_i\}$ holds true.\qqed
\end{defn}

\begin{figure}[h]
\centering
\includegraphics[height=5cm]{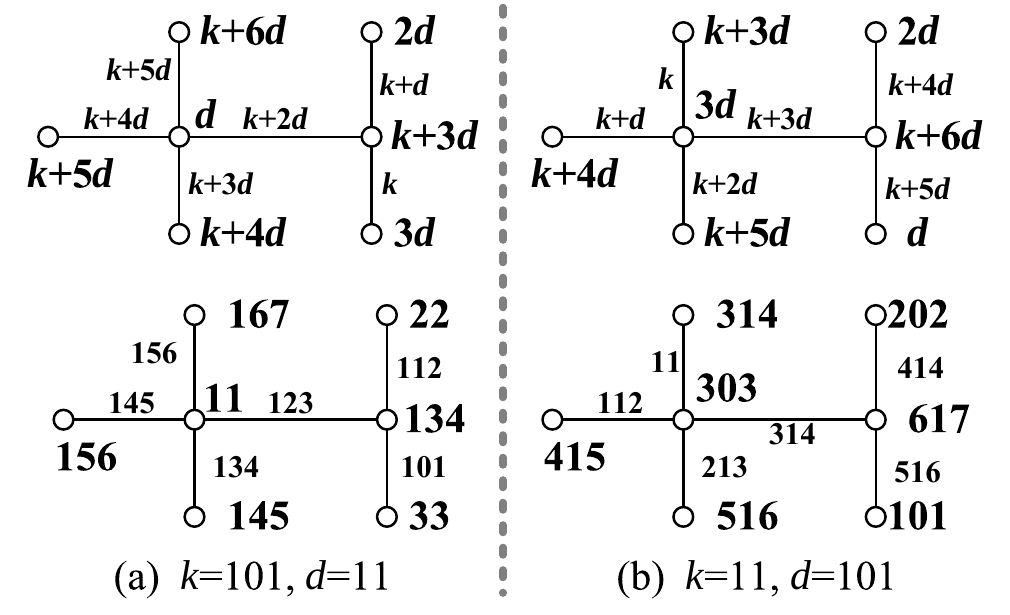}
\caption{\label{fig:k-d-sequence-11}{\small Based on Fig.\ref{fig:image-labelling-2}(g), there are: (a) A $(101,11)$-graceful image-labellings; (b) a $(11,101)$-graceful image-labellings.}}
\end{figure}

Fig.\ref{fig:k-d-sequence-11}(a) distributes a TB-paw
$${
\begin{split}
D_a=&11123134134145145156156\\
&16711134123111122210133
\end{split}}
$$
and another TB-paw
$${
\begin{split}
D_b=&3031131411241521351631461\\
&7303617516101414202314303.
\end{split}}
$$

We can make TB-paws, like $D_a$ and $D_b$, having bytes as more as we desired.

The complex of a Topsnut-gpw sequence $\{G_{(k_i,d_i)}\}^m_1$ is:

(i) $\{(k_i,d_i)\}^m_1$ is a random sequence or a sequence with many restrictions.

(ii) $G_{(k_i,d_i)}\cong G$ is a regularity.

(iii) Each $G_{(k_i,d_i)}\in \{G_{(k_i,d_i)}\}^m_1$ admits randomly one labelling in Definition \ref{defn:3-parameter-labellings}.

(iv) Each $G_{(k_i,d_i)}\in \{G_{(k_i,d_i)}\}^m_1$ has its matching $H_{(k_i,d_i)}\in \{H_{(k_i,d_i)}\}^m_1$ under the meaning of image-labelling, inverse labelling and twin labelling, and so on.

\vskip 0.4cm

\emph{Applying Topsnut-gpw sequences in encrypting graphs/networks.} In the subsection of ``Graphs labelled by every-zero graphic groups'', we have proposed a new topic of encrypting graphs (or networks, dynamic networks). Encrypting graphs/networks can be related with Topsnut-gpw sequences $\{G_{(k_i,d_i)}\}^m_1$.
\begin{defn} \label{defn:Topsnut-gpw-sequences-graph-labellings}
Let $\{(k_i,d_i)\}^m_1$ be a sequence with integers $k_i\geq 0$ and $d_i\geq 1$, and $G$ be a $(p,q)$-graph with $p\geq 2$ and $q\geq 1$. We define a labelling $F:V(G)\rightarrow \{G_{(k_i,d_i)}\}^m_1$, and $F(u_iv_j)=(|k_i-k_j|, ~d_i+d_j~(\textrm{mod}~M))$ with $F(u_i)=G_{(k_i,d_i)}$ and $F(v_j)=G_{(k_j,d_j)}$ for each edge $u_iv_j\in E(G)$. Then

(1) If $\{|k_i-k_j|:~u_iv_j\in E(G)\}=[1,2q-1]^o$ and $\{d_i+d_j~(\textrm{mod}~M)):~u_iv_j\in E(G)\}=[0,2q-3]^o$, we call $F$ a \emph{twin odd-type graph-labelling} of $G$.

(2) If $\{|k_i-k_j|:~u_iv_j\in E(G)\}=[1,q]$ and $\{d_i+d_j~(\textrm{mod}~M)):~u_iv_j\in E(G)\}=[0,2q-3]^o$, we call $F$ a \emph{graceful odd-elegant graph-labelling} of $G$.

(3) If $\{|k_i-k_j|:~u_iv_j\in E(G)\}$ and $\{d_i+d_j~(\textrm{mod}~M)):~u_iv_j\in E(G)\}$ are generalized Fibonacci sequences, we call $F$ a \emph{twin Fibonacci-type graph-labelling} of $G$.\qqed
\end{defn}

Clearly, we can define more types Topsnut-gpw sequence graph-labelling for the requirements of real application.

\subsection{Twin matchings}

A phenomenon about twin labellings was proposed and discussed in \cite{Wang-Xu-Yao-2017-Twin}, that is, the twin odd-graceful labellings are natural-inspired as keys and locks. In fact, each type of twin labellings can be considered as a matching. We have other twin labellings, such as image-labellings, inverse labellings. We view many examples for twin labellings, and want to discover that twin labellings have some properties like quantum entanglement.

\begin{defn}\label{defn:twin-labellings}
$^*$ Suppose $f:V(G)\rightarrow [0,2q-1]$ is an odd-graceful labelling of a $(p,q)$-graph $G$ and $g:V(H)\rightarrow [1,2q]$ is a labelling of another $(p',q')$-graph $H$ such that each edge $uv\in E(H)$ has its own label defined as $h(uv)=|h(u)-h(v)|$ and the edge label set $f(E(H))=[1,2q-1]^o$. We say $(f,g)$ to be a \emph{twin odd-graceful labellings}, $H$ a \emph{twin odd-graceful matching} of $G$. \qqed
\end{defn}

We point out that Definition \ref{defn:twin-labellings} contains the definition of twin odd-graceful labellings defined in \cite{Wang-Xu-Yao-2017-Twin}, since we consider the case of non-tree bipartite graphs having twin odd-graceful labellings. If $f(V(G))\cap f(V(H))\neq \emptyset$ in Definition \ref{defn:twin-labellings}, we coincide the vertex $x$ of $G$ having $f(x)=g(y)$ with the vertex $y$ of $H$ into one, until the resulting graph has no two vertices being labelled with the same integer. We denote this graph as $G\odot H$. Clearly, the edges of $G\odot H$ are labelled by two groups of $1,3,5,\dots, 2q-1$. We are interesting on looking for all twin odd-graceful matchings of $G$. It is not hard to see that if $G$ admits different odd-graceful labellings $f_1,f_2,\dots ,f_m$, each $f_i$ may induce twin odd-graceful matchings $H_{i,1},H_{i,2},\dots ,H_{i,m_i}$ of $G$ with $i\in [1,m]$ (see Fig.\ref{fig:looking-twin-odd-graceful-1} and Fig.\ref{fig:looking-twin-odd-graceful-2}).

\begin{lem}\label{thm:tree-twin-odd-graceful}
Suppose that each tree $T_i$ of $p$ vertices admits a set-ordered odd-graceful labelling $f_i$ with $i=1,2$. If $f_1(V(T_1))=f_2(V(T_2))$, then $T_i$ is a twin odd-graceful (resp. odd-elegant) matching of $T_{3-i}$ with $i=1,2$.
\end{lem}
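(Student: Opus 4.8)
The plan is to exploit the set-ordered structure to pin down the vertex labels, and then produce the twin matching by a single uniform shift of one labelling. First I would record the facts forced by the hypotheses: each $T_i$ is a tree, so $q=p-1$, and a set-ordered odd-graceful labelling $f_i$ has $f_i(V(T_i))\subseteq[0,2q-1]$ with $\min f_i(V(T_i))=0$, $f_i(E(T_i))=[1,2q-1]^o$, every edge joining $X_i$ to $Y_i$, and $\max f_i(X_i)<\min f_i(Y_i)$. Since each edge label $|f_i(x)-f_i(y)|$ is odd, the two ends of every edge carry opposite parities; as all edges run between $X_i$ and $Y_i$, this forces $f_i(X_i)$ to be exactly the even elements of $f_i(V(T_i))$ and $f_i(Y_i)$ the odd elements. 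Feeding in the hypothesis $f_1(V(T_1))=f_2(V(T_2))=:L$, the even/odd split of $L$ is intrinsic to $L$, so $|X_1|=|X_2|$ and $|Y_1|=|Y_2|$; moreover, since some edge carries the label $1$, one gets $\min(\text{odds of }L)=\max(\text{evens of }L)+1$. Thus the two trees sit on the identical vertex-label set with the same bipartition sizes.

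With this alignment in hand I would construct the matching directly. To show $T_2$ is a twin odd-graceful matching of $T_1$, keep $f_1$ as the odd-graceful labelling of $T_1$ and define $g:V(T_2)\to[1,2q]$ by $g(v)=f_2(v)+1$. Then $g(V(T_2))=L+1\subseteq[1,2q]$, the labels stay pairwise distinct, and for every edge $uv$ one has $g(uv)=|g(u)-g(v)|=|f_2(u)-f_2(v)|=f_2(uv)$, so $g(E(T_2))=[1,2q-1]^o$. By Definition \ref{defn:twin-labellings} the pair $(f_1,g)$ is a twin odd-graceful labelling, hence $T_2$ is a twin odd-graceful matching of $T_1$; interchanging the indices (now shifting $f_1$) yields the reverse direction. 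Here the hypothesis does genuine work: because $g(V(T_2))=f_1(V(T_1))+1$, the computation above shows $L\cap(L+1)=\{\min(\text{odds of }L)\}$ is a single value, so coinciding equally-labelled vertices identifies $T_1$ and $T_2$ at exactly one point, producing a combined graph $T_1\odot T_2$ whose edges carry precisely two copies of $[1,2q-1]^o$, as a genuine matching requires.

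For the odd-elegant assertion I would run the same shift against the sum-rule edge labelling. First one needs each $T_i$ to admit a set-ordered \emph{odd-elegant} labelling on the same vertex-label set $L$, with $f_i(uv)=f_i(u)+f_i(v)\pmod{2q}$ and $f_i(E(T_i))=[1,2q-1]^o$; this I would secure from the passage between set-ordered labellings (cf.\ Theorem \ref{thm:connections-several-labellings} and the transformation technique illustrated in Fig.\ref{fig:gracefui-to-odd-tu}). Once these are available, setting $g=f_2+1$ again keeps $g$ injective with $g(V(T_2))\subseteq[1,2q]$, while each edge sum changes by $g(u)+g(v)=f_2(u)+f_2(v)+2\pmod{2q}$. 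Since adding $2$ modulo $2q$ cyclically permutes the odd residues $\{1,3,\dots,2q-1\}$ among themselves, the edge-label set is preserved, so $(f_1,g)$ is a twin odd-elegant labelling and the same two-directional conclusion follows.

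The routine checks (distinctness of labels, the two range inclusions) are immediate, so the substance lies in two places. The main obstacle is the odd-elegant half: the lemma hands us only set-ordered odd-graceful labellings, so I must first produce matching set-ordered odd-elegant labellings on the \emph{same} vertex-label set before the shift can be applied — note that the given labelling itself need not be odd-elegant, since the sum rule can repeat values where the difference rule does not. Confirming the modular invariance $\{1,3,\dots,2q-1\}+2\equiv\{1,3,\dots,2q-1\}\pmod{2q}$ is then the key stabilizing step. A secondary but essential point is the parity argument forcing $f_i(X_i)$ and $f_i(Y_i)$ to be the even and odd parts of $L$, since this is exactly what converts the set equality $f_1(V(T_1))=f_2(V(T_2))$ into structural alignment of the two bipartitions and pins down the single coincidence vertex of $T_1\odot T_2$.
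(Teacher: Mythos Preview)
Your approach is essentially the paper's: both use the parity argument to show that $f_i(X_i)$ and $f_i(Y_i)$ are the even and odd parts of the common label set, so the bipartitions align, and then take the second labelling itself as the twin partner. The only substantive difference is your $+1$ shift: the paper sets $g(w)=f_2(w)$ and then asserts $g(V(T_2))\subset[1,2(p-1)]$, which is not literally true since $0\in f_2(V(T_2))$; your shift $g=f_2+1$ repairs this range issue while leaving all edge differences unchanged. For the odd-elegant half the paper simply declares the proof ``similar'' and omits it, so your more explicit discussion --- including the flagged need to first pass to set-ordered odd-elegant labellings on the same vertex-label set before shifting --- goes beyond what the paper supplies; the concern you raise there is legitimate but is not something the paper resolves either.
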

\begin{proof} Let $(X_i,Y_i)$ be the bipartition of $V(T_i)$ with $i=1,2$. So, each label of $f_i(X_i)$ is even, and each label of $f_i(Y_i)$ is odd, and $\max f_i(X_i)<\min f_i(Y_i)$ for $i=1,2$. From $f_1(V(T_1))=f_2(V(T_2))$ and parity, we have $f_1(X_1)=f_2(X_2)$, $f_1(Y_1)=f_2(Y_2)$. We define another labelling $g$ of $T_2$ as: $g(w)=f_2(w)$ for $w\in V(T_2)$, so $g(uv)=|g(u)-g(v)|=|f_2(u)-f_2(v)|$ for each edge $uv\in E(T_2)$. We can see $g(V(T_2))\subset [1, 2|E(T_2)|]=[1,2(p-1)]$, $g(E(T_2))=[1,2(p-1)]^o$. Thereby, $T_2$ with the set-ordered odd-graceful labelling $g$ is just a twin odd-graceful matching of $T_1$.

The above proof, also, show that $T_i$ is really a twin odd-graceful matching of $T_{3-i}$ with $i=1,2$.

Since the proof for odd-elegant matching is similar with above proof, we omit it here.
\end{proof}

Lemma \ref{thm:tree-twin-odd-graceful} implies that any tree admitting a set-ordered odd-graceful labelling is a twin odd-graceful matching of itself (see Fig.\ref{fig:looking-twin-odd-graceful-0} (a) and (b)). In general, a tree admitting a set-ordered odd-graceful labelling may have two or more twin odd-graceful matchings in trees, or non-tree graphs, or disconnected graphs (see Fig.\ref{fig:looking-twin-odd-graceful-0} (c) and (d)). Fig.\ref{fig:looking-twin-odd-graceful-0} distributes three twin odd-graceful matchings $T\odot T_i$ with $i=1,2,3$, after coinciding two vertices labbelled with 15 into one.

\begin{figure}[h]
\centering
\includegraphics[height=5.6cm]{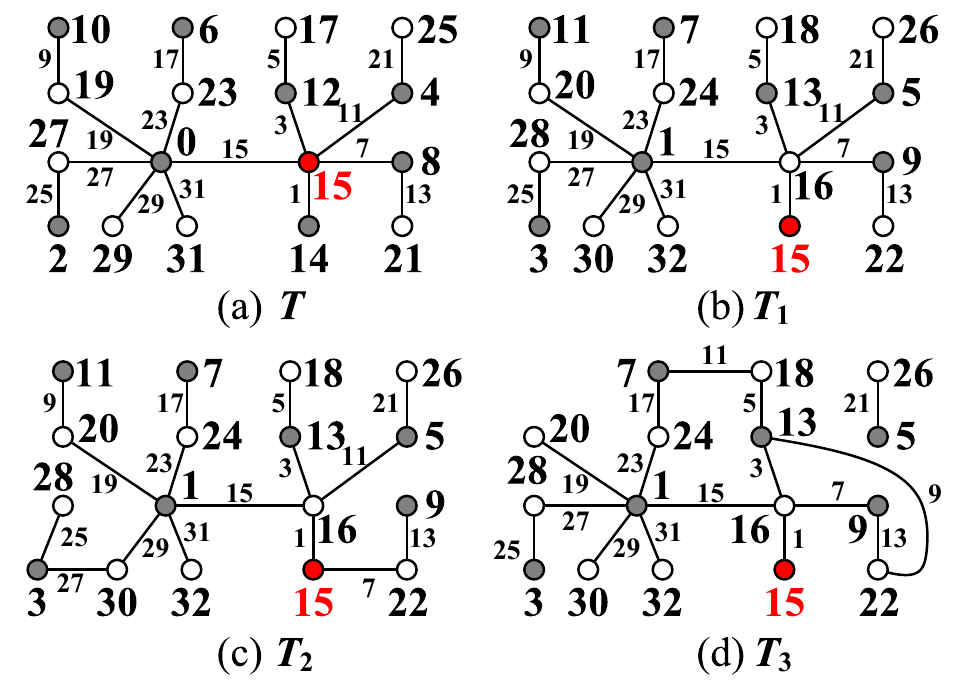}
\caption{\label{fig:looking-twin-odd-graceful-0}{\small (a) A tree $T$ admits a set-ordered odd-graceful labelling; (b) $T$ is a twin odd-graceful matching of itself by Lemma \ref{thm:tree-twin-odd-graceful}; (c) another twin odd-graceful matching of $T$; (d) a disconnected twin odd-graceful matching of $T$.}}
\end{figure}

If $G$ is a non-tree $(p,q)$-graph admitting an odd-graceful labelling, we do not have some efficient methods for determining twin odd-graceful matchings of $G$. We present some examples depicted in Fig.\ref{fig:looking-twin-odd-graceful-1}. Notice that two odd-graceful $(7,7)$-graphs $G_1$ and $G_2$ have their twin odd-graceful matchings with $H_{1j}\cong H_{2j}$ for $j\in [1,6]$, in other words, the twin odd-graceful matchings of $G_1$ and $G_2$ keep isomorphic configuration, so $G_1$ and $G_2$ are \emph{twisted} under the isomorphic configuration of their own twin odd-graceful matchings. However, the twin odd-graceful matching $H_{3j}~(j\in [1,6])$ of the odd-graceful $(7,7)$-graph $G_3$ shown in Fig.\ref{fig:looking-twin-odd-graceful-2} are not isomorphic to that of $G_1$ and $G_2$ appeared in Fig.\ref{fig:looking-twin-odd-graceful-1}. The above examples tell us that finding twin odd-graceful matchings of a non-tree graph is not a slight work.

\begin{figure}[h]
\centering
\includegraphics[height=7.8cm]{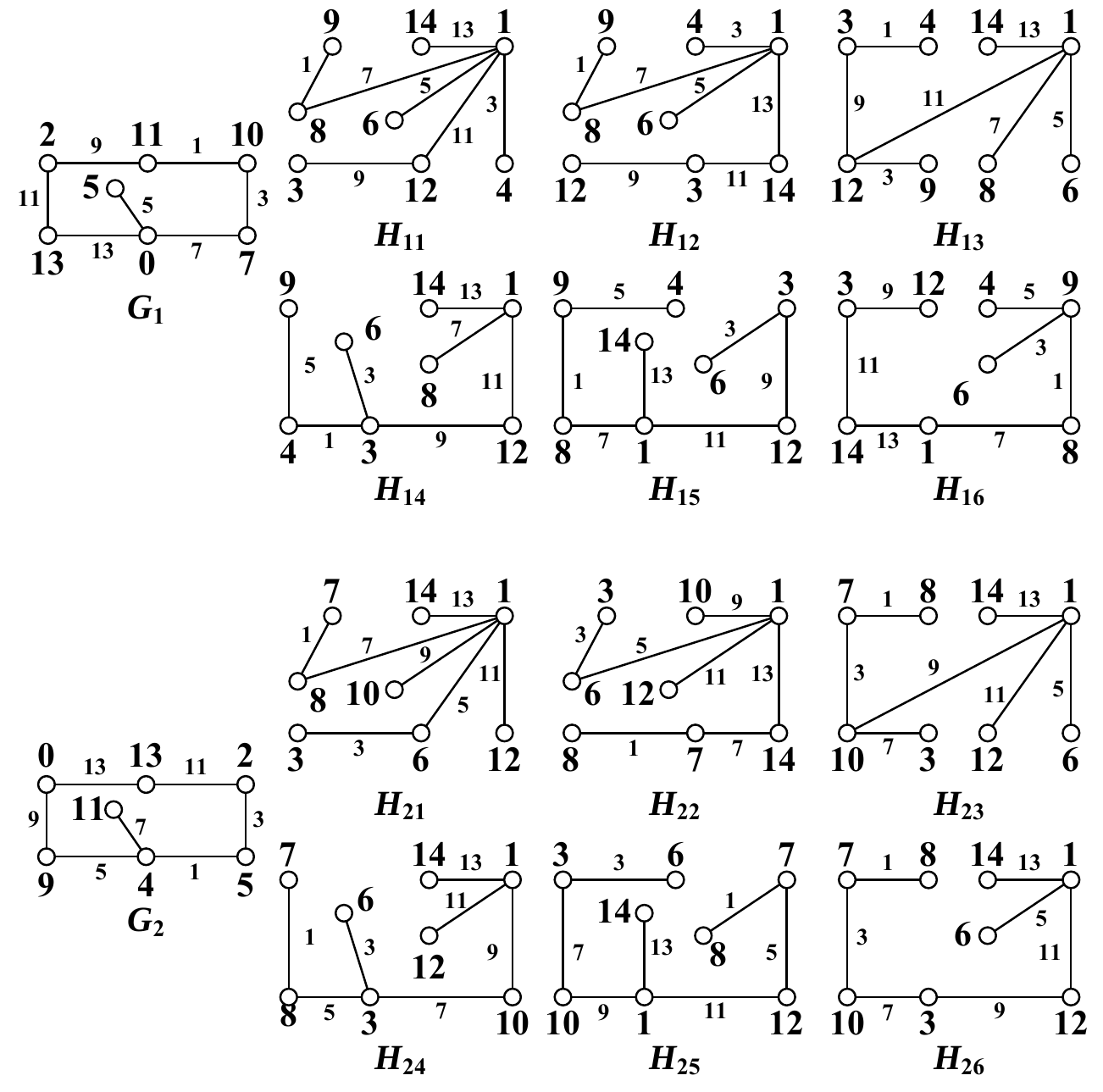}
\caption{\label{fig:looking-twin-odd-graceful-1}{\small Each odd-graceful $(7,7)$-graph $G_i$ has its own twin odd-graceful matching $H_{ij}$ for $j\in [1,6]$ and $i=1,2$.}}
\end{figure}

\begin{figure}[h]
\centering
\includegraphics[height=4cm]{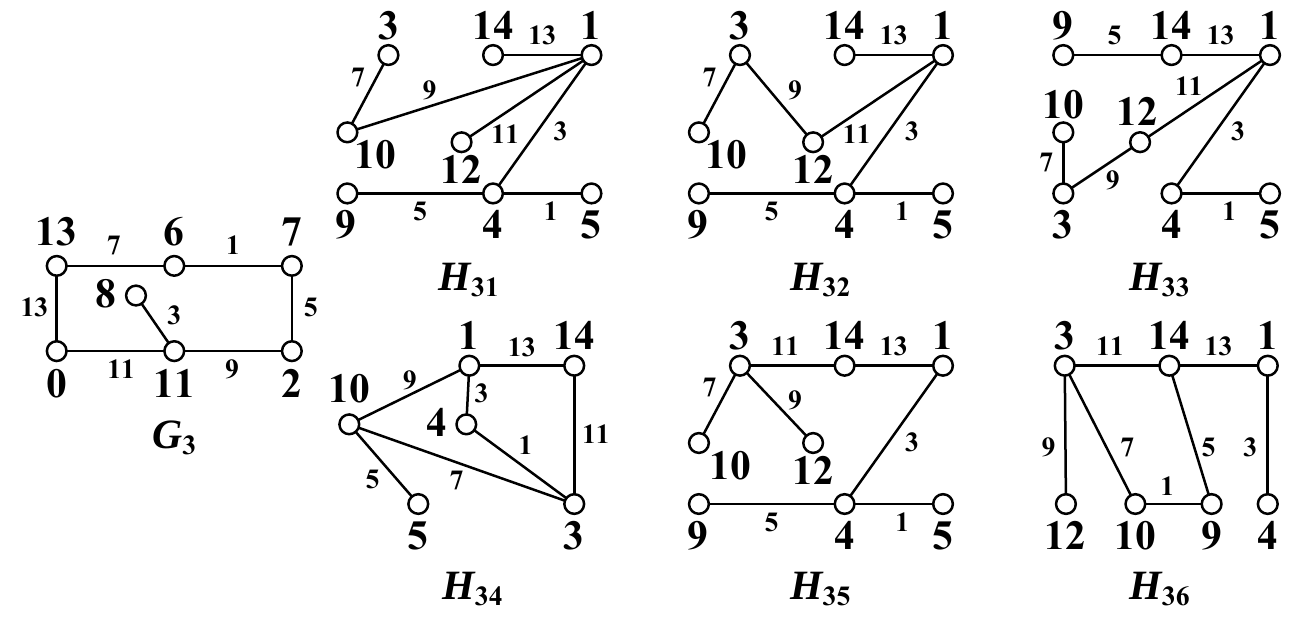}
\caption{\label{fig:looking-twin-odd-graceful-2}{\small The odd-graceful $(7,7)$-graph $G_3$ has its own twin odd-graceful matching $H_{3j}$ for $j\in [1,6]$ that differ from those $H_{ij}$ shown in Fig.\ref{fig:looking-twin-odd-graceful-1} under configuration meaning.}}
\end{figure}

\subsection{Every-zero Topsnut-matrix groups}

An every-zero Topsnut-matrix group $M_{n}(A(G),f)$ of a $(p,q)$-graph $G$ is made by a Topsnut-matrix $A_1(G)=(X~W~Y)^{-1}$ of $G$ in the way: $A_i(G)=(X_i~W~Y_i)^{-1}$ with $i\in [1,n]$, where $X_i=(x_1+(i-1),~x_2+(i-1),~\cdots ~,x_q+(i-1))~(\textrm{mod}~n)$ and $Y_i=(y_1+(i-1),~y_2+(i-1),~\cdots ~,x_y+(i-1))~(\textrm{mod}~n)$, and $W=(e_1,~e_2,~\cdots ~,e_q)$. An example is shown in Fig.\ref{fig:Topsnut-matrix-group}. We define an operation $\oplus$ in the form $A_i(G)\oplus A_j(G)$, where $A_i(G), A_j(G)\in M_{n}(A(G),f)$. The addition operation
\begin{equation}\label{eqa:Topsnut-matrix-group11}
A_i(G)\oplus A_j(G)=A_{i+j-k~(\textrm{mod}~n)}(G)
\end{equation}under a zero $A_k(G)\in M_{n}(A(G),f)$ means that $X_i\oplus X_j$ and $Y_i\oplus Y_j$ defined by
\begin{equation}\label{eqa:Topsnut-matrix-group22}
x_i+x_j-x_k=x_{i+j-k~(\textrm{mod}~n)}
\end{equation}
and
\begin{equation}\label{eqa:Topsnut-matrix-group33}
y_i+y_j-y_k=y_{i+j-k~(\textrm{mod}~n)}.
\end{equation}
By (\ref{eqa:Topsnut-matrix-group11}), (\ref{eqa:Topsnut-matrix-group22}) and (\ref{eqa:Topsnut-matrix-group33}), we can prove that $M_{n}(A(G),f)$ is really an every-zero group, we call it an \emph{every-zero Topsnut-matrix group}.

\begin{figure}[h]
\centering
\includegraphics[height=10cm]{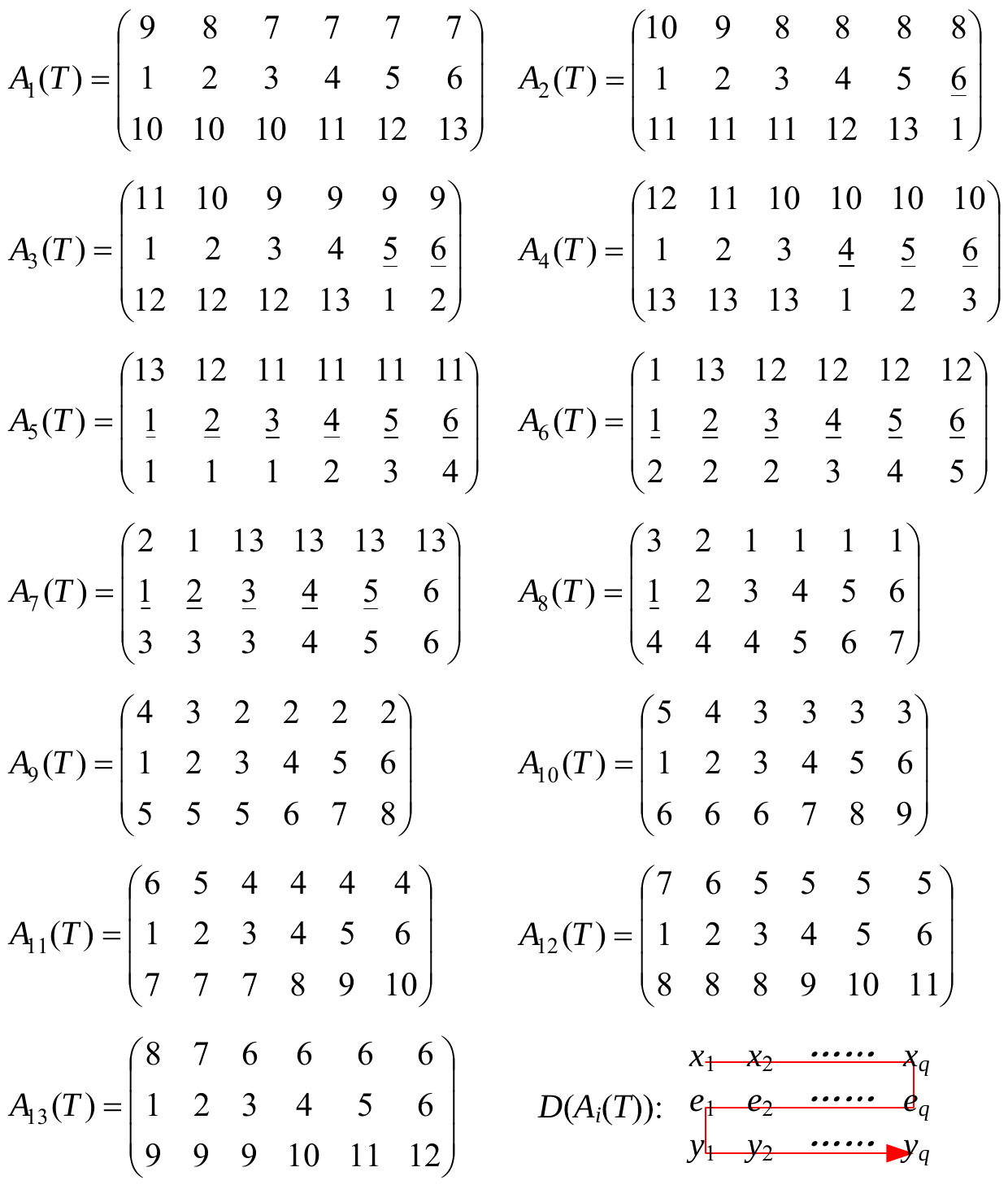}
\caption{\label{fig:Topsnut-matrix-group}{\small An every-zero Topsnut-matrix group $M_{13}(A(T),f)$.}}
\end{figure}

Since there are $N_{tbp}(G)=3 q(1+3q)\cdot q!\cdot 2^{q-1}$ every-zero Topsnut-matrix groups in total, we can use them to encrypt communities of a dynamic network or networks. According to $D(A_i(T))$ shown in Fig.\ref{fig:Topsnut-matrix-group}, we get an \emph{every-zero TB-paw group} $D_{13}(T,f)=\{D(A_i(T)):i\in [1,13]\}$, where each $D(A_i(T))$ is a vev-type TB-paw made by $A_i(T)$ by a fixed method (red narrow-line) shown in Fig.\ref{fig:Topsnut-matrix-group}.

\subsection{Matchings in graphic groups}

We present: ``\emph{an every-zero odd-graceful graphic group $F_n(H,f)=\{H_i\}^n_1$ matches with another odd-graceful graphic group $F_n(L,h)=\{L_i\}^n_1$ if $(H_i,L_i)$ is a twin odd-graceful matching, and $(f_i,h_i)$ is a twin odd-graceful labellings, where $H_i$ admits an odd-graceful labelling $f_i$, and $L_i$ admits pan-odd-graceful labelling $h_i$.}'' An example is shown in Fig.\ref{fig:odd-graceful-group} and Fig.\ref{fig:odd-matching-graphic-group}: an every-zero odd-graceful graphic group $F_{14}(H,f)=\{H_i\}^n_1$ shown in Fig.\ref{fig:odd-graceful-group} matches with another every-zero odd-graceful graphic group $F_n(L,h)=\{L_i\}^n_1$ shown in Fig.\ref{fig:odd-matching-graphic-group}, since $\odot \langle H_i,L_i\rangle$ is a twin odd-graceful matching with $i\in [1,14]$, in which $\odot \langle H_1,L_1\rangle$ is disconnected, and others $\odot \langle H_i,L_i\rangle$ are connected. We can consider $F_{14}(H,f)=\{H_i\}^n_1$ as a group of public keys, and $F_n(L,h)=\{L_i\}^n_1$ as a group of private keys. Correspondingly, Topsnut-matrices $A(H_i)$, $A(L_i)$ and $A(\odot \langle H_i,L_i\rangle)$ distribute three TB-paws $D(H_i)$ (as a public key), $D(L_i)$ (as a private key) and $D(\odot \langle H_i,L_i\rangle)$ (as an authentication).

\begin{figure}[h]
\centering
\includegraphics[height=10cm]{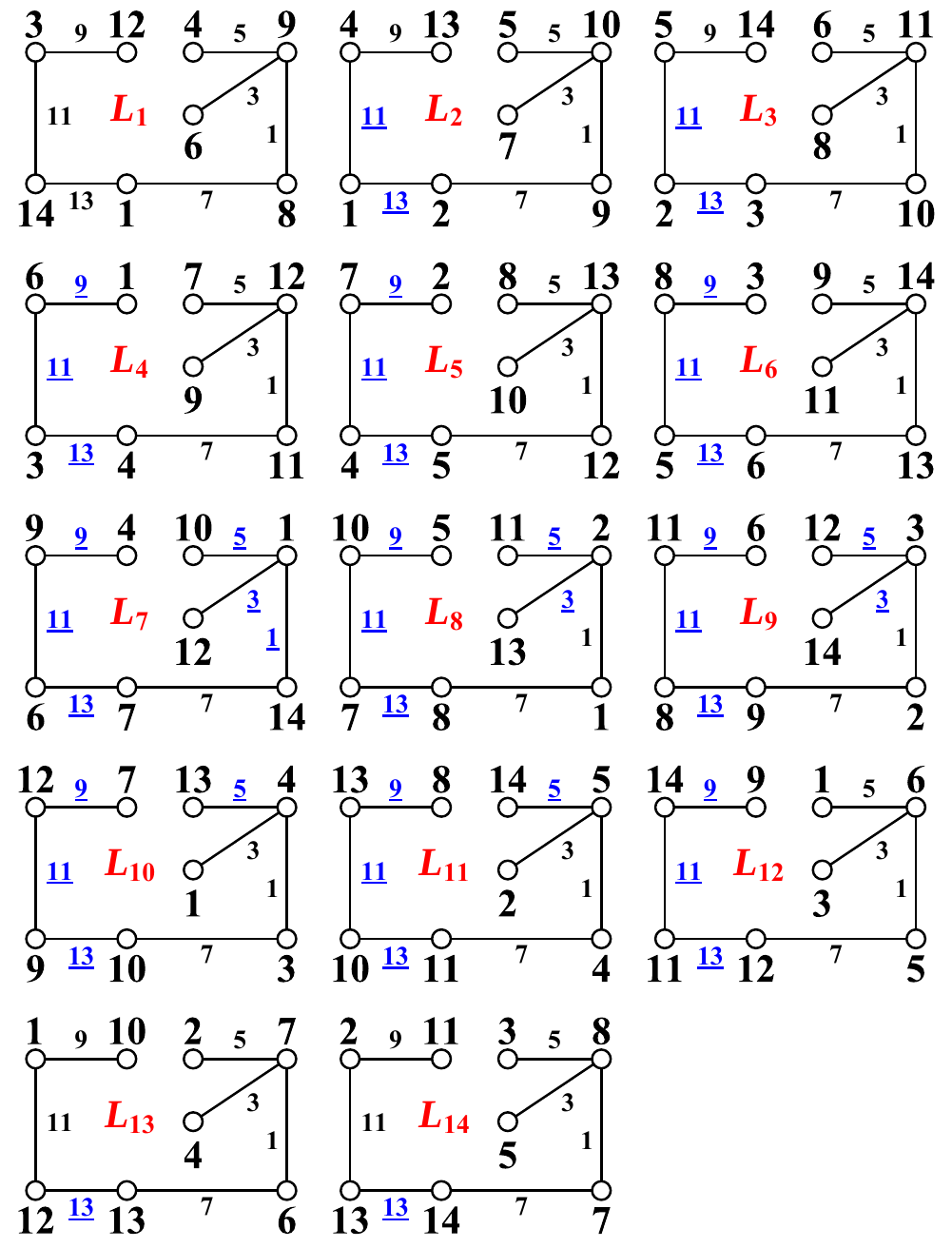}
\caption{\label{fig:odd-matching-graphic-group}{\small An every-zero graphic group $F_{14}(L,h)$ matches with the every-zero graphic group shown in Fig.\ref{fig:odd-graceful-group}, where $L_1=H_{16}$ shown in Fig.\ref{fig:looking-twin-odd-graceful-1}.}}
\end{figure}

Moreover, we have discovered that encrypting a network $T$ by $F_{14}(H,f)=\{H_i\}^{14}_1$ and $F_{14}(L,h)=\{L_i\}^{14}_1$, respectively, the results are the same, see Fig.\ref{fig:group-group-matching}.

\begin{figure}[h]
\centering
\includegraphics[height=2.6cm]{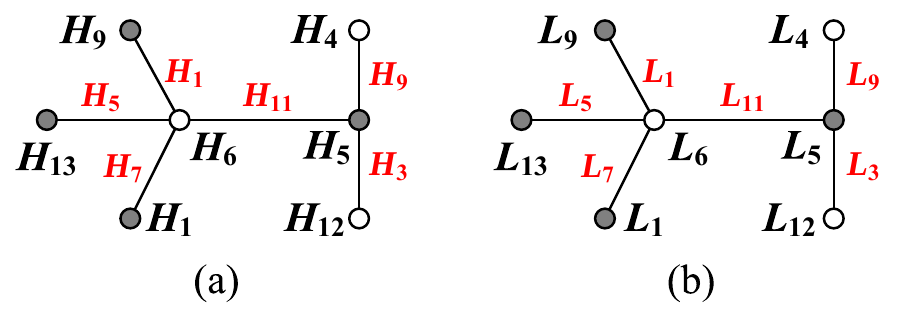}
\caption{\label{fig:group-group-matching}{\small A tree-like network $T$ was encrypted by two every-zero graphic groups $F_{14}(H,f)$ and $F_{14}(L,h)=\{L_i\}^{14}_1$: (a) $N_{et}(T,F_{14}(H,f))$; (b) $N_{et}(T,F_{14}(L,h))$.}}
\end{figure}

\begin{thm} \label{thm:group-group-matching}
Two every-zero odd-graceful graphic groups $F_n(H,f)=\{H_i\}^n_1$ and $F_n(L,h)=\{L_i\}^n_1$ match to each other. Then $F_n(\odot \langle H_i,L_i\rangle, f_i\odot h_i)$ obtained by coinciding each twin odd-graceful matchings $(H_i,L_i)$ together, also, is an every-zero twin odd-graceful graphic group.
\end{thm}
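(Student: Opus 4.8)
The plan is to mirror the construction used in the proof of Theorem \ref{thm:group-labelling-by-group-labelling}, building the desired group on the coincidence graphs $M_i := \odot\langle H_i,L_i\rangle$ equipped with the induced labellings $\phi_i := f_i\odot h_i$, and then to inherit the every-zero structure from the two component groups $F_n(H,f)$ and $F_n(L,h)$ one vertex at a time. First I would fix notation: since $F_n(H,f)=\{H_i\}^n_1$ is an every-zero graphic group, all $H_i\cong H$ and $f_i(x)+f_j(x)-f_k(x)=f_{i+j-k\,(\mathrm{mod}\,n)}(x)$ holds for each $x\in V(H)$ by Definition \ref{defn:graphic-group-definition}; symmetrically all $L_i\cong L$ and $h_i(y)+h_j(y)-h_k(y)=h_{i+j-k\,(\mathrm{mod}\,n)}(y)$ for each $y\in V(L)$. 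Because the two groups match, for every $i$ the pair $(f_i,h_i)$ is a twin odd-graceful labellings in the sense of Definition \ref{defn:twin-labellings}, so $M_i$ is formed by coinciding each $x\in V(H_i)$ with the $y\in V(L_i)$ for which $f_i(x)=h_i(y)$, and $\phi_i$ equals this common value on coincided vertices, $f_i$ on the remaining $H$-vertices, and $h_i$ on the remaining $L$-vertices.

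The second step is the structural claim I expect to be the crux: all the $M_i$ are isomorphic to a single graph $M$. Since the coincidence pattern of $M_i$ is determined solely by the pairs $(x,y)$ with $f_i(x)=h_i(y)$, it suffices to show this relation does not depend on $i$. Setting $j=k+1$ in each defining identity gives $f_{i+1}(x)-f_i(x)=f_{k+1}(x)-f_k(x)$ and $h_{i+1}(y)-h_i(y)=h_{k+1}(y)-h_k(y)$, so each vertex carries a per-vertex increment that is independent of $i$. The hypothesis that the two groups \emph{match} supplies, at every coincided pair $(x,y)$, the equality $f_{k+1}(x)-f_k(x)=h_{k+1}(y)-h_k(y)$ of these increments; consequently $f_i(x)=h_i(y)$ holds for one index $i$ if and only if it holds for all indices. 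Hence the coincided vertices of $M_1$ are carried bijectively to those of $M_i$, and $M_i\cong M_1=:M$.

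With the topology of $M$ now fixed, the third step verifies the group law on $M$ by splitting on the origin of a vertex $z$. If $z$ descends from an $H$-vertex $x$, then $\phi_i(z)=f_i(x)$ and the identity $\phi_i(z)+\phi_j(z)-\phi_k(z)=\phi_{i+j-k}(z)$ reduces to the every-zero identity of $F_n(H,f)$; if $z$ descends from an $L$-vertex $y$ it reduces to that of $F_n(L,h)$; and if $z$ is a coincided vertex, both reductions return the same value, because $f_i(x)=h_i(y)$ for every $i$ by the second step, so $\phi_i(z)$ is well defined. This establishes $M_i\oplus M_j=M_{i+j-k\,(\mathrm{mod}\,n)}$ under any chosen zero $M_k$. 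The twin odd-graceful character of each $\phi_i$ is then immediate: on the $H$-side the edges carry the odd-set $[1,2q-1]^o$ by odd-gracefulness of $f_i$, and on the $L$-side they carry a second copy of $[1,2q-1]^o$ by $h_i$, exactly as in Lemma \ref{thm:tree-twin-odd-graceful}.

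Finally, the group axioms (\emph{Zero}, \emph{Inverse}, \emph{Uniqueness and Closure}, the \emph{Commutative law} and the \emph{Associative law}) follow verbatim from the modular arithmetic on the index $i+j-k\,(\mathrm{mod}\,n)$, precisely as in the proof of Theorem \ref{thm:group-labelling-by-group-labelling}, since they concern only the indices and not the underlying graph structure. Combining these steps shows that $F_n(\odot\langle H_i,L_i\rangle, f_i\odot h_i)$ is an every-zero twin odd-graceful graphic group, as required. I would flag that the only genuinely non-routine point is the increment-matching used in the second step; everything downstream is bookkeeping once the coincidence pattern is shown to be $i$-independent.
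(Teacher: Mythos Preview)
The paper states Theorem~\ref{thm:group-group-matching} without proof, so there is no argument of the paper's to compare against; your proposal is the only proof on the table, and it follows the natural route suggested by Definition~\ref{defn:graphic-group-definition} and the proof of Theorem~\ref{thm:group-labelling-by-group-labelling}.

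Your steps three and four are routine once step two is established. The point you flag as non-routine is indeed the crux, and as written it is not quite closed. From the group law $f_i(x)+f_j(x)-f_k(x)=f_{i+j-k}(x)$ you correctly extract that the per-vertex increment $f_{i+1}(x)-f_i(x)$ is independent of $i$, and likewise for $h$. But the matching hypothesis---that each $(f_i,h_i)$ is a twin odd-graceful pair in the sense of Definition~\ref{defn:twin-labellings}---does not by itself force $f_{i+1}(x)-f_i(x)=h_{i+1}(y)-h_i(y)$ at a pair coincided in some $M_i$; that equality is exactly what you are trying to prove, so invoking ``the two groups match'' here is close to circular. What rescues the argument in the paper's setting is the specific way every-zero graphic groups are constructed throughout: each $f_i$ is obtained from $f$ by the uniform shift $f_i(x)=f(x)+(i-1)$ (see the construction in Fig.~\ref{fig:odd-graceful-group}, Fig.~\ref{fig:odd-matching-graphic-group}, and the index-shifting in the proof of Theorem~\ref{thm:group-labelling-by-group-labelling}). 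Under that convention both increments equal $1$ at every vertex, so $f_i(x)=h_i(y)$ holds for one $i$ iff it holds for all, and the coincidence pattern of $M_i$ is genuinely $i$-independent. If you make this uniform-shift assumption explicit (it is the only construction the paper ever uses), your proof goes through cleanly; without it, step two has a gap.
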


\begin{cor} \label{thm:group-group-matching-cor}
Two every-zero Topsnut-matrix groups $M_{n}(A(G),f)$ and $M_{n}(A(H),h)$ have that $(G,H)$ is a twin odd-graceful matching, $(A(G),A(H))$ is a twin odd-graceful Topsnut-matrix matching. Then $F_n(\odot \langle A(G),A(H)\rangle, f\odot h)$ is an every-zero Topsnut-matrix group.
\end{cor}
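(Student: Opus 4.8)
The plan is to deduce this corollary from Theorem~\ref{thm:group-group-matching} by exploiting the fact that the Topsnut-matrix encoding of a Topsnut-gpw is faithful: passing from a Topsnut-gpw $G_i$ to its Topsnut-matrix $A(G_i)=A_{vev}(G_i)$ is a bijection that translates the graphic addition $\oplus$ into the matrix addition $\oplus$ of the ``Every-zero Topsnut-matrix groups'' subsection. First I would set up this dictionary explicitly: writing $M_n(A(G),f)=\{A_i(G)\}^n_1$ with $A_i(G)=(X_i~W~Y_i)^{-1}$, I would observe that the column-wise rule $x_i+x_j-x_k=x_{i+j-k\,(\bmod\,n)}$ together with the companion rule for $Y_i$ makes the map $A_i(G)\mapsto G_i$ an isomorphism between the Topsnut-matrix group $M_n(A(G),f)$ and the graphic group $F_n(G,f)$; the same holds for $M_n(A(H),h)$ and $F_n(H,h)$.

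Next I would translate the two matching hypotheses into the language of Theorem~\ref{thm:group-group-matching}. The assumption that $(G,H)$ is a twin odd-graceful matching, index-wise across the two groups, says precisely that $F_n(G,f)=\{G_i\}^n_1$ and $F_n(H,h)=\{H_i\}^n_1$ match to each other, so the theorem produces the every-zero twin odd-graceful graphic group $F_n(\odot\langle G_i,H_i\rangle,f_i\odot h_i)$. The companion assumption that $(A(G),A(H))$ is a twin odd-graceful Topsnut-matrix matching is the matrix shadow of this, and it is what guarantees that the coincidence used to build $\odot\langle G_i,H_i\rangle$ lifts to a well-defined combined Topsnut-matrix $A(\odot\langle G_i,H_i\rangle)=\odot\langle A(G_i),A(H_i)\rangle$.

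Then I would verify the group axioms for $F_n(\odot\langle A(G),A(H)\rangle,f\odot h)$ by transporting them along the dictionary. Because coincidence of graphs commutes with the matrix encoding, the addition $\odot\langle A(G_i),A(H_i)\rangle \oplus \odot\langle A(G_j),A(H_j)\rangle$ is computed column-by-column by the already-established rules on the $X$- and $Y$-blocks of $A(G)$ and $A(H)$ separately, so it equals $\odot\langle A(G_{i+j-k}),A(H_{i+j-k})\rangle$ under any chosen zero $\odot\langle A(G_k),A(H_k)\rangle$. Closure, commutativity, associativity, and the existence of a zero and of inverses then follow from the corresponding statements already proven in Theorem~\ref{thm:group-group-matching}, whence $F_n(\odot\langle A(G),A(H)\rangle,f\odot h)$ is an every-zero Topsnut-matrix group.

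The hard part will be making the claim ``coincidence of graphs commutes with taking Topsnut-matrices'' precise and checking it honestly: one must confirm that the vertex/edge identifications defining the twin matching $\odot\langle G_i,H_i\rangle$ act on the columns of $A(G_i)$ and $A(H_i)$ in a manner independent of $i$, so that a single fixed combined column pattern carries the modular shift uniformly across all $n$ elements. Once this compatibility is secured, everything else is a routine transport of the group structure of Theorem~\ref{thm:group-group-matching} through the bijection.
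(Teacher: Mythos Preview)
The paper states this corollary with no proof at all; it is placed immediately after Theorem~\ref{thm:group-group-matching} and is meant to be read as its direct matrix analogue. Your approach---transporting the group structure of Theorem~\ref{thm:group-group-matching} through the faithful correspondence $G_i\leftrightarrow A(G_i)$ between Topsnut-gpws and their Topsnut-matrices---is exactly the intended reading, and you have supplied far more detail than the paper does. Your identification of the one nontrivial point (that the coincidence operation commutes with passing to Topsnut-matrices, uniformly in $i$) is apt; the paper tacitly assumes this.
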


\subsection{Noise of TB-paws}

There are many well-known methods for encrypting TB-paws in cryptography. Here, a TB-paw $D$ is encrypted to be another TB-paw $E_{cr}(D)$,  where $E_{cr}$ means ``\emph{encrypt}'', we say $E_{cr}(D)$ a noise of $D$, and then the procedure clearing the noise from $E_{cr}(D)$ to obtain the original TB-paw $D$ is denoted as $D=E_{cr}^{-1}(E_{cr}(D))$.

\emph{Topsnut-gpws made by colors, miscellaneous configurations and various lines.} By comparing Fig.\ref{fig:k-d-sequence-11} with Fig.\ref{fig:colored-Topsnut-gpw}, we can see that the space of pan-Topsnut-gpws is greater than that of Topsnut-gpws, and less than that of colored pan-Topsnut-gpws. Let pentacle=pc, pentagon=pg, rectangle=r, triangle=t, circle=c,
yellow=y, purple=p, red=re, blue=bl, green=g, orange=o and black=b in Fig.\ref{fig:colored-Topsnut-gpw}(b), we can get a colored TB-paw
$${
\begin{split}
D^c_{b}=&op303r11opgbl314bbl112tpo415pb213\\
&gpcbl516blbl314rpcbl617303617blbl\\
&314rop303og414cbb202gb516pgblr101.
\end{split}}
$$
Some researching works on this topic were introduced in \cite{Hongyu-Wang-2018-Doctor-thesis} and \cite{Yao-Mu-Sun-Zhang-Wang-Su-2018}.

\begin{figure}[h]
\centering
\includegraphics[height=3cm]{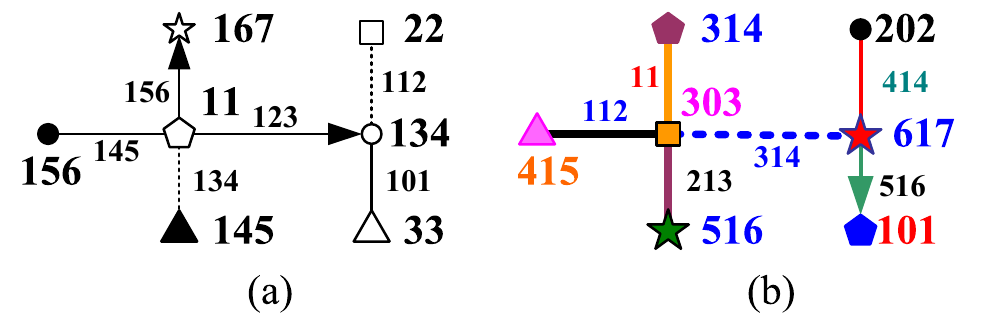}
\caption{\label{fig:colored-Topsnut-gpw}{\small (a) A pan-Topsnut-gpw in black and white; (b) a colored pan-Topsnut-gpw.}}
\end{figure}

\emph{TB-paws made by noise.} We add English letters into the following TB-paw
$${
\begin{split}
D_a=&11123134134145145156156\\
&16711134123111122210133,
\end{split}}
$$
from Fig.\ref{fig:colored-Topsnut-gpw}(a), in order to obtain a noised TB-paw as
$${
\begin{split}
E_{cr}(D_a)=&1v11bf2313ert41h34145cv14515xs6156yu\\
&1w67cdb111fh3412iuj31111j222op1013w3,
\end{split}}
$$
Or we by $x=11$, $y=22$, $z=33$, $a=34$ and $b=56$ replace the same numbers of $D_a$, thus, we get a shorted TB-paw as follows
$${
\begin{split}
E'_{cr}(D_a)=&x1231a1a1451451b1b167x1a123xx2y101z.
\end{split}}
$$
Clearing the noise from $E'_{cr}(D_a)$ needs the substitution of letters ``$x,y,z,a,b$''. Clearly, combining two methods introduce above will produce moore noised TB-paws.

\subsection{Topsnut-networks, graphs labelled by Topsnut-gpws}

We introduce pan-Topsnut-matchings on graphs as follows.

\begin{defn}\label{defn:pan-matching-graphs}
\cite{Yao-Sun-Zhang-Mu-Sun-Wang-Su-Zhang-Yang-Yang-2018arXiv} Let $P_{ag}$ be a set of graphs. A $(p,q)$-graph $G$ admits a graph-labelling $F:V(G)\rightarrow H_{ag}$, and induced edge label $F(uv)=F(u)(\bullet)F(v)$ is just a graph having a pan-matching, where $(\bullet)$ is an operation. Here, a \emph{pan-matching} may be: a perfect matching of $k_{uv}$ vertices, $k_{uv}$-cycle, $k_{uv}$-connected, $k_{uv}$-edge-connected, $k_{uv}$-colorable, $k_{uv}$-edge-colorable, total $k_{uv}$-colorable, $k_{uv}$-regular, $k_{uv}$-girth, $k_{uv}$-maximum degree, $k_{uv}$-clique, $\{a,b\}_{uv}$-factor, v-split $k_{uv}$-connected, e-split $k_{uv}$-connected. We call the graph $\langle G(\bullet) P_{ag}\rangle$ obtained by joining $F(u)$ with $F(uv)$ and joining $F(uv)$ with $F(v)$ for each edge $uv\in E(G)$ a \emph{pan-matching graph}.\qqed
\end{defn}

If each $H_i$ of a \emph{graph set} $P_{ag}$ of graphs admits a labelling $f_i$, we can label a $(p,q)$-graph $G$ in the way: $F:V(G)\rightarrow P_{ag}$, such that each edge $u_iv_j\in E(G)$ is balled by $F(u_iv_j)=F(u_i)\bullet F(v_j)=H_i\bullet H_j=\alpha_{i,j}(f_i,f_j)$, where $\alpha(f_i,f_j)$ is a \emph{reversible function} with $f_j=\alpha_{i,j}(f_i)$, and $f_i=\alpha^{-1}_{i,j}(f_j)$.

\vskip 0.4cm

\subsubsection{Self-similar Topsnut-networks} There are many self-similar networks in the world. We will construct self-similar Topsnut-networks and then label them to generate more complex Topsnut-gpws for producing TB-paws. Let us star with an example depicted in Fig.\ref{fig:self-similar-00}.

\emph{Step 1.} We use a labelling $F_1$ to label the vertices of $H$ shown in Fig.\ref{fig:odd-graceful-group} by the elements of an every-zero graphic group $F_{14}(H,f)$ pictured in Fig.\ref{fig:odd-graceful-group}, such that each edge $uv\in E(H)$ is labelled as
\begin{equation}\label{eqa:c3xxxxx}
H_{uv}=F_1(uv)=F_1(u)\oplus F_1(v)=H_u\oplus H_v
\end{equation}
for $H_u, H_v\in F_{14}(H,f)$, and $F_1(E(H))=\{F(uv):uv\in V(H)\}=\{H_1,H_3,H_5,H_7,H_9,H_{11}, H_{13}\}$. The labelled well graph is denoted as $G$ (see Fig.\ref{fig:self-similar-00}), and we say $G$ admits an \emph{odd-graceful group-cloring/group-labelling}. Next, we  join some vertex $x_u$ of $H_u$ with some vertex $x_{uv}$ of $H_{uv}$ by an edge $x_ux_{uv}$, and join some vertex $y_v$ of $H_v$ with some vertex $y_{uv}$ of $H_{uv}$ by an edge $y_vy_{uv}$. Finally, we have constructed a large graph $G_1=\langle H\leftarrow F_{14}(H,f)\rangle $ (see Fig.\ref{fig:self-similar-00}), and write $E(G_1)=E_{1,1}\cup E_{1,2}$, where each edge of $E_{1,1}$ is not labelled, each edge of $E_{1,2}$ is labelled. We say each subgraph $H_i$ of $G_1$ to be a \emph{block}.

\emph{Step 2.} Next, we define a labelling $F_2$ to label the vertices of each block $H_i$ by the elements of the every-zero graphic group $F_{14}(H,f)$, such that each edge $xy\in E(H_i)$ of the block $H_i$ of $G_1$ is labelled as
$$H_{xy}=F_2(xy)=F_2(x)\oplus F_2(y)=H_x\oplus H_y,$$
under the \emph{zero} $H_i$ (we call this case as \emph{self-zero} hereafter, see Fig.\ref{fig:self-similar-11})  and join some vertex $s_x$ of $H_x$ with some vertex $t_{xy}$ of $H_{xy}$ by an edge $s_xt_{xy}$, and join some vertex $a_y$ of $H_y$ with some vertex $b_{xy}$ of $H_{xy}$ by an edge $a_yb_{xy}$. The resulting graph is denoted as $G_2$. Thereby, $E(G_2)=E_{2,1}\cup E_{2,2}$, where each edge of $E_{2,1}$ is not labelled, each edge of $E_{2,2}$ is labelled. This procedure is called ``doing a graph-labelling to $G_1$ by $F_{n}(H,f)$ under the self-zero'', we write $G_2=\langle  G_1\leftarrow F_{14}(H,f)\rangle $.

\emph{Step 3.} Go on in this way, we get self-similar Topsnut-networks $G_1,G_2,\dots ,G_n$ with $G_{i+1}=\langle  G_i\leftarrow F_{14}(H,f)\rangle $ for $i\in [1,n-1]$ (see Fig.\ref{fig:self-similar-22}), in which each $G_j$ is similar with the origin graph $H$ exhibited in Fig.\ref{fig:self-similar-00}.

\begin{figure}[h]
\centering
\includegraphics[height=12.4cm]{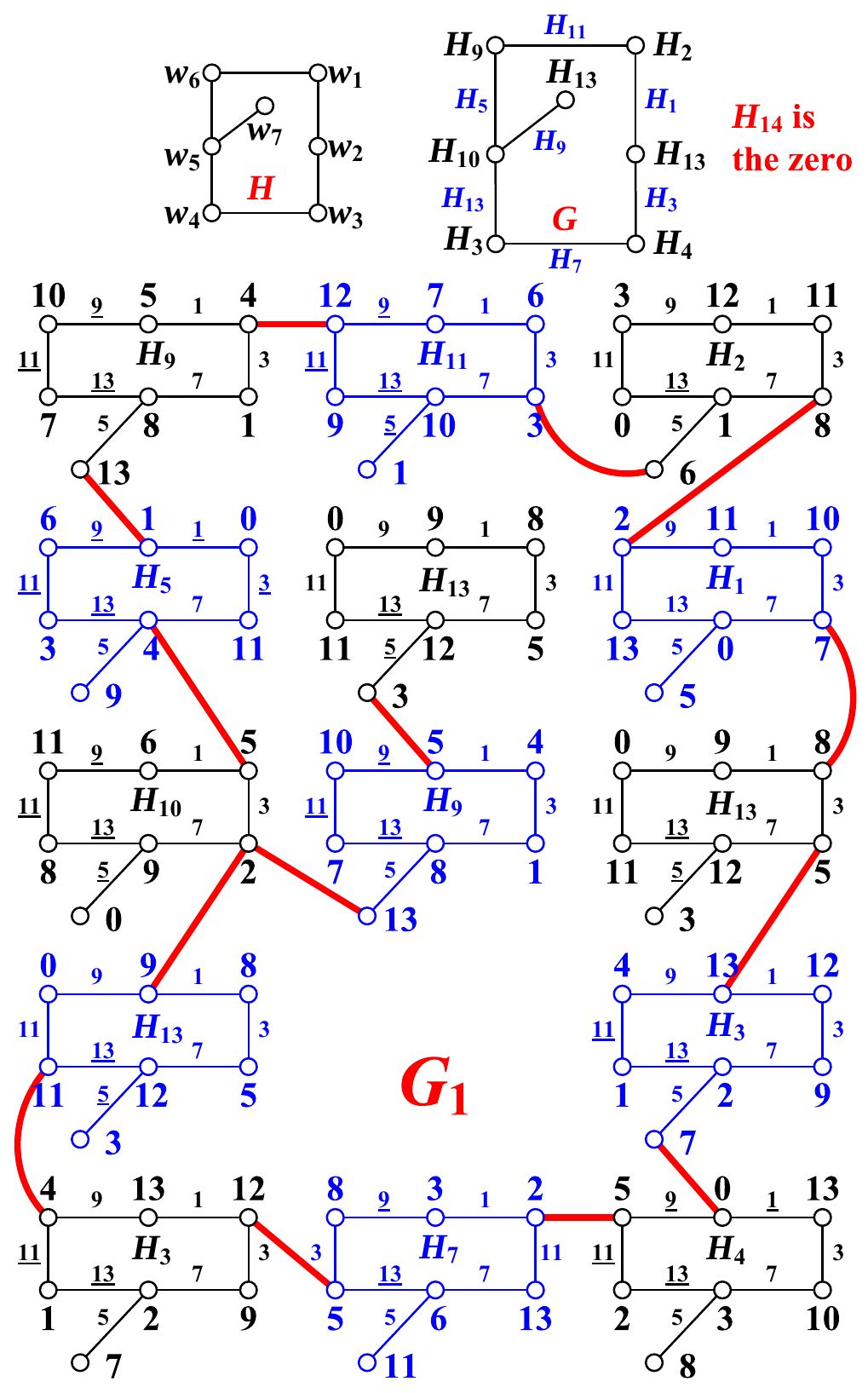}
\caption{\label{fig:self-similar-00}{\small $H$ is shown in Fig.\ref{fig:odd-graceful-group}; $G$ is the resulting from $H$ labelled by the elements of an every-zero graphic group $F_{14}(H,f)$ pictured in Fig.\ref{fig:odd-graceful-group} and admits an odd-graceful group-coloring (there are two $H_{13}$); and $G_1$ is a \emph{self-similar Topsnut-network}.}}
\end{figure}

\begin{figure}[h]
\centering
\includegraphics[height=10.4cm]{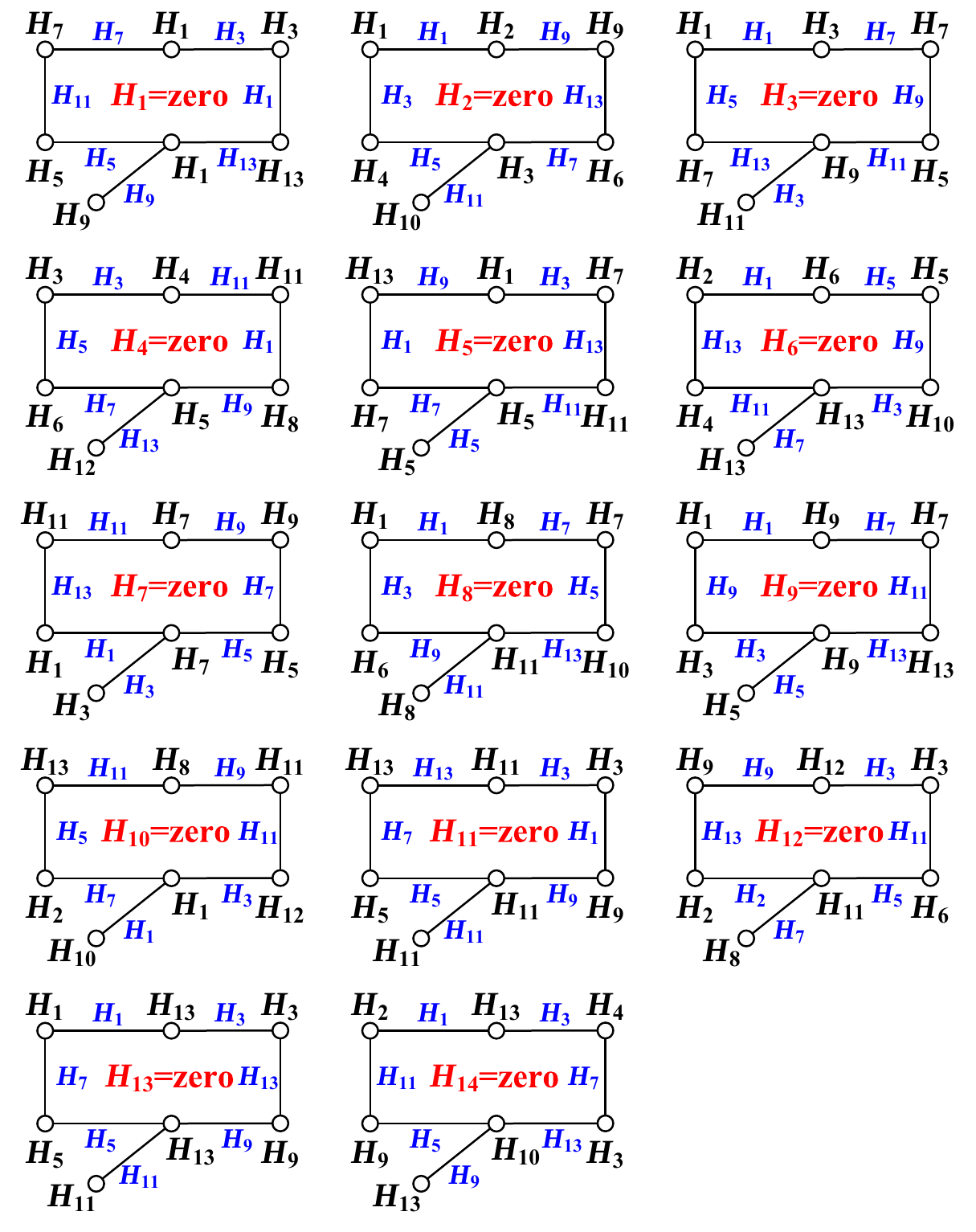}
\caption{\label{fig:self-similar-11}{\small In an every-zero graphic group $F_{14}(H,f)$, $H$ admits an odd-graceful group-coloring under each self-zero $H_i$ with $i\in [1,14]$.}}
\end{figure}

\begin{figure}[h]
\centering
\includegraphics[height=6cm]{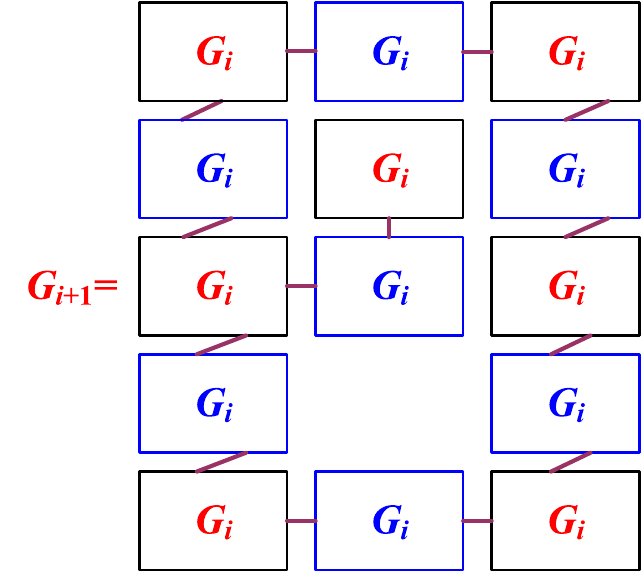}
\caption{\label{fig:self-similar-22}{\small A scheme of the construction of the self-similar Topsnut-network $G_{i+1}=\langle G_i\leftarrow F_{14}(H,f)\rangle$.}}
\end{figure}

We have the construction of the self-similar Topsnut-network $G_{i+1}=\langle G_{i}\leftarrow F_{14}(H,f)\rangle$ shown in Fig.\ref{fig:self-similar-22} based on the self-zeros shown in Fig.\ref{fig:self-similar-11}. It is noticeable, $H$ admits an odd-graceful group-coloring under each self-zero $H_i$ of the every-zero graphic group $F_{14}(H,f)$. By Theorem \ref{them:trees-sequence-group-coloring}, we have
\begin{thm} \label{them:odd-graceful-group-coloring-each-self-zero}
If the graph  $H$ in an every-zero graphic group $F_{n}(H,f)$ is a tree, then $H=H_1$ and each self-similar Topsnut-network $\langle \langle  H_i\leftarrow F_{n}(H,f)\rangle$ with $i\in [1,n]$ admits an odd-graceful group-coloring/labelling under each self-zero $H_i\in F_{n}(H,f)$.
\end{thm}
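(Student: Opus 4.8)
The plan is to reduce the statement to Theorem \ref{them:trees-sequence-group-coloring} and then propagate its conclusion through the recursive construction of $G_{i+1}=\langle G_i\leftarrow F_n(H,f)\rangle$ by induction on the construction level. First I would settle the identity $H=H_1$. Since $H$ is a tree of $q$ edges admitting an odd-graceful labelling $f$, the every-zero graphic group $F_n(H,f)$ is, by Definition \ref{defn:graphic-group-definition}, generated from the base labelling $f$; indexing the induced labellings $f_i$ so that $f=f_1$ forces the generating graph, equipped with its own labelling, to be exactly the first element, whence $H=H_1$. Moreover the odd-graceful labelling yields $f(E(H))=[1,2q-1]^o=\{1,3,\dots,2q-1\}$, so the natural target edge-sequence for an odd-graceful group-labelling is $\{H_1,H_3,\dots,H_{2q-1}\}$ in the sense of Definition \ref{defn:graph-graceful-group-labelling}.

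Next I would show that each tree block admits the required coloring. A single block is a copy of $H$, hence a tree of $q$ edges. Fix any self-zero $H_i\in F_n(H,f)$. Applying Theorem \ref{them:trees-sequence-group-coloring} to this tree with the sequence $\{H_1,H_3,\dots,H_{2q-1}\}$ and with $H_i$ chosen as the zero of the group operation, I obtain a group-labelling (or group-coloring, when two vertices must repeat a label) $F$ under which each edge $uv$ receives $F(uv)=F(u)\oplus F(v)$ and $F(E)=\{H_1,H_3,\dots,H_{2q-1}\}$. This is precisely an odd-graceful group-coloring/labelling under the self-zero $H_i$, and the key point is that the proof of Theorem \ref{them:trees-sequence-group-coloring} already handles an arbitrary zero $H_k$ via the relation $(\ref{eqa:graphic-group-definition})$, so imposing the self-zero requirement costs nothing.

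Finally I would run the induction on the level. The base case $G_1=\langle H\leftarrow F_n(H,f)\rangle$ is handled by the previous paragraph applied to the top-level tree $H$ and to each of its blocks under their respective self-zeros. For the inductive step, assume every block of $G_m$ has been group-coloured; in forming $G_{m+1}=\langle G_m\leftarrow F_n(H,f)\rangle$ each block is replaced by a copy of the tree $H$ expanded under its own self-zero, and the previous paragraph colours each new block with the odd sequence. The joining edges (those of $E_{m+1,1}$) are assigned separately by the construction and do not lie inside any block, so they do not interfere with the intra-block edge-label sets, and the odd-graceful group-coloring property is thereby preserved level by level.

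The hard part will be the bookkeeping rather than any single computation: one must check that the self-zero used for each block of $G_{m+1}$ is compatible with the label $H_i$ that the block inherited from $G_m$, and that the residues $i+j-k\pmod n$ occurring in $(\ref{eqa:graphic-group-definition})$ never collide on a block in a way that spoils the requirement $F(E)=\{H_1,H_3,\dots,H_{2q-1}\}$. Both concerns are dissolved by invoking Theorem \ref{them:trees-sequence-group-coloring} \emph{freshly} on each tree block, so the obstacle is organizational: the recursion must be phrased so that the inductive hypothesis delivers exactly the per-block coloring needed to seed the next level, together with the (implicit) requirement that $n$ be large enough for $H_{2q-1}$ to be a genuine element of $F_n(H,f)$.
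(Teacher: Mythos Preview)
Your approach is exactly the paper's: the authors give no argument beyond the single line ``By Theorem \ref{them:trees-sequence-group-coloring}, we have'' immediately preceding the statement, so the entire content is the reduction you identified---apply Theorem \ref{them:trees-sequence-group-coloring} to each tree block with the odd sequence $\{H_1,H_3,\dots,H_{2q-1}\}$ and the block's own self-zero. One remark: the theorem as stated concerns only the first-level networks $\langle H_i\leftarrow F_n(H,f)\rangle$ for $i\in[1,n]$ (the paper explicitly distinguishes these from the deeper $\langle G_i\leftarrow F_{14}(H,f)\rangle$ in the sentence following the theorem), so your induction on the construction level, while harmless, goes beyond what is actually being claimed.
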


Notice that each  self-similar Topsnut-network $\langle \langle  H_i\leftarrow F_{n}(H,f)\rangle$ admits an odd-graceful group-coloring/labelling differs from $\langle G_i\leftarrow F_{14}(H,f)\rangle$. In general, we use an every-zero graphic group $F_{n}(H,f)$ to label the graph $H$ to produce a self-similar Topsnut-networks $G_1$, we call such procedure as ``doing a graph-labelling to $H$ by $F_{n}(H,f)$ under the self-zero'', denoted as $G_1=\langle H \leftarrow F_{n}(H,f)\rangle $, so ``doing a graph-labelling to $G_1$ by $F_{n}(H,f)$ under the self-zero'' gives us $G_2=\langle  G_1\leftarrow F_{n}(H,f)\rangle $, $\cdots$, ``doing a graph-labelling to $G_i$ by $F_{n}(H,f)$ under the self-zero'' gives us $G_{i+1}=\langle  G_i\leftarrow F_{n}(H,f)\rangle $ for $i\in [1,n-1]$.  Clearly, $E(G_i)=E_{i,1}\cup E_{i,2}$ such that $\bigcup ^{i-1}_{k=1}E_{k,1}\subset E_{i,1}$.

There are the following advantages about self-similar Topsnut-networks for the difficulty of Topsnut-gpws:

(1) We can use a zero $H_k$ in the construction of self-similar Topsnut-networks for all ``doing a graph-labelling to $G_i$ by $F_{n}(H,f)$'', so a  self-similar Topsnut-network may differ from other self-similar Topsnut-network.

(2) Since here are many  ways to join two blocks, so  there are many self-similar Topsnut-networks $G_i=\langle  G_{i-1}\leftarrow F_{n}(H,f)\rangle $ at time step $i$.

(3) Self-similar Topsnut-networks have giant numbers of vertices and edges based on  $F_{n}(H,f)$ with the smaller numbers of vertices and edges of $H$.

(4) We can relabel those unlabelled edges of a  self-similar Topsnut-networks $G_{i+1}=\langle  G_i\leftarrow F_{n}(H,f)\rangle $ for $i\in [1,n-1]$ for generating Topsnut-gpws.

\vskip 0.4cm

\subsubsection{Graphs labelled by planer graphs} We present the following methods for labelling graphs.

\emph{$\bullet$ Edge-magic total graph-labelling.}  We build up a connection between popular labellings and graph-labellings as follows.
\begin{defn}\label{defn:Edge-magic-total-graph-labelling}
$^*$ Let $M_{pg}(p,q)$ be the set of maximal planar graphs $H_i$ of $i+3$ vertices with $i\in [1,p+q]$, where each face of each planar graph $H_i$ is a triangle. We use a \emph{total labelling} $f$ to label the vertices and edges of a $(p,q)$-graph  $G$ with the elements of $M_{pg}(p,q)$, such that $i+ij+j=k$ (a constant), where $f(u_i)=H_i$, $f(u_iv_j)=H_{ij}$ and $f(v_j)=H_j$ for each edges $u_iv_j\in E(G)$. We say $f$ an \emph{edge-magic total graph-labelling} of $G$ based on $M_{pg}(p,q)$.\qqed
\end{defn}

Definition \ref{defn:Edge-magic-total-graph-labelling} tells us there are many graph-labellings that are similar with popular labellings introduced in \cite{Bondy-2008} and \cite{Yao-Sun-Zhang-Mu-Sun-Wang-Su-Zhang-Yang-Yang-2018arXiv}.

\begin{thm}\label{thm:edge-magic-total-graph-labelling-Mpg}
If a tree of $p$ vertices admits a set-ordered graceful labelling, then it  admits an edge-magic total graph-labelling based on $M_{pg}(p,p-1)$.
\end{thm}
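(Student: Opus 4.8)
The plan is to reduce the graph-valued labelling to an ordinary numerical edge-magic total labelling through the index correspondence $H_i \leftrightarrow i$, and then to invoke the equivalences already collected in Theorem~\ref{thm:connections-several-labellings}.

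First I would record the bookkeeping that makes the reduction exact. For a tree $T$ on $p$ vertices we have $q=p-1$, so $|V(T)\cup E(T)| = p+q = 2p-1$; on the other hand the set $M_{pg}(p,p-1)=\{H_i:i\in[1,2p-1]\}$ also has exactly $2p-1$ members, and these are pairwise non-isomorphic because $H_i$ carries $i+3$ vertices. Hence any total labelling $f:V(T)\cup E(T)\to M_{pg}(p,p-1)$ (injective, as labellings in this paper are required to be by Definition~\ref{defn:define-labelling}) is automatically a bijection onto $M_{pg}(p,p-1)$, and carries the same information as the numerical total labelling $\gamma := \mathrm{ind}\circ f : V(T)\cup E(T)\to[1,2p-1]$ recording indices, where $\mathrm{ind}(H_i)=i$. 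Under this dictionary the graph-magic requirement $i+ij+j=k$ of Definition~\ref{defn:Edge-magic-total-graph-labelling}, with $f(u_i)=H_i$, $f(u_iv_j)=H_{ij}$ and $f(v_j)=H_j$, is literally the edge-magic total condition $\gamma(u)+\gamma(uv)+\gamma(v)=k$ for every edge $uv\in E(T)$.

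Next I would produce the required $\gamma$. Since $T$ admits a set-ordered graceful labelling by hypothesis, assertion~(1) of Theorem~\ref{thm:connections-several-labellings} holds, so by the stated equivalence assertion~(4) also holds: $T$ admits a super edge-magic total labelling $\gamma$ with $\gamma(X)<\gamma(Y)$ and magic constant $|X|+2p+1$. In particular $\gamma$ is a bijection $V(T)\cup E(T)\to[1,2p-1]$ for which $\gamma(u)+\gamma(uv)+\gamma(v)$ equals the fixed value $k=|X|+2p+1$ along every edge. Setting $f(w):=H_{\gamma(w)}$ for each $w\in V(T)\cup E(T)$ then gives a bijection onto $M_{pg}(p,p-1)$ satisfying the graph-magic identity, that is, an edge-magic total graph-labelling of $T$ based on $M_{pg}(p,p-1)$, which is exactly the claim.

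The conceptual content is thus carried entirely by Theorem~\ref{thm:connections-several-labellings}; the only genuine checks are the two counting facts (that $M_{pg}(p,p-1)$ supplies one graph per index in $[1,2p-1]$, and that a tree has $2p-1$ labelled elements) and the observation that ``magic on indices'' and ``magic on numbers'' are the same condition. The one place to stay alert is the overloaded subscript ``$ij$'': it denotes the index of the graph $H_{ij}$ attached to the edge $u_iv_j$, not the product $i\cdot j$, so the constraint is the additive edge-magic sum and no multiplicative relation is intended. Once this is read correctly the argument is just a relabelling of a super edge-magic total labelling, and no further obstacle arises.
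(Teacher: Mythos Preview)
The paper states this theorem without proof; your argument is correct and is exactly the reading the authors signal in the surrounding text, where they remark that Definition~\ref{defn:Edge-magic-total-graph-labelling} produces graph-labellings ``similar with popular labellings.'' Your reduction via the index map $H_i\mapsto i$ to an ordinary super edge-magic total labelling, followed by an appeal to the equivalence (1)$\Leftrightarrow$(4) of Theorem~\ref{thm:connections-several-labellings}, is the intended mechanism and needs no further justification.
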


\emph{$\bullet$ Four-coloring triangularly edge-identifying graph-labelling.}   In \cite{YAO-SUN-WANG-SU-XU2018arXiv}, the authors introduce the triangularly edge-identifying operation and triangular edge-subdivision operation. Let $F_{\textrm{TPG}}$ be the set of planar graphs such that each one of $F_{\textrm{TPG}}$ has its outer face to be triangle and a proper 4-coloring. In Fig.\ref{fig:Triangular-operation}, a planar graph $\Delta(T_l,T_r,T_b)$ admits a 4-coloring obtained by three 4-colorings $f_l$, $f_r$ and $f_b$, so $\Delta(T_l,T_r,T_b)\in F_{\textrm{TPG}}$. The procedure of building up $\Delta(T_l,T_r,T_b)$ is called a \emph{triangularly edge-identifying operation}. Conversely, subdividing $G(T_l,T_r,T_b)$ into $T_l,T_r$ and $T_b$ is called a \emph{triangular edge-subdivision operation}. We have:

\begin{defn}\label{defn:4-coloring-triangularly-edge-identifying-graph-labelling}
$^*$ A $(p,q)$-graph  $G$ admits a total labelling $h:V(G)\cup E(G)\rightarrow F_{\textrm{TPG}}$, such that each edge $u_iv_j\in E(G)$ holds that  $f(u_i)=T_i$, $f(u_iv_j)=T_{ij}$ and $f(v_j)=T_j$ induce a planar graph $\Delta(T_i,T_{ij},T_j)$ admitting a 4-coloring, and then we say $G$ admits a \emph{4-coloring triangularly edge-identifying graph-labelling}.\qqed
\end{defn}
\begin{figure}[h]
\centering
\includegraphics[height=2cm]{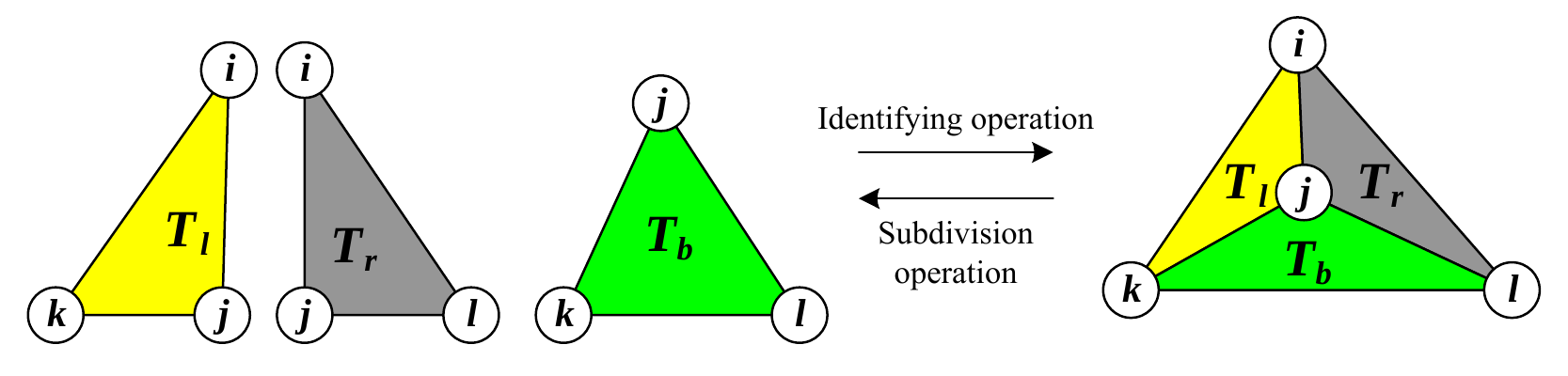}\\
\caption{\label{fig:Triangular-operation}{\footnotesize The scheme for illustrating the triangularly edge-identifying and triangular edge-subdivision operations. }}
\end{figure}

By Definition \ref{defn:4-coloring-triangularly-edge-identifying-graph-labelling},  we use the $(p,q)$-graph  $G$ admitting a 4-coloring triangularly edge-identifying graph-labelling to construct graphs in the following ways:

(1) Notice that $T_i,T_{ij},T_j$ are 4-colorable, for each edge $u_iv_j\in E(G)$,  we coincide a vertex $u_i$ of $T_i$ with another vertex $x_{ij}$ of $T_{ij}$ when $u_i$ and $x_{ij}$ are colored with the same number, and we coincide a vertex $v_j$ of $T_j$ with a vertex $y_{ij}$ of $T_{ij}$ if $v_j$ and $y_{ij}$ are colored with the same number. The resulting graph $\odot\langle G, F_{\textrm{TPG}}\rangle$ is 4-coloring, also, a Topsnut-gpw. Clearly, there are many graphs $\odot\langle G, F_{\textrm{TPG}}\rangle$.

(2) For each edge $u_iv_j\in E(G)$, we coincide $f(u_i)=T_i$, $f(u_iv_j)=T_{ij}$ and $f(v_j)=T_j$ into a planar graph $\Delta(T_i,T_{ij},T_j)$ (see Fig. \ref{fig:Triangular-operation}), the resulting graph denoted as $\odot \langle G, \Delta(T_i,T_{ij},T_j),F_{\textrm{TPG}}\rangle$ is 4-colorable. An example is exhibited in Fig.\ref{fig:Triangular-operation-11}.

By induction, it is not hard to show:

\begin{thm}\label{thm:4-coloring-triangularly-edge-identifying-graph-labellin}
Any tree admits a 4-coloring triangularly edge-identifying graph-labelling.
\end{thm}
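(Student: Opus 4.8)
The plan is to prove the statement by induction on the number of edges $q$ of the tree $T$, in the same spirit as the inductive arguments already used for Theorem \ref{them:trees-sequence-group-coloring} and Lemma \ref{thm:tree-twin-odd-graceful}. The whole argument rests on one closure fact that I would isolate first: if $T_l, T_r, T_b \in F_{\textrm{TPG}}$, then the planar graph $\Delta(T_l,T_r,T_b)$ obtained by the triangularly edge-identifying operation again lies in $F_{\textrm{TPG}}$, i.e. it is planar, its outer face is a triangle, and it admits a proper $4$-coloring. Planarity and the triangular outer face are immediate from the construction depicted in Fig.\ref{fig:Triangular-operation}, so the only real content is the $4$-colorability.

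To establish the $4$-colorability of $\Delta(T_l,T_r,T_b)$, I would start from proper $4$-colorings $f_l, f_r, f_b$ of the three pieces. Since a proper $4$-coloring stays proper under any permutation of the four colors, and since the three boundary vertices of a triangular outer face always receive three pairwise distinct colors, I can first permute the colors of $f_r$ and $f_b$ so that all three colorings agree on the common boundary triangle along which the pieces are glued. After this alignment the identified vertices carry a single well-defined color, and because every edge of $\Delta(T_l,T_r,T_b)$ lies entirely inside one of the three pieces, no monochromatic edge is created; the three aligned colorings therefore splice into one proper $4$-coloring. (Alternatively, once planarity is known one may simply invoke the Four Color Theorem, but the color-alignment argument is self-contained and is exactly what Fig.\ref{fig:Triangular-operation} illustrates.)

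With the closure fact in hand, the induction is routine. For the base case $q=1$ the tree is a single edge $uv$; I pick any three members $T_u, T_{uv}, T_v \in F_{\textrm{TPG}}$, and since $F_{\textrm{TPG}}$ is infinite they may be taken pairwise distinct, so the closure fact supplies the required $4$-colorable $\Delta(T_u,T_{uv},T_v)$ and hence a $4$-coloring triangularly edge-identifying graph-labelling of $uv$ in the sense of Definition \ref{defn:4-coloring-triangularly-edge-identifying-graph-labelling}. For the inductive step, let $x$ be a leaf of $T$ with unique neighbor $y$, and set $T'=T-x$, a tree on $q-1$ edges. By the induction hypothesis $T'$ carries a labelling $h'$ of the required type; I keep $h=h'$ on $V(T')\cup E(T')$ and extend it by choosing fresh, previously unused members $T_x=h(x)$ and $T_{xy}=h(xy)$ of $F_{\textrm{TPG}}$. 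Then $h(y)$ is already fixed, the closure fact guarantees that $\Delta(h(y),T_{xy},T_x)$ admits a $4$-coloring, and the only new edge $xy$ is handled correctly, so $h$ is a $4$-coloring triangularly edge-identifying graph-labelling of $T$, completing the induction.

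The main obstacle I anticipate is precisely the color-alignment step inside the closure fact: one must be certain that gluing three independently $4$-colored pieces along a shared triangle never forces a fifth color. The point to verify carefully is that the boundary of each piece is a single triangle whose three vertices automatically carry three distinct colors, so a color permutation of each piece can always make the boundary colorings coincide; after that the splicing is forced and conflict-free. Everything else is unproblematic: the distinctness demanded by a \emph{total} labelling, and the tree recursion itself, cause no difficulty because $F_{\textrm{TPG}}$ supplies unboundedly many distinct planar graphs, a consequence of applying the Four Color Theorem to arbitrarily large maximal planar graphs with triangular outer faces.
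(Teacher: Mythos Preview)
Your proposal is correct and follows exactly the route the paper indicates: the paper's entire proof is the single phrase ``By induction, it is not hard to show,'' and you have supplied the induction together with the one substantive ingredient, namely that $\Delta(T_l,T_r,T_b)$ is always $4$-colorable, a fact the paper already asserts in the paragraph preceding Definition~\ref{defn:4-coloring-triangularly-edge-identifying-graph-labelling}. One small remark: once that closure fact is granted unconditionally, the inductive machinery is in fact optional, since \emph{any} injective map $V(T)\cup E(T)\to F_{\textrm{TPG}}$ automatically satisfies the edge condition; the real content lies in the closure fact and in the infinitude of $F_{\textrm{TPG}}$, both of which you identified.
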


\begin{figure}[h]
\centering
\includegraphics[height=5.6cm]{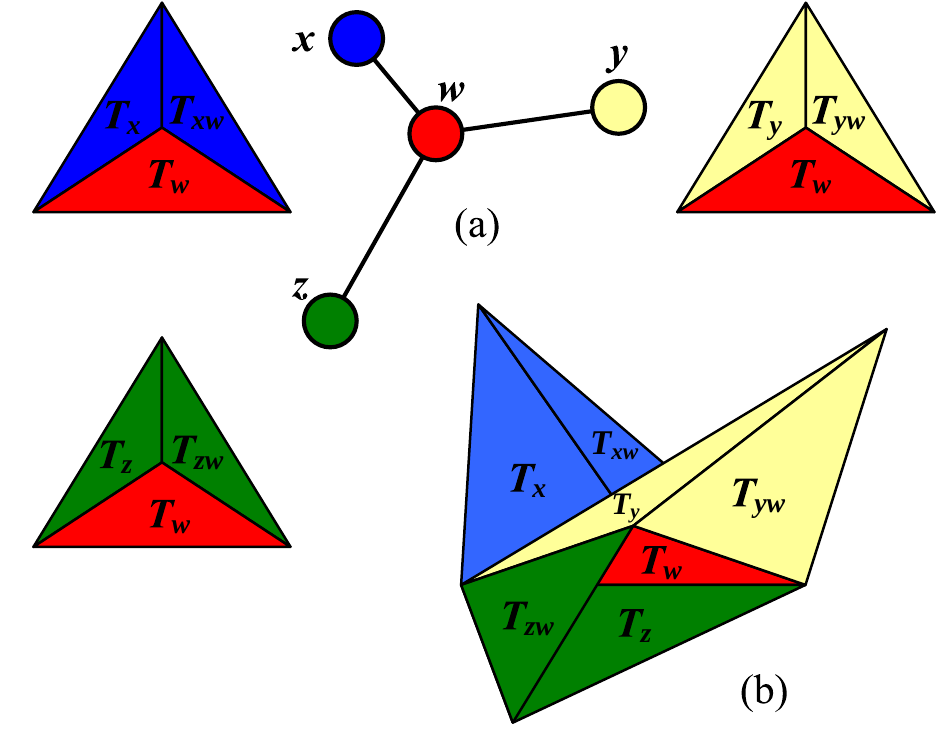}\\
\caption{\label{fig:Triangular-operation-11}{\footnotesize (a) A star $K_4$; (b) $\odot \langle K_4, \Delta(T_i,T_{ij},T_j)\rangle$. }}
\end{figure}

\emph{$\bullet$ Semi-planar graph-labelling.}

A  \emph{semi-maximal planar graph} has a unique \emph{no-triangular face} (is not a triangle), other faces are  triangles; the \emph{bound} of the  unique no-triangular face is denoted as $C$, so we write this semi-maximal planar graph as $G^{C}$   (\cite{Jin-Xu-(1)-2016, Jin-Xu-(2)-2016, Jin-Xu-(3)-2016, Jin-Xu-(4)-2016}).

In Fig.\ref{fig:semi-planer-maximum}, both graphs $G^C$ and $\overline{G}^C$ are two semi-maximal planar graphs, and $G$ (see Fig.\ref{fig:semi-planer-maximum}(c)) is a maximal planar graph obtained by coinciding $G^C$ with $\overline{G}^C$ in one edge by one edge on the cycle $C$. Conversely, we do an edge-split operation  to each edge of the cycle $C$ of $G$, the resulting graphs are just $G^C$ and $\overline{G}^C$ (see Fig.\ref{fig:semi-planer-maximum}(a) and (b)).

We use a total coloring/labelling $F$ to label the vertices and edges of a $(p,q)$-graph $H$ with the elements of a set $S_{emi}(n)$ of semi-maximal planar graphs of orders $\leq n$, such that $$F(uv)=G_{uv}=G^C_{uv}\cup \overline{G}^C_{uv}=F(u)\cup F(v)$$ for each edge $uv\in E(H)$, where $G^C_{uv}, \overline{G}^C_{uv}\in S_{emi}(n)$ (see Fig.\ref{fig:semi-planer-maximum}(d)). We say that the $(p,q)$-graph $H$ admits a \emph{semi-planar graph-labelling} $F$.

Suppose that each semi-maximal planar graph of $S_{emi}(n)$ admits a 4-coloring. Then, we coincide a vertex $x$ of $G^C_{uv}$ with a vertex $x'$ of $G$ if $x$ and $x'$ are colored with the same number, and coincide a vertex $y$ of $\overline{G}^C_{uv}$ with a vertex $y'$ of $G$ if $y$ and $y'$ are colored with the same number. Finally, we get an encrypted network $N_{et}(H, S_{emi}(n))$.

\begin{figure}[h]
\centering
\includegraphics[height=9cm]{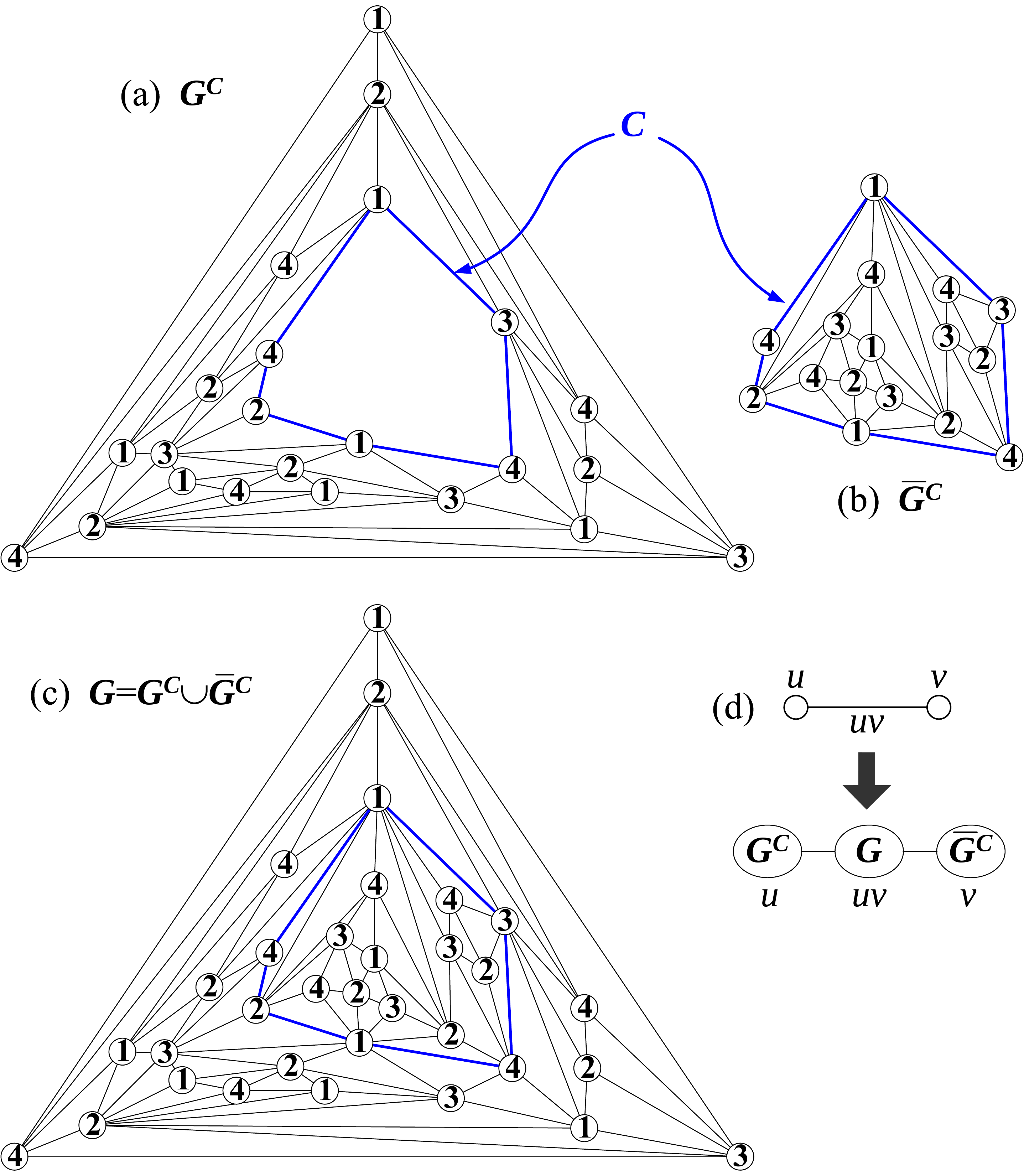}\\
\caption{\label{fig:semi-planer-maximum}{\footnotesize (a) and (b) are two semi-maximal planar graphs $G^C$ and $\overline{G}^C$; (c) a maximal planar graph $G=G^C\cup \overline{G}^C$; (d) an edge $uv$ and its two ends are labelled by $G^C$, $\overline{G}^C$ and $G$. }}
\end{figure}

\section{Further researching problems}

We present the following problems for further researching the translation from Topsnut-gpws to vv-type/vev-type TB-paws. It may be interesting to study these problems, since our researching on them is only a beginning, and the gained results are more junior.

\begin{asparaenum}[FRP-1. ]
\item Determine the number of non-isomorphic caterpillars of $p$ vertices. How many set-ordered odd-graceful/odd-elegant labellings does a caterpillar admit?
\item  Does each lobster admit a multiple edge-meaning vertex labelling defined in Definition \ref{defn:multiple-meanings-vertex-labelling}?
\item The fact of lobsters admitting odd-graceful/odd-elegant labellings were proven by caterpillars admitting set-ordered odd-graceful/odd-elegant labellings, determine set-ordered odd-graceful/odd-elegant labellings admitted by lobsters. Furthermore, how many labellings do lobsters admit?
\item Since a Topsnut-gpw $G$ is a network, find: (i) all possible non-isomorphic caterpillars of $G$; (ii) all possible non-isomorphic lobsters of $G$; (iii) all possible non-isomorphic spanning trees with the maximum number of leaves in $G$; (iv) all possible non-isomorphic generalized sun-graphs (like $T+u_1u_n$ presented in Cycle-neighbor-method) of $G$.
\item Let $P_3\times P_q$ be a lattice in  $xoy$-plan. There are points $(i,j)$ on the lattice $P_3\times P_q$ with $i\in[1,3]$ and $j\in [1,q]$. If a continuous fold-line $L$ with initial point $(a,b)$ and terminal point $(c,d)$ on $P_3\times P_q$ is internally disjoint and contains all points $(i,j)$ of $P_3\times P_q$, we call $L$ a \emph{total TB-paw line}. Find all possible total TB-paw lines. In general, let $\{L_i\}^m_1=\{L_1,L_2,\dots, L_m\}$ be a set of $m$ continuous disjoint fold-lines on $P_3\times P_q$, where each $L_i$ has own initial point $(a_i,b_i)$ and terminal point $(c_i,d_i)$. If $\{L_i\}^m_1$ contains all points $(i,j)$ of $P_3\times P_q$, we call $\{L_i\}^m_1$ a \emph{group of TB-paw lines}, here it is not allowed $(a_i,b_i)=(c_i,d_i)$ for each $L_i$. Find all possible groups $\{L_i\}^m_1$ of TB-paw lines for $m\in [1,3q]$.
\item If each spanning tree of a connected graph $G$ is a lobster, we call $G$ to be \emph{lobster-pure} or a \emph{lobster-graph}. Find the necessary and sufficient conditions for the graph $G$ to be lobster-pure.
\item Adding randomly $k$ leaves to a given tree produces new trees. Enumerate these new trees.
\item Find all multiple-meaning vertex matching partitions $(V,E)$ of $[0,2q-1]$ (see Definition \ref{defn:multiple-meanings-vertex-labelling}). Suppose that $E$ and $V=[0,q-1]$ are two subsets of $[0,2q-1]$, such that each $c\in E$ corresponds $a,b\in V$ to form an \emph{ev-matching} $(acb)$. We call $(V,E)$ a \emph{multiple-meaning vertex matching partition} of $[0,2q-1]$, for each $c\in E$ and its ev-matching $(acb)$, if: (1) $a+c+b=$a constant $k$, and $E=[1,q]$; (2) $a+c+b=$a constant $k'$, and $E=[p,p+q-1]$; (3) $c=a+b~(\bmod~q)$, and $E=[0,q-1]$; (4) $|a+b-c-f(uv)|=$a constant $k''$, and $E=[1,q]$; (5) $c=$an odd number for each $c\in E$ holding $E=[1,2q-1]^o$, and $\{a+c+b:c\in E\textrm{ and its ev-matching }(acb)\}=[\alpha,\beta]$ with $\beta-\alpha+1=q$.
\item Find all graceful-intersection (an odd-graceful-intersection) total set-matching partition of $[1,q]^2$~(or $[1,2q-1]^2)$ (see Definition \ref{defn:graceful-odd-graceful-total-set-labelling}). Let $V$ and $E$ be two subsets of $[1,q]^2$~(or $[1,2q-1]^2)$, such that each set $c\in E$ corresponds two sets $a,b\in V$ to form an \emph{ev-matching} $(acb)$. Suppose that $c=a\cap b$ and $d_c\in c$ is a \emph{representative} of $c$. If $\{d_c:~c\in E\}=[1,q]$ (or $[1,2q-1]^o$), then we call $(V,E)$ a \emph{graceful-intersection (an odd-graceful-intersection) total set-matching partition} of $[1,q]^2$~(or $[1,2q-1]^2)$.
\item For a sequence $\{H_{i_j}\}^q_1$ of an every-zero graphic group $F_{n}(H,h)$, determine: (1) an $\{H_{i_j}\}^q_1$-sequence group-labelling of a tree having $q$ edges; (2) an $\{H_{i_j}\}^q_1$-sequence group-labelling or an $\{H_{i_j}\}^q_1$-sequence group-coloring of a connected $(p,q)$-graph. Does the graph  $H$ in $F_{n}(H,h)$ admit an odd-graceful (an $\{H_{i_j}\}^q_1$-sequence) group-coloring under each self-zero $H_i\in F_{n}(H,f)$ with $\{i_1,i_2,i_3,\dots ,i_{q}\}=[1,2q-1]^o$?
\item Identifying $C_n$ by the non-adjacent identifying operation, how many Euler's graphs can we get?
\item For a tree $T$ of $q$ edges, and an every-zero graphic group $F(H)$ having at least $q$ elements, determine that $T$ admits an $\{H_i\}^q_1$-sequence group-labelling for any $\{H_i\}^q_1$.
\item For $\{S_i\}^q_1$ with $S_i\in [1,q+1]^2$, here it is allowed $S_i=S_j$ for some $i\neq j$, find a vertex labelling $f:V(G)\rightarrow [1,q+1]^2$ of a $(p,q)$-graph $G$, and induces $f(u_iv_i)=f(u_i)\cap f(v_i)=S_i$, where: (1) consecutive sets $S_i=[a_i,b_i]$, $a_{i+1}=a_i+1$, $b_{i+1}=b_i+1$; (2) \emph{Fibonacci sequences} $|S_1|=1$, $|S_2|=1$, and $|S_{i+1}|=|S_{i-1}|+|S_{i}|$; (3) generalized rainbow sequence $S_i=[a,b_i]$ with $b_i<b_{i+1}$, the \emph{regular rainbow sequence} $S_i=[1,i]$ with $i\in [2,q]$; (4) $|S_i|=i$, where $S_i=\{a_{i,1}, a_{i,2}, \dots a_{i,i}\}$; (5) $f:V(G)\rightarrow [1,q+1]^2$, $f(uv)=f(u)\cup f(v)$, and $f(V(G))\cup f(E(G))=[1,N]^2$ with $N\leq q+1$.
\item For a simple and connected graph $H$, determine its v-split connectivity $\gamma_{vs}(H)$ and its e-split connectivity $\gamma_{es}(H)$. Characterize connected graphs having e-split $k$-connectivity.
\item If $H$ admits some odd-graceful/odd-elegant labelling $f$, we ask: (1) Does it admit some perfect odd-graceful/odd-elegant labellings? (2) How many graphs $G$ matching with $H$ are there, where $G$ admits odd-graceful/odd-elegant labelling $g$ such that $g(E(G))=f(V(H))\setminus X^*$ and $g(V(G))\setminus X^*=f(E(H))$ for $X^*=g(V(G))\cap f(V(H))$?
\item Is a spanning tree having maximum leaves the same as a spanning tree having the shortest diameter in a network? Or characterize them two.
\item Given a matrix $A_{3\times q}$ with integer elements, by what condition $A_{3\times q}$ is a Topsnut-matrix of some Topsnut-gpw $G$?
\item Given a string $D$ with positive integers, how to construct a matrix $A_{3\times q}$ by $D$ such that $A_{3\times q}$ is just a Topsnut-matrix of some Topsnut-gpw $G$?
\item Applying random walks, Markov chains to Topsnut-gpws for generating random vv-type/vev-type TB-paws.
\item A spider with three legs of length 2 is denoted as $A_{2,2,2}$. If each spanning tree of a graph $G$ is a caterpillar, we call $G$ to be \emph{caterpillar-pure}, and $G$ is a \emph{caterpillar-graph}. Jamison \emph{et al.}                                \cite{Jamison-McMorris-Mulder-2003} have shown: ``\emph{A connected graph is caterpillar-pure if and only if it does not contain any aster $A_{2,2,2}$ as a (not necessarily induced) subgraph}''. We hope to describe caterpillar-graphs in detail and concrete.
\item Suppose that two trees $T,H$ of $p$ vertices are 6C-complementary matching to each other. We use two every-zero graphic groups $F_n(T,f)$ and $F_n(H,g)$ to encrypt a connected $(p,q)$-graph $G$ respectively, and we get two encrypted networks $N_{et}(G,F_n(T,f))$ and $N_{et}(G,F_n(H,g))$. Does $F_n(T,f)$ matches with $F_n(H,g)$? And, moreover does $N_{et}(G,F_n(T,f))$ matches with $N_{et}(G,F_n(H,g))$?
\item For a given non-tree $(p,q)$-graph $G$ admitting odd-graceful labellings $f_1,f_2,\dots ,f_m$ in total, find all possible twin odd-graceful matchings of $G$.
\item Topsnut-gpw sequences $\{G_{(k_i,d_i)}\}^m_1$ can encrypt graphs/networks. Determine what graphs/networks can be encrypted by what Topsnut-gpw sequences $\{G_{(k_i,d_i)}\}^m_1$.
\item Collect possible image-labellings about labellings of graph theory, and determine graphs admitting these image-labellings.
\item Define new graph-labellings, like the \emph{edge-magic total graph-labelling} defined in Definition \ref{defn:Edge-magic-total-graph-labelling}, and determine graphs/networks admitting these new graph-labellings.
\end{asparaenum}

\section{Conclusion}

We have found the number of all Topsnut-matrices $A_{vev}(G)$ of a $(p,q)$-graph $G$ and the number $D_{TBp}(G)$ of vv-type/vev-type TB-paws generated from the Topsnut-matrices. Based on Topsnut-configurations, we have shown Path-neighbor-method, Cycle-neighbor-method, Lobster-neighbor-method, Spider-neighbor-method and Euler-Hamilton-method for generating vv-type/vev-type TB-paws by providing efficient and polynomial algorithms. We use Topsnut-matrices to make vv-type/vev-type TB-paws, the results on this method indicate that Topsnut-matrices are powerful in deriving vv-type/vev-type TB-paws, since there are many random ways in Topsnut-matrices. Another important method is to encrypt a network by an every-zero graphic group, and we have list advantages about encrypted networks. It is noticeable, our methods for generating vv-type/vev-type TB-paws can transformed into algorithms, we, also, introduce the LOBSTER-algorithm and the TREE-GROUP-COLORING algorithm.

Our algorithms enables us to transform Topsnut-gpws made by caterpillars, lobsters and spiders, as well as generalized trees on them, into TB-paws. It is noticeable, the complexity of encrypted networks by every-zero graphic groups tells the provable security of encrypted networks, especially, scale-free tree-like networks in which a few number of vertices control other vertices over $80$ percenter of networks. We discussed every-zero graphic group $F_n(H,f)=\{H_i\}^n_1$, every-zero Topsnut-matrix group $M_{n}(A(G),f)$, every-zero TB-paw group $D_{n}(T,f)$ and composed graphic group $F_n(\odot \langle H_i,L_i\rangle, f_i\odot h_i)$ made by two every-zero graphic groups $F_n(H,f)=\{H_i\}^n_1$ and $F_n(L,h)=\{L_i\}^n_1$. Moreover, Topsnut-gpw sequences $\{G_{(k_i,d_i)}\}^m_1$ can encrypt graphs/networks.

Topsnut-gpws are based on the \emph{open structural cryptographic platform}, that is, this platform allows people make themselves pan-Topsnut-gpws by their remembered and favorite knowledge. We believe: ``\emph{If a project has its practical and effective application, and has mathematics as its support, it can go far. The practical application gives it long life, and mathematics makes it growing stronger and going faster. This project feedbacks material comforts to people, and returns new objects and new problems to mathematics}.''

\section*{Acknowledgment}

The author, \emph{Bing Yao}, is delight for supported by the National Natural Science Foundation of China under grants 61163054, 61363060 and 61662066; Scientific research project of Gansu University under grants 2016A-067, 2017A-047 and 2017A-254. Bing Yao, also, thanks every member of \emph{Topological Graphic Passwords Symposium} in the first semester of 2018-2019 academic year for their constructive suggestions and hard works.



%

\vskip 1cm

\begin{flushleft}
\textbf{Appendix A. } \textbf{LARGEDEGREE-NEIGHBOR-FIRST Algorithm}
\end{flushleft}

Let $N(X)$ and $N(u)$ be the sets of neighbors of a vertex $u$ and a set $X$. A vertex set $S$ of a $(p,q)$-graph $G$ is called a \emph{dominating set} of $G$ if each vertex $x\in V(G)\setminus S$ is adjacent with some vertex $y\in S$, and moreover the dominating set $S$ is \emph{connected} if the induced graph over $S$ is a connected subgraph of $G$.

\vskip 0.4cm

\textbf{Input.} A connected and simple graphs $G=(V, E)$.

\textbf{Output. } A spanning tree and a connected dominating set of $G$.

\textbf{Step 1.} Let $S_1:=N(u_1)\cup \{u_1\}$, $\ud_G(u_1)=\Delta(G)$, and $T_1$ is a tree with vertex set $S_1$, $k:=1$.

\textbf{Step 2.} If $Y_k=V\setminus (S_k\cup N(S_k))\neq \emptyset$, goto Step 3, otherwise Step 4.

\textbf{Step 3.} Select vertex $u_{k+1}\in L(T_k)$ holds $\ud_G(u_{k+1})\geq \ud_G(x)$ ($x\in L(T_k)$), and let $S_{k+1}:=S_k\cup N^*(u_{k+1})$, where $N^*(u_{k+1})=N(u_{k+1})\setminus (N(u_{k+1})\cap S_k)$, $T_{k+1}:=T_k+\{u_{k+1}u':u '\in N^*(u_{k+1})\}$, $k:=k+1$, goto Step 2.

\textbf{Step 4.} For $Y_k=\emptyset$, $y\in V\setminus V(T_{k+1})$, do: $y$ is adjacent wit $v\in V(T_{k+1})$ when $\ud_{T_{k+1}}(v)\geq \ud_{T_{k+1}}(x)$, $xy\in E$. The resulting tree is denoted as $T^*$.

\textbf{Step 5.} Return a connected dominating set $S_k=V(T_k)$ and the spanning tree $T^*$.

\begin{flushleft}
\textbf{Appendix B. } \textbf{PREDEFINED-NODES Algorithm}
\end{flushleft}

\textbf{Input.} A connected graph $G=(V, E)$, and indicate a subset $S=\{u_1, u_2, \dots , u_m\}$ of $G$.

\textbf{Output. } A connected dominating set $X$ of $G$ such that $S\subseteq X$.

Step 1. Add new vertices $\{v_1, v_2, \dots , v_m\}$ to $G$, and join
$v_i$ with a vertex $u_i$ of $S$ by an edge, $i\in [1, m]$. The resulting graph is denoted as$G^*$, such that $V(G^*)=V(G)\cup \{v_i:i\in [1, m]\}$ and $E(G^*)=E(G)\cup \{u_iv_i:i\in [1, m]\}$.

Step 2. Find a connected dominating set $X'$ of $G^*$.

Step 3. Return a connected dominating set $X=X'$ of $G$.

\begin{flushleft}
\textbf{Appendix C. } \textbf{LARGEDEGREE-PRESERVE Algorithm}
\end{flushleft}

\textbf{Input.} A scale-free network $\mathcal {N}(t_0)=(p(u, k, t_0), G(t_0))$. $G=G(t_0)$ with $n$ vertices, and $\ud_G(v_i)\geq \ud_G(v_{i+1})$, $i=1, 2, \dots , n-1$, $\Delta(G)=\ud_G(v_1)>1$. A constant $k$ satisfies $\delta(G)<k<\Delta(G)$, $\ud_G(v_l)\geq k$, but $\ud_G(v_{l+1})<k$.

\textbf{Output. } A spanning tree $T^*$ of $G$, such that the vertices of $T^*$ hold $\ud_G(v_i)\geq \ud_G(v_{i+1})$ ($i=1, 2, \dots , l-1$), and $k> \ud_G(v)$, $v\in V(G)\setminus \{v_1, v_2, \dots , v_l\}$.

Step 1. Let $W_1=N[v_1]:=N(v_1)\cup \{v_1\}$, an induced graph $G[W_1]$ over $W_1$.

Step 2. If $N_{i, i+1}=N[v_{i+1}]\cap W_{i}=\emptyset $, let
$W_{i+1}:=W_i\cup N[v_{i+1}]$, and an induced graph $G[W_{i+1}]$; if $N_{i,
i+1}\neq \emptyset $, take a vertex $x_{i, i+1}\in N(v_{i+1})\cap
W_{i}$ with property $\ud_G(x_{i, i+1})\geq \ud_G(x)$
($x\in N(v_{i+1})\cap W_{i}$), and an induced graph $G[W_{i+1}]$, where $W_{i+1}:=W_{i}\cup W_{i, i+1}$
and $W_{i, i+1}:=(N[v_{i+1}]\setminus N_{i, i+1})\cup \{x_{i,i+1}\}$.

Step 3. If $\ud_G(v_i)\geq k$, goto Step 2, and goto Step 4, otherwise.

Step 4. Apply modified BFS-algorithm (Breadth-First Search Algorithm). Let $S:=W_i$, $R:=\{v_0\}$ for $v_0\in N(W_i)$, $\ud(v_0, v_0):=0$.

Step 5. If $R=\emptyset$, denoted the found spanning tree as $T^*$, goto Step 7. If $R\neq \emptyset$, goto Step 6.

Step 6. The vertex $v$ is the first vertex of $R$, take $y\in N(v)\setminus(R\cup S)$ holding $\ud _G(y)\geq \ud _G(x)$ ($x\in N(v)$); put $y$ into $R$, such that is the last of $R$; and take $v$ from $R$, and then put $v$ into $S$.
Lt $\ud (v_0, y):=\ud (v_0, v)+1$, goto Step 5.

Step 7. Return the spanning tree $T^*$.

\vskip 0.4cm

The spanning tree $T^*$ found by LARGEDEGREE-PRESERVE Algorithm has its own number of leaves to be approximate to the number of leaves of each spanning tree $T^{\max}$ having maximal leaves, and the connected dominating set $S^*=V(T^*)\setminus L(T^*)$ approximates to $D^+$-minimal dominating set \cite{Yao-Chen-Yang-Wang-Zhang-Zhang2012}. As we have known, no polynomial algorithm for finding: (i) $L^+$-balanced set; (ii) optimal cut set $H[S^-]$; (iii) $D^-$-minimal dominating set and $D^+$-minimal dominating set; (iv) $k$-distance dominating set. Untersch\"{u}tz and Turau \cite{Unterschutz-Turau2012} have shown the \emph{probabilistic self-stabilizing algorithm} (PSS-algorithm) for looking connected dominating set (\cite{Blum-Ding-Thaeler-Cheng2004}). PSS-algorithm is suitable large scale of networks, especially good for those networks having larger degree vertices. PSS-algorithm consists of three subprogrammes: Finding maximal independent set (MIS) first, and find weak connected set (WCDS), the last step is for finding connected dominating set (CDS).

\end{document}